\newcommand{\bA}{\mathbb{A}}
\newcommand{\CC}{\mathbb{C}}
\newcommand{\II}{\mathbb{I}}
\newcommand{\FF}{\mathbb{F}}
\newcommand{\NN}{\mathbb{N}}
\newcommand{\ZZ}{\mathbb{Z}}
\newcommand{\QQ}{\mathbb{Q}}
\newcommand{\KK}{\mathbb{K}}
\newcommand{\MM}{\mathbb{M}}
\newcommand{\RR}{\mathbb{R}}
\newcommand{\TT}{\mathbb{T}}
\newcommand{\AAA}{\mathcal{A}}
\newcommand{\CCC}{\mathcal{C}}
\newcommand{\KKK}{\mathcal{K}}
\newcommand{\JJJ}{\mathcal{J}}
\newcommand{\PPP}{\mathcal{P}}
\newcommand{\QQQ}{\mathcal{Q}}
\newcommand{\TTT}{\mathcal{T}}
\newcommand{\ZZZ}{\mathcal{Z}}
\newcommand{\SD}{\mathrm{SD}}
\newcommand{\av}[1]{\mathrm{av}_{#1}}
\newcommand{\ball}[3]{\mathrm{B}_{#1}(#2,#3)}
\newcommand{\hdm}{\rho_{\mathrm H}}
\newcommand{\card}[1]{\mathopen\parallel #1 \mathclose\parallel}
\newcommand{\semantik}[1]{[\![#1]\!]}
\newcommand{\val}[1]{\semantik{#1}}
\newcommand{\ar}[1]{\mathrm{ar}(#1)}
\newcommand{\hd}[1]{\mathrm{hd}(#1)}
\newcommand{\rt}[1]{\mathrm{root}(#1)}
\newcommand{\sbt}[1]{\mathrm{subtree}(#1)}
\newcommand{\tl}[1]{\mathrm{tl}(#1)}
\newcommand{\br}{\,\mathbf{r}\,}
\newcommand{\set}[2]{\mbox{$\{\,#1 \mid #2 \,\}$}}
\newcommand{\fun}[3]{\mbox{$#1 \colon #2 \rightarrow #3$}}
\newcommand{\pfun}[3]{\mbox{$#1 \colon #2 \rightharpoonup #3$}}
\newcommand{\mfun}[3]{\mbox{$#1 \colon #2 \rightrightarrows #3$}}
\newcommand{\pair}[1]{\langle #1 \rangle}
\newcommand{\emptywd}{\pair{\,\,}}
\newtheorem{lemma}{Lemma}[section]
\newtheorem{ex}[lemma]{Example}
\newenvironment{example}{\begin{ex}\em}{\end{ex}}
\newtheorem{defin}[lemma]{Definition}
\newenvironment{definition}{\begin{defin}\em}{\end{defin}}
\newtheorem{proposition}[lemma]{Proposition}
\newtheorem{theorem}[lemma]{Theorem}
\newtheorem{corollary}[lemma]{Corollary}
\def\dom{\mathop{\mathstrut\rm dom}}
\def\range{\mathop{\mathstrut\rm range}}
\def\ev{\mathop{\mathstrut\rm ev}}
\def\diag{\mathop{\mathstrut\rm diag}\nolimits}
\def\int{\mathop{\mathstrut\rm int}\nolimits}
\def\UC{\mathop{\mathstrut\cal C}\nolimits}
\def\pr{\mathop{\mathstrut\rm pr}\nolimits}
\def\id{\mathop{\mathstrut\rm id}\nolimits}
\def\equ{\mathbin{\overset{\cup}{=}}}
\def\Pr{\mathop{\mathstrut\rm Pr}\nolimits}
\def\Cons{\mathop{\mathstrut\rm Cons}\nolimits}
\title{Computing with Continuous  Objects:\\ A Uniform Co-inductive Approach\thanks{
\protect\includegraphics[width=1.1em]{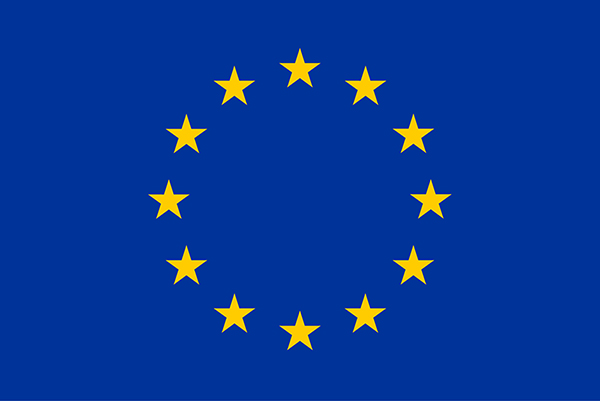}
This project has received funding from the European Union's Horizon 2020 research and innovation programme under the Marie Sk\l{}odowska-Curie grant agreement No 731143.}}
\author{Dieter Spreen\\
Department of Mathematics, University of Siegen\\
57068 Siegen, Germany}
\date{}
\begin{document}

\maketitle

\begin{abstract}
A uniform approach to computing with infinite objects like real numbers, tuples of these, compacts sets, and uniformly continuous maps is presented. In work of Berger it was shown how to extract certified algorithms working with the signed digit representation from constructive proofs. Berger and the present author generalised this approach to complete metric spaces and showed how to deal with compact sets. Here, we unify this work and lay the foundations for doing a similar thing for the much more comprehensive class of compact Hausdorff spaces occurring in applications. The approach is of the same computational power as Weihrauch's Type-Two Theory of Effectivity. 
\end{abstract}

 \tableofcontents

\section{Introduction}\label{sec-intro}

In investigations on exact computations with continuous objects such as the real numbers, objects are usually represented by streams of finite data. This is true for theoretical studies in the Type-Two Theory of Effectivity approach (cf.\ e.g.\ \cite{wei}) as well as for practical research, where prevalently the signed digit representation is used (cf.\ \cite{cg,em,bh}), but also others \cite{es,eh,ts}. Berger \cite{be} showed how to use the method of program extraction from proofs to extract certified algorithms working with the signed digit representation in a semi-constructive logic allowing inductive and co-inductive definitions.

In order to generalise from the different finite objects used in the various stream representations, Berger and the present author \cite{bs} used the abstract framework of what was coined \emph{digit space}, i.e.\ a bounded complete non-empty metric space $X$ enriched with a finite set $D$ of contractions on $X$, called \emph{digits}, that cover the space, that is
\[
X = \bigcup\set{d[X]}{d \in  D},
\]
where $d[X] = \set{d(x)}{x \in X}$. Spaces of this kind were studied by Hutchinson in his basic theoretical work on self-similar sets \cite{hu} and used later also by Scriven \cite{sc} in the context of exact real number computation. 

Digit spaces are compact and weakly hyperbolic, where the latter property means that for every infinite sequence $d_0, d_1, \ldots$ of digits the intersection $\bigcap_{n \in \NN} d_0 \circ \cdots \circ d_n[X]$ contains \emph{at most} one point \cite{ed}. Compactness, on the other hand, implies that each such intersection contains \emph{at least} a point. By this way every stream of digits denotes a uniquely determined point in $X$. Because of the covering property it follows conversely that each point in $X$ has such a code. 

With respect to the operations of adding a digit $d$ to the left side of a stream and applying a map $d$ in $D$ to an element of $X$, respectively, the space of digit streams as well as $X$ are algebras of the same signature and the coding map is a morphism respecting these operations.

A central aim of the joint research was to lay the foundation for computing with non-empty compact sets and for extracting algorithms for such computations from mathematical proofs. It is a familiar fact that the set of all non-empty compact subsets of a bounded complete metric space is a bounded and complete space again with respect to the Hausdorff metric \cite{en,mu}. However, as was shown in \cite{bs}, in general there is no finite set of contractions that covers the hyperspace. On the other hand, non-empty compact subsets can be represented in a natural way by finitely branching infinite trees of digits. Moreover, all characterisations in \cite{bs} derived for the stream representation of the elements of the digit space hold true for the tree representation of the non-empty compact subsets of the space.

The goal of the investigation presented in this paper is to show that a uniform approach to computing with continuous objects comprising the non-empty compact sets case can be obtained by allowing the contractions of a digit space to be multi-ary. Points are then no longer represented by digit streams but by finitely branching infinite trees, called \emph{$D$-trees}.
 As we will see, not only a uniform version of the results in \cite{bs} can be derived, but also an analogue of Berger's inductive co-inductive characterisation of the (constructively) uniformly continuous endofunctions on the unit interval \cite{be}, which allows representing also such functions as finitely branching infinite trees.

There is also a second objective which results from the observation that the essential properties needed in the approach pursued in \cite{bs} are covering, compactness and weak hyperbolicity. So, it seems that more generality is gained by starting from spaces with these properties. 

Besides the general framework and the hyperspace of non-empty compact subsets, the construction of product spaces is presented. In both cases it is investigated whether important properties are inherited under the constructions. Moreover, to demonstrate the power of the framework, several results from topology are derived that are relevant for applications.

It is well known that the product and the hyperspace construction are both functorial. Here, we give proofs of the functoriality on the basis of the co-inductive characterisations of the spaces involved, that means, we use co-induction and/or a combination of induction and co-induction. 

In his seminal 1951 paper on spaces of subsets~\cite{mi}, Michael showed that compact unions of compact sets are compact again. We give a non-topological proof of this result, based on co-induction. Other results we derive in a similar way include the fact that singleton sets are compact, as are direct images of compact sets under uniformly continuous functions.  From the proofs algorithms transforming tree representations of points $x$ into tree representations of the compact sets $\{ x \}$, and tree representation of uniformly continuous functions $f$ as well as tree representations of compact sets $K$ into tree representations of the compact sets $f[K]$, respectively, can be extracted.

The paper is organised as follows: Section~\ref{ind-def} contains a short introduction to inductive and co-inductive definitions and the proof methods they come equipped with. As a first application, finite and finitely-branching infinite trees are defined in Section~\ref{tree}. Digit spaces are iterated function systems. In Section~\ref{sec-ifs} function systems with multi-ary functions are considered and essential results derived. 

As said, the central aim of this paper is to present a uniform approach to computing with infinite objects like real numbers, and tuples or compact sets of such. In Section~\ref{sec-comput} we restrict our study to the case of extended iterated function systems where the underlying space is a compact metric space and the maps making up the function system are contractions. There is a  vast number of effectivity studies for metric spaces. Here, the aim is to show that what was obtained in \cite{bs} remains true in the more general case of multi-ary digit maps. A particular consequence is the equivalence of the present approach with Weihrauch's Type-Two Theory of Effectivity~\cite{wei}. 

In Weihrauch's approach one has to deal with representations explicitly. Often this requires involved codings which makes it hard for the usual mathematician to follow the proof argument. Proof extraction is an approach that avoids dealing with representations: the representations of objects as well as the algorithms computing with them are automatically extracted from formal proofs. Section~\ref{sec-progex} contains a short introduction. As is shown, the tree representation considered in Section~\ref{sec-ifs} results from the co-inductive characterisation of the space.

In Section~\ref{sec-cauchy} the equivalence between the property that every point of a digit space is the limit of a regular Cauchy sequence of elements of a dense base, and the co-inductive characterisation of the space is derived in a constructive fashion. Via proof extraction computable translations between the Cauchy representation used in Type-Two Theory of Effectivity and the tree representation can be obtained. 

In the following two sections  the construction of new spaces from given ones is considered. 
In Section~\ref{sec-prod} products are examined. All properties of extended iterated function systems and digit spaces, respectively, considered in this paper are inherited from the factor spaces to their product.

The hyperspace of non-empty compact subsets is studied in Section~\ref{sec-compact}. All but one of the properties investigated are inherited to the hyperspace. Only for weak hyperbolicity this is still open in the general case. If the underlying space is a metric one and all digits are contracting, also this property holds.

As a consequence, in both cases the equivalence result derived in Section~\ref{sec-cauchy} carries over to the derived spaces. In \cite{bs} separate proofs had to be given for digit spaces and their hyperspaces. Particularly in the latter case, the proof was quite involved.

Section~\ref{sec-ufun} contains a generalisation of Berger's inductive co-inductive characterisation of the uniformly continuous functions on the unit interval to the digit space case. On the basis of the characterisation it is shown that the function class is closed under composition

The last two sections address applications of the framework to topology. In Section~\ref{sec-prop} the functoriality of the hyperspace construction as well as properties of continuous functions that map into the hyperspace are derived. Section~\ref{sec-mich}, finally, contains a co-inductive treatment of Michael's result. 

The paper finishes with a Conclusion.

\section{Inductive and co-inductive definitions}\label{ind-def}

Let $X$ be a set and $\PPP(X)$ its powerset. An operator $\fun{\Phi}{\PPP(X)}{\PPP(X)}$ is \emph{monotone} if for all $Y, Z \subseteq X$,
\[
\text{
if $Y \subseteq Z$, then $\Phi(Y) \subseteq \Phi(Z)$;
}
\]
and a set $Y \subseteq X$ is \emph{$\Phi$-closed} (or a pre-fixed point of $\Phi$) if $\Phi(Y) \subseteq Y$. Since $\PPP(X)$ is a complete lattice, $\Phi$ has a least fixed point $\mu\Phi$ by the Knaster-Tarski Theorem. If $P \subseteq X$, we mostly write
\[
P(x) \overset{\mu}{=} \Phi(P)(x),
\]
instead of $P = \mu\Phi$. $\mu\Phi$ can be defined to be the least $\Phi$-closed subset of $X$. Thus, we have the \emph{induction principle} stating that for every $Y \subseteq X$.
\[\text{
If $\Phi(Y) \subseteq Y$ then $\mu\Phi \subseteq Y$
}\] 

For monotone operators $\fun{\Phi, \Psi}{\PPP(X)}{\PPP(X)}$ define
\[
\Phi \subseteq \Psi :\Leftrightarrow (\forall Y \subseteq X)\, \Phi(Y) \subseteq \Psi(Y).
\]
It is easy to see that the operation $\mu$ is monotone, i.e., if $\Phi \subseteq \Psi$, then $\mu\Phi \subseteq \mu\Psi$. This allows us to derive the following strengthening of the induction principle.

\begin{lemma}[\textbf{Strong Induction Principle~\cite{be}}]\label{lem-strong}
Let $\fun{\Phi}{\PPP(X)}{\PPP(X)}$ be a monotone operator. Then, 
\[\text{
If $\Phi(Y \cap \mu\Phi) \subseteq Y$, then $\mu\Phi \subseteq Y$.
}\]
\end{lemma}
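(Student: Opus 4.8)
The plan is to reduce the strong principle to the ordinary \emph{induction principle} already recorded above, applied not to $Y$ itself but to the restricted set $Y \cap \mu\Phi$. Concretely, I would put $Z := Y \cap \mu\Phi$ and aim to show that $Z$ is $\Phi$-closed, i.e.\ $\Phi(Z) \subseteq Z$. Once that is established, the induction principle gives $\mu\Phi \subseteq Z \subseteq Y$, which is exactly the asserted conclusion.

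To verify $\Phi(Z) \subseteq Z = Y \cap \mu\Phi$, I would check the two defining inclusions separately. The inclusion $\Phi(Z) \subseteq Y$ is immediate: it is literally the hypothesis $\Phi(Y \cap \mu\Phi) \subseteq Y$. For the inclusion $\Phi(Z) \subseteq \mu\Phi$ I would use $Z \subseteq \mu\Phi$ together with monotonicity of $\Phi$ to get $\Phi(Z) \subseteq \Phi(\mu\Phi)$, and then invoke the fact that $\mu\Phi$ is a genuine \emph{fixed} point, so that $\Phi(\mu\Phi) = \mu\Phi$. Combining the two inclusions yields $\Phi(Z) \subseteq Y \cap \mu\Phi = Z$, as required.

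I expect the only real subtlety to be this second inclusion. It is tempting to run ordinary induction on $Y$ directly, but the hypothesis only controls $\Phi$ on the intersection $Y \cap \mu\Phi$, so one is forced to pass to the restricted set and then re-establish membership in $\mu\Phi$. This is precisely the step where the equality $\Phi(\mu\Phi) = \mu\Phi$ is needed --- mere $\Phi$-closedness of $\mu\Phi$ (a prefixed point) would not suffice, since closedness gives only $\Phi(\mu\Phi) \subseteq \mu\Phi$, which is exactly what one wants here, but one should be careful to cite $\mu\Phi$ as a fixed point to keep the argument transparent.

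Alternatively, in the spirit of the remark on monotonicity of $\mu$ that precedes the statement, one can repackage the same idea operator-theoretically: define $\Psi(W) := \Phi(W \cap \mu\Phi)$, observe that $\Psi$ is monotone and $\Psi \subseteq \Phi$, so $\mu\Psi \subseteq \mu\Phi$ by monotonicity of $\mu$; then $\mu\Psi \cap \mu\Phi = \mu\Psi$ gives $\Phi(\mu\Psi) = \Psi(\mu\Psi) = \mu\Psi$, whence $\mu\Phi \subseteq \mu\Psi$ and so $\mu\Phi = \mu\Psi$. Since the hypothesis is exactly $\Psi(Y) \subseteq Y$, ordinary induction for $\Psi$ yields $\mu\Psi \subseteq Y$, i.e.\ $\mu\Phi \subseteq Y$. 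Both routes are equally short, and I would present the first as the cleaner one.
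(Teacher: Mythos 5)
Your primary argument is correct and takes a genuinely different, more elementary route than the paper. The paper works at the level of operators: it defines $\Psi(W) := \Phi(W \cap \mu\Phi)$, gets $\mu\Psi \subseteq Y$ from the hypothesis by ordinary induction, shows $\mu\Psi \subseteq \mu\Phi$ via monotonicity of the $\mu$ operation, deduces $\mu\Psi = \Psi(\mu\Psi) = \Phi(\mu\Psi)$, and applies induction a second time to get $\mu\Phi \subseteq \mu\Psi \subseteq Y$ --- in effect proving the sharper identity $\mu\Phi = \mu\Psi$. Your first route avoids this detour: you show that $Z := Y \cap \mu\Phi$ is itself $\Phi$-closed, the inclusion $\Phi(Z) \subseteq Y$ being the hypothesis verbatim and $\Phi(Z) \subseteq \mu\Phi$ following from monotonicity together with $\Phi(\mu\Phi) \subseteq \mu\Phi$, and then you invoke ordinary induction exactly once. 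This buys brevity and uses strictly weaker facts about $\mu\Phi$; what the paper's version buys is the reusable equality $\mu\Phi = \mu\Psi$, whose proof pattern is the same one used for the half-strong co-induction principle (Lemma~\ref{lem-halfstrong}) and which ties in with the remark on monotonicity of $\mu$ preceding the statement. Your second, alternative route is essentially the paper's proof verbatim.

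One small expositional wobble: in your third paragraph you first claim the fixed-point equality $\Phi(\mu\Phi) = \mu\Phi$ is \emph{needed} and that mere $\Phi$-closedness ``would not suffice,'' and then immediately observe that closedness gives $\Phi(\mu\Phi) \subseteq \mu\Phi$, ``which is exactly what one wants.'' The second half is right and the first half is not: the pre-fixed-point inclusion, which the Knaster--Tarski construction provides directly, is all your argument uses, so you should simply cite that.
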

\begin{proof}
Let $\Psi(Y) := \Phi(Y \cap \mu\Phi)$ and assume that $\Psi(Y) \subseteq Y$. Then $\mu\Psi \subseteq Y$, by the induction principle. Since $Y \cap \mu\Phi \subseteq Y$, we moreover have that $\Phi(Y) \supseteq \Phi(Y \cap \mu\Phi) = \Psi(Y)$. Hence, $\mu\Psi \subseteq \mu\Phi$. It follows that 
\[
\mu\Psi = \Psi(\mu\Psi) = \Phi(\mu\Psi \cap \mu\Phi) = \Phi(\mu\Psi).
\]
Therefore, $\mu\Phi \subseteq \mu\Psi$, again by induction, and whence $\mu\Phi \subseteq Y$.
\end{proof}

Dual to inductive definitions are \emph{co-inductive definitions}. A subseteq $Y$ of $X$ is called \emph{$\Phi$-co-closed} (or a post-fixed point of $\Phi$) if $Y \subseteq \Phi(Y)$. By duality, $\Phi$ has a largest fixed point $\nu\Phi$ which can be defined as the largest $\Phi$-co-closed subset of $\Phi$. So, we have the \emph{co-induction} and the \emph{strong co-induction principle}, respectively, stating that for all $Y \subseteq X$.
\[\text{If $Y \subseteq \Phi(Y)$, then $Y \subseteq \nu\Phi$.}\]
and
\[\text{ If $Y \subseteq \Phi(Y \cup \nu\Phi)$, then $Y \subseteq \nu\Phi$.}\]

Note that for $P \subseteq X$ we also write
\[
P(x) \overset{\nu}{=} \Phi(P)(x)
\]
instead of $P = \nu\Phi$. 

\begin{lemma}[\textbf{Half-strong Co-induction Principle~\cite{beg}}]\label{lem-halfstrong}
Let $\fun{\Phi}{\PPP(X)}{\PPP(X)}$ be a monotone operator. Then, 
\[\text{
if $Y \subseteq \Phi(Y) \cup \nu \Phi$ then $Y \subseteq \nu \Phi$.
}\]
\end{lemma}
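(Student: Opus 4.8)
The plan is to reduce the half-strong principle to the ordinary co-induction principle stated just above it. The key move is to apply co-induction not to $Y$ directly but to the enlarged set $Z := Y \cup \nu\Phi$, for which I expect the post-fixed-point condition $Z \subseteq \Phi(Z)$ to hold outright. Intuitively, the extra $\nu\Phi$-summand appearing on the right of the hypothesis $Y \subseteq \Phi(Y) \cup \nu\Phi$ is exactly what one absorbs by passing from $Y$ to $Y \cup \nu\Phi$, turning a merely ``half-strong'' hypothesis into a genuine co-closedness statement.

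First I would verify that $Z$ is $\Phi$-co-closed, splitting the inclusion $Z \subseteq \Phi(Z)$ into its two summands. For the $\nu\Phi$ part, the fixed-point property gives $\nu\Phi = \Phi(\nu\Phi)$, and monotonicity of $\Phi$ yields $\Phi(\nu\Phi) \subseteq \Phi(Y \cup \nu\Phi) = \Phi(Z)$, hence $\nu\Phi \subseteq \Phi(Z)$. For the $Y$ part, the hypothesis gives $Y \subseteq \Phi(Y) \cup \nu\Phi$; monotonicity gives $\Phi(Y) \subseteq \Phi(Z)$, and combined with the inclusion $\nu\Phi \subseteq \Phi(Z)$ just obtained this produces $Y \subseteq \Phi(Y) \cup \nu\Phi \subseteq \Phi(Z)$. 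Taking the union of the two parts, $Z = Y \cup \nu\Phi \subseteq \Phi(Z)$.

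Having exhibited $Z$ as a post-fixed point, the ordinary co-induction principle delivers $Z \subseteq \nu\Phi$, and since $Y \subseteq Z$ this gives $Y \subseteq \nu\Phi$, which is the desired conclusion.

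I do not anticipate a genuine obstacle: the whole argument rests only on the monotonicity of $\Phi$ and the fixed-point equation $\nu\Phi = \Phi(\nu\Phi)$, and the single idea is the choice of the auxiliary set $Z = Y \cup \nu\Phi$. I would therefore present this self-contained co-inductive argument as the main proof. As a remark I would note an even shorter route via the strong co-induction principle: the same two facts show $\Phi(Y) \cup \nu\Phi \subseteq \Phi(Y \cup \nu\Phi)$, so the half-strong hypothesis already entails $Y \subseteq \Phi(Y \cup \nu\Phi)$, whence $Y \subseteq \nu\Phi$ follows in one line from strong co-induction. This also clarifies the naming, placing the half-strong principle strictly between the ordinary and the strong co-induction principles.
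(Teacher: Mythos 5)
Your proof is correct, and it takes a genuinely different route from the paper's. You absorb $\nu\Phi$ into the \emph{set}: passing to $Z := Y \cup \nu\Phi$, you verify $Z \subseteq \Phi(Z)$ using only monotonicity and the fixed-point equation $\nu\Phi = \Phi(\nu\Phi)$, and then a single application of ordinary co-induction finishes the argument. The paper instead absorbs $\nu\Phi$ into the \emph{operator}: it sets $\Psi(W) := \Phi(W) \cup \nu\Phi$, proves $\nu\Psi = \nu\Phi$ (one inclusion from monotonicity of the $\nu$-operation, since $\Psi$ is pointwise above $\Phi$; the other by co-induction, after checking that $\nu\Psi$ is in fact a fixed point of $\Phi$), and then observes that the hypothesis says precisely that $Y$ is $\Psi$-co-closed, so $Y \subseteq \nu\Psi = \nu\Phi$. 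Your version is the more economical one --- one auxiliary set, one appeal to co-induction --- while the paper's version is the exact dual of its proof of the Strong Induction Principle (Lemma~\ref{lem-strong}), which uses the analogous auxiliary operator $\Phi(\,\cdot \cap \mu\Phi)$, and it yields the identity $\nu\Psi = \nu\Phi$ as a byproduct. Your concluding remark is also sound: since $\Phi(Y) \cup \nu\Phi \subseteq \Phi(Y \cup \nu\Phi)$, the half-strong hypothesis implies $Y \subseteq \Phi(Y \cup \nu\Phi)$, so the conclusion follows in one line from the strong co-induction principle stated in the paper; that is the shortest derivation of all, at the cost of invoking the stronger principle rather than plain co-induction.
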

\begin{proof}
Let $\Psi(Y) := \Phi(Y) \cup\nu \Phi$. Then $\Psi$ is monotone and pointwise larger than $\Phi$. Hence,
\begin{equation}\label{eq-halfstrong-1}
\nu \Phi \subseteq \nu \Psi.
\end{equation}
On the other hand
\[
\nu \Psi= \Psi(\nu \Psi) = \Phi(\nu \Psi) \cup  \nu \Phi = \Phi(\nu \Psi)
\]
since $\nu \Phi = \Phi(\nu \Phi) \subseteq \Phi(\nu \Psi)$, by (\ref{eq-halfstrong-1}). Hence,
\begin{equation}\label{eq-halfstrong-2}
\nu \Psi \subseteq  \nu \Phi,
\end{equation}
by co-induction.

The premise of half-strong co-induction means $Y \subseteq \Psi(Y)$. Therefore,
$Y \subseteq \nu \Psi$, by co-induction, from which we obtain with (\ref{eq-halfstrong-2}) that $Y \subseteq \nu \Phi$.
\end{proof}

The following examples are taken from~\cite{be}.

\begin{example}({\bf natural numbers})
Define $\fun{\Phi}{\PPP(\RR)}{\PPP(\RR)}$ by
\[
\Phi(Y): = \{ 0 \} \cup \set{y + 1}{y \in Y}.
\]
Then $\mu\Phi = \NN = \{\, 0, 1, \ldots \,\}$. The induction principle is logically equivalent to the usual zero-successor-induction on $\NN$; if $0 \in Y$ and $(\forall y \in Y) (y \in Y \to y+1 \in Y)$, then $(\forall y \in \NN)\, y \in Y$.
\end{example}

\begin{example}({\bf signed digits and the interval $[-1, 1]$})
Set $\II := [-1, 1]$ and for every signed digit $d \in \SD := \{ -1, 0 , 1 \}$ define $\fun{\av{d}}{\II}{\II}$ by
\[
\av{d}(x) := (x+d) / 2.
\]
Let $\II_d := \range(\av{d})$. Then $\II_d = [d/2 - 1/2, d/2 + 1/2]$ and $\II$ is the union of the $\II_d$.

Define $\fun{\Phi}{\PPP(\II)}{\PPP(\II)}$ by
\[
\Phi(Z) := \set{x \in \II}{(\exists d \in \SD) (\exists y \in Z)\, x = \av{d}(y)}
\] 
and let $\CC_\II := \nu \Phi$. Then $\CC_\II \subseteq \II$. Since moreover, $\II \subseteq \Phi(\II)$, it follows with co-induction that also $\II \subseteq \CC_\II$. Hence, $\CC_\II = \II$. The point of this definition is that the proof of $\II \subseteq \Phi(\II)$ has an interesting computational content: $x \in \II$ must be given in such a way that it is possible to find $d \in \SD$ so that $x \in \II_d$. This means that $d/2$ is a \emph{first approximation} of $x$. The computational content of the proof of $\II \subseteq \CC_\II$, roughly speaking, iterates the process of finding approximations to $x$ ad infinitum, i.e.\ it computes a \emph{signed digit representation} of $x$, that is, a stream $a_0 : a_1 : \cdots$ of signed digits with 
\[
x = \av{a_0}(\av{a_1}( \cdots )) = \sum_{i \ge 0} a_i \cdot 2^{-(i+1)}.
\]
\end{example}

\section{$D$-Trees}\label{tree}

Let $\NN_{0} := \NN \setminus \{ 0 \}$ and $\NN_{0}^{n}$ be the set of words of length $n$ over $\NN_{0}$. Define $\NN_{0}^{*} := \bigcup_{n \in \NN} \NN_{0}^{n}$. The empty word will be denoted by $\emptywd$ and the concatenation operation by $\star$. We identify single-letter words with the corresponding letter. Moreover, for $S \subseteq \NN_{0}^{*}$ and $i \in  \NN_{0}$ we set $i \star S := \set{i \star s}{s \in S}$.

A \emph{tree} is a subset of $\NN_{0}^{*}$ that is closed under initial segments. We will consider trees the nodes of which are labelled with elements of a fixed non-empty set D that comes equipped with an outdegree $\fun{\mathrm{ar}}{D}{\NN_0}$. 

The subsequent definition says when $T = (S, L)$ with $S \subseteq \NN_{0}^{*}$ and $\fun{L}{S}{D}$ is a finite $D$-tree of height $n$.
\begin{definition}\label{dn-treen}
\begin{itemize}
\item $T$ is a \emph{$D$-tree of height $0$}, if $S = \{ \emptywd \}$ and for some $d \in D$, $L(\emptywd) = d$;
 
\item $T$ is a \emph{$D$-tree of height $n+1$}, if $S \subseteq D^{n+1}$ and there are $d \in D$ and $D$-trees $T_1, \ldots, T_{\ar{d}}$ of height $n$ such that
\begin{gather*}
S= \{ \emptywd \} \cup \bigcup_{i = 1}^{\ar{d}} i \star S_{i}, \\
L(\emptywd) = d \quad\text{and} \quad L(i \star s) = L_{i}(s), 
\end{gather*}
for $1 \le i \le \ar{d}$ and $s \in S_{i}$.  We write $T = [d; T_{1}, \ldots, T_{\ar{d}}]$ in this case.
\end{itemize}
\end{definition}
Let $\TTT^{(n)}_D$ be the set of all $D$-trees of height $n$ and $\TTT^\ast_D := \bigcup_{n \in \NN} \TTT^{(n)}_D$.

\begin{definition}\label{dn-pref}
For $D$-trees $T$ and $T'$ of height $n$ and $n+1$, respectively, we say that \emph{$T$ is an immediate prefix of $T'$} and write $T \prec T'$, if 
\begin{itemize}
\item either $n = 0$, $T =  (\{ \emptywd \}, \emptywd \mapsto d)$, and $T' =[d; T_1, \ldots, T_{\ar{d}}]$,

\item or $T = [e; T_1, \ldots, T_{\ar{e}}]$ and $T' = [d; T'_1, \ldots, T'_{\ar{d}}]$ so that $e = d$ and $T_\nu \prec T'_\nu$, for all $1 \le \nu \le \ar{e}$.
\end{itemize}
\end{definition}

Let $S \subseteq \NN_{0}^{*}$ be a tree. A labelling $\fun{L}{S}{D}$ is \emph{compatible} if for all $s \in S$,
\[
\set{ i\in \NN_{0}}{(\exists s \in S)\, s\star i \in S} = \{ 1, \ldots, \ar{L(s)} \}.
\]

Define the set $\TTT^\omega_D$ of \emph{infinite $D$-trees} by
\[
\TTT^\omega_D := \set{(S, L)}{\text{$S$ is a tree with compatible labelling $L$}}.
\]

Since every node $d$ has finite outdegree $\ar{d}$, every infinite $D$-tree is a finitely branching tree with only infinite paths. 

For $T = (S, L) \in \TTT^\omega_D$ set
\begin{align*}
&\rt{T} := L(\emptywd),   \\
&\sbt{T} := ((S_{1}, L_{1}), \ldots, (S_{\ar{L(\emptywd)}}, L_{\ar{L(\emptywd)}})),
\end{align*}
where  $S_{i} := \set{s \in \NN_{0}^{*}}{i \star s \in S}$ and $L_{i}(s) := L(i \star s)$, for $1 \le i \le \ar{L(\emptywd)}$ and $s \in S_{i}$.

\begin{theorem}\label{thm-treecoalg}
Define the functor $\fun{\Lambda_{D}}{\mathbf{Set}}{\mathbf{Set}}$ by 
\[
\Lambda_{D}(X) :=  \bigcup_{d \in D} \{ d \} \times X^{\ar{d}}.
\]
Then $(\TTT^\omega_D, \mathrm{root} \times \mathrm{subtree})$ is a terminal co-algebra of $\Lambda_{D}$.
\end{theorem}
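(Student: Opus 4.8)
The plan is to establish that $(\TTT^\omega_D, \rt{-} \times \sbt{-})$ satisfies the universal property of a terminal co-algebra for $\Lambda_D$: for every $\Lambda_D$-co-algebra $\fun{c}{X}{\Lambda_D(X)}$, there exists a unique co-algebra morphism $\fun{h}{X}{\TTT^\omega_D}$, meaning a map with $(\rt{-} \times \sbt{-}) \circ h = \Lambda_D(h) \circ c$. I would first unwind what the structure map does: $\rt{-} \times \sbt{-}$ sends $T$ to the pair $(\rt{T}, \sbt{T})$, which lands in $\bigcup_{d \in D} \{d\} \times (\TTT^\omega_D)^{\ar{d}}$ because $\sbt{T}$ is a tuple of length $\ar{L(\emptywd)} = \ar{\rt{T}}$. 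The first thing to check is that this structure map is itself a bijection (equivalently, an iso in $\mathbf{Set}$), since a terminal co-algebra always has invertible structure map by Lambek's lemma; concretely the inverse takes $(d, (T_1, \ldots, T_{\ar{d}}))$ to the tree $[d; T_1, \ldots, T_{\ar{d}}]$ reassembled from root label $d$ and the given subtrees, and one verifies the resulting labelled tree has compatible labelling.

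For the existence of $h$, I would define it by specifying, for each $x \in X$, an infinite $D$-tree $h(x) = (S_x, L_x)$ node by node along $\NN_0^\ast$. Writing $c(x) = (d_x, (x_1, \ldots, x_{\ar{d_x}}))$, the natural recursive recipe is $L_x(\emptywd) := d_x$ and then, for $1 \le i \le \ar{d_x}$, require that the $i$-th immediate subtree of $h(x)$ equals $h(x_i)$; this determines $L_x(i \star s) = L_{x_i}(s)$. More formally, I would give an explicit closed definition by iterating $c$: for a word $w = i_1 \star \cdots \star i_k \in \NN_0^\ast$, follow the co-algebra $c$ down through the components selected by $i_1, \ldots, i_k$, and let $S_x$ consist of exactly those $w$ that are legal (each selection index stays within the arity encountered), labelling each such $w$ by the root digit of the correspondingly iterated element. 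I would then verify that $S_x$ is a tree closed under initial segments and that $L_x$ is compatible, so $h(x) \in \TTT^\omega_D$, and check directly from the construction that the co-algebra square commutes, i.e.\ $\rt{h(x)} = d_x$ and $\sbt{h(x)} = (h(x_1), \ldots, h(x_{\ar{d_x}}))$.

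For uniqueness, suppose $g$ is another co-algebra morphism. The commuting square forces $\rt{g(x)} = d_x$ and $\sbt{g(x)} = (g(x_1), \ldots, g(x_{\ar{d_x}}))$, exactly the same local constraints that determine $h$. Hence $g(x)$ and $h(x)$ agree on their root label and their immediate subtrees are obtained from the same elements $x_i$; by an induction on the length of $w \in \NN_0^\ast$ I would show $g(x)$ and $h(x)$ carry the same domain of defined nodes and the same labels at every node, so $g(x) = h(x)$ for all $x$. This is most cleanly phrased co-inductively: the relation $\set{(g(x), h(x))}{x \in X}$ is a bisimulation on the co-algebra $\TTT^\omega_D$, and since the structure map is a bijection, bisimilar trees are equal.

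The main obstacle I expect is purely bookkeeping rather than conceptual: making the ``iterate $c$ along a word $w$'' construction precise and proving that the resulting $S_x$ is genuinely a tree with a \emph{compatible} labelling, so that $h(x)$ really lies in $\TTT^\omega_D$. The compatibility condition ties the admissible child indices at a node to the arity of its label, and one must check this matches the arities dictated by $c$ at each stage. Once $h$ is known to be well-defined and land in $\TTT^\omega_D$, both the commutativity of the co-algebra square and uniqueness follow almost immediately from the defining equations $\rt{[d; T_1, \ldots, T_{\ar{d}}]} = d$ and $\sbt{[d; T_1, \ldots, T_{\ar{d}}]} = (T_1, \ldots, T_{\ar{d}})$, which are essentially the inverse of the structure map.
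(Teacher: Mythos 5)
Your proof is correct, but it takes a genuinely different route from the paper: the paper gives no argument for Theorem~\ref{thm-treecoalg} at all, remarking only that it follows from general results on the existence of terminal co-algebras of set functors in \cite{amm,ru} (which cover $\Lambda_D$ as a finitary polynomial functor). You instead verify the universal property by hand: you exhibit the inverse of the structure map (the reassembly $(d,(T_1,\ldots,T_{\ar{d}})) \mapsto [d;T_1,\ldots,T_{\ar{d}}]$), construct the morphism $h$ as the unfolding of a given co-algebra $\fun{c}{X}{\Lambda_D(X)}$ along words in $\NN_{0}^{*}$, check that the resulting node set is prefix-closed with compatible labelling so that $h(x)$ really lies in $\TTT^\omega_D$, and prove commutativity of the square and uniqueness by induction on the length of nodes. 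This costs the bookkeeping you acknowledge, but it buys a self-contained elementary proof in which the computational content (the unfolding map) is explicit; the paper's citation buys brevity and generality.

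One caveat on your closing reformulation of uniqueness. The claim that the relation $\set{(g(x),h(x))}{x \in X}$ is a bisimulation is fine, but the justification ``since the structure map is a bijection, bisimilar trees are equal'' is not sound as stated: bijectivity of a structure map does not in general force bisimilarity to collapse to equality (for the identity functor with identity structure map, every relation is a bisimulation). Moreover, the principle that bisimilar trees in $\TTT^\omega_D$ are equal (Lemma~\ref{lem-coindp}) is justified in the paper \emph{by} terminality, so appealing to it here would be circular. Neither problem is fatal, because your primary argument --- the induction on word length showing that $g(x)$ and $h(x)$ have the same nodes and the same labels --- is exactly a direct, non-circular proof of that principle for this concrete co-algebra; uniqueness should simply rest on that induction, as your first formulation does.
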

The theorem is a consequence of general results on the existence of terminal co-algebras in \cite{amm,ru}.

\begin{definition}\label{dn-coindp}
A relation $R \subseteq \TTT^\omega_D \times \TTT^\omega_D$ is a \emph{bisimulation} if for $T, T' \in \TTT^\omega_D$ with $\rt{T} = e$, $\rt{T'} = d$, $\sbt{T} = (T_{1}, \ldots, T_{\ar{e}})$, and $\sbt{T'} = (T'_{1}, \ldots, T'_{\ar{d}})$, 
\[
R(T, T') \rightarrow e = d \wedge (\forall 1 \le \nu \le \ar{e})\, R(T_\nu, T'_\nu).
\]
\end{definition}

\begin{lemma}\label{lem-coindp}
For $T, T' \in \TTT^\omega_D$, $T = T'$ if, and only if, there is a bisimulation $R \subseteq \TTT^\omega_D \times \TTT^\omega_D$ with $R(T, T')$.
\end{lemma}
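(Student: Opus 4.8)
The plan is to prove the two implications separately, with the backward direction being the genuine coinduction principle and the main content of the lemma. Throughout I would lean on Theorem~\ref{thm-treecoalg}, which identifies $(\TTT^\omega_D, \mathrm{root} \times \mathrm{subtree})$ as a terminal $\Lambda_D$-coalgebra.

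For the forward direction I would show that equality itself — the diagonal relation $\Delta := \set{(T,T)}{T \in \TTT^\omega_D}$ — is a bisimulation. This is immediate from Definition~\ref{dn-coindp}: if $T = T'$ then $\rt{T} = \rt{T'}$ and $\sbt{T} = \sbt{T'}$ componentwise, so the defining implication holds trivially, with each $R(T_\nu, T'_\nu)$ witnessed by $T_\nu = T'_\nu$. Since $\Delta(T,T)$ holds for the given $T$, the required bisimulation exists.

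For the backward direction, the key step is to endow an arbitrary bisimulation $R$ with a $\Lambda_D$-coalgebra structure $\fun{\gamma}{R}{\Lambda_D(R)}$ for which both projections $\pi_1, \pi_2 : R \to \TTT^\omega_D$ become coalgebra morphisms. Concretely, for $(U, U') \in R$ the bisimulation condition forces $\rt{U} = \rt{U'} =: d$; writing $\sbt{U} = (U_1, \ldots, U_{\ar{d}})$ and $\sbt{U'} = (U'_1, \ldots, U'_{\ar{d}})$, it also yields $R(U_\nu, U'_\nu)$ for every $\nu$, so I can set
\[
\gamma(U, U') := (d, ((U_1, U'_1), \ldots, (U_{\ar{d}}, U'_{\ar{d}}))) \in \{ d \} \times R^{\ar{d}} \subseteq \Lambda_D(R).
\]
A short computation then confirms that $(\mathrm{root} \times \mathrm{subtree}) \circ \pi_i = \Lambda_D(\pi_i) \circ \gamma$ for $i = 1, 2$, i.e.\ both projections are morphisms of coalgebras. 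Since $\TTT^\omega_D$ is terminal, there is a \emph{unique} coalgebra morphism $R \to \TTT^\omega_D$; as $\pi_1$ and $\pi_2$ are both such morphisms, uniqueness gives $\pi_1 = \pi_2$. Evaluating at the given pair yields $T = \pi_1(T, T') = \pi_2(T, T') = T'$.

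The main obstacle I anticipate is purely bookkeeping in the coalgebra step: one must check that $\gamma$ is well defined — in particular that the two roots coincide, so that a single tag $d$ and a single arity $\ar{d}$ govern both subtree tuples — and that the naturality square commutes on the nose. This is the one place where the precise form of Definition~\ref{dn-coindp} is essential and where a gap could hide. As an alternative, more elementary route that avoids the categorical machinery, one can prove by induction on the length of $s \in \NN_0^*$ that $R(T, T')$ implies $s \in S \Leftrightarrow s \in S'$ together with $L(s) = L'(s)$ whenever $s$ lies in the common domain, using the bisimulation condition to descend from a node to its children; since an infinite $D$-tree is determined by its underlying tree $S$ and compatible labelling $L$, this forces $T = T'$. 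There the difficulty merely migrates to the compatible-labelling bookkeeping, but the argument stays elementary.
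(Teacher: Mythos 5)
Your proposal is correct and follows essentially the same route as the paper: the `only-if' direction via the diagonal relation being a bisimulation, and the `if' direction via terminality of $(\TTT^\omega_D, \mathrm{root} \times \mathrm{subtree})$ from Theorem~\ref{thm-treecoalg}. The paper simply delegates the second direction to the cited literature (\cite{amm,ru}), whereas you spell out the standard argument — equipping $R$ with a coalgebra structure and using uniqueness of the morphism into the terminal coalgebra to force $\pi_1 = \pi_2$ — so no gap, just more detail than the paper records.
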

The `only-if' part is obvious as the identity on $\TTT^\omega_D$ is a bisimulation. The `if' part is known as the \emph{co-induction proof principle} and holds as $(\TTT^\omega_D, \mathrm{root} \times \mathrm{subtree})$ is a terminal co-algebra~\cite{amm,ru}.

As we will see, each infinite $D$-tree $T$ with $\rt{T} =  d$ and $\sbt{T} = (T_1, \ldots, T_{\ar{d}})$ is uniquely determined by its finite initial segments $T^{(n)}$ recursively defined by
\begin{align*}
& T^{(0)} := (\{ \emptywd \}, \emptywd \mapsto d), \\
&T^{(n+1)} := [d; T^{(n)}_1, \ldots, T^{(n)}_{\ar{d}}].
\end{align*}

\begin{lemma}\label{lem-initseq}
For every $n \in \NN$,
\begin{enumerate}
\item $T^{(n)}$ is a $D$-tree of height $n$.
\item $T^{(n)} \prec T^{(n+1)}$.
\end{enumerate}
\end{lemma}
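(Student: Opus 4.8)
The plan is to prove each of the two claims by induction on $n$, in both cases taking the induction statement to be universally quantified over all infinite $D$-trees $T \in \TTT^\omega_D$. This universal quantification is the conceptual heart of the set-up: the recursion defining $T^{(n+1)} = [d; T_1^{(n)}, \ldots, T_{\ar{d}}^{(n)}]$ refers to the initial segments $T_\nu^{(n)}$ of the \emph{subtrees} $T_\nu$ appearing in $\sbt{T} = (T_1, \ldots, T_{\ar{d}})$ (that is, $T_\nu^{(n)}$ means $(T_\nu)^{(n)}$, not a subtree of $T^{(n)}$), so the inductive step must be able to invoke the level-$n$ hypothesis at each subtree rather than at $T$ itself.

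For claim~(1), the base case $n = 0$ is immediate: by construction $T^{(0)} = (\{ \emptywd \}, \emptywd \mapsto d)$ with $d = \rt{T} \in D$, which is exactly the shape required of a $D$-tree of height $0$ by the first clause of Definition~\ref{dn-treen}. For the step I write $T^{(n+1)} = [d; T_1^{(n)}, \ldots, T_{\ar{d}}^{(n)}]$; each $T_\nu$ is again an infinite $D$-tree, so by the induction hypothesis every $T_\nu^{(n)}$ is a $D$-tree of height $n$, and feeding these into the second clause of Definition~\ref{dn-treen} shows that $T^{(n+1)}$ is a $D$-tree of height $n+1$.

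For claim~(2), I first note that claim~(1) guarantees $T^{(n)}$ and $T^{(n+1)}$ to be $D$-trees of heights $n$ and $n+1$ respectively, which is precisely what makes the relation $\prec$ of Definition~\ref{dn-pref} applicable to them. The base case $n = 0$ then falls directly under the first clause of that definition, since $T^{(0)} = (\{ \emptywd \}, \emptywd \mapsto d)$ and $T^{(1)} = [d; T_1^{(0)}, \ldots, T_{\ar{d}}^{(0)}]$. For the step I compare $T^{(n+1)} = [d; T_1^{(n)}, \ldots, T_{\ar{d}}^{(n)}]$ with $T^{(n+2)} = [d; T_1^{(n+1)}, \ldots, T_{\ar{d}}^{(n+1)}]$: the roots agree (both equal $d$), and applying the induction hypothesis to each subtree $T_\nu$ yields $T_\nu^{(n)} \prec T_\nu^{(n+1)}$, so the second clause of Definition~\ref{dn-pref} gives $T^{(n+1)} \prec T^{(n+2)}$.

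I do not expect a genuine obstacle here; the argument is routine once the induction is correctly framed. The only point that demands care is exactly the bookkeeping flagged above — recognising that $T_\nu^{(n)}$ is the initial segment of the subtree and therefore that the induction must range over all trees so the hypothesis can be used at the subtrees — together with checking that the auxiliary shape conditions of Definition~\ref{dn-treen} match the definition verbatim, which again follows by unfolding the same recursion.
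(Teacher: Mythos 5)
Your proof is correct. The paper states this lemma without any proof, treating it as routine, and your argument — induction on $n$ with the statement quantified over all $T \in \TTT^\omega_D$ so that the hypothesis can be invoked at the subtrees $T_\nu$ (the one genuinely non-trivial bookkeeping point, which you identify correctly) — is exactly the intended argument.
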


Let $\TT_D$ be the set of all infinite sequences $(T_\nu)_{\nu \in \NN}$ with  $T_\nu \in \TTT^{(\nu)}_D$ and $T_\nu \prec T_{\nu+1}$. Then $(T^{(n)})_{n \in \NN} \in \TT_D$. Define $\fun{G}{\TTT^\omega_D}{\TT_D}$ by $G(T) := (T^{(n)})_{n \in \NN}$.

Next, let conversely $(T_\nu)_{\nu \in \NN} \in \TT_D$. Then $T_0 = (\{ \emptywd \}, \emptywd \mapsto d_0)$ and $T_{\nu} \prec T_{\nu+1}$, for $\nu \ge 0$. Since $T_{\nu+1} \in \TTT_{D}^{(\nu+1)}$, there are $d_{\nu+1} \in D$ and $Q_{\nu}^{\pair{1}}, \ldots, Q_{\nu}^{\pair{\ar{d_{\nu+1}}}} \in \TTT_{D}^{(\nu)}$ so that 
\[
T_{\nu+1} = [d_{\nu+1}; Q_{\nu}^{\pair{1}}, \ldots, Q_{\nu}^{\pair{\ar{d_{\nu+1}}}}].
\]
If $\nu > 0$, similarly $T_{\nu} = [d_{\nu}; Q_{\nu-1}^{\pair{1}}, \ldots, Q_{\nu-1}^{\pair{
\ar{d_{\nu}}}}]$ with $d_{\nu} \in D$ and $Q_{\nu-1}^{\pair{1}}, \ldots, Q_{\nu-1}^{\pair{\ar{d_{\nu}}}} \in \TTT_{D}^{(\nu-1)}$. As $T_{\nu} \prec T_{\nu+1}$, it follows that $d_{\nu} = d_{\nu+1}$ and $Q_{\nu-1}^{\pair{\kappa}} \prec Q_{\nu}^{\pair{\kappa}}$, for $1 \le \kappa \le \ar{d_\nu}$. Let $d := d_0$. Then $d_\nu = d$, for all $\nu \ge 0$. In addition, set $Q^{\pair{\kappa}} := (Q^{\pair{\kappa}}_\nu)_{\nu \in \NN}$, for $1 \le \kappa \le \ar{d}$. It follows that $Q^{\pair{\kappa}} \in \TT_D$. Set
\[
\hd{(T_{\nu})_{\nu \in \NN}} := d \quad\text{and}\quad 
\tl{(T_{\nu})_{\nu \in \NN}} := (Q^{\pair{1}}, \ldots, Q^{\pair{\ar{d}}}).
\]
Then $(\TT_{D}, \mathrm{hd} \times \mathrm{tl})$ is a $\Lambda_{D}$-co-algebra.
Now, co-recursively define $\fun{F}{\TT_D}{\TTT^\omega_D}$ by
\[
\rt{F((T_\nu)_{\nu \in \NN})} := \hd{(T_\nu)_{\nu \in \NN}} \quad\text{and}\quad 
\sbt{F((T_\nu)_{\nu \in \NN})} := \tl{(T_\nu)_{\nu \in \NN}}.
\]
Then 
\begin{equation}\label{eq-F(n)}
F((T_\nu)_{\nu \in \NN})^{(\kappa)} = T_\kappa,
\end{equation}
for $\kappa \in \NN$. Moreover, it follows with the co-induction proof principle that $F \circ G$ is the identity on $\TTT^\omega_D$.

\begin{proposition}\label{pn-seqtree}
$\TTT^\omega_D$ and $\TT_D$ are isomorphic $\Lambda_{D}$-co-algebras.
\end{proposition}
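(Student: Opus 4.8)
The plan is to show that the two maps $F$ and $G$ already constructed are mutually inverse $\Lambda_D$-coalgebra morphisms; by definition of isomorphism in the category of $\Lambda_D$-coalgebras this is exactly the assertion of the proposition. Three things have to be verified: that $F$ respects the coalgebra structure, that $F$ and $G$ are two-sided inverses of one another, and that the inverse is again a morphism.

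First I would observe that $F$ is a $\Lambda_D$-coalgebra morphism essentially by construction. Its co-recursive definition prescribes $\rt{F(s)} = \hd{s}$ and, with the tail read componentwise, $\sbt{F(s)} = (F(Q^{\pair{1}}), \ldots, F(Q^{\pair{\ar{d}}}))$, where $\tl{s} = (Q^{\pair{1}}, \ldots, Q^{\pair{\ar{d}}})$ and $d = \hd{s}$. Taken together, these two clauses are precisely the commutativity of the square $(\mathrm{root} \times \mathrm{subtree}) \circ F = \Lambda_D(F) \circ (\mathrm{hd} \times \mathrm{tl})$, i.e.\ they say that $F$ is the (unique) morphism from the $\Lambda_D$-coalgebra $(\TT_D, \mathrm{hd} \times \mathrm{tl})$ into the terminal coalgebra $(\TTT^\omega_D, \mathrm{root} \times \mathrm{subtree})$.

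Next I would check that $F$ is a bijection with two-sided inverse $G$. One direction, $F \circ G = \id_{\TTT^\omega_D}$, is already available from the co-induction proof principle. For the converse I would fix $(T_\nu)_{\nu \in \NN} \in \TT_D$ and compute, directly from the definition of $G$, that $G(F((T_\nu)_{\nu \in \NN})) = (F((T_\nu)_{\nu \in \NN})^{(\kappa)})_{\kappa \in \NN}$; by (\ref{eq-F(n)}) the $\kappa$-th entry equals $T_\kappa$, so $G \circ F = \id_{\TT_D}$.

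Finally, since $F$ is a bijective $\Lambda_D$-coalgebra morphism, its set-theoretic inverse $G$ is automatically a $\Lambda_D$-coalgebra morphism as well: composing the morphism identity $(\mathrm{root} \times \mathrm{subtree}) \circ F = \Lambda_D(F) \circ (\mathrm{hd} \times \mathrm{tl})$ on the right with $G = F^{-1}$ and on the left with $\Lambda_D(G)$, and using functoriality $\Lambda_D(G) \circ \Lambda_D(F) = \Lambda_D(\id)$, yields $(\mathrm{hd} \times \mathrm{tl}) \circ G = \Lambda_D(G) \circ (\mathrm{root} \times \mathrm{subtree})$. Hence $F$ and $G$ witness an isomorphism of $\Lambda_D$-coalgebras. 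I do not expect any real obstacle; the only point that needs care is the bookkeeping in the co-recursive definition of $F$, namely that the tail is understood with $F$ applied to each component, so that the morphism square genuinely commutes. Alternatively, one could bypass the categorical shortcut and verify $G$ directly by showing $\hd{G(T)} = \rt{T}$ and $\tl{G(T)} = (G(T_1), \ldots, G(T_{\ar{d}}))$ from the recursion $T^{(n+1)} = [d; T^{(n)}_1, \ldots, T^{(n)}_{\ar{d}}]$.
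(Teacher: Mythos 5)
Your proof is correct and follows essentially the same route as the paper: the paper states the proposition as a summary of the immediately preceding construction, where $F$ is defined co-recursively (hence is a $\Lambda_D$-coalgebra morphism into the terminal coalgebra), $G \circ F = \id_{\TT_D}$ follows from equation~(\ref{eq-F(n)}), and $F \circ G = \id_{\TTT^\omega_D}$ follows by the co-induction proof principle. Your additional observations—that the co-recursive clauses for $F$ are exactly the commutativity of the morphism square, and that the set-theoretic inverse of a bijective coalgebra morphism is automatically a coalgebra morphism—just make explicit what the paper leaves implicit.
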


In what follows we will mostly assume that $D$ is finite, with one exception where in different tree levels different labels may be used. However, for each  level the set of labels used for nodes of that level is finite.  We will now extend the above consideration to this case.

Let $\vec{D} = (D_{n})_{n \in \NN}$ be a family of finite sets, and $\widehat{D} := \bigcup_{n \in \NN} D_{n}$. For a tree $S \subseteq \NN_{0}^{*}$ a labelling $\fun{L}{S}{\widehat{D}}$ is \emph{appropriate}, if for all $n \in \NN$ and all $s \in S$ of length $n$, $L(s) \in D_{n}$. Then
\[
\TTT^{\omega}_{\vec D} := \set{(S, L)}{\text{$S$ is a tree with appropriate compatible labelling $L$}}
\]
is the set of \emph{infinite $\vec D$-trees}. Obviously, every infinite $\vec D$-tree is an infinite $\widehat{D}$-tree with appropriate labelling, and vice versa.

For $m \in \NN$ let $\vec{D}^{(m)} := (D_{m+n})_{n \in \NN}$. Set $\overrightarrow{\TTT^{\omega}_{\vec{D}}} := (\TTT^{\omega}_{\vec{D}^{(n)}})_{n \in \NN}$. Moreover, for $n \in \NN$ and $T = (S, L) \in \TTT^{\omega}_{\vec{D}^{(n)}}$ define 
\begin{align*}
& \mathrm{root}_{n}(T) := L(\emptywd), \\
& \mathrm{subtree}_{n}(T) := ((S_{1}, L_{1}), \ldots, (S_{\ar{\mathrm{root}_{n}(T)}}, L_{\ar{\mathrm{root}_{n}(T)}})),
\end{align*}
where $S_{i} := \set{s \in \NN_{0}^{*}}{i \star s \in S}$ and $L_{i}(s) := L(i \star s)$, for $1 \le i \le \ar{\mathrm{root}_{n}(T)}$ and $s \in S_{i}$. Then $\mathrm{root}_{n}(T) \in D_{n}$ and $(S_{i}, L_{i}) \in \TTT^{\omega}_{\vec{D}^{(n+1)}}$.

Let $\mathbf{Set}^{\omega}$ be the $\omega$-fold product of the category $\mathbf{Set}$ with itself: objects are infinite sequences of sets  $(X_{n})_{n \in \NN}$ and morphisms infinite sequences of functions $(\fun{f_{n}}{X_{n}}{Y_{n}})_{n \in \NN}$ with componentwise composition. 
\begin{theorem}\label{thm-treecoalgvec}
Define the functor $\fun{\Lambda_{\vec D}}{\mathbf{Set}^{\omega}}{\mathbf{Set^{\omega}}}$ by 
\[
(\Lambda_{\vec D}(\vec X))_{n \in \NN} :=  (\bigcup_{d \in D_{n}} \{ d \} \times X_{n+1}^{\ar{d}})_{n \in \NN}.
\]
Then $(\overrightarrow{\TTT^\omega_{\vec D}}, (\mathrm{root}_{n} \times \mathrm{subtree}_{n})_{n \in \NN})$ is a terminal co-algebra of $\Lambda_{\vec D}$.
\end{theorem}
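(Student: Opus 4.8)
The plan is to deduce the statement from the single–label–set case already recorded in Theorem~\ref{thm-treecoalg}, applied to the finite set $\widehat{D} = \bigcup_{n \in \NN} D_{n}$, so that only the grading by levels has to be handled by hand. I view every $\vec D$-tree as a $\widehat{D}$-tree subject to the extra constraint that a node at depth $m$ carries a label from $D_{m}$; under this identification $\mathrm{root}_{n}$ and $\mathrm{subtree}_{n}$ agree with $\mathrm{root}$ and $\mathrm{subtree}$ on the trees where they are defined, and the remark preceding the theorem already shows that $(\overrightarrow{\TTT^{\omega}_{\vec D}}, (\mathrm{root}_{n} \times \mathrm{subtree}_{n})_{n})$ is a $\Lambda_{\vec D}$-co-algebra. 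It remains to prove terminality.

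Given a $\Lambda_{\vec D}$-co-algebra $(\vec X, (\alpha_{n})_{n})$ with $\vec X = (X_{n})_{n}$, I form the tagged disjoint union $X := \bigcup_{n \in \NN} \{ n \} \times X_{n}$ and define $\fun{\alpha}{X}{\Lambda_{\widehat D}(X)}$ as follows: if $\alpha_{n}(x) = (d, (x_{1}, \ldots, x_{\ar{d}}))$ with $d \in D_{n} \subseteq \widehat D$, set $\alpha(n, x) := (d, ((n{+}1, x_{1}), \ldots, (n{+}1, x_{\ar{d}})))$. This is a well-defined $\Lambda_{\widehat D}$-co-algebra, so by Theorem~\ref{thm-treecoalg} there is a \emph{unique} $\Lambda_{\widehat D}$-morphism $\fun{h}{X}{\TTT^{\omega}_{\widehat D}}$. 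For existence of a $\Lambda_{\vec D}$-morphism I restrict $h$ to levels, putting $h_{n}(x) := h(n, x)$; the morphism equations for $\vec\alpha$ then follow from the equation for $\alpha$ once I know that each $h(n, x)$ actually lies in $\TTT^{\omega}_{\vec D^{(n)}}$, i.e.\ has the correct grading.

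This grading claim is the only point that needs real work, and I would settle it co-inductively. ``Being an appropriate $\vec D^{(n)}$-tree'' is the largest family of predicates $(\mathrm{Appr}_{n})_{n}$ on $\TTT^{\omega}_{\widehat D}$ with
\[
\mathrm{Appr}_{n}(T) \overset{\nu}{=} \rt{T} \in D_{n} \wedge (\forall 1 \le \kappa \le \ar{\rt{T}})\, \mathrm{Appr}_{n+1}(\sbt{T}_{\kappa}).
\]
The family $U_{n} := \set{h(n, x)}{x \in X_{n}}$ satisfies exactly this closure condition: $\rt{h(n,x)}$ is the first component of $\alpha_{n}(x)$, hence lies in $D_{n}$, and the subtrees of $h(n,x)$ are the trees $h(n{+}1, x_{\kappa})$ with $x_{\kappa} \in X_{n+1}$, hence lie in $U_{n+1}$. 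Co-induction yields $U_{n} \subseteq \mathrm{Appr}_{n}$, which is precisely the claim $h(n,x) \in \TTT^{\omega}_{\vec D^{(n)}}$.

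For uniqueness I reverse the translation: any $\Lambda_{\vec D}$-morphism $(g_{n})_{n} \colon (\vec X, \vec\alpha) \to \overrightarrow{\TTT^{\omega}_{\vec D}}$ assembles into a single map $\fun{g}{X}{\TTT^{\omega}_{\widehat D}}$ by $g(n,x) := g_{n}(x)$, and a direct unfolding of the two morphism equations (again using $\mathrm{root} = \mathrm{root}_{n}$ and $\mathrm{subtree} = \mathrm{subtree}_{n}$ on appropriate trees) shows $g$ is a $\Lambda_{\widehat D}$-morphism; by the uniqueness half of Theorem~\ref{thm-treecoalg} we get $g = h$, whence $(g_{n})_{n} = (h_{n})_{n}$. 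Thus $\overrightarrow{\TTT^{\omega}_{\vec D}}$ is terminal. (Uniqueness could equally be obtained directly from the co-induction proof principle of Lemma~\ref{lem-coindp}, reading the union over all levels of the graphs of two competing morphism families as a bisimulation on $\TTT^{\omega}_{\widehat D}$; but routing everything through the already-established $\widehat D$-case is the most economical.) I expect the main obstacle to be not the categorical bookkeeping, which is routine, but the verification that the induced morphism respects the level grading, that is, the co-inductive argument of the previous paragraph.
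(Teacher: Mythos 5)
Your proposal is correct, but it takes a genuinely different route from the paper: the paper gives no argument for Theorem~\ref{thm-treecoalgvec} at all --- it is stated in the same spirit as Theorem~\ref{thm-treecoalg}, which is justified only by appeal to the general existence results for terminal co-algebras cited from the literature, here implicitly applied to the functor $\Lambda_{\vec D}$ on $\mathbf{Set}^{\omega}$. You instead reduce the graded theorem to the ungraded one: you package a $\Lambda_{\vec D}$-co-algebra into a single $\Lambda_{\widehat D}$-co-algebra on the tagged disjoint union $\bigcup_{n}\{n\}\times X_{n}$, invoke terminality of $\TTT^{\omega}_{\widehat D}$, and recover the level-wise morphism by a co-inductive argument showing that the induced map lands in the appropriately labelled trees, with uniqueness passing back through the same translation. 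Both ingredients are licensed by the paper itself: Theorem~\ref{thm-treecoalg} is stated before finiteness of $D$ is assumed (so it applies to the possibly infinite $\widehat D$), and the text explicitly observes that infinite $\vec D$-trees are exactly infinite $\widehat D$-trees with appropriate labelling. What your route buys is a self-contained derivation that makes the relationship between the two theorems explicit --- the graded statement is a corollary of the ungraded one plus a grading lemma --- and it correctly isolates the only substantive step, namely the co-inductive verification that $h(n,x)$ is appropriately labelled. What the paper's route buys is brevity and uniformity, since both theorems fall out of the same general fixed-point machinery. One point you should make explicit: your identification of ``appropriately labelled for $\vec D^{(n)}$'' with the greatest fixed point of the operator defining $(\mathrm{Appr}_{n})_{n}$ requires the (easy) induction on node depth showing that any post-fixed family of predicates is contained in the appropriateness family; as written, the step ``$U_{n}\subseteq \mathrm{Appr}_{n}$ is precisely the claim $h(n,x)\in\TTT^{\omega}_{\vec D^{(n)}}$'' asserts this identification rather than proving it.
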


The set of all infinite $\vec D$-trees comes equipped with a canonical metric:
\[
\delta(T, T') = \begin{cases}
                                   0 & \text{if $T = T'$,}\\
                                   2^{-\min \{ n \mid T^{(n)} \not= T'^{(n)} \}} & \text{otherwise.}
                     \end{cases}
\]                                    
Obviously, $\delta(T, T') \le 1$, for all $T, T' \in \TTT^\omega_{\vec D}$, i.e.\ $(\TTT^\omega_{\vec D}, \delta)$ is bounded. By definition, the metric topology on $\TTT^\omega_{\vec D}$ is generated by the collection of all balls $\ball{\delta}{T}{2^{-n}}$ of radius $2^{-n}$ around $T$. 

\begin{proposition}\label{pn-treemcompl}
$(\TTT_{\vec D}^{\omega}, \delta)$ is complete.
\end{proposition}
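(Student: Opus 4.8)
\section*{Proof proposal}

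The plan is to exploit the ultrametric nature of $\delta$, which reduces convergence to the eventual stabilisation of finite initial segments. The first step is to record precisely how $\delta$ relates to the segments $T^{(n)}$. By Lemma~\ref{lem-initseq}(2) we have $T^{(n)} \prec T^{(n+1)}$, and since an immediate prefix of a finite tree is uniquely determined, $T^{(n+1)} = T'^{(n+1)}$ forces $T^{(n)} = T'^{(n)}$. Consequently the set of heights at which two trees differ is upward closed, so that
\[
\delta(T, T') < 2^{-n} \iff T^{(n)} = T'^{(n)}.
\]
Given a Cauchy sequence $(T_k)_{k \in \NN}$ in $(\TTT^\omega_{\vec D}, \delta)$, this equivalence rephrases the Cauchy condition combinatorially: for every $n$ there is $K_n$ such that $T_k^{(n)} = T_l^{(n)}$ whenever $k, l \ge K_n$; that is, the height-$n$ initial segments are eventually constant.

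For each $n$ I would set $U_n := T_{K_n}^{(n)}$, the eventual common value of the height-$n$ segments. Each $U_n$ is a finite tree of height $n$ whose labelling is appropriate, since a node of length $m \le n$ carries the label it has in some $T_k$, which lies in $D_m$. The sequence $(U_n)_n$ is coherent: choosing $k \ge \max\{K_n, K_{n+1}\}$ yields $U_n = T_k^{(n)} \prec T_k^{(n+1)} = U_{n+1}$ by Lemma~\ref{lem-initseq}(2). Hence $(U_n)_{n \in \NN}$ belongs to the sequence space $\TT_{\widehat D}$ of coherent initial-segment sequences. Here I rely on the fact, noted after Definition of $\TTT^{\omega}_{\vec D}$, that an infinite $\vec D$-tree is exactly an infinite $\widehat D$-tree with appropriate labelling, and that the co-algebra results preceding the finiteness assumption apply verbatim to the (possibly infinite) label set $\widehat D$.

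To produce the limit I would invoke the isomorphism of Proposition~\ref{pn-seqtree} for the label set $\widehat D$ and put $T := F((U_n)_{n \in \NN})$. By \eqref{eq-F(n)} it satisfies $T^{(\kappa)} = U_\kappa$ for all $\kappa$, so $T$ is an infinite $\widehat D$-tree. Its labelling is appropriate: the label in $T$ of a node $s$ of length $m$ equals its label in $T^{(m)} = U_m$, which lies in $D_m$ by appropriateness of $U_m$. Therefore $T \in \TTT^\omega_{\vec D}$. Convergence is then immediate: given $\varepsilon > 0$, pick $n$ with $2^{-n} \le \varepsilon$; for $k \ge K_n$ we have $T_k^{(n)} = U_n = T^{(n)}$, whence $\delta(T_k, T) < 2^{-n} \le \varepsilon$ by the equivalence above. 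Alternatively one could bypass Proposition~\ref{pn-seqtree} and assemble $T$ directly as $(\bigcup_n S'_n, \bigcup_n L'_n)$ from the coherent data $U_n = (S'_n, L'_n)$, checking by hand that the result is prefix-closed with compatible appropriate labelling.

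The stabilisation argument and the final convergence estimate are routine. The step that needs genuine care, and which I regard as the main obstacle, is verifying that the object assembled from the stabilised segments is again a bona fide infinite $\vec D$-tree, i.e.\ that its labelling is compatible and appropriate and that its own initial segments reproduce the $U_n$. Appealing to the sequence--tree isomorphism of Proposition~\ref{pn-seqtree}, together with \eqref{eq-F(n)}, is precisely what discharges this cleanly.
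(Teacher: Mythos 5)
Your proof is correct and follows essentially the same route as the paper's: both reduce the Cauchy condition to eventual stabilisation of the finite initial segments, assemble the stabilised segments into a coherent sequence in $\TT_{\widehat{D}}$, and invoke the map $F$ of Proposition~\ref{pn-seqtree} together with (\ref{eq-F(n)}) to obtain the limit tree, checking appropriateness of its labelling to land back in $\TTT^{\omega}_{\vec D}$. The only cosmetic difference is that the paper works with a regular Cauchy sequence and takes the diagonal $T^{(n)}_n$, whereas you treat an arbitrary Cauchy sequence via stabilisation indices $K_n$.
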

\begin{proof}
Let $(T_i)_{i \in \NN}$ be a regular Cauchy sequence in $\TTT^\omega_{\vec D}$. Then $\delta(T_n, T_m) < 2^{-n}$, for all $m \ge n$. Thus, $T_m^{(n)} = T_n^{(n)}$, for all $m \ge n$. In particular, we have that $T^{(n)}_n = T^{(n)}_{n+1}$. By Lemma~\ref{lem-initseq}, $T^{(n)}_{n+1} \prec T^{(n+1)}_{n+1}$. Moreover, $T^{(n)}_n \in \TTT^{(n)}_{\widehat{D}}$. Therefore, $(T^{(n)}_n)_{n \in \NN} \in \TT_{\widehat{D}}$. Let $T := F((T^{(n)}_n)_{n \in \NN})$. Then $T \in \TTT^{\omega}_{\widehat{D}}$. In addition, $T$ has appropriate labelling. Hence, $T \in \TTT^{\omega}_{\vec D}$. By (\ref{eq-F(n)}),  $T^{(n)} = T^{(n)}_n = T^{(n)}_m$, for all $m \ge n$, which means that $\delta(T_m, T) < 2^{-n}$, for all $m \ge n$. Thus, $(T_i)_{i \in \NN}$ converges to $T$.
\end{proof}

Since there are only finitely many $\widehat{D}$-trees of height $n$ with appropriate labelling, for every natural number $n$, and, on the other hand, every infinite $\vec D$-tree has an initial segment of height $n$, $(\TTT^\omega_{\vec D}, \delta)$ is also totally bounded.

\begin{theorem}\label{thm-treecomp}
$(\TTT^\omega_{\vec D}, \delta)$ is compact.
\end{theorem}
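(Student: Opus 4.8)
The plan is to reduce the statement to the familiar characterisation that a metric space is compact precisely when it is complete and totally bounded. Completeness has just been established in Proposition~\ref{pn-treemcompl}. Total boundedness is the combinatorial observation recorded immediately before the theorem: for each $n \in \NN$ there are only finitely many $\widehat{D}$-trees of height $n$ carrying an appropriate labelling, and two infinite $\vec D$-trees sharing the same initial segment of height $n$ are at distance at most $2^{-(n+1)}$. Hence, given $\varepsilon > 0$, choosing $n$ with $2^{-n} \le \varepsilon$ partitions $\TTT^\omega_{\vec D}$ into finitely many classes of height-$n$ initial segments, each of diameter at most $2^{-(n+1)}$ and so contained in a single ball $\ball{\delta}{T}{\varepsilon}$. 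Invoking the standard theorem then finishes the argument in essentially one line.

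If one prefers an argument closer to the constructive, tree-theoretic spirit of the paper, I would instead prove sequential compactness directly by a K\"onig/diagonalisation argument. Starting from an arbitrary sequence $(T_i)_{i \in \NN}$ in $\TTT^\omega_{\vec D}$, I would build a descending chain of infinite index sets $\NN \supseteq J_0 \supseteq J_1 \supseteq \cdots$ such that all trees indexed by $J_n$ agree on their initial segment of height $n$; this is possible at each stage by the pigeonhole principle, since only finitely many height-$n$ initial segments occur. Writing $R^{(n)}$ for this common value, Lemma~\ref{lem-initseq} applied to any single index in $J_{n+1} \subseteq J_n$ gives $R^{(n)} \prec R^{(n+1)}$, so $(R^{(n)})_{n \in \NN} \in \TT_{\widehat{D}}$. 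The tree $R := F((R^{(n)})_{n \in \NN})$ then lies in $\TTT^\omega_{\widehat{D}}$, and since each $R^{(n)}$ inherits an appropriate labelling, in fact $R \in \TTT^\omega_{\vec D}$. Finally, taking a diagonal subsequence by choosing $i_n \in J_n$ strictly increasing, one has $T_{i_m}^{(n)} = R^{(n)}$ for all $m \ge n$, which by (\ref{eq-F(n)}) is precisely the height-$n$ initial segment of $R$; hence $\delta(T_{i_m}, R) \le 2^{-(n+1)}$, and the subsequence converges to $R$.

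For the first route there is essentially no obstacle beyond correctly quoting the characterisation: the genuine content already resides in Proposition~\ref{pn-treemcompl} and in the finiteness of each level. For the second route the only delicate points are bookkeeping: verifying that the stabilised segments really do form an element of $\TT_{\widehat{D}}$ (which is where $R^{(n)} \prec R^{(n+1)}$ and Lemma~\ref{lem-initseq} are needed), checking that appropriateness of the labelling is preserved in the limit so that $R$ lands in $\TTT^\omega_{\vec D}$ rather than merely in $\TTT^\omega_{\widehat{D}}$, and arranging the diagonal extraction so that convergence holds for the entire tail. None of these is hard, but they are exactly the places where the structural facts about $\vec D$-trees developed earlier are used in place of generic metric-space reasoning.
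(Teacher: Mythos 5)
Your first route is exactly the paper's own proof: the paper establishes completeness in Proposition~\ref{pn-treemcompl}, notes total boundedness in the remark immediately preceding the theorem (finitely many appropriately labelled $\widehat{D}$-trees of each height), and concludes compactness by the standard characterisation of compact metric spaces. Your second, K\"onig-style diagonalisation is also correct, but it merely unfolds the same two ingredients into a direct sequential-compactness argument (re-using $F$, (\ref{eq-F(n)}) and Lemma~\ref{lem-initseq} just as the completeness proof does), so nothing beyond the first route is needed.
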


\begin{corollary}\label{cor-separb}
$(\TTT^\omega_{\vec D}, \delta)$ is separable.
\end{corollary}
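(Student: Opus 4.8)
The plan is to derive separability from the total boundedness already noted just before Theorem~\ref{thm-treecomp}, but to keep the argument self-contained I would exhibit an explicit countable dense subset built from the finite sets of initial segments. The starting observation is that, by the definition of $\delta$ together with the nesting of initial segments from Lemma~\ref{lem-initseq}, the open ball $\ball{\delta}{T}{2^{-n}}$ consists of exactly those $T' \in \TTT^\omega_{\vec D}$ whose height-$n$ initial segment coincides with that of $T$, that is $T'^{(n)} = T^{(n)}$. Indeed $\delta(T,T') < 2^{-n}$ holds if and only if the least index at which $T^{(k)}$ and $T'^{(k)}$ differ exceeds $n$, which by the prefix nesting is equivalent to $T^{(n)} = T'^{(n)}$. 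Thus points of $\TTT^\omega_{\vec D}$ are separated at scale $2^{-n}$ precisely according to their height-$n$ initial segments.

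First I would fix, for each $n \in \NN$, the set $\mathcal{S}_n$ of those trees in $\TTT^{(n)}_{\widehat D}$ with appropriate labelling that arise as $T^{(n)}$ for some $T \in \TTT^\omega_{\vec D}$. As recorded in the excerpt, there are only finitely many $\widehat D$-trees of height $n$ with appropriate labelling, so each $\mathcal{S}_n$ is finite. For every $s \in \mathcal{S}_n$ I would then choose one infinite tree $T_s \in \TTT^\omega_{\vec D}$ with $T_s^{(n)} = s$, which exists by the very definition of $\mathcal{S}_n$, and set $\mathcal{D} := \set{T_s}{n \in \NN,\ s \in \mathcal{S}_n}$.

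The set $\mathcal{D}$ is a countable union of finite sets, hence countable. For density, given any $T \in \TTT^\omega_{\vec D}$ and any $\varepsilon > 0$, I would pick $n$ with $2^{-n} < \varepsilon$ and take $s := T^{(n)} \in \mathcal{S}_n$; then $T_s^{(n)} = s = T^{(n)}$, so $T_s \in \ball{\delta}{T}{2^{-n}}$ and therefore $\delta(T, T_s) < 2^{-n} < \varepsilon$. Hence every nonempty open ball meets $\mathcal{D}$, so $\mathcal{D}$ is dense and $(\TTT^\omega_{\vec D}, \delta)$ is separable.

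I do not expect a genuine obstacle here: the statement is the classical fact that a totally bounded, equivalently a compact, metric space is separable, and all the ingredients — total boundedness and the cylinder description of the balls — are already in place. The only point requiring a little care is the well-definedness of the representatives $T_s$, which rests on defining $\mathcal{S}_n$ as the set of initial segments that are \emph{actually realised}, so that each $s$ is guaranteed to admit an infinite extension within $\TTT^\omega_{\vec D}$.
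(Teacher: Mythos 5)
Your proof is correct, but it takes a genuinely different route from the paper's. The paper derives the corollary from Theorem~\ref{thm-treecomp}: it considers the countable family of coverings $\CCC_n = \set{\ball{\delta}{T}{2^{-n}}}{T \in \TTT^\omega_{\vec D}}$ and applies compactness to extract, for each $n$, a finite subcover whose centres together form a countable dense set --- the standard ``compact metric space is separable'' argument. You instead bypass compactness entirely: you use only the counting observation stated just before Theorem~\ref{thm-treecomp} (there are finitely many appropriately labelled $\widehat D$-trees of each height) together with Lemma~\ref{lem-initseq}, identify the ball $\ball{\delta}{T}{2^{-n}}$ with the cylinder $\set{T'}{T'^{(n)} = T^{(n)}}$, and exhibit an explicit countable dense set of representative extensions $T_s$ of realised initial segments $s$. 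What the paper's argument buys is brevity and reliance on a single already-established theorem; what yours buys is self-containedness (it does not need the compactness theorem, only total boundedness), an explicit --- indeed effectively enumerable --- dense set, which fits the constructive spirit of the paper, and the cylinder description of the balls, which is a useful clarification in its own right. Your care in defining $\mathcal{S}_n$ as the set of \emph{realised} height-$n$ segments, so that each $s$ admits an infinite extension, is exactly the right point to flag: without it the choice of $T_s$ would be unjustified.
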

\begin{proof}
Consider the countable family of coverings $\CCC_{n} := \set{\ball{\delta}{x}{2^{-n}}}{x \in X}$ with $n \in \NN$ and apply compactness to obtain a countable dense set of points. 
\end{proof}

\section{Extended iterated function systems}\label{sec-ifs}

In the remainder of this paper $X$ is a Hausdorff space and $D$ a finite set of continuous self-maps on $X$. 

\begin{definition}\label{dn-ifs}
$(X, D)$ is an \emph{iterated function system (IFS)} if $X$ is a non-empty Hausdorff space and $D$ a finite set of unary self-maps $\fun{d}{X}{X}$.
\end{definition}

For our aims we will extend this notion by allowing the maps $d$ to be of any positive finite arity $\ar{d}$.

\begin{definition}\label{dn-eifs}
An \emph{extended} IFS $(X, D)$ consists of a non-empty Hausdorff space $X$ and a finite set $D$ of continuous maps $\fun{d}{X^{\ar{d}}}{X}$ of positive finite arity $\ar{d}$.
\end{definition}

\begin{definition}\label{dn-cov}
An extended IFS $(X, D)$ is \emph{covering} if 
\[
X = \bigcup\set{\range(d)}{d \in D},
\]
where $\range(d) := d[X^{\ar{d}}]$.
\end{definition}
In the context of iterated function systems the notion \emph{self-similar} is used instead. We think, however, that in a topological context the above notion is more appropriate.

Covering extended IFS can be characterised co-inductively. Define $\CC_X \subseteq  X$ by 
\[
\CC_X(x) \overset{\nu}{=} (\exists d \in D) (\exists y_1, \ldots, y_{\ar{d}} \in X)\, x = d(y_1, \ldots, y_{\ar{d}}) \wedge (\forall 1 \le \kappa \le \ar{d})\,  \CC_X(y_\kappa).
\]

\begin{lemma}\label{lem-coindX}
Let $(X, D)$ be covering. Then $X = \CC_X$.
\end{lemma}
\begin{proof}
By definition, $\CC_X \subseteq X$. The converse inclusion follows with co-induction. Observe to this end that because of the covering property the defining right hand side in the definition of $\CC_X$ remains true when $\CC_X$ is replaced by $X$.
\end{proof} 

Note that in general, $d$ and $y_1$, \ldots, $y_{\ar{d}}$ are not uniquely determined by $x$. So, there is no canonical way to turn $(X, D)$ into a $\Lambda_{D}$-co-algebra. Moreover, bear in mind that the lemma holds only classically: there is no way, in general, to compute $d$ and $y_1$, \ldots, $y_{\ar{d}}$.

Each finite $D$-tree $T$ defines a continuous map $\fun{f_T}{X^{\ar{T}}}{X}$ of arity $\ar{T}$:
\begin{itemize}
\item If $T = \{ d \}$, for some $d \in D$, then $\ar{T} = \ar{d}$ and $f_T = d$.

\item If $T = [d; T_1, \ldots, T_{\ar{d}}]$, then $\ar{T} = \sum\nolimits^{\ar{d}}_{\kappa = 1} \ar{T_\kappa}$ and $f_T = d \circ (f_{T_1} \times \cdots \times f_{T_{\ar{d}}})$.
\end{itemize}

The next result follows by induction on $n$.

\begin{lemma}\label{lem-ncov}
Let $(X, D)$ be covering. Then for every $n \in \NN$,
\[
X = \bigcup\set{\range(f_T)}{T \in \TTT^{(n)}_D}.
\]
\end{lemma}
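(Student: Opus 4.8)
The plan is to prove Lemma~\ref{lem-ncov} by induction on $n$, exactly as the paper indicates. The statement asserts that for a covering extended IFS, the finitely many maps $f_T$ arising from $D$-trees of height $n$ have ranges that together cover all of $X$. The base case $n = 0$ is essentially the covering hypothesis itself: the $D$-trees of height $0$ are exactly the single-node trees $(\{\emptywd\}, \emptywd \mapsto d)$ for $d \in D$, and for such a tree $f_T = d$, so $\range(f_T) = \range(d)$. Hence $\bigcup\set{\range(f_T)}{T \in \TTT^{(0)}_D} = \bigcup\set{\range(d)}{d \in D} = X$ by Definition~\ref{dn-cov}.

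For the inductive step I would assume the claim holds for $n$ and prove it for $n+1$. Fix an arbitrary $x \in X$. By the covering property there is some $d \in D$ and points $y_1, \ldots, y_{\ar{d}} \in X$ with $x = d(y_1, \ldots, y_{\ar{d}})$. By the induction hypothesis, each $y_\kappa$ lies in $\range(f_{T_\kappa})$ for some $T_\kappa \in \TTT^{(n)}_D$, say $y_\kappa = f_{T_\kappa}(\vec{z}_\kappa)$ with $\vec{z}_\kappa \in X^{\ar{T_\kappa}}$. Now form the height-$(n+1)$ tree $T := [d; T_1, \ldots, T_{\ar{d}}]$. By the definition of $f_T$ for composite trees, $f_T = d \circ (f_{T_1} \times \cdots \times f_{T_{\ar{d}}})$, so applying $f_T$ to the concatenated tuple $(\vec{z}_1, \ldots, \vec{z}_{\ar{d}})$ gives $d(f_{T_1}(\vec{z}_1), \ldots, f_{T_{\ar{d}}}(\vec{z}_{\ar{d}})) = d(y_1, \ldots, y_{\ar{d}}) = x$. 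Thus $x \in \range(f_T)$ with $T \in \TTT^{(n+1)}_D$, which shows $X \subseteq \bigcup\set{\range(f_T)}{T \in \TTT^{(n+1)}_D}$. The reverse inclusion is trivial since each $f_T$ maps into $X$.

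There is no genuine obstacle here; the lemma is a clean bookkeeping argument. The one point deserving care is the arity accounting: when passing from the $y_\kappa$ to the composite tree $T$, one must verify that the domain of $f_T$ is $X^{\ar{T}}$ with $\ar{T} = \sum_{\kappa=1}^{\ar{d}} \ar{T_\kappa}$, so that the concatenation $(\vec{z}_1, \ldots, \vec{z}_{\ar{d}})$ is a well-typed element of $X^{\ar{T}}$. This matches precisely the definition of $\ar{T}$ and $f_T$ for the composite case, so the types line up. I would therefore present the argument as a short induction, emphasizing the base case's reliance on Definition~\ref{dn-cov} and the inductive step's use of both the covering property (to split $x$) and the induction hypothesis (to resolve each component $y_\kappa$), with the tree-constructor $[d; T_1, \ldots, T_{\ar{d}}]$ serving as the natural witness assembling these into a single height-$(n+1)$ tree.
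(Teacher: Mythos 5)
Your proof is correct and is exactly the induction on $n$ that the paper has in mind (the paper merely remarks that the lemma ``follows by induction on $n$'' without writing it out): the base case is the covering property via $f_T = d$ for height-$0$ trees, and the step splits $x = d(y_1,\ldots,y_{\ar{d}})$ by covering, resolves each $y_\kappa$ by the induction hypothesis, and assembles the witness $[d; T_1,\ldots,T_{\ar{d}}]$. The arity bookkeeping you flag is the only point of care, and you handle it consistently with the paper's definition of $\ar{T}$ and $f_T$.
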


Obviously, by increasing $n$ one obtains finer coverings. So, one can use finite $D$-trees as a road map to find elements of $X$. Can we therefore use infinite $D$-trees as their exact addresses?

\begin{lemma}\label{lem-compisec}
Let $(X, D)$ be an extended IFS so that $X$ is compact. Then, for any $T \in \TTT^\omega_D$,
\[
\bigcap\nolimits_{n \in \NN} \range(f_{T^{(n)}}) \not= \emptyset.
\]
\end{lemma}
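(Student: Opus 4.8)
The plan is to show that the sets $\range(f_{T^{(n)}})$ form a decreasing chain of non-empty compact subsets of the compact space $X$, and then to invoke Cantor's intersection theorem (equivalently, the finite intersection property characterisation of compactness). The construction of the maps $f_T$ is recursive in the tree structure, so the natural first task is a monotonicity lemma relating the prefix order on trees to inclusion of ranges.

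First I would establish the nestedness by proving the auxiliary claim: whenever $T \prec T'$ are finite $D$-trees, then $\range(f_{T'}) \subseteq \range(f_T)$. This I would prove by induction on the height of $T$, following the two clauses of Definition~\ref{dn-pref}. In the base case $T = (\{\emptywd\}, \emptywd \mapsto d)$ and $T' = [d; T_1, \ldots, T_{\ar{d}}]$, so $f_T = d$ while $f_{T'} = d \circ (f_{T_1} \times \cdots \times f_{T_{\ar{d}}})$; since each $f_{T_\kappa}$ takes values in $X$, every value of $f_{T'}$ has the form $d(y_1, \ldots, y_{\ar{d}})$ with $y_\kappa \in X$, hence lies in $\range(d) = \range(f_T)$. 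In the inductive step $T = [e; T_1, \ldots, T_{\ar{e}}]$ and $T' = [d; T'_1, \ldots, T'_{\ar{d}}]$ with $e = d$ and $T_\nu \prec T'_\nu$ for each $\nu$; the induction hypothesis yields $\range(f_{T'_\nu}) \subseteq \range(f_{T_\nu})$, and since $f_T$ and $f_{T'}$ arise by postcomposing the respective products with the \emph{same} map $d$, the inclusion $\range(f_{T'}) \subseteq \range(f_T)$ follows. Applying this to the chain $T^{(0)} \prec T^{(1)} \prec \cdots$ furnished by Lemma~\ref{lem-initseq}(2) gives $\range(f_{T^{(n+1)}}) \subseteq \range(f_{T^{(n)}})$ for every $n \in \NN$.

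Next I would record the topological facts. Since $X$ is non-empty and each $\ar{T^{(n)}}$ is a positive finite number, $X^{\ar{T^{(n)}}}$ is a non-empty finite power of the compact space $X$, hence non-empty and compact; as $f_{T^{(n)}}$ is continuous, its range $\range(f_{T^{(n)}})$ is a non-empty compact subset of $X$, and therefore closed, $X$ being Hausdorff. Finally I would conclude: the family $\set{\range(f_{T^{(n)}})}{n \in \NN}$ is a decreasing chain of non-empty closed subsets of the compact space $X$, so every finite subfamily has non-empty intersection (being nested, the intersection equals the member of largest index). Thus the family has the finite intersection property, and compactness of $X$ forces $\bigcap_{n \in \NN} \range(f_{T^{(n)}}) \not= \emptyset$.

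As for the difficulty, the only genuine content is the monotonicity claim $T \prec T' \Rightarrow \range(f_{T'}) \subseteq \range(f_T)$; once it is in place, the remainder is a textbook application of compactness. The main obstacle is therefore making the induction on tree structure precise, in particular keeping track of how the arities and the product decomposition of $f_T$ interact with the immediate-prefix relation—but this is bookkeeping rather than a conceptual hurdle.
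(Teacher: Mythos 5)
Your proposal is correct and takes essentially the same route as the paper: establish that the sets $\range(f_{T^{(n)}})$ are nested by induction, observe they are non-empty, compact (as continuous images of non-empty finite powers of the compact space $X$) and hence closed since $X$ is Hausdorff, and conclude via the finite intersection property. The only difference is one of detail—the paper dispatches the nestedness with a one-line appeal to induction, whereas you spell it out via the immediate-prefix relation and Lemma~\ref{lem-initseq}, which is a faithful elaboration of the same step.
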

\begin{proof}
By induction it follows that for every $T \in \TTT^\omega_D$ and each $n \in \NN$, $\range(f_{T^{(n+1)}}) \subseteq \range(f_{T^{(n)}})$. Moreover, $\range(f_{T^{(n)}}) \not= \emptyset$, as $X$ is not empty. Since $X$ is compact, the same holds for $X^{\ar{T^{(n)}}}$. Hence, $f_{T^{(n)}}[X^{\ar{T^{(n)}}}]$, as continuous image of a compact set, is compact as well and thus closed, since $X$ is Hausdorff. Thus, every finite intersection of members of the family of sets $\{ \range(f_{T^{(n)}}) \}_{n \in \NN}$ is non-empty and therefore, by the finite intersection property, also $\bigcap\nolimits_{n \in \NN} \range(f_{T^{(n)}})$ is non-empty. 
\end{proof}

If $X$ is a metric space and all maps in $D$ are contracting, the above intersection will also contain at most one element. However, this need not hold in general.

\begin{definition}\label{dn-propifs}
An extended IFS $(X, D)$ is 
\begin{enumerate}

\item \emph{compact} if $X$ is compact,

\item \emph{weakly hyperbolic}, if for all $T \in \TTT_D^{\omega}$,
$\card{\bigcap\nolimits_{n \in \NN} \range(f_{T^{(n)}})} \le 1$,

\end{enumerate}
\end{definition}

\begin{proposition}\label{pn-valT}
Let $(X, D)$ be compact and weakly hyperbolic. Then for all $T \in \TTT_D^{\omega}$,
\[
\card{\bigcap\nolimits_{n \in \NN} \range(f_{T^{(n)}})} = 1.
\]
\end{proposition}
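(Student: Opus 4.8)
The plan is to read the claimed equality as the conjunction of two inequalities and to supply each from material already established. The hypothesis that $(X,D)$ is weakly hyperbolic is, by Definition~\ref{dn-propifs}, exactly the statement that $\card{\bigcap_{n \in \NN} \range(f_{T^{(n)}})} \le 1$ for every $T \in \TTT_D^{\omega}$; so the upper bound is immediate and requires no further work.

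For the matching lower bound I would invoke Lemma~\ref{lem-compisec}. The hypothesis that $(X,D)$ is compact means precisely that $X$ is compact, so that lemma applies and gives $\bigcap_{n \in \NN} \range(f_{T^{(n)}}) \neq \emptyset$. A non-empty set has cardinality at least $1$, which is the reverse inequality. Combining the two bounds yields $\card{\bigcap_{n \in \NN} \range(f_{T^{(n)}})} = 1$ for every $T \in \TTT_D^{\omega}$, as required.

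The real content of the statement is thus not in this proposition itself but in Lemma~\ref{lem-compisec}, whose proof is where the work happens: there one uses that the sets $\range(f_{T^{(n)}})$ form a decreasing chain of non-empty compact (hence closed, since $X$ is Hausdorff) subsets of the compact space $X$, and concludes non-emptiness of the intersection via the finite intersection property. In the present proposition there is no genuine obstacle to overcome; the only point to verify is that the two clauses of Definition~\ref{dn-propifs} line up verbatim with the two inequalities one needs, which they do. It is worth emphasising, in line with the remark following Lemma~\ref{lem-coindX}, that the resulting unique point is obtained non-constructively, so this proposition should be understood as a classical existence-and-uniqueness statement rather than as providing an effective evaluation of $T$.
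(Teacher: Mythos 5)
Your proposal is correct and matches the paper's intended argument exactly: the proposition is stated without an explicit proof precisely because it is the immediate conjunction of weak hyperbolicity (Definition~\ref{dn-propifs}, giving the upper bound) and Lemma~\ref{lem-compisec} (compactness giving non-emptiness). Your closing remark about the non-constructive character of the unique point also agrees with the paper's own comments following Lemma~\ref{lem-coindX}.
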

We denote the uniquely determined element in $\bigcap\nolimits_{n \in \NN} \range(f_{T^{(n)}})$ by $\val{T}$. $\fun{\val{\cdot}}{\TTT^\omega_D}{X}$ is called the \emph{coding map}.

\begin{corollary}\label{cor-tree}
Let $(X, D)$ be compact as well as weakly hyperbolic, and $T \in \TTT^\omega_D$ with $T = [d;T_1, \ldots, T_{\ar{d}}]$. Then
\[
\val{T} = d(\val{T_1}, \ldots, \val{T_{\ar{d}}}).
\]
\end{corollary}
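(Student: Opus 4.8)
The plan is to compare the finite initial segments of $T$ with those of its subtrees $T_1, \ldots, T_{\ar{d}}$ and then to invoke the fact, established in Proposition~\ref{pn-valT}, that the relevant intersection is a singleton. Write $z := d(\val{T_1}, \ldots, \val{T_{\ar{d}}})$ for the right-hand side of the asserted equation. Since the intersection $\bigcap_{n \in \NN} \range(f_{T^{(n)}})$ equals $\{ \val{T} \}$, it suffices to show that $z$ lies in this intersection; uniqueness then forces $z = \val{T}$.

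First I would unfold the recursive definition of the initial segments. Because $\rt{T} = d$ and $\sbt{T} = (T_1, \ldots, T_{\ar{d}})$, the recursion for $T^{(n)}$ gives $T^{(n+1)} = [d; T_1^{(n)}, \ldots, T_{\ar{d}}^{(n)}]$, where $T_\kappa^{(n)}$ is the $n$-th initial segment of the subtree $T_\kappa$. Applying the clause defining $f_{\,\cdot\,}$ for composite finite $D$-trees yields $f_{T^{(n+1)}} = d \circ (f_{T_1^{(n)}} \times \cdots \times f_{T_{\ar{d}}^{(n)}})$, and hence, by taking images,
\[
\range(f_{T^{(n+1)}}) = d\bigl[ \range(f_{T_1^{(n)}}) \times \cdots \times \range(f_{T_{\ar{d}}^{(n)}}) \bigr].
\]
This is the key identity: it expresses the $(n{+}1)$-st range of $T$ through the $n$-th ranges of its subtrees, and the shift of the index by one is the only point that demands care.

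Next I would verify that $z$ lies in each of these sets. For every $\kappa$ and every $n$ we have $\val{T_\kappa} \in \range(f_{T_\kappa^{(n)}})$, directly from the definition of the coding map applied to $T_\kappa$. Consequently the tuple $(\val{T_1}, \ldots, \val{T_{\ar{d}}})$ belongs to the product $\range(f_{T_1^{(n)}}) \times \cdots \times \range(f_{T_{\ar{d}}^{(n)}})$, and therefore $z$ belongs to $\range(f_{T^{(n+1)}})$ for every $n$. Since the sequence $(\range(f_{T^{(n)}}))_{n \in \NN}$ is decreasing, as noted in the proof of Lemma~\ref{lem-compisec}, the intersection over all $n$ coincides with the intersection over the shifted indices, so $z \in \bigcap_{n \in \NN} \range(f_{T^{(n)}})$.

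Finally, Proposition~\ref{pn-valT} tells us this intersection contains exactly one element, namely $\val{T}$, whence $z = \val{T}$, which is the claimed equality. I expect no genuine obstacle here beyond careful bookkeeping: the entire argument rests on the factorisation of $f_{T^{(n+1)}}$ and on keeping the index shift between $T^{(n+1)}$ and the $T_\kappa^{(n)}$ straight.
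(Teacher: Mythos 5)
Your proposal is correct and follows essentially the same route as the paper's proof: both rest on the factorisation $\range(f_{T^{(n+1)}}) = d[\range(f_{T_1^{(n)}}) \times \cdots \times \range(f_{T_{\ar{d}}^{(n)}})]$, the observation that each $\val{T_\kappa}$ lies in every $\range(f_{T_\kappa^{(n)}})$, and Proposition~\ref{pn-valT} to conclude by uniqueness. The paper phrases this as a chain of set inclusions ($\bigcap_n d[\cdots] \supseteq d[\bigcap_n \cdots]$) rather than element-wise, but the argument is the same, including the handling of the index shift via the decreasing ranges.
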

\begin{proof}
By definition, $\{ \val{T} \} = \bigcap_{n \in \NN} \range(f_{T^{(n)}})$. Thus,
\begin{align*}
\{ \val{T} \} 
&= \bigcap\nolimits_{n \ge 1} \range(f_{T^{(n)}})\\
&= \bigcap\nolimits_{n \ge 1} d[\range(f_{T_1^{(n-1)}}) \times \cdots \times \range(f_{T_{\ar{d}}^{(n-1)}})]\\
&\supseteq d[\bigcap\nolimits_{n \ge 1} \range(f_{T_1^{(n-1)}}) \times \cdots \times \range(f_{T_{\ar{d}}^{(n-1)}})]\\
&= d[\bigcap\nolimits_{n \ge 1} \range(f_{T_1^{(n-1)}}) \times \cdots \times \bigcap\nolimits_{n \ge 0} \range(f_{T_{\ar{d}}^{(n-1)}})]\\
&= \{ d(\val{T_1}, \ldots, \val{T_{\ar{d}}}) \},
\end{align*}
from which the statement follows.
\end{proof}

In what follows we mostly deal with extended IFS that are covering, compact and weakly hyperbolic.

\begin{definition}\label{dn-topifs}
A \emph{topological digit space} $(X, D)$ is a compact, covering and weakly hyperbolic extended IFS.
\end{definition}

The next question to be addressed is whether every element of $X$ can be coded in the above way. We start with a  technical result.

\begin{lemma}\label{lem-valonto}
Let $(X, D)$ be covering. Then for all $n \in \NN$, all $D$-trees $S \in \TTT^{(n)}_D$ and all $x \in \range(f_S)$ there is a $D$-tree $T \in \TTT^{(n+1)}_D$ with $x \in \range(f_{T})$ so that $S \prec T$.
\end{lemma}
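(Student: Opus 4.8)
The plan is to proceed by induction on the height $n$, exploiting that both the prefix relation $\prec$ and the map $f_{(\cdot)}$ are defined by recursion on tree structure, so that the two clauses of Definition~\ref{dn-pref} line up with the two cases in the definition of $f_T$. Covering will be used exactly once, in the base case; the inductive step only transports a decomposition through the induction hypothesis.

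For the base case $n = 0$, a tree $S \in \TTT^{(0)}_D$ is just $(\{ \emptywd \}, \emptywd \mapsto d)$ for some $d \in D$, so $f_S = d$ with $\ar{S} = \ar{d}$, and $x \in \range(f_S) = \range(d)$ means $x = d(y_1, \ldots, y_{\ar{d}})$ for suitable $y_\kappa \in X$. Here the covering hypothesis enters: since $X = \bigcup \set{\range(e)}{e \in D}$, each $y_\kappa$ lies in $\range(e_\kappa)$ for some $e_\kappa \in D$. I would then take $T := [d; S_1, \ldots, S_{\ar{d}}]$ with $S_\kappa := (\{ \emptywd \}, \emptywd \mapsto e_\kappa)$ the height-$0$ tree labelled $e_\kappa$. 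By the first clause of Definition~\ref{dn-pref} the root label is unchanged, so $S \prec T$; and since $f_T = d \circ (f_{S_1} \times \cdots \times f_{S_{\ar{d}}}) = d \circ (e_1 \times \cdots \times e_{\ar{d}})$ and each $y_\kappa \in \range(e_\kappa) = \range(f_{S_\kappa})$, we get $x = d(y_1, \ldots, y_{\ar{d}}) \in \range(f_T)$, as required.

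For the inductive step, write $S = [d; S_1, \ldots, S_{\ar{d}}]$ with $S_\kappa \in \TTT^{(n)}_D$. Since $f_S = d \circ (f_{S_1} \times \cdots \times f_{S_{\ar{d}}})$, any $x \in \range(f_S)$ decomposes as $x = d(w_1, \ldots, w_{\ar{d}})$ with $w_\kappa \in \range(f_{S_\kappa})$. I would apply the induction hypothesis to each pair $(S_\kappa, w_\kappa)$, obtaining $T_\kappa \in \TTT^{(n+1)}_D$ with $S_\kappa \prec T_\kappa$ and $w_\kappa \in \range(f_{T_\kappa})$, and then set $T := [d; T_1, \ldots, T_{\ar{d}}] \in \TTT^{(n+2)}_D$. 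The second clause of Definition~\ref{dn-pref} yields $S \prec T$ (same root $d$, with $S_\kappa \prec T_\kappa$ for all $\kappa$), and $f_T = d \circ (f_{T_1} \times \cdots \times f_{T_{\ar{d}}})$ together with $w_\kappa \in \range(f_{T_\kappa})$ gives $x = d(w_1, \ldots, w_{\ar{d}}) \in \range(f_T)$.

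The argument is essentially bookkeeping, so I expect no deep obstacle. The one point that needs care is keeping the arities straight: a point of $X^{\ar{S}}$ with $\ar{S} = \sum_{\kappa} \ar{S_\kappa}$ must be split into blocks in $X^{\ar{S_\kappa}}$ so that applying $f_S$ genuinely produces a value of the form $d(w_1, \ldots, w_{\ar{d}})$ with each $w_\kappa \in \range(f_{S_\kappa})$; once this decomposition is isolated, both the prefix condition and the membership $x \in \range(f_T)$ follow mechanically.
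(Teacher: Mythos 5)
Your proposal is correct and follows essentially the same route as the paper's proof: induction on $n$, invoking the covering property only in the base case to attach height-$0$ trees below the root, and in the inductive step decomposing $x$ through the root digit and pushing the decomposition through the induction hypothesis on the subtrees. The only differences are notational (the paper writes the one-letter trees $d_\kappa$ directly as the children in the base case), so there is nothing to fix.
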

\begin{proof}
We prove the statement by induction on $n$. If $n = 0$, $S = \{ e \}$, for some $e \in D$. Otherwise, $S = [e; S_1, \ldots, S_{\ar{d}}]$. Since $x \in \range(f_S)$, it follows that there is some $\vec y \in X^{\ar{e}}$, say $\vec y = (y_1, \ldots, y_{\ar{e}})$, so that $x = e(\vec y)$. If $n = 0$, $y_\kappa \in X$, for $1 \le \kappa \le \ar{e}$. Otherwise, $y_\kappa \in \range(f_{S_\kappa})$.

Let us first consider the case that $n =0$. As $(X, D)$ is covering, there is some $d_\kappa \in D$ and with $y_\kappa \in \range(d_\kappa)$, for each $1 \le \kappa \le \ar{e}$. Set $T := [e; d_1, \ldots, d_{\ar{e}}]$. Then $T \in \TTT^{(1)}$ so that $S \prec T$ and $x \in \range(f_T)$.

If $n > 0$, it follows by the induction hypothesis that for each $1 \le \kappa \le \ar{e}$ there exists $T_\kappa \in \TTT^{(n)}_D$ such that $y_\kappa \in \range(f_{T_\kappa})$ and $S_\kappa \prec T_\kappa$. Let $T := [e; T_1, \ldots, T_{\ar{e}}]$. Then $T \in \TTT^{(n+1)}_D$. Moreover, $S \prec T$ and $x \in \range(f_T)$.
\end{proof}

With the Axiom of Dependent Choice it now follows that for every $x \in X$ there is a sequence $(T_\nu)_{\nu \in \NN} \in \TT_D$ with $x \in\range( f_{T_\nu})$, from which in turn we obtain with Proposition~\ref{pn-seqtree} that there exists a $D$-tree $T \in \TTT^\omega_D$ with $x \in \range(f_{T^{(n)}})$, for every $n \in \NN$.

\begin{lemma}\label{lm-valonto}
Let $(X, D)$ be a topological digit space.
For all $n \in \NN$, every $S \in \TTT^{(n)}_D$ and all $x \in X$ with $x \in \range(f_S)$ there is some $T \in \TTT^\omega_D$ so that $T^{(n)} = S$ and $x = \val{T}$.
\end{lemma}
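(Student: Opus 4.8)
The plan is to build, by hand, a $\prec$-increasing sequence of finite $D$-trees whose $n$-th member is $S$ and each of whose ranges contains $x$, then transport it into $\TTT^\omega_D$ via the isomorphism of Proposition~\ref{pn-seqtree} and read off its value using Proposition~\ref{pn-valT}. The point is that the first $n$ levels of the sequence are prescribed by $S$ itself, while the levels above $n$ are produced by iterating the covering step of Lemma~\ref{lem-valonto}.

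First I would pin down the lower part of the sequence. For $m \le n$ let $S|_m \in \TTT^{(m)}_D$ denote the truncation of $S$ to height $m$, obtained by keeping the root labels and cutting everything below depth $m$; a routine induction gives $S|_m \prec S|_{m+1}$ and $S|_n = S$. Since $\range(f_{S|_{m+1}}) \subseteq \range(f_{S|_m})$ by the same nesting argument used in the proof of Lemma~\ref{lem-compisec}, and $x \in \range(f_S) = \range(f_{S|_n})$, it follows that $x \in \range(f_{S|_m})$ for every $m \le n$. I then set $T_m := S|_m$ for $m \le n$, which fixes levels $0, \ldots, n$.

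Next I would extend upward. Starting from $T_n = S$, which satisfies $x \in \range(f_{T_n})$, repeated application of Lemma~\ref{lem-valonto}—which needs only the covering property—together with the Axiom of Dependent Choice yields trees $T_{n+1}, T_{n+2}, \ldots$ with $T_\nu \in \TTT^{(\nu)}_D$, $T_\nu \prec T_{\nu+1}$, and $x \in \range(f_{T_\nu})$ for all $\nu \ge n$. Combining the two parts, the family $(T_\nu)_{\nu \in \NN}$ lies in $\TT_D$ and satisfies $x \in \range(f_{T_\nu})$ for every $\nu$. Finally I would put $T := F((T_\nu)_{\nu \in \NN})$, where $\fun{F}{\TT_D}{\TTT^\omega_D}$ is the isomorphism of Proposition~\ref{pn-seqtree}. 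By (\ref{eq-F(n)}) we have $T^{(\kappa)} = T_\kappa$ for all $\kappa$, so in particular $T^{(n)} = T_n = S$; and $x \in \range(f_{T^{(\nu)}})$ for every $\nu$ shows $x \in \bigcap_{\nu \in \NN} \range(f_{T^{(\nu)}})$, which by Proposition~\ref{pn-valT} (using compactness and weak hyperbolicity) is exactly the singleton $\{ \val{T} \}$; hence $x = \val{T}$.

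I expect no serious obstacle here: the argument is a routine blend of the finite covering step and Dependent Choice, and the essential content—that the intersection of the ranges is a single point—is already packaged in Proposition~\ref{pn-valT}. The only places requiring care are the bookkeeping of the truncations, so that the concatenated sequence genuinely lies in $\TT_D$ with the index $n$ member equal to $S$, and the verification that $x$ remains in the ranges at the prescribed lower levels, which rests on the nesting of the ranges $\range(f_{S|_{m+1}}) \subseteq \range(f_{S|_m})$.
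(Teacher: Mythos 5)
Your proof is correct and takes essentially the same route as the paper, which leaves this lemma's proof implicit in the preceding paragraph: iterate Lemma~\ref{lem-valonto} via the Axiom of Dependent Choice starting from $S$, transport the resulting sequence into $\TTT^\omega_D$ through Proposition~\ref{pn-seqtree} (using (\ref{eq-F(n)}) to get $T^{(n)} = S$), and conclude $x = \val{T}$ from Proposition~\ref{pn-valT}. Your explicit treatment of the truncations $S|_m$ for $m \le n$ and the nesting $\range(f_{S|_{m+1}}) \subseteq \range(f_{S|_m})$ merely fills in bookkeeping the paper glosses over.
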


It follows that $\val{\cdot}$ is onto.

\begin{theorem}\label{thm-valcont}
Let $(X, D)$ be a topological digit space. Then the following four statements hold:
\begin{enumerate}
\item\label{thm-valcont-1} $\fun{\val{\cdot}}{\TTT^\omega_D}{X}$ is onto and uniformly continuous.

\item\label{thm-valcont-2} The topology on $X$ is equivalent to the quotient topology induced by $\val{\cdot}$.

\item\label{thm-valcont-3} $X$ is metrisable.

\item\label{thm-valcont-4} $X$ is separable.

\end{enumerate}
\end{theorem}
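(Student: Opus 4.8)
The plan is to prove the four claims in turn, deriving (2)--(4) from the surjectivity and uniform continuity of $\val{\cdot}$ together with the compactness of the domain $\TTT^\omega_D$ (Theorem~\ref{thm-treecomp}) and of $X$. Surjectivity follows at once: taking $n = 0$ in Lemma~\ref{lem-ncov} gives $X = \bigcup\set{\range(f_S)}{S \in \TTT^{(0)}_D}$, so each $x \in X$ lies in some $\range(f_S)$, and Lemma~\ref{lm-valonto} then yields a $T \in \TTT^\omega_D$ with $\val{T} = x$.

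The heart of the matter is the continuity of $\val{\cdot}$. For a fixed $T$ the sets $\range(f_{T^{(n)}})$ form a decreasing chain of non-empty compact sets --- compact as continuous images of compacta, hence closed since $X$ is Hausdorff --- whose intersection is the single point $\val{T}$ (Proposition~\ref{pn-valT}). Given an open $U \ni \val{T}$, the closed sets $\range(f_{T^{(n)}}) \cap (X \setminus U)$ decrease to $\emptyset$, so by the finite intersection property in the compact space $X$ one of them is already empty, i.e.\ $\range(f_{T^{(n)}}) \subseteq U$ for some $n$. Any $T'$ with $\delta(T, T') < 2^{-n}$ has $T'^{(n)} = T^{(n)}$ and thus $\val{T'} \in \range(f_{T'^{(n)}}) = \range(f_{T^{(n)}}) \subseteq U$, so $\val{\cdot}$ carries $\ball{\delta}{T}{2^{-n}}$ into $U$. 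This proves continuity, and uniform continuity follows for free: $\TTT^\omega_D$ is compact, $X$ is compact Hausdorff and hence carries a unique compatible uniformity, and every continuous map out of a compact uniform space is uniformly continuous.

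Claim (2) follows because a continuous surjection from a compact space onto a Hausdorff space is closed (images of closed, hence compact, sets are compact, hence closed), and a continuous closed surjection is a quotient map; so the topology of $X$ is exactly the quotient topology induced by $\val{\cdot}$. For (3), I would use this closed surjection to transport a countable base: with $\mathcal{B}$ a countable base of the compact metric space $\TTT^\omega_D$, the set $V^\# := X \setminus \val{\TTT^\omega_D \setminus V}$ is open for every open $V$, and the family of $V^\#$ as $V$ ranges over finite unions of members of $\mathcal{B}$ is a countable base of $X$; indeed, given $y$ and an open $W \ni y$, the compact fibre $\val{\cdot}^{-1}(y) \subseteq \val{\cdot}^{-1}(W)$ can be covered by finitely many basic sets whose union $V$ satisfies $y \in V^\# \subseteq W$. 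Thus $X$ is a second-countable compact Hausdorff space, hence metrisable by Urysohn's theorem. Finally, (4) is immediate from (1): $\TTT^\omega_D$ is separable (Corollary~\ref{cor-separb}), and if $A$ is countable and dense then $\val{A}$ is countable and dense, since $X = \val{\overline{A}} \subseteq \overline{\val{A}}$.

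The main obstacle is establishing the continuity of $\val{\cdot}$ cleanly; once that is in hand, statements (2)--(4) are layers of standard compact-Hausdorff topology. The one subtlety worth flagging is the sense of uniform continuity in (1): since $X$ is only assumed Hausdorff and shown metrisable only later in (3), the statement has to be read with respect to the canonical uniformity of the compact Hausdorff space $X$, and it is precisely the compactness of $\TTT^\omega_D$ that upgrades continuity to uniform continuity without a circular dependence on (3).
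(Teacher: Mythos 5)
Your proposal is correct and follows essentially the same route as the paper: the same finite-intersection-property argument for continuity of $\val{\cdot}$ (with uniform continuity for free from compactness), the same closed-map/quotient-map reasoning for (2), and the same use of Corollary~\ref{cor-separb} for (4). The only difference is in (3), where the paper simply cites \cite[Corollary 23.2]{wil} (a Hausdorff continuous image of a compact metric space is metrisable), whereas you inline the standard proof of that corollary by transporting a countable base through the closed surjection and then invoking Urysohn's metrisation theorem.
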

\begin{proof}
(\ref{thm-valcont-1}) Onto-ness is a consequence of what has just been said. Moreover, because of compactness we only need to prove continuity. We show that for each $T \in \TTT^\omega_D$ and every open set $O$ in the Hausdorff topology on $X$ with $\val{T} \in O$ there is some number $m$ such that $\range(f_{T^{(m)}}) \subseteq O$. 

Since $O$ is open, its complement is closed and hence compact. Moreover,
\[
\bigcap\nolimits_n \range(f_{T^{(n)}}) \cap (X \setminus O) = \emptyset,
\]
as $\bigcap\nolimits_n \range(f_{T^{(n)}}) = \{ \val{T} \}$ and $\val{T} \in O$. Because of the compactness of $X$ there is thus some $m \in \NN$ so that also
\[
\range(f_{T^{(m)}}) \cap (X \setminus O) = \emptyset
\]
Hence, $\range(f_{T^{(m)}}) \subseteq O$, which implies that for all $S \in \ball{\delta}{T}{2^{-m}}$, $\val{S} \in O$.

(\ref{thm-valcont-2}) As a continuous map on a compact Hausdorff space, $\val{\cdot}$ is a closed. Continuous closed maps are well known to be quotient maps (cf.\ \cite[Theorem 9.2]{wil}).

(\ref{thm-valcont-3}) $X$ is Hausdorff and, by statement (\ref{thm-valcont-1}), the continuous image of a compact metric space. As  consequence of Urysohn's metrisation theorem it is therefore metrisable (cf.~\cite[Corollary 23.2]{wil}).

(\ref{thm-valcont-4}) is a consequence of Corollary~\ref{cor-separb}, as continuous images of separable spaces are separable (cf.~\cite[Theorem 16.4a]{wil}).
\end{proof}

\begin{corollary}[\cite{ed}]\label{cor-arconv}
Let $(X, D)$ be a topological digit space, and $T \in \TTT^\omega_D$. Then for any $x \in X$,
\[
\lim_{n \to \infty} f_{T^{(n)}}(x^{(\ar{T^{(n)}})}) = \val{T},
\]
where for $n>0$, $x^{(n)} = (x, \ldots, x)$ ($n$ times).
\end{corollary}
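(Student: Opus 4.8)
The plan is to read off the statement almost directly from the neighbourhood characterisation of $\val{T}$ established inside the proof of Theorem~\ref{thm-valcont} together with the nesting of the ranges. First I would record the trivial but decisive observation that, for every $n \in \NN$, the point $f_{T^{(n)}}(x^{(\ar{T^{(n)}})})$ lies in $\range(f_{T^{(n)}})$, since it is by definition the image under $f_{T^{(n)}}$ of the tuple $x^{(\ar{T^{(n)}})} \in X^{\ar{T^{(n)}}}$. This membership holds for \emph{every} choice of $x \in X$, so from this point on the argument will not really depend on $x$.

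Next I would collect the two facts that carry the proof. From the proof of Lemma~\ref{lem-compisec} the ranges are nested, i.e.\ $\range(f_{T^{(n+1)}}) \subseteq \range(f_{T^{(n)}})$ for all $n \in \NN$. And from the continuity argument in the proof of Theorem~\ref{thm-valcont}(\ref{thm-valcont-1}) one has, for every open set $O$ with $\val{T} \in O$, a number $m \in \NN$ such that $\range(f_{T^{(m)}}) \subseteq O$.

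Combining these, given an arbitrary open neighbourhood $O$ of $\val{T}$ I would fix such an $m$ and invoke the nesting to get $\range(f_{T^{(n)}}) \subseteq \range(f_{T^{(m)}}) \subseteq O$ for all $n \ge m$. By the opening observation this yields $f_{T^{(n)}}(x^{(\ar{T^{(n)}})}) \in O$ for all $n \ge m$. Since $O$ was an arbitrary neighbourhood of $\val{T}$ and $X$ is Hausdorff, indeed metrisable by Theorem~\ref{thm-valcont}(\ref{thm-valcont-3}) so that limits are unique, this is exactly the assertion that the sequence $\bigl(f_{T^{(n)}}(x^{(\ar{T^{(n)}})})\bigr)_{n \in \NN}$ converges to $\val{T}$.

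There is no genuine obstacle here, as the substance has already been packaged into the earlier results; the corollary is essentially a repackaging of the continuity proof. The only point that merits an explicit word is that the conclusion is uniform in $x$: because each term lands in the corresponding range regardless of $x$, the same $m$ works simultaneously for all $x \in X$. If a metric formulation were preferred, one could instead fix a metric via Theorem~\ref{thm-valcont}(\ref{thm-valcont-3}) and translate the neighbourhood condition into $2^{-k}$-ball language, but the purely topological argument above is cleaner and avoids committing to a particular metric on $X$.
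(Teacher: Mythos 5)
Your proof is correct, and it is essentially the derivation the paper intends: the corollary is stated without proof as an immediate consequence of Theorem~\ref{thm-valcont}, whose proof of part~(\ref{thm-valcont-1}) establishes precisely the neighbourhood fact you invoke, combined with the nesting of ranges from Lemma~\ref{lem-compisec} and the trivial membership $f_{T^{(n)}}(x^{(\ar{T^{(n)}})}) \in \range(f_{T^{(n)}})$. Your side remarks (uniformity in $x$, Hausdorffness sufficing for uniqueness of the limit) are accurate and do not change the substance.
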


By Theorem~\ref{thm-valcont}(\ref{thm-valcont-4}), $X$ is separable. As we will see, several dense subsets of $X$ can be constructed. Let to this end $x \in X$ and define
\[
Q^{(x)}_D := \set{f_S(x^{(\ar{S})})}{S \in \TTT^\ast_D}.
\]

\begin{lemma}\label{lem-cons}
Let $(X, D)$ be a topological digit space. Moreover, let $z \in X$. Then $Q^{(z)}_D$ is dense in $X$.
\end{lemma}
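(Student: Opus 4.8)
The plan is to verify density in its neighbourhood form: for every $x \in X$ and every open neighbourhood $O$ of $x$ I must exhibit a point of $Q^{(z)}_D$ lying in $O$. Almost all of the analytic content has in fact already been packaged into Corollary~\ref{cor-arconv}, so the argument should be short.

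First I would fix an arbitrary $x \in X$. By Theorem~\ref{thm-valcont}(\ref{thm-valcont-1}) the coding map $\val{\cdot}$ is onto, so there is an infinite $D$-tree $T \in \TTT^\omega_D$ with $\val{T} = x$. Applying Corollary~\ref{cor-arconv} to this tree $T$ and to the fixed base point $z \in X$ then yields $\lim_{n \to \infty} f_{T^{(n)}}(z^{(\ar{T^{(n)}})}) = \val{T} = x$.

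Next I would observe that each term of this approximating sequence already lies in $Q^{(z)}_D$: by Lemma~\ref{lem-initseq}(1) the initial segment $T^{(n)}$ is a $D$-tree of height $n$, hence $T^{(n)} \in \TTT^\ast_D$, and so $f_{T^{(n)}}(z^{(\ar{T^{(n)}})})$ is, by the very definition of $Q^{(z)}_D$, a member of it. Thus $x$ is approached by points of $Q^{(z)}_D$. Since convergence in the topology of $X$ means that every open neighbourhood $O$ of $x$ contains all but finitely many terms of the sequence, such a neighbourhood necessarily meets $Q^{(z)}_D$. As $x$ and $O$ were arbitrary, every point of $X$ lies in $\overline{Q^{(z)}_D}$, i.e.\ $Q^{(z)}_D$ is dense. (If one prefers, metrisability of $X$ from Theorem~\ref{thm-valcont}(\ref{thm-valcont-3}) lets one phrase the same conclusion purely in terms of sequential closure.)

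I do not expect any genuine obstacle here. The only items to check are the two pieces of bookkeeping just mentioned --- that $T^{(n)} \in \TTT^\ast_D$, and that the point fed to $f_{T^{(n)}}$ in Corollary~\ref{cor-arconv} is exactly the chosen $z$ --- together with the elementary fact that a point approached by a sequence from a set belongs to that set's closure. The real work, namely that $f_{T^{(n)}}(z^{(\ar{T^{(n)}})})$ actually converges to $\val{T}$ no matter which point of $X$ is iterated, is inherited wholesale from Corollary~\ref{cor-arconv}.
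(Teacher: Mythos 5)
Your proof is correct and follows essentially the same route as the paper: both arguments take an infinite $D$-tree $T$ coding a point of the given open set and observe that a sufficiently deep finite initial segment, evaluated at $z$, produces an element of $Q^{(z)}_D$ inside that set. The only cosmetic difference is that you invoke the limit statement of Corollary~\ref{cor-arconv}, while the paper uses the equivalent range-containment fact $\range(f_{T^{(m)}}) \subseteq O$ coming from the proof of Theorem~\ref{thm-valcont}(\ref{thm-valcont-1}); both rest on the same compactness argument.
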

\begin{proof}
Let $O$ be open in the topology on $X$ and $z \in X$. Then are $T \in \TTT^\omega_D$ and $m \in \NN$ so that $f_{T^{(m)}}[X^{\ar{T^{(m)}}}] \subseteq O$. In particular, it follows that  $f_{T^{(m)}}(z^{(\ar{T^{(m)}})}) \in O$. Note that $T^{(m)} \in \TTT^\ast_D$.
\end{proof}

As follows from the second statement of the preceding theorem, $X$ is homeomorph to a quotient of $\TTT^\omega_D$. In the classical setting of IFS with all maps in $D$ being unary, Kameyama~\cite{ka} showed how the equivalence classes are generated by the kneading invariant of the system. As a consequence, the topology on $X$ is determined by the kneading invariant of the system. 

Theorem~\ref{thm-valcont}(\ref{thm-valcont-3}) shows that no generality is lost if we restrict our considerations to topological digit spaces $(X, D)$ such that $X$ is a metric space. The central problem of the research in \cite{ka} was the question whether the metrisation result can be improved in such a way that the maps in $D$ will be contracting as well. Kameyama gave an example showing that this is not the case in general. However, important function classes have been found since this study allowing such a choice, that is, if the maps in $D$ are chosen from one of these classes then there is a metric that as well generates the given topology on $X$ and turns the maps in $D$ into contractions. (See \cite{ba} for further hints.)  

\begin{definition}\label{dn-digit}
A \emph{digit space} $(X, D)$ is a compact covering extended IFS such that $X$ is a metric space, say with metric $\rho$, and all maps in $D$ are contracting, where powers of $X$ are endowed with the maximum metric\footnote{That is for $\vec x = (x_1, \ldots, x_n)$ and $\vec y = (y_1, \ldots, y_n)$, $\rho(\vec x, \vec y) = \max \set{\rho(x_\nu, y_\nu)}{1 \le \nu \le n}$. Note that we use the same notation for the metric and its associated maximum metric.}.
\end{definition}    

Every digit space is in particular a topological digit space.  
The maps in $D$ will also be called \emph{digit maps}, or simply \emph{digits}. As $X$ is compact and the metric $\rho$ is continuous, we have that $X$ is bounded, that is, there is a number $M \in \NN$, the \emph{bound} of $X$, so that for all $x, y, \in X$, $\rho(x, y) \le M$. Let $q < 1$ be the maximum of the contraction factors of the digit maps. Then for all $d \in D$ and $\vec x, \vec y \in X^{\ar{d}}$, $\rho(d(\vec x), d(\vec y)) \le q \cdot \rho(\vec x, \vec y)$. It follows that every digit space is weakly hyperbolic.

\section{Computable digit spaces}\label{sec-comput}

The aim of the research in \cite{be,bh,bs} as well as the present paper is to provide a logic-based approach to computing with continuous data. The generally accepted approach to compute with such data is Weihrauch's Type-Two Theory of Effectivity~\cite{wei}. In this section the equivalence of both approaches will be derived. To this end we restrict our considerations to digit spaces. This is the case generally used in application. Moreover, metric spaces have a well developed computability theory. We need to adapt the notions and proofs presented in \cite{bs} to the case of multi-ary digit functions. 

\begin{definition}[Brattka and Presser~\cite{bp}]\label{dn-compmet}
Let $(X, \rho)$ be a metric space with countable dense subspace $Q$, say
\[
Q = \{ u_0, u_1, \ldots \},
\]
the elements of which are called \emph{basic elements}. Then $(X, \rho, Q)$ is \emph{computable} if the two sets 
\begin{align*}
&\set{(u, v, r) \in Q \times Q \times \QQ}{\rho(u, v) < r} \\
&\set{(u, v, r) \in Q \times Q \times \QQ}{\rho(u, v) > r}
\end{align*}
are effectively enumerable, i.e.\ the function $\lambda (u, v) \in Q^2.\ \rho(u, v)$ is computable.
\end{definition}

Note that usually computability is defined on the natural numbers or the set of finite words over some finite alphabet. Computability in a more abstract setting is then reduced to this case by using appropriate coding functions. In what follows we will work with finite objects such as basic elements, tuples as well as finite sets of basic elements, digits or $D$-trees directly as in the above definition and leave it to the reader to make statements precise, if wanted. By doing so we will identify a digit $d$ with the letter $d$, as we did already in the preceding sections.

\begin{definition}\label{dn-compmap}
Let $(X, \rho, Q), (X', \rho', Q')$ be metric spaces with countable dense subspaces $Q$ and $Q'$, respectively. A  map $\fun{h}{X^i}{X}'$ is
\begin{enumerate}
\item\label{dn-compmap-1}
\emph{uniformly continuous} if there is a map $\fun{\zeta}{\QQ_{+}}{\QQ_{+}}$, called \emph{modulus of continuity}, such that for all $\varepsilon \in \QQ_{+}$ and $\vec{x}, \vec{y} \in X^{i}$, whenever $\rho(\vec{x}, \vec{y}) < \zeta(\varepsilon)$ then $\rho'(h(\vec{x}), h(\vec{y})) < \varepsilon$.

\item\label{dn-compmap-2}
 \emph{computable} if it has a computable modulus of continuity and there is a procedure $G_h$, which given $\vec u \in Q^i$ and $n \in \NN$ computes a basic element $v \in Q'$ with $\rho'(h(\vec{u}), v) < 2^{-n}$.
\end{enumerate}
\end{definition}

It is readily seen that the set of computable maps on $X$ is closed under composition.

\begin{definition}\label{dn-compdig}
Let $(X, D)$ be a digit space such that the underlying metric space $(X, \rho)$ has a countable dense subset $Q$ with respect to which it is computable. $(X, D, Q)$ is said to be a \emph{computable digit space} if, in addition, all digits $d \in D$ are computable.
\end{definition}

As will be shown in Section~\ref{sec-prod}, each power of a computable digit space is a computable digit space again.

Let
\begin{equation*}\label{eq-aeff}
\begin{split}
A_X^{\text{eff}} := \{\,x \in X \mid \text{there is a procedure that given $n \in \NN$}\hspace{3cm}\\
                \text{computes a basic element $u \in Q$ with $\rho(x, u) < 2^{-n}$} \,\}. 
\end{split}
\end{equation*}
 
 We have seen that besides $Q$ computable digit spaces possess other canonical dense subspaces $Q^{(z)}_D$, for $z \in A_X^{\text{eff}}$, generated by the digit maps. We want to show that $Q$ and $Q^{(z)}_D$ are \emph{effectively equivalent} in the sense that given $u \in Q$ and $n \in \NN$ a finite $D$-tree $S \in \TTT^\ast_D$ can be computed so that $\rho(u, f_S(z^{(\ar{S})})) < 2^{-n}$, and that similarly there is a computable function $\fun{\varphi}{\NN \times \TTT^\ast_D}{Q}$ with
 \[
 \rho(f_S(z^{(\ar{S})}), \varphi(n, S)) < 2^{-n},
 \]
 for all $n \in \NN$ and $S \in \TTT^\ast_D$. To achieve this, some additional conditions have to hold.
 
\begin{definition}\label{dn-wellcov}
An extended IFS $(X, D)$ is \emph{well-covering} if every element of $X$ is contained in the interior $\int(\range(d))$ of $\range(d)$, for some $d \in D$.
\end{definition}

\begin{lemma}[\cite{bs}]\label{lem-wcnum}
Let $(X, D)$ be a well-covering digit space. Then there exists $\varepsilon \in \QQ_+$ such that for every $x \in X$ there exists $d \in D$ with $\ball{\rho}{x}{\varepsilon} \subseteq \range(d)$.
\end{lemma}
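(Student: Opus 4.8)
The plan is to read this as a Lebesgue-number statement for a specific finite open cover of $X$ and then exploit compactness to convert pointwise data into a uniform radius. First I would form the family
\[
\mathcal{U} := \set{\int(\range(d))}{d \in D}.
\]
Each member is open, and the well-covering hypothesis says precisely that every $x \in X$ lies in $\int(\range(d))$ for some $d \in D$; hence $\mathcal{U}$ is a finite open cover of $X$. Since $(X, D)$ is a digit space, $X$ is compact, so both ingredients needed for a Lebesgue-number argument are in place.

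Next I would produce the uniform radius. The cleanest route is to invoke the Lebesgue number lemma for the compact metric space $X$ and the cover $\mathcal{U}$: there is $\lambda > 0$ such that every subset of $X$ of diameter less than $\lambda$ is contained in a single member of $\mathcal{U}$. Then I would choose any rational $\varepsilon \in \QQ_+$ with $2\varepsilon < \lambda$. For each $x \in X$ the ball $\ball{\rho}{x}{\varepsilon}$ has diameter at most $2\varepsilon < \lambda$, so it lies in some $\int(\range(d)) \subseteq \range(d)$, which is exactly the conclusion.

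If a self-contained argument is preferred over citing the Lebesgue lemma, I would argue by contradiction, essentially reproving the lemma in this special case. Assuming no such $\varepsilon$ exists, for each $n$ pick $x_n \in X$ with $\ball{\rho}{x_n}{1/n} \not\subseteq \range(d)$ for every $d \in D$. By compactness extract a convergent subsequence $x_{n_k} \to x$. Well-covering gives some $d_0 \in D$ and $r > 0$ with $\ball{\rho}{x}{r} \subseteq \range(d_0)$. Choosing $k$ large enough that $\rho(x_{n_k}, x) + 1/n_k < r$, the triangle inequality yields $\ball{\rho}{x_{n_k}}{1/n_k} \subseteq \ball{\rho}{x}{r} \subseteq \range(d_0)$, contradicting the choice of $x_{n_k}$.

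The main obstacle is precisely the step packaged by the Lebesgue number lemma (equivalently, the compactness extraction in the direct argument): well-covering only supplies, at each point separately, some digit $d$ and some local radius, and these may vary from point to point, so the work is in turning this into one radius $\varepsilon$ valid uniformly across all of $X$. Everything else is routine — the openness of the interiors, the finiteness of $D$, replacing $\lambda$ by a rational $\varepsilon$, and the monotonicity of balls under shrinking the radius.
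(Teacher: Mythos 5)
Your proof is correct, and it matches the intended argument: the paper itself only cites \cite{bs} for this lemma (remarking that the proof uses merely that the sets $\range(d)$ are closed, with $X$ compact), and the cited proof is exactly the Lebesgue-number/compactness argument you give, including the finite open cover by the interiors $\int(\range(d))$ and the extraction of a uniform rational radius. Both of your routes (quoting the Lebesgue number lemma, or the self-contained subsequence contradiction) are sound, and your choice of a rational $\varepsilon$ with $2\varepsilon < \lambda$ correctly handles the requirement $\varepsilon \in \QQ_+$.
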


Each such $\varepsilon \in \QQ_+$ will be called \emph{well-covering number}. Note that in the proof one only uses that the sets $\range(d)$ are closed; the $d$ are just indices. Obviously, if $\varepsilon$ is a well-covering number then every $\varepsilon' \in \QQ_+$ with $\varepsilon' < \varepsilon$ is a well-covering number as well. 

\begin{definition}\label{dn-deccd}
Let $(X, D, Q)$ be a computable digit space. We call $(X, D, Q)$
\begin{enumerate}
\item\label{dn-deccd-1} \emph{decidable} if for $u \in Q$, $\theta \in  \QQ_+$ and $d \in D$ it can be decided whether $\ball{\rho}{u}{\theta} \subseteq \range(d)$;

\item\label{dn-deccd-2} \emph{constructively dense} if there is a procedure that, given $\theta \in \QQ_+$, $d \in D$ and $u \in \range(d) \cap Q$, computes a $\vec v \in Q^{\ar{d}}$ with $\rho(u, d(\vec v)) < \theta$.
\end{enumerate}
\end{definition}

\begin{lemma}\label{lem-qqd}
Let $(X, D, Q)$ be a well-covering, decidable and constructively dense computable digit space and $z \in A_X^{\text{eff}}$. Then, for every $u \in Q$ and $n \in \NN$, a finite $D$-tree $S$ can effectively be found such that $\rho(u, f_S(z^{(\ar{S})})) < 2^{-n}$.
\end{lemma}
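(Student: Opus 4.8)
The plan is to give an effective, quantitative version of the inductive construction behind Lemma~\ref{lem-valonto}: rather than merely extending a tree so as to keep $u$ in its range, I would build a finite tree of a prescribed height $m$ \emph{together with} basic elements at its leaves, so that the resulting composite value stays within a controlled distance of $u$, and then exploit contractivity to replace the leaf values by $z$. To set up, fix a well-covering number $\varepsilon \in \QQ_{+}$ (Lemma~\ref{lem-wcnum}), a rational $q < 1$ bounding the contraction factors of the digits, and the bound $M$ of $X$. Given $u \in Q$ and $n$, I would first choose $m \in \NN$ with $q^{m+1} M < 2^{-(n+1)}$ and a tolerance $\eta \in \QQ_{+}$ with $\eta/(1-q) < 2^{-(n+1)}$; both are found effectively since $q$, $M$ are known and $q < 1$.

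The core is a recursive procedure $C(w, j)$ which, for $w \in Q$ and $j \le m$, returns a tree $S \in \TTT^{(j)}_D$ together with a leaf tuple $\vec v \in Q^{\ar{S}}$ satisfying $\rho(w, f_S(\vec v)) < \eta\,(1 + q + \cdots + q^{j})$. At each call, since $w \in X$, well-covering guarantees a digit $d$ with $\ball{\rho}{w}{\varepsilon} \subseteq \range(d)$, and by decidability (Definition~\ref{dn-deccd}(\ref{dn-deccd-1})) such a $d$ is located by testing the finitely many members of $D$; in particular $w \in \range(d) \cap Q$, so constructive density (Definition~\ref{dn-deccd}(\ref{dn-deccd-2})) yields $\vec w = (w_1, \ldots, w_{\ar{d}}) \in Q^{\ar{d}}$ with $\rho(w, d(\vec w)) < \eta$. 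For $j = 0$ I output $(\{d\}, \vec w)$; for $j > 0$ I recurse on each $w_\kappa$ with parameter $j-1$ and output $S = [d; S_1, \ldots, S_{\ar{d}}]$ with the concatenated leaf tuples. Using that $d$ is $q$-Lipschitz in the maximum metric, the triangle inequality gives $\rho(w, f_S(\text{leaves})) < \eta + q \cdot \max_\kappa \rho(w_\kappa, f_{S_\kappa}(\cdots))$, and the stated geometric bound follows by induction on $j$; the recursion terminates because $j$ strictly decreases.

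Running $S := C(u, m)$ then yields $S \in \TTT^{(m)}_D$ and $\vec v$ with $\rho(u, f_S(\vec v)) < \eta/(1-q) < 2^{-(n+1)}$. It remains to pass from $\vec v$ to $z$ at the leaves. Since $f_S$ is a composite of $m+1$ layers of $q$-contractions it is $q^{m+1}$-Lipschitz, so $\rho(f_S(\vec v), f_S(z^{(\ar{S})})) \le q^{m+1}\,\rho(\vec v, z^{(\ar{S})}) \le q^{m+1} M < 2^{-(n+1)}$ for any leaf assignment in $X$; here only $z \in X$ is used, so $S$ is in fact produced independently of $z$. The triangle inequality then delivers $\rho(u, f_S(z^{(\ar{S})})) < 2^{-n}$.

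The main obstacle I anticipate is not depth but the interaction of the two effectivity hypotheses inside the recursion: one must verify that every sub-target $w_\kappa$ produced by constructive density is again a basic element lying in $X$ (so that well-covering applies and a covering digit is \emph{decidably} found via the ball test with radius $\le \varepsilon$), guaranteeing that $C$ never gets stuck. The remaining care is purely the bookkeeping that keeps the two error contributions — the accumulated, geometrically damped constructive-density error $\eta/(1-q)$ and the contraction-driven leaf-replacement error $q^{m+1}M$ — each below $2^{-(n+1)}$.
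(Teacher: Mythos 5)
Your proposal is correct and follows essentially the same route as the paper's proof: a top-down recursion that uses decidability together with the well-covering number to locate a digit $d$ with $\ball{\rho}{w}{\varepsilon} \subseteq \range(d)$, constructive density to produce basic children in $Q$, and contractivity to control the accumulated error. The only difference is bookkeeping: the paper runs the recursion with a height-dependent tolerance $q^{k} \cdot M / j(n+1)$ and bounds $\rho(v, f_T(z^{(\ar{T})}))$ in one combined induction, whereas you use a uniform tolerance $\eta$ with a geometric-series bound and then swap the basic leaves for $z$ via the $q^{m+1}$-Lipschitz estimate on $f_S$ --- both deliver the same conclusion.
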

\begin{proof}
Let $M$ be a bound of $X$ and $q < 1$ the maximum of the contraction factors of the digit maps. Moreover,
let $\varepsilon$ be a well-covering number for $(X, D)$ and set
\begin{equation}\label{eq-j}
j(n) := \min \set{i \in \NN}{q^{i} \cdot M < 2^{-n}}.
\end{equation}
For $k \in \NN$ and $v \in Q$, let $H(k, v)$ be the following recursive procedure:
\begin{quote}
Use the decidability of $(X, D, Q)$ to find some $e \in D$ with $\ball{\rho}{v}{\varepsilon} \subseteq \range(e)$. If $k = 0$, output $e$. Otherwise, let
\[
\theta := q^{k} \cdot M/j(n+1)
\]
and use computable density to find some $\vec{v}' \in Q^{\ar{e}}$ such that $\rho(v, e(\vec{v}')) \le \theta$. Output the $D$-tree
\[
[e; H(k-1, v'_1), \ldots, H(k-1, v'_{\ar{e}})].
\]
\end{quote}
Let $T := H(k, v)$. We show by induction on $k$ that 
\[
\rho(v, f_T(z^{(\ar{T})})) \le (1 + k/j(n+1)) \cdot q^k \cdot M.
\]
If $k = 0$, we have that $T = e$, for some $e \in D$. Hence,
\[
\rho(v, e(z^{(\ar{e})})) \le M \le 1 \cdot q^0 \cdot M.
\]
If $k > 0$, there are $e \in D$, $\vec{v}' \in Q^{\ar{e}}$ and $T_1, \ldots, T_{\ar{e}} \in \TTT^\ast_D$ with $T_\kappa = H(k-1, v'_\kappa)$, for $1 \le \kappa \le \ar{e}$, so that $T = [e; T_1, \ldots, T_{\ar{e}}]$ and $\rho(v, e(\vec{v}')) \le q^k \cdot M/j(n+1)$. Then,
\begin{align*}
\rho(v, f_T(z^{(\ar{T})})) 
&\le \rho(v, e(\vec{v}')) + \rho(e(\vec{v}'), f_T(z^{(\ar{T})})) \\
&=  \rho(v, e(\vec{v}')) + \rho(e(\vec{v}'), e(f_{T_1}(z^{(\ar{T_1})}), \ldots, f_{T_{\ar{e}}}(z^{(\ar{T_{\ar{e}}})}))) \\
&\le q^k \cdot M/j(n+1) +  q \cdot \max_\kappa \rho(v'_\kappa, f_{T_\kappa}(z^{(\ar{T_\kappa})}))\\
&\le q^k \cdot M/j(n+1) +  q \cdot (1+(k-1))/j(n+1) \cdot q^{k-1} \cdot M\\
&= (1 + k/j(n+1)) \cdot q^k \cdot M.
\end{align*}
With $S := H(j(n), u)$ we thus have that $\rho(u, f_S(z^{(\ar{S})})) < 2^{-n}$.
\end{proof}

\begin{lemma}\label{lem-qdq}
Let $(X, D, Q)$ be a computable digit space and $z \in A_X^{\text{eff}}$. Then there is a procedure $H'$, which given $S \in \TTT^\ast_D$ and $n \in \NN$, produces a basic element $v \in Q$ so that $\rho(f_S(z^{(\ar{S})}), v) < 2^{-n}$.
\end{lemma}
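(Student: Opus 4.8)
The plan is to define the procedure $H'$ by recursion on the structure of the finite tree $S$, asking for one extra binary digit of accuracy at each level of the recursion. The two ingredients are the computability of the digit maps (Definition~\ref{dn-compmap}), which for each $d \in D$ supplies a procedure $G_d$ that, from a tuple $\vec u \in Q^{\ar d}$ and $k \in \NN$, computes a basic element approximating $d(\vec u)$ to within $2^{-k}$; and the effective approximability of $z \in A_X^{\text{eff}}$, which supplies basic approximations of $z$ to any prescribed precision. Throughout I will use that the digits are contractions, say with common factor $q < 1$, so that $\rho(d(\vec x), d(\vec y)) \le q \cdot \rho(\vec x, \vec y) \le \rho(\vec x, \vec y)$ with respect to the maximum metric on $X^{\ar d}$.

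For the base case $S = \{ d \}$ one has $f_S(z^{(\ar d)}) = d(z, \ldots, z)$. Using $z \in A_X^{\text{eff}}$, first compute $\tilde z \in Q$ with $\rho(z, \tilde z) < 2^{-(n+1)}$; since for the maximum metric the constant tuples satisfy $\rho(z^{(\ar d)}, \tilde z^{(\ar d)}) = \rho(z, \tilde z)$ and $d$ is a contraction, we get $\rho(d(z^{(\ar d)}), d(\tilde z^{(\ar d)})) < 2^{-(n+1)}$. Then apply $G_d$ to $\tilde z^{(\ar d)} \in Q^{\ar d}$ to obtain $v \in Q$ with $\rho(d(\tilde z^{(\ar d)}), v) < 2^{-(n+1)}$, and output $v$; the triangle inequality gives $\rho(f_S(z^{(\ar S)}), v) < 2^{-n}$.

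For the inductive step $S = [d; S_1, \ldots, S_{\ar d}]$ recall that $f_S(z^{(\ar S)}) = d(f_{S_1}(z^{(\ar{S_1})}), \ldots, f_{S_{\ar d}}(z^{(\ar{S_{\ar d}})}))$. Recursively compute $v_\kappa := H'(S_\kappa, n+1)$ for $1 \le \kappa \le \ar d$, so that each $v_\kappa \in Q$ satisfies $\rho(f_{S_\kappa}(z^{(\ar{S_\kappa})}), v_\kappa) < 2^{-(n+1)}$. Since the distance on $X^{\ar d}$ is the maximum metric, the tuple $\vec v = (v_1, \ldots, v_{\ar d})$ then lies within $2^{-(n+1)}$ of $(f_{S_1}(z^{(\ar{S_1})}), \ldots, f_{S_{\ar d}}(z^{(\ar{S_{\ar d}})}))$, whence the contraction property yields $\rho(f_S(z^{(\ar S)}), d(\vec v)) < 2^{-(n+1)}$. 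Finally apply $G_d$ to $\vec v$ to get $v \in Q$ with $\rho(d(\vec v), v) < 2^{-(n+1)}$ and output $v$; combining the two estimates gives $\rho(f_S(z^{(\ar S)}), v) < 2^{-n}$.

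The recursion descends only to the immediate subtrees $S_\kappa$, each of strictly smaller height, so its depth is bounded by the height of $S$ and it terminates; as all steps — querying $z$, running the $G_d$, and forming tuples — are effective, $H'$ is a genuine procedure. I do not expect a real obstacle: the only point to get right is the precision bookkeeping, and asking each subtree for one extra bit of accuracy exactly balances the error introduced by applying $d$ (controlled by contraction) against the error of the final $G_d$ step, each at most $2^{-(n+1)}$. One could alternatively argue abstractly that $f_S$ is computable as a composition of computable maps and that computable maps preserve $A_X^{\text{eff}}$, but the explicit recursion has the advantage of exhibiting directly the uniformity in $S$ demanded by the statement.
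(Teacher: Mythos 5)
Your proposal is correct and follows essentially the same route as the paper's proof: a recursion on the structure of the finite tree $S$, requesting precision $2^{-(n+1)}$ at each level (both from the approximation of $z$ or the recursive calls, and from the procedures $G_d$), with the contraction property of the digits guaranteeing that the propagated error plus the final approximation error stays below $2^{-n}$. The only difference is cosmetic: the paper merely describes the procedure and leaves the error estimates implicit, whereas you spell out the triangle-inequality bookkeeping explicitly.
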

\begin{proof}
Since $z \in A_X^{\text{eff}}$, there is a procedure $F$ which on input $n \in \NN$ computes a basic element $u \in Q$ with $\rho(z, u) < 2^{-n}$. Now, define $H'$ to be the following recursive procedure:

On input $n$, if $S = d$, first apply the procedure $F$ on input $n+1$, say the output is $u \in Q$, and then apply $G_d$ to $u^{(\ar{d})}$ and $n+1$. Otherwise, assume that $T = [e; T_1, \ldots, T_{\ar{e}}]$ and that the results of applying $H'$ on input $T_\kappa$ and $n+1$, for each $1 \le \kappa \le \ar{e}$, are $u_1, \ldots, u_{\ar{e}} \in Q$. Then output the result of applying $G_e$ to input $(u_1, \ldots, u_{\ar{e}})$ and $n+1$.  
\end{proof}

Summing up we obtain the following result.

\begin{proposition}\label{pn-beq}
Let $(X, D, Q)$ be a well-covering, decidable and constructively dense computable digit space. Then, for each $z \in A_X^{\text{eff}}$, the topological bases $Q$ and $Q^{(z)}_D$ are effectively equivalent.
\end{proposition}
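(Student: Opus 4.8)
The plan is to combine the two preceding lemmas directly. Recall that Proposition~\ref{pn-beq} asks us to show that the bases $Q$ and $Q^{(z)}_D$ are \emph{effectively equivalent}, which by the discussion following Definition~\ref{dn-compdig} unpacks into two computable translations: (i) given $u \in Q$ and $n \in \NN$, one can compute a finite $D$-tree $S \in \TTT^\ast_D$ with $\rho(u, f_S(z^{(\ar{S})})) < 2^{-n}$; and (ii) given $S \in \TTT^\ast_D$ and $n \in \NN$, one can compute a basic element $v \in Q$ with $\rho(f_S(z^{(\ar{S})}), v) < 2^{-n}$. These are exactly the two assertions established in Lemma~\ref{lem-qqd} and Lemma~\ref{lem-qdq}, respectively.

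Concretely, I would proceed as follows. First I would fix $z \in A_X^{\text{eff}}$ and observe that all the hypotheses needed by the two lemmas are supplied by the proposition: well-covering, decidability and constructive density are assumed outright, and $(X, D, Q)$ being a computable digit space is what makes the procedures $G_d$ and the approximation procedure for $z$ available. For direction (i) I would simply invoke Lemma~\ref{lem-qqd}, which produces the required tree $S$ effectively from $u$ and $n$. For direction (ii) I would invoke Lemma~\ref{lem-qdq}, whose procedure $H'$ produces the required basic element $v$ effectively from $S$ and $n$. Since both translations are given by explicit effective procedures, the equivalence is effective, which is precisely the claim.

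There is essentially no obstacle here: the proposition is a summary statement whose entire content has already been carried out in the two lemmas, and the proof is just a matter of noting that the hypotheses match and that the two directions together constitute effective equivalence of the bases. The only point requiring a sentence of care is to make explicit that "effectively equivalent" means the conjunction of these two computable-translation properties, so that the reader sees how Lemmas~\ref{lem-qqd} and \ref{lem-qdq} jointly discharge the definition. Accordingly the proof is a short one-paragraph assembly:

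\begin{proof}
Let $z \in A_X^{\text{eff}}$. By hypothesis $(X, D, Q)$ is a well-covering, decidable and constructively dense computable digit space, so the assumptions of Lemma~\ref{lem-qqd} are met. Hence, given $u \in Q$ and $n \in \NN$, a finite $D$-tree $S \in \TTT^\ast_D$ with $\rho(u, f_S(z^{(\ar{S})})) < 2^{-n}$ can effectively be found. Conversely, by Lemma~\ref{lem-qdq} there is a procedure $H'$ which, given $S \in \TTT^\ast_D$ and $n \in \NN$, computes a basic element $v \in Q$ with $\rho(f_S(z^{(\ar{S})}), v) < 2^{-n}$. Both translations being effective, $Q$ and $Q^{(z)}_D$ are effectively equivalent.
\end{proof}
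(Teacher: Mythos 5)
Your proposal is correct and matches the paper exactly: the paper introduces Proposition~\ref{pn-beq} with the phrase ``Summing up we obtain the following result,'' i.e.\ it is precisely the conjunction of Lemma~\ref{lem-qqd} (basic element to finite $D$-tree) and Lemma~\ref{lem-qdq} (finite $D$-tree to basic element), which together discharge the definition of effective equivalence given after the introduction of $A_X^{\text{eff}}$. Your explicit assembly of the two lemmas, with the observation that the proposition's hypotheses supply everything the lemmas need, is the intended argument.
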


Next, we will consider the set of computable elements of a digit space. Usually an element of some abstract space is considered computable when it has a computable representative. Elements of an IFS are represented by infinite $D$-trees. 

\begin{definition}\label{dn-comptree}
Let $(X, D)$ be an extended IFS. A $D$-tree $T \in \TTT^\omega_D$ is said to be \emph{computable} if its associated sequence $(T^{(n)})_{n \in \NN}$ of finite $D$-trees is computable.
\end{definition}

An element $x$ of an IFS $(X, D)$ is \emph{computable} if there is a computable tree $T \in \TTT^\omega_D$ with $\val{T} = x$. We denote the set of all computable elements of $X$ by $X_c$.

Let $(X, D)$ be a digit space and $x \in X$ be computable. Moreover, let this be witnessed by $T \in \TTT^\omega_D$. Then we have for any $z \in A_X^{\text{eff}}$ and $n \in \NN$ that $\rho(x, f_{T^{(n)}}(z^{\ar{T^{(n)}}})) \le q^n \cdot M$, where $q < 1$ is the maximum of the contraction factors of the digit maps and $M$ a bound of $X$. Assume that $(X, D, Q)$ is computable. Then it follows with Lemma~\ref{lem-qdq} that, for any given $n \in \NN$, we can compute a basic element $v \in Q$ with $\rho(f_{T^{j(n+1)}}(z^{(\ar{T^{j(n+1)}})}), v) < 2^{-n-1}$. Here, the function $j$ is as in (\ref{eq-j}). It follows that $\rho(x, v) < 2^{-n}$. This shows that $X_c \subseteq A_X^{\text{eff}}$. The converse implication will be a consequence of Theorem~\ref{thm-cacoind} derived in Section~\ref{sec-cauchy} in a constructive fashion. To this end a further condition is needed.

\begin{definition}\label{dn-appchoi}\!\!\!\footnote{This definition slightly differs from the one given in \cite{bs}.}
A computable digit space $(X, D, Q)$  has \emph{approximable choice} if for every $d \in D$ there is an effective procedure $\fun{\lambda (\theta, u).\, v^\theta_u}{\QQ_+ \times \int(\range(d)) \cap Q}{Q^{\ar{d}}}$ such that for all $\theta \in \QQ_+$:
\begin{enumerate}
\item\label{dn-appchoi-1}
For all $u \in \int(\range(d)) \cap Q$ and all $\tilde{\theta} \in \QQ_+$, $\rho(v^\theta_u, v^{\tilde{\theta}}_u) < \max\{ \theta, \tilde{\theta} \}$.

\item\label{dn-appchoi-2}
One can compute $\theta' \in \QQ_+$ such that for all $u, u' \in \int(\range(d)) \cap Q$, if $\rho(u, u') < \theta'$ then $\rho(v^\theta_u, v^\theta_{u'}) < \theta$.

\item\label{dn-appchoi-3}
For all $u \in \int(\range(d)) \cap Q$ there is some $\vec y \in d^{-1}[\{ u \}]$ with $\rho(\vec y, v^\theta_u) < \theta$.

\end{enumerate}
\end{definition}

Obviously, every computable digit space with approximable choice is constructively dense.

\begin{proposition}\label{pn-characcomp}
Let $(X, D, Q)$ be a well-covering and decidable computable digit space with approximable choice. Then $X_c = A^\text{eff}$.
\end{proposition}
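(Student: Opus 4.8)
The plan is to prove the non-trivial inclusion $A_X^{\text{eff}} \subseteq X_c$, since $X_c \subseteq A_X^{\text{eff}}$ was already established just before the statement. So fix $x \in A_X^{\text{eff}}$, witnessed by a procedure that on input $n$ returns a basic element $u \in Q$ with $\rho(x, u) < 2^{-n}$; the task is to compute an infinite $D$-tree $T$ with $\val{T} = x$. The whole proof reduces to making the existence argument behind Lemma~\ref{lm-valonto} effective, one node at a time.

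First I would isolate an \emph{effective local step}: from an approximation procedure for a point $w \in A_X^{\text{eff}}$ one can compute (i) a digit $d \in D$ with $w \in \int(\range(d))$, together with a certified basic element $u \in \int(\range(d)) \cap Q$, and (ii) approximation procedures for preimages $y_1, \ldots, y_{\ar{d}} \in A_X^{\text{eff}}$ with $d(y_1, \ldots, y_{\ar{d}}) = w$. For (i), let $\varepsilon$ be a well-covering number (Lemma~\ref{lem-wcnum}), compute $u$ with $\rho(w, u) < \varepsilon/2$, and search the finite set $D$ using decidability for a $d$ with $\ball{\rho}{u}{\varepsilon/2} \subseteq \range(d)$; such a $d$ exists because $\ball{\rho}{w}{\varepsilon} \subseteq \range(d')$ for some $d'$ and $\ball{\rho}{u}{\varepsilon/2} \subseteq \ball{\rho}{w}{\varepsilon}$, and the inclusion also certifies $w, u \in \int(\range(d))$.

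The heart of the argument is (ii), where approximable choice enters. Choosing basic elements $u_m \in \int(\range(d)) \cap Q$ with $\rho(w, u_m) \to 0$ and a sequence $\theta_m \to 0$, I would form the tuples $v^{\theta_m}_{u_m} \in Q^{\ar{d}}$. Using clause (1) of approximable choice to control the dependence on the precision $\theta$ and clause (2) to control the dependence on the argument $u$, one shows that $(v^{\theta_m}_{u_m})_m$ is an effectively convergent Cauchy sequence; let $\vec y = (y_1, \ldots, y_{\ar{d}})$ be its limit, so that each $y_\kappa \in A_X^{\text{eff}}$ with a procedure computable from the given data. Clause (3) supplies, for each $m$, a genuine preimage $\vec y_m \in d^{-1}[\{ u_m \}]$ with $\rho(\vec y_m, v^{\theta_m}_{u_m}) < \theta_m$; hence $\vec y_m \to \vec y$ while $d(\vec y_m) = u_m \to w$, and continuity of $d$ forces $d(\vec y) = w$. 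This local step thus equips the set of (codes of) approximation procedures with an effective $\Lambda_D$-co-algebra structure $w \mapsto (d, (y_1, \ldots, y_{\ar{d}}))$. Unfolding it co-recursively — equivalently, applying terminality of $(\TTT^\omega_D, \mathrm{root} \times \mathrm{subtree})$ from Theorem~\ref{thm-treecoalg} — produces an infinite $D$-tree $T_w$ from $w$; effectivity of the step makes the sequence $(T_w^{(n)})_{n \in \NN}$ computable, so $T_x$ is a computable tree and it remains only to check $\val{T_x} = x$. For this I would prove, by induction on $n$ and uniformly over all points, that $w \in \range(f_{T_w^{(n)}})$: the base case is $w \in \range(d) = \range(f_{T_w^{(0)}})$, and since the subtrees of $T_w$ are exactly the trees $T_{y_\kappa}$ assigned to the preimages, the step follows from $w = d(y_1, \ldots, y_{\ar{d}})$ and the hypothesis $y_\kappa \in \range(f_{T_{y_\kappa}^{(n-1)}})$. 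As $(X, D)$ is weakly hyperbolic and compact, Proposition~\ref{pn-valT} gives that $\bigcap_n \range(f_{T_x^{(n)}}) = \{ \val{T_x} \}$ is a singleton containing $x$, whence $\val{T_x} = x$ and $x \in X_c$.

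I expect the main obstacle to be item (ii): distilling from the three clauses of approximable choice a single effectively convergent sequence of tuples whose limit is a true preimage, and arranging the bookkeeping so that the resulting approximation procedures for the $y_\kappa$ are computable uniformly in the input procedure for $w$ — uniformity being exactly what is needed for the co-recursive assembly. Once this effective preimage operator is in place, turning the local step into a computable tree and verifying $\val{T_x} = x$ are comparatively routine, the latter simply the effective shadow of the inclusion argument in Lemma~\ref{lm-valonto}.
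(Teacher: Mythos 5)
Your proposal is correct, and its mathematical substance coincides with the paper's --- but the paper packages the argument very differently, which is worth spelling out. The paper gives no self-contained proof at the statement: the inclusion $X_c \subseteq A_X^{\text{eff}}$ is established in the text immediately before it (via Lemma~\ref{lem-qdq}), and the hard inclusion $A_X^{\text{eff}} \subseteq X_c$ is obtained as a corollary of Theorem~\ref{thm-cacoind}, the constructive, representation-free co-inductive proof that $\bA_X \subseteq \CC_X$, combined with the realisability apparatus of Section~\ref{sec-progex}: realisers of $\bA_X(x)$ are regular Cauchy sequences of basic elements converging to $x$, realisers of $\CC_X(x)$ are trees $T$ with $\val{T} = x$ (Theorem~\ref{thm-treerealis}), and the Soundness Theorem converts the constructive proof into a program carrying computable realisers to computable realisers. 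Your proof unrolls exactly this pipeline by hand: your step (i) is the opening of the proof of Theorem~\ref{thm-cacoind} (well-covering number plus decidability to select $d$); your step (ii) re-derives, inline, the forward half of Proposition~\ref{pn-epsrightinv} --- your Cauchy-sequence argument from clauses (\ref{dn-appchoi-1})--(\ref{dn-appchoi-3}) is precisely the paper's construction of a computable right inverse $d'$, which Theorem~\ref{thm-cacoind} then invokes as a black box via $\vec{y} := d'(x)$; and your corecursive assembly through Theorem~\ref{thm-treecoalg}, with the verification $x \in \bigcap_n \range(f_{T_x^{(n)}}) = \{ \val{T_x} \}$ by Proposition~\ref{pn-valT}, replaces the appeal to soundness. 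Your version buys self-containedness and an explicit algorithm; the paper's factorisation buys reuse (the same constructive theorem also yields Theorems~\ref{thm-prodca} and \ref{thm-compeqca} for products and hyperspaces) and freedom from representation-level bookkeeping. One such piece of bookkeeping you elide: before approximable choice may be applied to the approximations $u_m$, each must be a \emph{certified} element of $\int(\range(d)) \cap Q$, and with your constants (approximation of $w$ to within $\varepsilon/2$, certified ball of radius $\varepsilon/2$ around $u$) there is no guaranteed slack forcing later approximations of $w$ into the certified ball; either tighten the constants as the paper does in Theorem~\ref{thm-cacoind} (approximation to within $\varepsilon/2$, ball of radius $\varepsilon$, hence slack $\varepsilon/2$ around the point), or certify each $u_m$ by a dovetailed search using decidability. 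This is a repairable detail, not a gap.
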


\begin{proposition}\label{pn-charunicont}
Let $(X, D, Q_X)$ and $(Y, E, Q_Y)$ be computable digit spaces such that $(Y, E,\linebreak Q_Y)$ is decidable and well-covering. Then a map $\fun{f}{X^{\ar{f}}}{Y}$ is computable if, and only if, there is a computable map $\fun{\zeta}{\QQ_+}{\QQ_+}$ and a procedure $H$ so that for any $\varepsilon \in \QQ_+$ and every $\vec{u} \in Q_X^{\ar{f}}$, $H$ outputs a $v \in Q_Y$ with $f[\ball{\rho_X}{\vec{u}}{\zeta(\varepsilon)}] \subseteq \ball{\rho_Y}{v}{\varepsilon}$.
\end{proposition}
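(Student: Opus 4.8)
The plan is to prove both implications directly from Definition~\ref{dn-compmap}, unwinding computability of $f$ into its two constituents: a computable modulus of continuity, and an approximation procedure $G_f$ that, on input $\vec u \in Q_X^{\ar{f}}$ and $n$, returns a basic element $v \in Q_Y$ with $\rho_Y(f(\vec u), v) < 2^{-n}$. Throughout, the only tools needed are the triangle inequality in $Y$ and the density of $Q_X$ in $X$ (hence of $Q_X^{\ar{f}}$ in $X^{\ar{f}}$ under the maximum metric); the two conditions to be matched differ only by a bookkeeping of halved precisions.

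For the ``only if'' direction, suppose $f$ is computable with computable modulus $\zeta_0$ and procedure $G_f$. I would set $\zeta(\varepsilon) := \zeta_0(\varepsilon/2)$ and let $H$, on input $\varepsilon$ and $\vec u$, run $G_f(\vec u, n)$ for the least $n$ with $2^{-n} < \varepsilon/2$ (computable from $\varepsilon \in \QQ_+$), returning the resulting $v \in Q_Y$. Then for every $\vec x \in \ball{\rho_X}{\vec u}{\zeta(\varepsilon)}$ one has $\rho_Y(f(\vec x), f(\vec u)) < \varepsilon/2$ by the defining property of $\zeta_0$, while $\rho_Y(f(\vec u), v) < \varepsilon/2$; the triangle inequality gives $\rho_Y(f(\vec x), v) < \varepsilon$, that is $f[\ball{\rho_X}{\vec u}{\zeta(\varepsilon)}] \subseteq \ball{\rho_Y}{v}{\varepsilon}$, as required.

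For the ``if'' direction, suppose $\zeta$ and $H$ are given. Defining $G_f$ by $G_f(\vec u, n) := H(2^{-n}, \vec u)$ immediately yields the approximation procedure: since $\vec u \in \ball{\rho_X}{\vec u}{\zeta(2^{-n})}$, the output $v$ satisfies $f(\vec u) \in \ball{\rho_Y}{v}{2^{-n}}$. To extract a modulus of continuity I would put $\zeta'(\varepsilon) := \zeta(\varepsilon/2)/2$, which is computable since $\zeta$ is. Given $\vec x, \vec y \in X^{\ar{f}}$ with $\rho_X(\vec x, \vec y) < \zeta'(\varepsilon)$, density of $Q_X^{\ar{f}}$ lets me pick a basic $\vec u$ with $\rho_X(\vec x, \vec u) < \zeta'(\varepsilon)$; then both $\vec x$ and $\vec y$ lie in $\ball{\rho_X}{\vec u}{\zeta(\varepsilon/2)}$, so $f(\vec x), f(\vec y) \in \ball{\rho_Y}{H(\varepsilon/2, \vec u)}{\varepsilon/2}$, whence $\rho_Y(f(\vec x), f(\vec y)) < \varepsilon$. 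This establishes the uniform modulus, and together with $G_f$ shows $f$ computable.

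The step I expect to require the most care is this last globalisation: turning the per-basic-point covering condition into a genuine uniform modulus valid on all of $X^{\ar{f}}$, where one must choose $\zeta'$ small enough that the density slack \emph{plus} $\rho_X(\vec x, \vec y)$ still fits inside the ball of radius $\zeta(\varepsilon/2)$ --- hence the halving in $\zeta'(\varepsilon) = \zeta(\varepsilon/2)/2$. I would also double-check the precise role of the standing hypotheses that $(Y, E, Q_Y)$ be decidable and well-covering: the metric argument above uses only $H$'s guarantee to output a correct $v$, so these assumptions appear inessential to this particular equivalence, but I would verify whether a more constructive rendering of $H$ --- one that actually searches for and certifies $v$ --- needs them before omitting them from the proof.
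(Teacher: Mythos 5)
Your proof is correct, and your ``if'' direction coincides with the paper's almost verbatim: the same choice $\delta = \zeta(\varepsilon/2)$, the same density-of-$Q_X$ argument placing $\vec x$ and $\vec x'$ in a common ball $\ball{\rho_X}{\vec u}{\delta}$, and the same observation that $H(2^{-n},\vec u)$ in particular approximates $f(\vec u)$ to within $2^{-n}$. Where you genuinely diverge is the ``only if'' direction. You use only the two ingredients of Definition~\ref{dn-compmap} --- the modulus $\zeta_0$ and the approximation procedure $G_f$ --- plus the triangle inequality, setting $\zeta(\varepsilon) := \zeta_0(\varepsilon/2)$ and $v := G_f(\vec u, n)$ with $2^{-n} < \varepsilon/2$. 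The paper instead routes through the standing hypotheses: it uses decidability to pick a digit $e \in E$ whose range contains a ball about $f(\vec u)$ and then extracts $v$ from the computability of $e$. Your route is more elementary, and your closing suspicion is vindicated: the equivalence as stated does not need decidability or well-covering of $(Y, E, Q_Y)$ (the paper's own converse never touches them, and its forward use of decidability is applied to a ball centered at the non-basic point $f(\vec u)$, which Definition~\ref{dn-deccd} only literally licenses for basic centers, so your argument is arguably cleaner on this point too). What the paper's heavier construction buys is not this proposition but its sequel: in Proposition~\ref{pn-contophi} one needs the output $v$ together with a digit $e$ satisfying $f[X^m] \subseteq \ball{\rho_Y}{v}{\varepsilon_Y} \subseteq \range(e)$, and that is exactly the decidability/well-covering pattern rehearsed here.
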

\begin{proof}
Assume that $\fun{f}{X^m}{Y}$ is computable. Then $f$ has a computable modulus of continuity $\fun{\zeta}{\QQ_+}{\QQ_+}$. Moreover, let $\varepsilon, \varepsilon' \in \QQ_+$ such that $\varepsilon$ is a well-covering number of $(Y, E, Q_Y)$. Set $\delta := \zeta(\varepsilon' / 2)$ and, without restriction, suppose that $\varepsilon' \le \epsilon$. Then for $\vec{u} \in Q^m$ use decidability to pick some $e \in E$ with $\ball{\rho_Y}{f(\vec{u})}{\bar{\varepsilon'}} \subseteq \range(e)$. Since $e$ is computable, we can effectively find some $v \in Q_Y$ with $\rho_Y(f(\vec{u}), v) < \varepsilon' / 2$. Then it follows for $\vec{x} \in \ball{\rho_x}{\vec{u}}{\delta}$ that $\rho_Y(f(\vec{x}), f(\vec{u})) < \varepsilon' / 2$. Hence, $\rho(f(\vec{x}), v) \le \rho_Y(f(\vec{x}), f(\vec{u})) + \rho_Y(f(\vec{u}), v) < \varepsilon'$. 

For the converse implication let $\varepsilon \in \QQ_+$ and $\vec{x} \in X^m$. Set $\delta := \zeta(\varepsilon / 2)$.  Because of the density of $Q_X$ there is some $\vec{u} \in Q_X^m$ with $\rho_X(\vec{x}, \vec{u}) < \delta / 2$. Then we have for $\vec{x}' \in X^m$ with $\rho_X(\vec{x}, \vec{x}') < \delta / 2$ that $\vec{x}, \vec{x}' \in \ball{\rho_X}{\vec{u}}{\delta}$, which implies that $f(\vec{x}), f(\vec{x}') \in \ball{\rho_Y}{v}{\varepsilon / 2}$, i.e., $\rho(f(\vec{x}), f(\vec{x}')) < \varepsilon$.

Since we have in addition that for any $n \in \NN$ and all $\vec{u} \in Q_X^m$ a $v \in Q_Y$ with $$f[\ball{\rho_X}{\vec{u}}{\zeta(2^{-n})}] \subseteq \ball{\rho_Y}{v}{2^{-n}}$$ can effectively be found, which in particular means that $\rho(f(\vec{u}), v) < 2^{-n}$, it follows that $f$ is computable.
\end{proof} 

Thus, the computable maps between decidable and well-covering computable digit spaces are exactly the maps that are uniformly continuous in a constructive sense.

\begin{definition}\label{dn-rinv}
For $\fun{f}{X^{\ar{f}}}{X}$, a map $\fun{f'}{\range(f)}{X^{\ar{f}}}$ is a \emph{right inverse} of $f$, if $f \circ f'$ is the identity on $\range(f)$. 
\end{definition}

\begin{proposition}\label{pn-epsrightinv}
A computable digit space $(X, D, Q)$ has approximable choice if, and only if, every $d \in D$ has a computable right inverse.
\end{proposition}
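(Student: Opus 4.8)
The plan is to prove the two implications separately, in each case matching the approximating family $v^\theta_u$ of Definition~\ref{dn-appchoi} with the approximation procedure and modulus of continuity that witness computability of a right inverse in the sense of Definitions~\ref{dn-rinv} and~\ref{dn-compmap}. The guiding idea is that a computable right inverse $d'$ \emph{is} the limit $\lim_{\theta \to 0} v^\theta_u$ of an approximable choice, and conversely.

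For the ``only if'' direction (approximable choice $\Rightarrow$ computable right inverse), fix $d \in D$ and the effective procedure $\lambda(\theta, u).\, v^\theta_u$. First I would define $d'$ on basic points: for $u \in \int(\range(d)) \cap Q$, condition~\ref{dn-appchoi}(\ref{dn-appchoi-1}) makes $(v^\theta_u)_\theta$ a Cauchy net, so since $X$ (being compact metric) and hence $X^{\ar{d}}$ is complete, it converges to a limit $d'(u)$, and passing to the limit in~\ref{dn-appchoi}(\ref{dn-appchoi-1}) gives $\rho(d'(u), v^\theta_u) \le \theta$. By condition~\ref{dn-appchoi}(\ref{dn-appchoi-3}) there are genuine preimages $\vec y_\theta \in d^{-1}[\{u\}]$ with $\rho(\vec y_\theta, v^\theta_u) < \theta$, whence $\vec y_\theta \to d'(u)$ and, by continuity of $d$, $d(d'(u)) = u$. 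Using $\rho(d'(u), v^\theta_u) \le \theta$ together with condition~\ref{dn-appchoi}(\ref{dn-appchoi-2}) I would estimate $\rho(d'(u), d'(u')) < 3\theta$ whenever $\rho(u, u') < \theta'$, yielding a computable modulus of continuity (set $\theta := \varepsilon/3$ and read off $\theta'$); the approximation procedure demanded by Definition~\ref{dn-compmap}(\ref{dn-compmap-2}) is then $G_{d'}(u, n) := v^{2^{-n-1}}_u$, since $\rho(d'(u), v^{2^{-n-1}}_u) \le 2^{-n-1} < 2^{-n}$. Being uniformly continuous on its basic points, $d'$ extends uniquely to a uniformly continuous map on $\range(d)$ that still satisfies $d \circ d' = \id$ by continuity.

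For the ``if'' direction (computable right inverse $\Rightarrow$ approximable choice), fix $d$ with a computable right inverse $d'$, supplied by Definition~\ref{dn-compmap}(\ref{dn-compmap-2}) with computable modulus $\zeta$ and approximation procedure $G_{d'}$. For $\theta \in \QQ_+$ and $u \in \int(\range(d)) \cap Q$ I would set $v^\theta_u := G_{d'}(u, n)$ for the least $n$ with $2^{-n} < \theta/4$, so that $\rho(d'(u), v^\theta_u) < \theta/4$. The triangle inequality then gives condition~\ref{dn-appchoi}(\ref{dn-appchoi-1}) at once; condition~\ref{dn-appchoi}(\ref{dn-appchoi-2}) follows by taking $\theta' := \zeta(\theta/2)$ and combining the two $\theta/4$-bounds with the uniform continuity of $d'$; and condition~\ref{dn-appchoi}(\ref{dn-appchoi-3}) holds with $\vec y := d'(u)$, a genuine preimage because $d \circ d' = \id$ on $\range(d) \supseteq \int(\range(d))$.

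I expect the main obstacle to be the ``only if'' direction, specifically the passage from a map defined and uniformly continuous merely on $\int(\range(d)) \cap Q$ to a bona fide computable right inverse on all of $\range(d)$. This requires care about which dense set underlies the computability structure on the subspace $\range(d)$, and, for the extension to cover the whole range, that $\int(\range(d)) \cap Q$ be dense in $\range(d)$; securing (or arranging the hypotheses so as to guarantee) this density, so that continuity pins down $d'$ and the identity $d \circ d' = \id$ on all of $\range(d)$, is the delicate point. By contrast, the ``if'' direction is essentially bookkeeping with the triangle inequality and the given modulus.
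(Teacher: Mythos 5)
Your proposal is correct and takes essentially the same route as the paper: there too $d'$ is obtained as the limit of the $v^\theta_u$ (via the genuine preimages supplied by condition~(3) of Definition~\ref{dn-appchoi}), condition~(2) yields the computable modulus, the bound $\rho(d'(x), v^\theta_u) < 2^{-n}$ yields computability, and the converse direction is verified exactly as you describe. The only cosmetic difference is that the paper defines $d'(x)$ directly for every $x \in \range(d)$ through approximating sequences from $Q \cap \int(\range(d))$ rather than first on basic points and then by uniform-continuity extension; the density of $Q \cap \int(\range(d))$ in $\range(d)$ that you flag as the delicate point is assumed just as implicitly there (the paper's proof opens with ``Because of density there is some $u_m \in Q \cap \int(\range(d))$ with $\rho(x, u_m) < 2^{-m}$'').
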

\begin{proof}
Assume that $(X, D, Q)$ has approximable choice, and let $d\in D$ and $x \in \range(d)$. Because of density there is some $u_m \in Q \cap \int(\range(d))$ with $\rho(x, u_m) < 2^{-m}$, for all $m \in \NN$. 

Use approximable choice to pick the function $\lambda(\theta,u).v^{\theta}_u$.  For $\theta_n := 2^{-n-4}$, pick $\theta' \in \QQ_+$ according to approximable choice, part (\ref{dn-appchoi-2}). Let $N_n \ge 0$ such that $\rho(x, u_m) < \theta'/3$, for $m \ge N_n$. Without restriction let $N_n$ be such that $N_n \ge N_i$, for all $i < n$. Set $v_n := v^{\theta_n}_{u_{N_n}}$.
By approximable choice, part (\ref{dn-appchoi-3}), there is some $z_n \in d^{-1}[\{ u_{N_n}\}]$ with
$\rho(z_n, v_n) < \theta_n$. Because of the assumption on $N_n$, we have that $\rho(u_{N_m}, u_{N_n}) < \theta'$, for $m \ge n$. Hence, $\rho(v_m, v_n) < \theta_n$. It follows that  $\rho(z_m, z_n) < 3\theta_n < 2^{-n}$. Thus, $(z_n)_{n \in \NN}$ is a regular Cauchy sequence. Since $(X^{\ar{d}},\rho)$ is complete, it converges to some $y_{(u_i)} \in X^{\ar{d}}$.  As $d$ is continuous, we obtain that 
\[
d(y_{(u_m)}) = \lim_{n \to \infty} d(z_n) = \lim_{n \to \infty} u_n = x.
\]

Now, let $x' \in \range(d)$ with $\rho(x, x') < \theta'/3$ as well as $u'_i \in Q \cap \int(\range(d))$ with $\rho(x', u'_m) < 2^{-m}$, for $m \in \NN$. Moreover, let $N'_n \ge 0$ such that $\rho(x', u'_m) < \theta'/3$, for $m \ge N'_n$. Without restriction assume that $N'_n \ge N'_i$, for all $i < n$. Finally, let $v'_n := v^{\theta_n}_{u'_{N'_n}}$ and $z'_n \in d^{-1}[\{ u_{N'_n}\}]$ with $\rho(z'_n, v'_n) < \theta_n$. Then $\rho(u_{N_n}, u'_{N'_n}) < \theta'$ and hence $\rho(v_n, v'_n) < \theta_n$. It follows that $\rho(z_n, z'_n) < 3\theta_n$ and thus $\rho(y_{(u_m)}, y_{(u'_m)}) < 9\theta_n < 2^{-n}$.

For $x=x'$, we obtain that $y_{(u_m)} = y_{(u'_m)}$, i.e., $y$ does not depend on the choice of the approximating sequence $(u_m)_{m \in \NN}$. Define $d'(x) := y$. By what we have just shown, $d'$ is uniformly continuous with computable modulus of continuity. Moreover, since $\rho(d'(x), v_n) \le \rho(d'(x), z_n) + \rho(z_n, v_n) < 4\theta_n = 2^{-n}$, it follows that $d'$ is also computable.

Conversely, let $d'$ be a right inverse of $d$. For $\theta \in \QQ_+$ let 
\[
m(\theta) := \min\set{m\in \NN}{2^{-m} \le \theta}.
\]
Since $d'$ is computable, we can compute for any given $u \in Q \cap \int(\range(d))$ and $n \in \NN$ a basic element $v_n \in Q^{\ar{d}}$ so that $\rho(d'(u), v_n) < 2^{-n}$. Set $v^{\theta}_u := v_{m(\theta)+2}$. It remains to verify the conditions in Definition~\ref{dn-appchoi}:

(\ref{dn-appchoi-1}) Let $\theta, \bar{\theta} \in \QQ_+$. Without restriction let $\theta \ge \bar{\theta}$. Then 
\[
\rho(v^{\theta}_u, v^{\bar{\theta}}_u) \le \rho(v^{\theta}_u, d'(u)) + \rho(d'(u), v^{\bar{\theta}}_u) < 2^{-m(\theta)-1} < \theta.
\]

(\ref{dn-appchoi-2}) As $d'$ has a computable modulus of continuity, for given $\theta \in \QQ_+$ we can compute a $\theta' \in \QQ_+$ such that for $u, u' \in Q \cap \int(\range(d))$, if $\rho(u, u') < \theta'$ then $\rho(d'(u), d'(u')) < \theta/2$, from which it follows that $\rho(v^{\theta}_u, v^{\theta}_{u'}) < \theta$. 

(\ref{dn-appchoi-3}) is obvious: choose $\vec{y} := d'(u)$.
\end{proof}

In Type-Two Theory of Effectivity an element $x \in X$ is defined to be computable, if it is contained in $A^\text{eff}_X$. So, it follows that both computability notions coincide. In the present approach elements of $X$ are represented by infinite $D$-trees $T$, and/or the corresponding sequences $(T^{(n)})_{n \in \NN}$ of initial segments. Similarly, in Type-Two Theory of Effectivity an element $x$ is represented by an infinite sequence $(u_n)_{n \in \NN}$ of basic elements with $\rho(x, u_n) < 2^{-n}$. The resulting representation is called \emph{Cauchy representation} $\rho_C$. As follows from the results in this section, one can computably pass from an infinite stream $T^{(0)} : T^{(1)} : \cdots$ of finite $D$-trees to an infinite sequence $(u_n)_{n \in \NN}$ of basic elements so that $\rho(\val{T}, u_n) < 2^{-n}$, and vice versa. This means that there are computable translations between both representations as summarised by the next result.

\begin{theorem}\label{thm-compeq}
Let $(X, D, Q)$ be a well-covering and decidable computable digit space with approximable choice. Then there are computable operators $\fun{F}{\TTT^\omega_D}{Q^\omega}$ and $\pfun{G}{Q^\omega}{\TTT^\omega_D}$ such that for $T \in \TTT^\omega_D$ and $w \in \dom(G)$,\footnote{We use the notation $\pfun{f}{X}{Y}$ to denote partial maps $f$ from $X$ to $Y$ with $\dom(f)$ as its domain of definition.}
\[
\rho_C(F(T)) = \val{T} \quad\text{and}\quad \val{G(w)} = \rho_C(w).
\]
\end{theorem}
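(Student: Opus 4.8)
The plan is to build the two translations separately, the tree-to-Cauchy direction $F$ being routine and the Cauchy-to-tree direction $G$ carrying the real work. Throughout I would fix some $z \in A_X^{\text{eff}}$; since any basic element $u \in Q$ lies in $A_X^{\text{eff}}$ (take the constant sequence), such a $z$ is available and computable. Let $q < 1$ be the maximal contraction factor and $M$ a bound of $X$, and let $j$ be the function defined in (\ref{eq-j}).

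For $F$ I would proceed as follows. Given $T \in \TTT^\omega_D$, presented through its stream of initial segments $(T^{(n)})_{n \in \NN}$, define the $n$-th output $u_n$ by setting $m := j(n+1)$, reading off $T^{(m)}$, and applying the procedure $H'$ of Lemma~\ref{lem-qdq} to $T^{(m)}$ and $n+1$ to obtain a basic element with $\rho(f_{T^{(m)}}(z^{(\ar{T^{(m)}})}), u_n) < 2^{-(n+1)}$. Since $\rho(\val{T}, f_{T^{(m)}}(z^{(\ar{T^{(m)}})})) \le q^{m} \cdot M < 2^{-(n+1)}$ by the choice of $m$, the triangle inequality gives $\rho(\val{T}, u_n) < 2^{-n}$. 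Thus $F(T) := (u_n)_{n \in \NN}$ is a regular Cauchy name of $\val{T}$, so $\rho_C(F(T)) = \val{T}$; and $F$ is computable because each $u_n$ depends only on the finite datum $T^{(j(n+1))}$.

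For $G$ I would define the tree co-recursively, first using Proposition~\ref{pn-epsrightinv} to replace approximable choice by a computable right inverse $d'$ for each $d \in D$. Let $\varepsilon$ be a well-covering number (Lemma~\ref{lem-wcnum}). Given a regular Cauchy name $w$ of $x := \rho_C(w)$, compute a basic element $u$ with $\rho(x, u) < \varepsilon/2$ and use decidability to find $d \in D$ with $\ball{\rho}{u}{\varepsilon/2} \subseteq \range(d)$; such a $d$ exists because $\ball{\rho}{u}{\varepsilon/2} \subseteq \ball{\rho}{x}{\varepsilon}$. Then $x \in \int(\range(d))$, so $\vec y := d'(x)$ is defined, and since $d'$ is computable a regular Cauchy name $w_\kappa$ of each component $y_\kappa$ can be computed from $w$. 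Set $\rt{G(w)} := d$ and $\sbt{G(w)} := (G(w_1), \ldots, G(w_{\ar{d}}))$. This co-recursion yields a genuine element of $\TTT^\omega_D$ via the terminal co-algebra structure of Theorem~\ref{thm-treecoalg} and Proposition~\ref{pn-seqtree}, and each finite level $G(w)^{(n)}$ is produced from finitely much of $w$, so $G$ is a computable partial operator whose domain consists of the regular Cauchy names.

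It remains to verify $\val{G(w)} = \rho_C(w)$. By Corollary~\ref{cor-tree} and the construction, $\val{G(w)} = d(\val{G(w_1)}, \ldots, \val{G(w_{\ar{d}})})$ while $x = d(y_1, \ldots, y_{\ar{d}})$ with $y_\kappa = \rho_C(w_\kappa)$. Since $d$ is a contraction with factor at most $q$, we get $\rho(\val{G(w)}, x) \le q \cdot \max_\kappa \rho(\val{G(w_\kappa)}, y_\kappa)$, and iterating this estimate $n$ times bounds $\rho(\val{G(w)}, x)$ by $q^{n} \cdot M$; letting $n \to \infty$ forces $\val{G(w)} = x = \rho_C(w)$. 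I expect the main obstacle to lie in the $G$ direction: arranging that the digit selection and the passage from a Cauchy name of $x$ to Cauchy names of the preimage components $y_\kappa$ are both computable and that the co-recursion is productive, so that it yields a well-defined element of $\TTT^\omega_D$. Once this is in place, the identity $\val{G(w)} = \rho_C(w)$ follows cleanly from contractivity, while the $F$ direction is essentially a repackaging of Lemma~\ref{lem-qdq}.
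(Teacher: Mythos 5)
Your proposal is correct and follows essentially the paper's own route: your $F$ is exactly the paper's repackaging of Lemma~\ref{lem-qdq} via the function $j$ from (\ref{eq-j}) (the argument used to show $X_c \subseteq A_X^{\text{eff}}$), and your co-recursive $G$ — well-covering number plus decidability to select the digit, then the computable right inverse from Proposition~\ref{pn-epsrightinv} to turn a Cauchy name of $x$ into Cauchy names of the components of $d'(x)$ — is precisely the algorithm the paper obtains by program extraction from the co-inductive proof of Theorem~\ref{thm-cacoind}. The only difference is presentational: you build $G$ explicitly and verify $\val{G(w)} = \rho_C(w)$ by an iterated-contraction estimate, where the paper delegates the construction to the Soundness Theorem and the correctness to Theorem~\ref{thm-treerealis}.
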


\section{Extracting digital trees from co-inductive proofs}\label{sec-progex}

In this section we recast the theory of topological digit spaces in a constructive setting with the aim to extract programs that provide effective representations of certain objects or transformations between different representations. As one of the main results on this basis we will obtain effective transformations between the Cauchy representation of digit spaces and the digital tree representation showing that the two representations are effectively equivalent. The method of program extraction is based on a version of realisability, and the main constructive definition and proof principles will be induction and co-induction. The advantage of the constructive approach lies in the fact that proofs can be carried out in a representation-free way. Constructive logic and the Soundness Theorem guarantee automatically that proofs are witnessed by effective and provably correct transformations on the level of representations.

Regarding the theory of realisability and its applications to constructive analysis we refer the reader to Schwichtenberg and Wainer~\cite{sw}, Berger and Seisenberger~\cite{bse} and Berger~\cite{be}. Here, we only recall main facts. We largely follow the exposition in~\cite{bs}.  The logic used is many-sorted first-order logic extended by the formation of inductive and co-inductive predicates. Note that although the logic is based on intuitionistic logic a fair amount of classical logic is available. For example, any disjunction-free formula that is classically true may be admitted as an axiom~\cite{bt}.

Realisability assigns to each formula $A$ an unary predicate $\mathbf{R}(A)$ to be thought of as the set of realisers of $A$. Instead of $\mathbf{R}(A)(a)$ one often writes $a \br A$ (``$a$ realises $A$''). The realiser $a$ can be typed or untyped, but for the understanding of what follows, details about the nature of realisers are irrelevant. It suffices to think of them as being (idealised, but executable) functional programs or (Oracle-)Turing machines. The crucial clauses of realisability for the propositional connectives are: 
\begin{align*}
&c \br (A \vee B) := (\exists a) (c = (0, a) \wedge a \br A) \vee (\exists b) (c = (1, b) \wedge b \br B) \\
&f \br (A \to B) := (\forall a) (a \br A \to f(a) \br B) \\
&c \br (A \wedge B) := \mathbf{p}_0(c) \br A \wedge \mathbf{p}_1(c) \br B \\
&c \br \bot := \bot.
\end{align*}
Hence, an implication is realised by a function and a conjunction by a pair (accessed by left and right projections, $\mathbf{p}_0(\cdot)$, $\mathbf{p}_1(\cdot)$).

Quantifiers are treated uniformly in this version of realisability:
\begin{align*}
&a \br (\forall x)\, A(x) := (\forall x)\, a \br A(x) \\
&a \br (\exists x)\, A(x) := (\exists x)\, a \br A(x).
\end{align*}
By this way, variables $x$ are allowed to range over abstract mathematical objects without prescribed computational meaning. Therefore, the usual interpretation of $a \br (\forall x)\, A(x)$ to mean $(\forall x)\, a(x) \br A(x)$ makes no sense as one would use the abstract object $x$ as input to the program $a$. 
 
For atomic formulas $P(\vec t)$, where $P$ is a predicate and $\vec t$ are terms, realisability is defined in terms of a chosen predicate $\tilde{P}$ with one extra argument place, that is,
\begin{equation}\label{realis}
a \br P(\vec t) := \tilde{P}(a,\vec t).
\end{equation}
The choice of the predicate $\tilde{P}$ allows to fine tune the computational content of proofs. 

So far, only first-order logic has been covered. Next, we explain how inductive and co-inductive definitions are realised. An inductively defined predicate $P$ is defined as the least fixed point of a monotone predicate transformer $\Phi(X, \vec x)$, that is the formula $(\forall \vec x)(X(\vec x) \to Y(\vec x)) \to (\forall \vec x)(\Phi(X, \vec x) \to \Phi(Y, \vec x))$, with free predicate variables $X$ and $Y$, must be provable. Then one has the closure axiom
\[
(\forall \vec x)(\Phi(P, \vec x) \to P(\vec x))
\]
as well as the induction schema
\[
(\forall \vec x)(\Phi(\AAA, \vec x) \to \AAA(\vec x)) \to (\forall \vec x)(P(\vec x) \to \AAA(\vec x))
\]
for every predicate $\AAA$ defined by some formula $A(\vec x)$ as $\AAA(\vec x) \leftrightarrow A(\vec x)$.
Realisability for $P$ is defined as in (\ref{realis}) by defining $\tilde{P}$ inductively via the operator $\tilde{\Phi}(\tilde{X}, a, \vec x)$, where $\tilde{\Phi}$ is obtained from $\Phi$ by replacing every occurrence of the form $a \br X(\vec x)$ in the expression obtained by unravelling the formula $a \br \Phi(X, \vec{x})$ according to the definition of $\Phi$, by $\tilde{X}(a, \vec x)$, for a fresh predicate variable $\tilde{X}$. Then one has the closure axiom
\[
(\forall a, \vec x)(a \br \Phi(P, \vec x) \to \tilde{P}(a, \vec x))
\]
as well as the induction schema:
\[
(\forall a, \vec x)(a \br \Phi(\AAA, \vec x) \to a \br \AAA(\vec x)) \to (\forall a, \vec x)(\tilde{P}(a, \vec x) \to a \br \AAA(\vec x)).
\]

Dually, $\Phi$ also gives rise to a co-inductively defined predicate $Q$ defined as the greatest fixed point of $\Phi$. Hence, one has the co-closure axiom
\[
(\forall \vec x)(Q(\vec x) \to \Phi(Q, \vec x))
\]
and the co-induction schema:
\[
(\forall \vec x)(\AAA(\vec x) \to \Phi(\AAA, \vec x)) \to (\forall \vec x)(\AAA(\vec x) \to Q(\vec x)).
\]
Realisability for $Q$ is defined by defining $\tilde{Q}$ co-inductively by the same operator $\tilde{\Phi}$ as above, hence, the co-closure axiom
\[
(\forall a, \vec x)(\tilde{Q}(a, \vec x) \to a \br \Phi(Q, \vec x))
\]
and the co-induction schema:
\[
(\forall a, \vec x)(a \br \AAA(\vec x) \to a \br \Phi(\AAA, \vec x)) \to (\forall a, \vec x)(a \br \AAA(\vec x) \to \tilde{Q}(a, \vec x)).
\]

The basis of program extraction from proofs is the Soundness Theorem.

\begin{theorem}[\bf Soundness Theorem \cite{ub,bse2}]
From a constructive proof of a formula $A$ from assumptions $B_1, \ldots, B_n$ one can extract a program $M(a_1, \ldots, a_n)$ such that $M(a_1, \ldots, a_n) \br A$ is provable from the assumptions $a_1 \br B_1, \ldots, a_n \br B_n$.
\end{theorem}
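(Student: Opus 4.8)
The plan is to prove the theorem by induction on the structure of the given constructive derivation of $A$ from $B_1, \ldots, B_n$. At each node of the proof tree I will exhibit a program transformation that turns realisers for the premises into a realiser for the conclusion, and simultaneously verify the corresponding realisability statement. The base case concerns the use of an open assumption $B_i$: here the extracted program is simply the associated realiser variable $a_i$, and $a_i \br B_i$ holds by hypothesis. The inductive step then proceeds rule by rule through the calculus, and the extracted program $M(a_1, \ldots, a_n)$ is assembled compositionally from the programs attached to the immediate subderivations.

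For the propositional rules the construction is dictated directly by the realisability clauses recalled above. Implication introduction corresponds to $\lambda$-abstraction over the realiser variable of the discharged assumption, so that from a subderivation witnessing $f(a) \br B$ under $a \br A$ we obtain $\lambda a.\, f(a) \br (A \to B)$; implication elimination (modus ponens) corresponds to application. Conjunction introduction is realised by pairing and conjunction elimination by the projections $\mathbf{p}_0, \mathbf{p}_1$. Disjunction introduction is realised by tagging a realiser with $0$ or $1$, and disjunction elimination by a case distinction on this tag, feeding the untagged realiser into the program extracted from the respective case subderivation. The rule $\bot$-elimination is immediate, since $c \br \bot$ unfolds to $\bot$, so in its scope any required realisability statement holds vacuously. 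Because quantifiers are interpreted uniformly, both $\forall$- and $\exists$-rules leave the extracted realiser unchanged and the verification is transparent; the only content is that the first-order side conditions of the rules are respected, which they are by the definitions $a \br (\forall x)\, A(x) := (\forall x)\, a \br A(x)$ and $a \br (\exists x)\, A(x) := (\exists x)\, a \br A(x)$.

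The substantial cases are the closure axioms together with the induction and co-induction schemas for the inductively and co-inductively defined predicates. Here the key point is that the realisability predicates $\tilde{P}$ and $\tilde{Q}$ have been set up as the least, respectively greatest, fixed point of the \emph{same} lifted operator $\tilde{\Phi}$, so that the desired realisability statements are literally the closure and co-closure axioms and the induction and co-induction schemas for $\tilde{P}$ and $\tilde{Q}$ displayed above. Concretely, a realiser for a closure axiom is obtained from the constructor of $\tilde{P}$ (respectively from the co-closure projection of $\tilde{Q}$); a realiser for the induction schema is the recursor associated with $\tilde{P}$; and a realiser for the co-induction schema is the co-recursor associated with $\tilde{Q}$. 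Each of these is available precisely because $\tilde{\Phi}$ is monotone, which is inherited from the monotonicity of $\Phi$ together with the fact that the unravelling of $a \br \Phi(X, \vec x)$ is positive in the fresh variable $\tilde{X}$.

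I expect the main obstacle to lie in the fixed-point cases just described, and specifically in making the passage from $\Phi$ to $\tilde{\Phi}$ and the definition of the (co-)recursive realisers fully rigorous. One must check that the monotonicity proof for $\Phi$ really does lift to a proof of monotonicity for $\tilde{\Phi}$, so that $\tilde{P}$ and $\tilde{Q}$ exist and come equipped with the expected schemas, and one must confirm that the step realiser supplied by a subderivation of the premise of an induction or co-induction is exactly of the type consumed by the corresponding recursor or co-recursor. Once the realisability clauses for $\tilde{P}$ and $\tilde{Q}$ are in place, as they are in the setup above, the remaining propositional and quantifier cases reduce to the routine compositional bookkeeping sketched in the previous paragraphs.
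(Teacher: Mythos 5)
The paper gives no proof of this theorem at all: it is imported from the cited references \cite{ub,bse2}, so there is no internal argument to compare against. Your sketch correctly reconstructs the standard soundness proof found in those references — induction on the derivation, with $\lambda$-abstraction/application, pairing/projections and tagging/case analysis realising the propositional rules, realisers passed through unchanged at the uniformly interpreted quantifiers, and the closure axioms and (co-)induction schemas realised via constructors, recursors and co-recursors for $\tilde{P}$ and $\tilde{Q}$, whose existence rests on the monotonicity (positivity) of the lifted operator $\tilde{\Phi}$ — so your plan matches the intended proof in its essentials.
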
  

If one wants to apply this theorem to obtain a program realising formula $A$ one must provide terms $K_1, \ldots, K_n$ realising the assumptions $B_1, \ldots, B_n$. Then it follows that the term $M(K_1, \ldots, K_n)$ realises $A$.

That realisers do actually \emph{compute} witnesses is shown in Berger~\cite{ub} and Berger and Seisenberger~\cite{bse} by a \emph{Computational Witness Theorem} that relates the denotational definition of realisability with a lazy operational semantics.

There is an important class of formulas where realisers do not matter: A formula $B$ is \emph{non-computational} if
\[
(\forall a)(a \br B \leftrightarrow B).
\]
Non-computational formulas can simplify program extraction of realisers dramatically. We will, however, not go into further details here and refer the reader to~\cite{bs}. 

In formalising the theory of topological digit spaces, the real number set as well as the underlying space $X$ are regarded as a sort. All arithmetic constants and functions we wish to talk about as well as the metric $\rho$, in case $X$ is a metric space, are admitted as constant or function symbols. The predicates $=$, $<$ and $\le$ are considered as non-computational. Furthermore, all true non-computational statements about real number as well as the axioms of a metric space are admitted as axioms. 

In order to be able to deal with the hyperspace of non-empty compact sets, which will be studied in Section~\ref{sec-compact}, a powersort $\PPP(x)$ is added for every sort $x$, equipped with a non-computational element-hood relation $\epsilon$, as well as a function sort $s \to t$ for any two sorts $s$ and $t$, equipped with an application operation and operations such as composition. In addition, for every non-computational formula $A(x)$ the comprehension axiom
\[
(\exists u)(\forall x)(x\, \epsilon\, u \leftrightarrow A(x))
\]
is added. ($A(x)$ may contain other free variables than $x$.) This is an example of a non-computational formula we wish to accept as true. Again, we refer to~\cite{bs} for further details and examples.

 In Lemma~\ref{lem-coindX} a co-inductive characterisation for covering extended IFS was derived. Classically, this is rather uninteresting, but, constructively, it is significant, since, as we will see next, from a constructive proof of $\CC_X(x)$ one can extract a $D$-tree $T \in \TTT^\omega_D$ so that $\val{T} = x$.
 
 \begin{theorem}\label{thm-treerealis}
 Let $(X, D)$ be a topological digit space. Then the realisers of a statement $\CC_X(x)$ are exactly the  $D$-trees $T \in \TTT^\omega_D$ representing $x$, that is
 \[
 T \br (\CC_X(x)) \Longleftrightarrow \val{T} = x.
 \]
 In particular, from a constructive proof of $\CC_X(x)$ one can extract an infinite $D$-tree representation of $x$.
 \end{theorem}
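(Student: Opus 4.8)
The plan is to first pin down the realisability predicate attached to the co-inductive predicate $\CC_X$, and only then to prove the two implications separately — the backward one by co-induction and the forward one by induction on the tree level. To determine the realisers I would unravel the clauses of Section~\ref{sec-progex} applied to the operator $\Phi$ defining $\CC_X$: the finite existential $(\exists d \in D)$ is computational and is read as a finite disjunction, so a realiser records the chosen digit $d$; the existentials $(\exists y_\kappa \in X)$ range over the abstract sort $X$ and are treated uniformly, contributing no data; the equation $x = d(y_1, \ldots, y_{\ar{d}})$ is non-computational and realised trivially; and the bounded universal $(\forall 1 \le \kappa \le \ar{d})$ over a finite index set yields a tuple of $\ar{d}$ sub-realisers. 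Hence a realiser of $\Phi(\CC_X, x)$ is exactly one layer of a $\Lambda_D$-co-algebra: a digit $d$ together with an $\ar{d}$-tuple of realisers of the $\CC_X(y_\kappa)$. Since the associated operator $\tilde{\Phi}$ defines the realisability predicate of $\CC_X$ co-inductively, the type of its realisers is the terminal $\Lambda_D$-co-algebra $\TTT^\omega_D$ of Theorem~\ref{thm-treecoalg}, with $\rt{T}$ and $\sbt{T}$ recovering the recorded digit and the tuple of sub-realisers. Concretely, the resulting co-closure axiom reads: $T \br \CC_X(x)$ implies there are $y_1, \ldots, y_{\ar{\rt{T}}} \in X$ with $x = \rt{T}(y_1, \ldots, y_{\ar{\rt{T}}})$ and $T_\kappa \br \CC_X(y_\kappa)$ for all $\kappa$, where $(T_1, \ldots, T_{\ar{\rt{T}}}) = \sbt{T}$.

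For $\val{T} = x \Rightarrow T \br \CC_X(x)$ I would use co-induction. Take the relation $\AAA$ that pairs $T \in \TTT^\omega_D$ with $x \in X$ exactly when $\val{T} = x$, and check that $\AAA$ is a post-fixed point of $\tilde{\Phi}$. Given $(T, x) \in \AAA$, write $T = [d; T_1, \ldots, T_{\ar{d}}]$ and set $y_\kappa := \val{T_\kappa}$. Corollary~\ref{cor-tree} gives $x = \val{T} = d(\val{T_1}, \ldots, \val{T_{\ar{d}}}) = d(y_1, \ldots, y_{\ar{d}})$, while $(T_\kappa, y_\kappa) \in \AAA$ by construction. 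Since the recorded digit of $T$ is $d$ and its sub-realisers are the $T_\kappa$, this matches the shape demanded by $\tilde{\Phi}$, so $\AAA$ is $\tilde{\Phi}$-co-closed and co-induction yields $T \br \CC_X(x)$.

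For the converse $T \br \CC_X(x) \Rightarrow \val{T} = x$ I would instead prove, by induction on $n$, that $T \br \CC_X(x)$ entails $x \in \range(f_{T^{(n)}})$ for every $n \in \NN$. The case $n = 0$ is immediate from co-closure, which gives $x = \rt{T}(\vec{y}) \in \range(\rt{T}) = \range(f_{T^{(0)}})$. For the step, co-closure produces $y_\kappa$ with $x = \rt{T}(\vec{y})$ and $T_\kappa \br \CC_X(y_\kappa)$; the induction hypothesis gives $y_\kappa \in \range(f_{T_\kappa^{(n)}})$, and since $T^{(n+1)} = [\rt{T}; T_1^{(n)}, \ldots, T_{\ar{\rt{T}}}^{(n)}]$ with $f_{T^{(n+1)}} = \rt{T} \circ (f_{T_1^{(n)}} \times \cdots \times f_{T_{\ar{\rt{T}}}^{(n)}})$, it follows that $x \in \range(f_{T^{(n+1)}})$. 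Thus $x \in \bigcap_{n \in \NN} \range(f_{T^{(n)}})$, which by Proposition~\ref{pn-valT} equals $\{ \val{T} \}$, so $x = \val{T}$. The ``in particular'' claim then follows by combining this forward implication with the identification of the realiser type and the Soundness Theorem: a constructive proof of $\CC_X(x)$ yields a realiser $T$, that $T$ is an infinite $D$-tree, and $\val{T} = x$.

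I expect the one genuinely delicate step to be the first: correctly reading the realisability interpretation of the nested quantifier structure of $\CC_X$, and in particular distinguishing the computational finite existential over the digit set $D$, which must record $d$, from the uniform existential over the abstract sort $X$, which records nothing. Once $\tilde{\Phi}$ is matched with the functor $\Lambda_D$ so that realisers are recognised as infinite $D$-trees, both implications collapse to the routine co-induction and induction sketched above.
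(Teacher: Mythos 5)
Your proposal is correct and follows essentially the same route as the paper: the same co-inductive reading of the realisability predicate (the paper's equation for $T \br \CC_X(x)$), the backward implication by co-induction via Corollary~\ref{cor-tree}, and the forward implication by induction on $n$ showing $x \in \range(f_{T^{(n)}})$, then invoking weak hyperbolicity to conclude $x = \val{T}$. Your extra care in unravelling the realisability clauses (computational existential over $D$ versus uniform existential over the sort $X$) is detail the paper leaves implicit, but it is the same argument.
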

 \begin{proof}
 By the realisability definition above the predicate $T \br \CC_X(x)$ is defined co-inductively as
 \begin{equation}\label{eq-realis}
 \begin{split}
 [d; T_1, \ldots, T_{\ar{d}}] \br \CC_X(x) \overset{\nu}{=}\hspace{9cm}\\ (\exists y_1, \ldots, y_{\ar{d}} \in X)\, x = d(y_1, \ldots, y_{\ar{d}}) \wedge (\forall 1 \le \kappa \le \ar{d})\, T_\kappa \br \CC_X(y_\kappa).
 \end{split}
 \end{equation}
 This allows us to show the `if' part by co-induction. That means we have to show that the implication from left to right in (\ref{eq-realis}) holds if the relation $\cdot \br \cdot$ is replaced by the relation $\val{\cdot} = \cdot$. This, however, is  consequence of Corollary~\ref{cor-tree}.
 
 For the converse implication it is sufficient to show that
 \[
 (\forall n \in \NN)(\forall T \in \TTT^\omega_D)(\forall x \in X)\, (T \br \CC_X(x) \Rightarrow x \in \range(f_{T^{(n)}})),
 \]
 which we do by induction on $n$. Assume that $T \br \CC_X(x)$. Then $T = [d; T_1, \ldots, T_{\ar{d}}]$ and there are $y_1, \ldots, y_{\ar{d}} \in X$ so that $x = d(y_1, \ldots, y_{\ar{d}})$ and $T_\kappa \br  \CC_X(y_\kappa)$, for $1 \le \kappa \le \ar{d}$. It follows that $x \in \range(d)$. Thus the case $n=0$ is true. If $n > 0$, it follows by the induction hypothesis that $y_\kappa \in \range(f_{T_\kappa^{(n-1)}})$. Hence, $x \in \range(f_{T^{(n)}})$. 
\end{proof}

\section{Equivalence with the Cauchy representation}\label{sec-cauchy}

As pointed out above, we will now derive the equivalence between the Cauchy representation used in Type-Two Theory of Effectivity and the tree representation introduced here in a constructive fashion. Formalised in many-sorted first-order logic extended by the formation of inductive and co-inductive predicates, the proofs allow the extraction of programs computing translations between the two representations. The method of proof extraction is based on a version of realisability. As said, the advantage of the constructive approach lies in the fact that proofs can be carried out in a representation-free way. 

Let $(X, D, Q)$ be a computable digit space and let the predicate $\bA_X \subseteq X$  be defined by:
\[
\bA_X(x) :\Leftrightarrow (\forall n \in \NN)(\exists u \in Q)\, \rho(x, u) < 2^{-n}.
\]
A realiser of $\bA_X(x)$ is a regular Cauchy sequence in $X$ converging to $x$ and a realiser of $\CC_X(x)$ is a $D$-tree $T \in \TTT^\omega_D$ such that $x = \val{T}$.

\begin{theorem}\label{thm-ctoa}
Let $(X, D, Q)$ be a computable digit space. Then $\CC_X \subseteq \bA_X$.
\end{theorem}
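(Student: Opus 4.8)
The plan is to extract, from a tree realiser of $\CC_X(x)$, a regular Cauchy sequence of basic elements realising $\bA_X(x)$; since classically the inclusion is immediate from the density of $Q$, the real content is the effective (realisability) transformation. By Theorem~\ref{thm-treerealis} a realiser of $\CC_X(x)$ is precisely a $D$-tree $T \in \TTT^\omega_D$ with $\val{T} = x$, and, as noted before the statement, a realiser of $\bA_X(x)$ is a map $n \mapsto u_n \in Q$ with $\rho(x, u_n) < 2^{-n}$. So it suffices to produce such a sequence uniformly and effectively from $T$.

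First I would fix a computable point $z \in A_X^{\text{eff}}$ to serve as an evaluation point for the maps $f_{T^{(m)}}$; any basic element of $Q$ will do, since $Q \subseteq A_X^{\text{eff}}$ trivially and $Q \ne \emptyset$ because $X$ is non-empty and $Q$ dense. Next I would record the contraction estimate that drives everything: writing $q < 1$ for the maximum of the contraction factors of the digit maps and $M$ for the bound of $X$, both $\val{T}$ and $f_{T^{(m)}}(z^{(\ar{T^{(m)}})})$ lie in $\range(f_{T^{(m)}})$, and this set has diameter at most $q^m M$ because $f_{T^{(m)}}$ is an $m$-fold composition of digit maps and hence contracts by a factor $\le q^m$. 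This gives
\[
\rho\bigl(x, f_{T^{(m)}}(z^{(\ar{T^{(m)}})})\bigr) \le q^m M
\]
for every $m$, which is the rate of convergence already underlying Corollary~\ref{cor-arconv}.

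Then, given $n \in \NN$, I would choose $m := j(n+1)$ with $j$ as in~(\ref{eq-j}), so that $q^m M < 2^{-(n+1)}$, and apply the procedure $H'$ of Lemma~\ref{lem-qdq} to the finite tree $T^{(m)} \in \TTT^\ast_D$ at precision $n+1$ to obtain a basic element $u_n \in Q$ with $\rho(f_{T^{(m)}}(z^{(\ar{T^{(m)}})}), u_n) < 2^{-(n+1)}$. The triangle inequality then yields
\[
\rho(x, u_n) \le \rho\bigl(x, f_{T^{(m)}}(z^{(\ar{T^{(m)}})})\bigr) + \rho\bigl(f_{T^{(m)}}(z^{(\ar{T^{(m)}})}), u_n\bigr) < q^m M + 2^{-(n+1)} < 2^{-n},
\]
so the sequence $(u_n)_{n \in \NN}$ witnesses $\bA_X(x)$. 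Reading this off, the extracted realiser is the function $T \mapsto \bigl(n \mapsto H'(T^{(j(n+1))}, n+1)\bigr)$, which is essentially the computation already sketched informally to show $X_c \subseteq A_X^{\text{eff}}$, now cast as a proof of the implication $\CC_X(x) \to \bA_X(x)$.

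The main obstacle is not any single estimate but the constructive bookkeeping: I must ensure the whole construction depends only on the tree $T$ (read lazily through its initial segments $T^{(m)}$) and on fixed effective data ($z$, $q$, $M$, $j$, and $H'$), so that the argument is genuinely uniform in $x$ and delivers a single realiser of the implication rather than a pointwise classical inclusion. A secondary point to handle carefully is that the convergence rate $\rho(x, f_{T^{(m)}}(z^{(\ar{T^{(m)}})})) \le q^m M$ relies on the digit maps being genuine contractions, and that $\val{T}$ is well defined as the unique point of the intersection; this is exactly where the hypothesis that $(X, D, Q)$ is a computable digit space (in particular weakly hyperbolic, with all digits contracting) enters.
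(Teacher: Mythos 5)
Your proposal is correct, but it takes a genuinely different route from the paper's. The paper stays inside the constructive object logic: it fixes $z \in A_X^{\text{eff}}$, uses Lemma~\ref{lem-qdq} to reduce the claim to $(\forall n \in \NN)(\forall x \in X)(\CC_X(x) \rightarrow (\exists u \in Q^{(z)}_D)\, \rho(x,u) \le M \cdot q^n)$, and proves this by induction on $n$, unfolding the co-inductive predicate $\CC_X$ once per step --- from $x = d(y_1, \ldots, y_{\ar{d}})$ with $\CC_X(y_\kappa)$ and approximations $v_\kappa \in Q^{(z)}_D$ of the $y_\kappa$ supplied by the induction hypothesis, it takes $u := d(v_1, \ldots, v_{\ar{d}})$ and concludes by contractivity. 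No trees appear in that proof; the translation program is what the Soundness Theorem extracts from it. You instead argue at the level of realisers: you invoke Theorem~\ref{thm-treerealis} to identify a realiser of $\CC_X(x)$ with a tree $T \in \TTT^\omega_D$ satisfying $\val{T} = x$, prove the diameter estimate $\rho(x, f_{T^{(m)}}(z^{(\ar{T^{(m)}})})) \le q^m \cdot M$, and hand-build the Cauchy realiser $n \mapsto H'(T^{(j(n+1))}, n+1)$, verifying its correctness classically. The bookkeeping (the function $j$, the procedure $H'$ of Lemma~\ref{lem-qdq}, the triangle inequality) matches, and the program you write down is essentially the one the paper's proof extracts. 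What the paper's version buys is exactly what Section~\ref{sec-progex} advertises: the proof is representation-free and formalisable in the constructive system, so the realiser and its correctness come out automatically, and neither Theorem~\ref{thm-treerealis} nor classical reasoning about $\val{\cdot}$ is needed. What your version buys is full explicitness of the algorithm. The only caveat is methodological: yours is a meta-level construction of the realiser together with a classical correctness proof, so it establishes that a computable translation exists rather than being itself a constructive proof of the implication from which extraction would proceed; for the equivalence results (Theorem~\ref{thm-compeq}) this is sufficient.
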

\begin{proof}
Fix $z \in \bA_X^\text{eff}$. Because of Lemma~\ref{lem-qdq} it suffices to show that
\[
(\forall n \in \NN)(\forall x \in X)(\CC_X(x) \rightarrow (\exists u \in Q^{(z)}_D)\, \rho(x, u) \le M \cdot q^n),
\]
which will be done by induction on $n$. If $n = 0$, let $u$ be any element in $Q^{(z)}_D$. For $n+1$, assume $\CC_X(x)$. Then there are $d \in D$ and $y_1, \ldots, y_{\ar{d}} \in X$ with $\CC_X(y_\kappa)$, for $1 \le \kappa \le r$, so that $x = d(y_1, \ldots, y_{\ar{d}})$. By induction hypothesis, there exist $v_1, \ldots, v_{\ar{d}} \in Q^{(z)}_D$ such that for all $1 \le \kappa \le \ar{d}$, $\rho(y_\kappa, v_\kappa) < M \cdot q^n$. Set $u := d(v_1, \ldots, v_{\ar{d}})$. Then $u \in Q^{(z)}_D$ and $\rho(x, u) \le q \cdot \max_\kappa \rho(y_\kappa, v_\kappa) \le M \cdot q^{n+1}$.
\end{proof}

\begin{theorem}\label{thm-cacoind}
Let $(X, D, Q)$ be a well-covering and decidable computable digit space with approximable choice. Then $\bA_X \subseteq \CC_X$.
\end{theorem}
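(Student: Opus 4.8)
The plan is to prove the inclusion by co-induction, exploiting that $\CC_X$ is by definition the largest fixed point $\nu\Phi$ of the monotone operator
\[
\Phi(Z)(x) := (\exists d \in D)(\exists y_1, \ldots, y_{\ar{d}} \in X)\; x = d(y_1, \ldots, y_{\ar{d}}) \wedge (\forall 1 \le \kappa \le \ar{d})\, Z(y_\kappa).
\]
By the co-induction principle it suffices to establish the single unfolding step $\bA_X \subseteq \Phi(\bA_X)$: for every $x$ with $\bA_X(x)$ I must produce a digit $d \in D$ together with preimages $y_1, \ldots, y_{\ar{d}}$ satisfying $x = d(y_1, \ldots, y_{\ar{d}})$ and $\bA_X(y_\kappa)$ for each $\kappa$. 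Read constructively, the realiser of $\bA_X(x)$ is a regular Cauchy sequence $(u_n)_{n \in \NN}$ in $Q$ with $\rho(x, u_n) < 2^{-n}$, and iterating this step co-recursively assembles from it the desired $D$-tree $T \in \TTT^\omega_D$ with $\val{T} = x$; the extracted program is precisely the translation $G$ of Theorem~\ref{thm-compeq}.

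First I would locate a suitable digit. By Lemma~\ref{lem-wcnum} fix a well-covering number $\varepsilon \in \QQ_+$. Choosing $m$ with $2^{-m} < \varepsilon/4$, the basic element $u_m$ satisfies $\rho(x, u_m) < \varepsilon/4$. Since $D$ is finite and $(X, D, Q)$ is decidable (Definition~\ref{dn-deccd}), I can test each $d \in D$ for whether $\ball{\rho}{u_m}{\varepsilon/2} \subseteq \range(d)$ and select one that succeeds; such a $d$ exists because $\ball{\rho}{x}{\varepsilon}$ lies in some $\range(e)$ and $\ball{\rho}{u_m}{\varepsilon/2} \subseteq \ball{\rho}{x}{\varepsilon}$. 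This yields $\ball{\rho}{x}{\varepsilon/4} \subseteq \range(d)$, so $x \in \int(\range(d))$, and the tail $u_n$ (for $n$ large) lies in $\int(\range(d)) \cap Q$.

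The main work, and the step where approximable choice is indispensable, is to produce the preimages so that they again satisfy $\bA_X$. By Proposition~\ref{pn-epsrightinv} the hypothesis of approximable choice furnishes a computable right inverse $\fun{d'}{\range(d)}{X^{\ar{d}}}$ of $d$. Setting $\vec{y} := d'(x) = (y_1, \ldots, y_{\ar{d}})$ gives $d(\vec y) = x$ at once. Because $d'$ is uniformly continuous with computable modulus and its values on basic inputs are effectively approximable by basic elements, applying $d'$ to the tail of $(u_n)$ and approximating the outputs produces, component by component, regular Cauchy sequences of basic elements converging to the $y_\kappa$; this is exactly $\bA_X(y_\kappa)$. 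With $d$, $\vec y$ and these witnesses in hand the unfolding step is complete, and (plain) co-induction delivers $\bA_X \subseteq \CC_X$.

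I expect the delicate point to be the third paragraph: merely knowing $x \in \range(d)$ guarantees that \emph{some} preimage exists, but co-induction must be fed preimages that are themselves effectively approximable, i.e.\ lie in $\bA_X$. This is precisely what a plain covering, or even a well-covering, property cannot supply, and is why approximable choice (equivalently, a computable right inverse) is assumed. A minor bookkeeping issue is ensuring the chosen tail of $(u_n)$ stays inside $\int(\range(d))$ so that $d'$ and its approximation procedure may legitimately be applied; the factor-of-four slack introduced when picking $m$ handles this.
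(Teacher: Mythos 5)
Your proposal is correct and follows essentially the same route as the paper's proof: co-induction reduces the claim to one unfolding step, well-covering plus decidability locate a digit $d$ with $x \in \range(d)$, Proposition~\ref{pn-epsrightinv} supplies a computable right inverse $d'$, and $\vec y := d'(x)$ together with the computable modulus and basic-element approximations of $d'$ applied to the Cauchy sequence for $x$ yields $\bA_X(y_\kappa)$. Your extra care in keeping a ball $\ball{\rho}{x}{\varepsilon/4}$ inside $\range(d)$ so that the tail of $(u_n)$ lies in $\int(\range(d))$ is a harmless refinement (the paper implicitly needs the approximating basic elements to lie in $\range(d)$ as well), not a departure from its argument.
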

\begin{proof}
The theorem is derived by co-induction. Hence assume $\bA_X(x)$. We have to find $d \in D$ and $y_1, \ldots, y_{\ar{d}} \in X$ so that $x = d(y_1, \ldots, y_{\ar{d}})$ and $\bA_X(y_\kappa)$, for $1 \le \kappa \le \ar{d}$. Let $\varepsilon \in \QQ_+$ be a well-covering number. Using $\bA_X(x)$, pick $\hat{u}\in Q$ such that $\rho(x,\hat{u}) < \varepsilon/2$. Pick $d\in D$ such that $\ball{\rho}{\hat{u}}{\varepsilon}\subseteq \range(d)$. Then $x \in \ball{\rho}{\hat{u}}{\varepsilon}$. 

By Proposition~\ref{pn-epsrightinv}, $d$ has a computable right inverse $d'$. Set $\vec{y} := d'(x)$. Since $d'$ has a computable modulus of continuity, we can, given $n \in \NN$, compute a number $k(n)$ so that for $x', x'' \in \range(d)$, if $\rho(x', x'') < 2^{-k(n)}$ then $\rho(d'(x'), d'(x'')) < 2^{-n-1}$. Using assumption $\bA_X(x)$ again, we find $u\in Q$ such that $\rho(x, u)< 2^{-k(n)}$. It follows that $\rho(d'(x), d'(u)) < 2^{-n-1}$. By the computability of $d'$ we can moreover compute a basic element $\vec{v} \in Q^{\ar{d}}$ with 
$\rho(d'(u), \vec{v}) < 2^{-n-1}$. Hence, $\rho(\vec{y}, \vec{v}) < 2^{-n}$. Let $\vec{y} = (y_1, \ldots, y_{\ar{d}})$ and $\vec{v} = (v_1, \ldots, v_{\ar{d}})$, then we have that $\rho(y_\kappa, v_\kappa) < 2^{-n}$, for $1 \le \kappa \le \ar{d}$,  which shows that $\bA_X(y_\kappa)$. 
\end{proof}

\section{Products}\label{sec-prod}

In this and the following section we study how canonical IFS structures can be introduced on spaces obtained by the usual constructions of new spaces from given ones, and whether the properties examined so far are inherited in these cases. We start with the product construction.

Let $X_1, \ldots, X_n$ be non-empty topological spaces and $X_1 \times \cdots \times X_n$ endowed with the product topology. As is well known, $X_1 \times \cdots \times X_n$ is Hausdorff, exactly if $X_i$ is Hausdorff, for each $1 \le i \le n$; and analogously for compactness by Tychonov's Theorem~\cite{wil}.

Now, assume that $(X_1, D_1)$, \ldots, $(X_n, D_n)$ are extended IFS. Without restriction suppose that all $d \in \bigcup_{i=1}^n D_i$ have the same arity, say $s_{D}$. Otherwise, let $s_{D} := \max\set{\ar{d}}{d \in \bigcup_{i=1}^n D_i}$ and replace $d \in D_i$ by $\hat{d}$ defined by
\[
\hat{d}(x_1, \ldots, x_{s_{D}}) := d(x_1, \ldots, x_{\ar{d}}),
\]
for $x_1, \ldots, x_{s_{D}} \in X_i$.

We will introduce an IFS structure on $\bigtimes_{i=1}^{n} X_i$. For $(d_1, \ldots, d_n) \in \bigtimes_{i=1}^{n} D_{i}$ define 
\[
\fun{\pair{d_1, \ldots, d_n}}{(\bigtimes_{i=1}^{n} X_i)^{s_{D}}}{\bigtimes_{i=1}^{n} X_i}
\]
by
\begin{equation*}\begin{split}
\pair{d_1, \ldots, d_n}((x^{(1)}_{1}, \ldots, x^{(1)}_{n}), \ldots, (x^{(s_{D})}_{1}, \ldots, x^{(s_{D})}_{n})) := \hspace{4cm}\mbox{}\\
(d_1(x^{(1)}_{1}, \ldots, x^{(s_{D})}_{1}), \ldots, d_{n}(x^{(1)}_{n}, \ldots, x^{(s_{D})}_{n})),
\end{split}\end{equation*}
for $(x^{(1)}_{1}, \ldots, x^{(1)}_{n}), \ldots, (x^{(s_{D})}_{1}, \ldots, x^{(s_{D})}_{n}) \in \bigtimes_{i=1}^{n} X_{i}$,
and let 
\[
D^\times := \set{\pair{d_1, \ldots, d_n}}{(d_1, \ldots, d_n) \in \bigtimes_{i=1}^{n} D_i}.
\]
\begin{proposition}\label{pn-prod}
Let $(X_1, D_1), \ldots, (X_n, D_n)$ be extended IFS. Then also $(\bigtimes_{i=1}^n X_i, D^\times)$ is an extended IFS. Moreover, the following statements hold:
\begin{enumerate}
\item\label{pn-prod-1}
If $(X_i, D_{i})$ is compact, for all $1 \le i \le n$, so is $(\bigtimes_{i=1}^n X_i, D^\times)$.

\item\label{pn-prod-2}
If $(X_i, D_i)$ is covering, for all $1 \le i \le n$, so is $(\bigtimes_{i=1}^n X_i, D^\times)$.

\item\label{pn-prod-3}
If $(X_i, D_i)$ is well-covering, for all $1 \le i \le n$, so is $(\bigtimes_{i=1}^n X_i, D^\times)$.

\end{enumerate}
\end{proposition}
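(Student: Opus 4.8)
The plan is to first dispose of the claim that $(\bigtimes_{i=1}^n X_i, D^\times)$ is itself an extended IFS, and then to reduce all three inheritance properties to a single structural identity about ranges. For the structural claim, the product $\bigtimes_{i=1}^n X_i$ is non-empty (being a product of non-empty sets) and Hausdorff (as recalled just above), and $D^\times$ is finite since it is indexed by the finite product $\bigtimes_{i=1}^n D_i$ of finite sets. Each map $\pair{d_1, \ldots, d_n}$ has the fixed positive arity $s_D$, and it is continuous because its $i$-th output coordinate is $d_i$ applied to the $i$-th coordinates of the $s_D$ input points, i.e.\ a composition of continuous projections with the continuous $d_i$; since a map into a product space is continuous exactly when each of its coordinate maps is, $\pair{d_1, \ldots, d_n}$ is continuous. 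I would also note in passing that the arity-padding $d \mapsto \hat{d}$ satisfies $\range(\hat{d}) = \range(d)$, since the surplus arguments are discarded, so it disturbs neither continuity nor any range computation below.

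The key step I would isolate is the identity
\[
\range(\pair{d_1, \ldots, d_n}) = \range(d_1) \times \cdots \times \range(d_n).
\]
This follows by inspecting the defining formula: as the $s_D$ input tuples range over all of $(\bigtimes_{i=1}^n X_i)^{s_D}$, the arguments fed to $d_i$ are precisely the $i$-th coordinates of those tuples, and since the coordinates of points in a product may be chosen independently, the $i$-th output coordinate sweeps out $\range(d_i)$ independently of the other coordinates. Granting this identity, the three assertions come out quickly.

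For statement (\ref{pn-prod-1}) there is nothing further to do beyond Tychonov's Theorem, already recalled. For (\ref{pn-prod-2}), combining the range identity with the set-theoretic fact
\[
\bigcup_{(d_1, \ldots, d_n)} \range(d_1) \times \cdots \times \range(d_n) = \bigl(\bigcup_{d_1 \in D_1} \range(d_1)\bigr) \times \cdots \times \bigl(\bigcup_{d_n \in D_n} \range(d_n)\bigr),
\]
the covering of $\bigtimes_{i=1}^n X_i$ reduces immediately to the covering of each factor $X_i$. For (\ref{pn-prod-3}), given a point $(p_1, \ldots, p_n)$ I would use well-covering of each $X_i$ to pick a digit $d_i$ with $p_i \in \int(\range(d_i))$; then $\int(\range(d_1)) \times \cdots \times \int(\range(d_n))$ is an open box containing the point and contained in $\range(d_1) \times \cdots \times \range(d_n) = \range(\pair{d_1, \ldots, d_n})$, so the point lies in $\int(\range(\pair{d_1, \ldots, d_n}))$, using that a product of open sets is an open subset of the product of the ranges.

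I expect the only genuine content to be the range identity; everything else is bookkeeping with standard product-topology facts (a map into a product is continuous coordinatewise, products of open sets are open, and Tychonov). The mild care required is to keep the independence-of-coordinates argument honest when computing the range, and to remember that arity-padding leaves all ranges unchanged.
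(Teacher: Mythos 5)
Your proposal is correct and follows essentially the same route as the paper: the central identity $\range(\pair{d_1,\ldots,d_n}) = \bigtimes_{i=1}^n \range(d_i)$ (the paper's equation for (2)), Tychonov for compactness, distribution of the union over the product for covering, and the product-of-interiors observation for well-covering. Your extra bookkeeping (continuity of the product digits, finiteness of $D^\times$, and that arity-padding preserves ranges) merely makes explicit what the paper leaves implicit.
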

\begin{proof}
(\ref{pn-prod-1}) follows with Tychonov's Theorem.

(\ref{pn-prod-2}) As is ensued by the definition of $\pair{d_1, \ldots, d_n}$,
\begin{equation}\label{eq-prod-2}
\range(\pair{d_1, \ldots, d_n}) = \bigtimes_{i=1}^n \range(d_i).
\end{equation}
Hence,
\begin{align*}
&\bigcup\set{\range(\pair{d_1, \ldots, d_n})}{(d_1, \ldots, d_n) \in \bigtimes_{i=1}^n D_i} \\
&\quad= \bigcup\set{\bigtimes_{i=1}^n \range(d_i)}{(d_1, \ldots, d_n) \in \bigtimes_{i=1}^n D_i} \\
&\quad= \bigtimes_{i=1}^n \bigcup\set{\range(d_i)}{d_i \in D_i} \\
&\quad= \bigtimes_{i=1}^n X_i.
\end{align*}

(\ref{pn-prod-3}) follows similarly as $\int(\bigtimes_{i=1}^n \range(d_i)) = \bigtimes_{i=1}^n \int(\range(d_i))$.
\end{proof}

When dealing with a tuple $\vec x$ via the extended IFS $(\bigtimes_{i=1}^{n} X_{i}, D^{\times})$, $\vec x$ is considered as a structureless object, not as a composed one. We will now investigate how from a tree representing $\vec x$ we can access the components of $\vec x$.

For $1 \le i \le n$ define the map $\fun{\Pr^{(n)}_{i}}{\TTT^{\omega}_{D^{\times}}}{\TTT^{\omega}_{D_{i}}}$ co-recursively by
\begin{align*}
&\rt{\Pr^{(n)}_{i}([\pair{d_{1}, \ldots, d_{n}}; T_{1}, \ldots, T_{s_{D}}])} := d_{i} \\
&\sbt{\Pr^{(n)}_{i}([\pair{d_{1}, \ldots, d_{n}}; T_{1}, \ldots, T_{s_{D}}])} := (\Pr^{(n)}_{i}(T_{1}), \ldots, \Pr^{(n)}_{i}(T_{s_{D}})).
\end{align*}
Similarly, define $\fun{\Cons^{(n)}}{\bigtimes_{i=1}^{n }\TTT^{\omega}_{D_{i}}}{\TTT^{\omega}_{D^{\times}}}$ by
\begin{align*}
&\rt{\Cons^{(n)}([d_{1}; T^{(1)}_{1}, \ldots, T^{(1)}_{s_{D}}], \ldots, [d_{n}; T^{(n)}_{1}, \ldots, T^{(n)}_{s_{D}}])} := \pair{d_{1}, \ldots, d_{n}} \\
&\sbt{\Cons^{(n)}([d_{1}; T^{(1)}_{1}, \ldots, T^{(1)}_{s_{D}}], \ldots, [d_{n}; T^{(n)}_{1}, \ldots, T^{(n)}_{s_{D}}])} := \\
&\hspace{4.5cm} ( \Cons^{(n)}(T^{(1)}_{1}, \ldots, T^{(n)}_{1}), \ldots, \Cons^{(n)}(T^{(1)}_{s_{D}}, \ldots, T^{(n)}_{s_{D}})).
\end{align*}

\begin{proposition}\label{pn-prodtree}
Let $(X_1, D_1), \ldots, (X_n, D_n)$ be extended IFS. Then for every infinite $D^{\times}$-tree $T \in \TTT^{\omega}_{D^{\times}}$,
\[
T = \Cons^{(n)}(\Pr^{(n)}_{1}(T), \ldots, \Pr^{(n)}_{n}(T)).
\]
\end{proposition}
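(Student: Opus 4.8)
The plan is to prove the identity by the co-induction proof principle of Lemma~\ref{lem-coindp}, since both sides are infinite $D^\times$-trees and equality of such trees is witnessed by a bisimulation in the sense of Definition~\ref{dn-coindp}. Accordingly, I would define the relation
\[
R := \set{(T, \Cons^{(n)}(\Pr^{(n)}_1(T), \ldots, \Pr^{(n)}_n(T)))}{T \in \TTT^\omega_{D^\times}}
\]
and show that $R$ is a bisimulation. The proposition then follows immediately, since the pair $(T, \Cons^{(n)}(\Pr^{(n)}_1(T), \ldots, \Pr^{(n)}_n(T)))$ lies in $R$ for every $T \in \TTT^\omega_{D^\times}$.

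To verify the bisimulation condition, I would fix an arbitrary $T \in \TTT^\omega_{D^\times}$ and write $\rt{T} = \pair{d_1, \ldots, d_n}$ with $(d_1, \ldots, d_n) \in \bigtimes_{i=1}^n D_i$ and $\sbt{T} = (T_1, \ldots, T_{s_D})$, recalling that every label in $D^\times$ has the common arity $s_D$. Unwinding the co-recursive definition of $\Pr^{(n)}_i$ gives $\rt{\Pr^{(n)}_i(T)} = d_i$ and $\sbt{\Pr^{(n)}_i(T)} = (\Pr^{(n)}_i(T_1), \ldots, \Pr^{(n)}_i(T_{s_D}))$. Feeding these $n$ trees into $\Cons^{(n)}$ and unwinding its definition, the root of $T' := \Cons^{(n)}(\Pr^{(n)}_1(T), \ldots, \Pr^{(n)}_n(T))$ is $\pair{d_1, \ldots, d_n} = \rt{T}$, so the roots agree, while the $\kappa$-th component of $\sbt{T'}$ is exactly $\Cons^{(n)}(\Pr^{(n)}_1(T_\kappa), \ldots, \Pr^{(n)}_n(T_\kappa))$, for $1 \le \kappa \le s_D$.

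This last observation is the crux: the $\kappa$-th subtree of $T'$ is the $\Cons^{(n)}$-of-projections built from the $\kappa$-th subtree $T_\kappa$ of $T$, so the pair consisting of $T_\kappa$ and the $\kappa$-th subtree of $T'$ again lies in $R$. Hence $R$ satisfies the defining implication of a bisimulation, and Lemma~\ref{lem-coindp} yields $T = T'$, as desired.

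I do not expect a genuine obstacle here; the whole content is bookkeeping. The one point that requires care is keeping the two kinds of indices apart — the factor index $i \in \{1, \ldots, n\}$ ranging over the product components against the subtree index $\kappa \in \{1, \ldots, s_D\}$ ranging over the common arity — and checking that $\Cons^{(n)}$ reassembles the $\kappa$-th subtrees of the projected trees in precisely the order needed to land back in $R$. Once the two co-recursive definitions are unfolded one layer, the match is immediate.
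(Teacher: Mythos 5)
Your proposal is correct and is precisely the argument the paper intends: its proof consists of the single line ``Apply the co-induction proof principle,'' and your relation $R$ together with the one-layer unfolding of $\Pr^{(n)}_i$ and $\Cons^{(n)}$ is exactly the bisimulation check that fills in that line. No gaps; the bookkeeping on the indices $i$ and $\kappa$ is handled correctly.
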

\begin{proof}
Apply the co-induction proof principle.
\end{proof}

This result allows us to study the inheritability of weak hyperbolicity of extended IFS to their product.
\begin{lemma}\label{lem-rgprod}
For all $m \in \NN$ and all  $T \in \TTT^{\omega}_{D^{\times}}$,
\[
\range(f_{T^{(m)}}) = \bigtimes_{i=1}^{n} \range(f_{\Pr^{(n)}_{i}(T)^{(m)}}).
\]
\end{lemma}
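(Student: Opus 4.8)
The plan is to prove the identity by induction on $m$, using only the componentwise action of the product digits $\pair{d_1, \ldots, d_n}$ recorded in equation~(\ref{eq-prod-2}), together with two elementary facts about ranges: that $\range(g \circ h) = g[\range(h)]$, and that the range of a product map satisfies $\range(h_1 \times \cdots \times h_k) = \range(h_1) \times \cdots \times \range(h_k)$. For the base case $m = 0$ one has $T^{(0)} = (\{\emptywd\}, \emptywd \mapsto \pair{d_1, \ldots, d_n})$, so $f_{T^{(0)}} = \pair{d_1, \ldots, d_n}$, while $\Pr^{(n)}_i(T)^{(0)}$ is the single node labelled $d_i$, so $f_{\Pr^{(n)}_i(T)^{(0)}} = d_i$; the claim then reduces to exactly~(\ref{eq-prod-2}), namely $\range(\pair{d_1, \ldots, d_n}) = \bigtimes_{i=1}^{n} \range(d_i)$.

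For the inductive step I would write $T = [\pair{d_1, \ldots, d_n}; T_1, \ldots, T_{s_D}]$, so that by the definition of $f_{(\cdot)}$ we have $f_{T^{(m+1)}} = \pair{d_1, \ldots, d_n} \circ (f_{T_1^{(m)}} \times \cdots \times f_{T_{s_D}^{(m)}})$. The co-recursive definition of the projection gives $\Pr^{(n)}_i(T) = [d_i; \Pr^{(n)}_i(T_1), \ldots, \Pr^{(n)}_i(T_{s_D})]$, and hence $f_{\Pr^{(n)}_i(T)^{(m+1)}} = d_i \circ (f_{\Pr^{(n)}_i(T_1)^{(m)}} \times \cdots \times f_{\Pr^{(n)}_i(T_{s_D})^{(m)}})$. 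Applying the two range facts and then the induction hypothesis to each subtree $T_\kappa$ turns the left-hand side into the image under $\pair{d_1, \ldots, d_n}$ of the set
\[
\bigtimes_{\kappa=1}^{s_D} \bigtimes_{i=1}^{n} \range(f_{\Pr^{(n)}_i(T_\kappa)^{(m)}}).
\]

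The heart of the argument, and the one step that is more than bookkeeping, is to show that this image equals $\bigtimes_{i=1}^{n} \range(f_{\Pr^{(n)}_i(T)^{(m+1)}})$. This is a Fubini-type interchange: the $s_D$ arguments supplied to $\pair{d_1, \ldots, d_n}$ vary independently over the blocks $\bigtimes_{i=1}^{n} \range(f_{\Pr^{(n)}_i(T_\kappa)^{(m)}})$, while $\pair{d_1, \ldots, d_n}$ acts separately in each component $i$, feeding $d_i$ only the $s_D$ many $i$-th entries. Because the coordinate entries are independent, the product over branches $\kappa$ may be interchanged with the product over components $i$, rewriting the image as $\bigtimes_{i=1}^{n} d_i[\range(f_{\Pr^{(n)}_i(T_1)^{(m)}}) \times \cdots \times \range(f_{\Pr^{(n)}_i(T_{s_D})^{(m)}})]$, and each factor is then $\range(f_{\Pr^{(n)}_i(T)^{(m+1)}})$ by the same two range facts. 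The only delicate point I expect is the faithful tracking of the coordinate shuffle between $(\bigtimes_{i=1}^{n} X_i)^{s_D}$ and $\bigtimes_{i=1}^{n} X_i^{s_D}$ that is built into the definition of $\pair{d_1, \ldots, d_n}$; once this notational bookkeeping is set up cleanly the interchange is immediate from the independence of the coordinates, closing the induction.
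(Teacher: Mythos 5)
Your proposal is correct and follows essentially the same route as the paper's proof: induction on $m$, base case via equation~(\ref{eq-prod-2}), and in the inductive step the decomposition of $f_{T^{(m)}}$ and $f_{\Pr^{(n)}_{i}(T)^{(m)}}$ through root digit and subtrees, followed by the interchange of the product over branches with the product over components. The paper states that interchange as a single equality in its chain of computations; you identify it explicitly as the key step and justify it by the independence of the coordinate entries, which is exactly the content of that equality.
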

\begin{proof}
We proof the statement by induction on $m$. Let $T = [\pair{d_{1}, \ldots, d_{n}}; T_{1}, \ldots, T_{s_{D}}]$. 

In case $m=0$, we have that $T^{(0)} = \{ \pair{d_{1}, \ldots, d_{n}} \}$ and $\Pr^{(n)}_{i}(T)^{(0)} = \{ d_{i} \}$. Hence, $f_{T^{(0)}} = \pair{d_{1}, \ldots, d_{n}}$ and $f_{\Pr^{(n)}_{i}(T)^{(0)}} = d_{i}$. As seen in (\ref{eq-prod-2}), the statement thus holds in  this case.

Now, suppose that $m>0$. Then 
\[
f_{T^{(m)}} 
= \pair{d_{1}, \ldots, d_{n}} \circ (f_{T^{(m-1)}_{1}} \times \cdots \times f_{T^{(m-1)}_{s_{D}}})
\]
and
\[
f_{\Pr^{(n)}_{i}(T)^{(m)}} = d_{i} \circ (f_{\Pr^{(n)}_{i}(T_{1})^{(m-1)}} \times \cdots \times f_{\Pr^{(n)}_{i}(T_{s_{D}})^{(m-1)}}).
\]
Therefore, it follows with the induction hypothesis that
\begin{align*}
\range(f_{T^{(m)}})
&= \pair{d_{1}, \ldots, d_{n}}[\bigtimes_{i=1}^{s_{D}} \range(f_{T^{(m-1)}_{i}})] \\
&= \pair{d_{1}, \ldots, d_{n}}[\bigtimes_{i=1}^{s_{D}} \bigtimes_{j=1}^{n} \range(f_{\Pr^{(n)}_{j}(T_{i})^{(m-1)}})] \\
&= \bigtimes_{j=1}^{n} d_{j}[\bigtimes_{i=1}^{s_{D}} \range(f_{\Pr^{(n)}_{j}(T_{i})^{m-1}})] \\
&= \bigtimes_{j=1}^{n} \range(f_{\Pr^{(n)}_{j}(T)^{m-1}}). \qedhere
\end{align*}
\end{proof}

As a consequence of the preceding lemma we obtain for $T \in \TTT^{\omega}_{D^{\times}}$ that
\[
\bigcap_{m \ge 1} \range(f_{T^{(n)}})
= \bigcap_{m \ge 1} \bigtimes_{i=1}^{n} \range(f_{\Pr^{(n)}_{i}(T)^{(m)}}) 
= \bigtimes_{i=1}^{n}\bigcap_{m \ge 1} \range(f_{\Pr^{(n)}_{i}(T)^{(m)}}),
\]
form which the subsequent results immediately follow.
\begin{proposition}\label{pn-prodwh}
Let $(X_1, D_1), \ldots, (X_n, D_n)$ be extended IFS. Then the following two statements hold:
\begin{enumerate}

\item\label{pn-prodwh-1}
 If $(X_{i}, D_{i})$ is weakly hyperbolic, for $1 \le i \le n$, so is $(\bigtimes_{i=1}^{n} X_{i}, D^{\times})$.

\item\label{pn-prodwh-2}
 If $(X_{i}, D_{i})$ is both, compact and weakly hyperbolic, for $1 \le i \le n$, then
 \[
 \val{T} = (\val{\Pr^{(n)}_{1}(T)}, \ldots, \val{\Pr^{(n)}_{n}(T)}).
 \]

\end{enumerate}
\end{proposition}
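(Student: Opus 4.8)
The plan is to treat both statements as short corollaries of the displayed identity immediately preceding the proposition, namely that for every $T \in \TTT^\omega_{D^\times}$,
\[
\bigcap_{m \ge 1} \range(f_{T^{(m)}}) = \bigtimes_{i=1}^{n} \bigcap_{m \ge 1} \range(f_{\Pr^{(n)}_{i}(T)^{(m)}}),
\]
which is itself a consequence of Lemma~\ref{lem-rgprod}. Once this is in hand, everything reduces to reading off cardinalities on both sides, since the cardinality of a Cartesian product is the product of the cardinalities of its factors.

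For statement (\ref{pn-prodwh-1}), I would fix an arbitrary $T \in \TTT^\omega_{D^\times}$ and observe that each factor $\bigcap_{m \ge 1} \range(f_{\Pr^{(n)}_{i}(T)^{(m)}})$ on the right is exactly the intersection associated with the projected tree $\Pr^{(n)}_{i}(T) \in \TTT^\omega_{D_i}$. Weak hyperbolicity of $(X_i, D_i)$ then gives $\card{\bigcap_{m} \range(f_{\Pr^{(n)}_{i}(T)^{(m)}})} \le 1$ for each $i$. A Cartesian product of sets each of cardinality at most one again has cardinality at most one --- it is empty whenever some factor is empty and a singleton otherwise. Hence $\card{\bigcap_{m} \range(f_{T^{(m)}})} \le 1$, which is precisely weak hyperbolicity of $(\bigtimes_{i=1}^n X_i, D^\times)$.

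For statement (\ref{pn-prodwh-2}) I would add the compactness hypothesis, which is exactly what is needed for the coding map to be defined. Compactness and weak hyperbolicity of each $(X_i, D_i)$ make Proposition~\ref{pn-valT} applicable to the projected trees, so each factor is the singleton $\{ \val{\Pr^{(n)}_{i}(T)} \}$. By Proposition~\ref{pn-prod}(\ref{pn-prod-1}) the product is compact, and by statement (\ref{pn-prodwh-1}) it is weakly hyperbolic, so Proposition~\ref{pn-valT} applies to $(\bigtimes_{i=1}^n X_i, D^\times)$ as well and yields $\bigcap_{m} \range(f_{T^{(m)}}) = \{ \val{T} \}$. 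Substituting both singletons into the displayed identity gives
\[
\{ \val{T} \} = \bigtimes_{i=1}^{n} \{ \val{\Pr^{(n)}_{i}(T)} \} = \{ (\val{\Pr^{(n)}_{1}(T)}, \ldots, \val{\Pr^{(n)}_{n}(T)}) \},
\]
and comparing the unique elements of the two singletons proves the claim.

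I expect no serious obstacle: the genuine work has already been done in Lemma~\ref{lem-rgprod}, and what remains is bookkeeping with cardinalities. The only points that require a little care are that statement (\ref{pn-prodwh-2}) genuinely needs the extra compactness hypothesis --- without it $\val{\cdot}$ need not be defined on either the factors or the product --- and that one should not forget the degenerate case in (\ref{pn-prodwh-1}) where a factor might be empty, which is harmless for an upper bound on cardinality but would have to be excluded (as it is, by compactness) before speaking of $\val{\cdot}$.
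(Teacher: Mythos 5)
Your proposal is correct and follows exactly the paper's route: the paper likewise derives the identity $\bigcap_{m} \range(f_{T^{(m)}}) = \bigtimes_{i=1}^{n}\bigcap_{m} \range(f_{\Pr^{(n)}_{i}(T)^{(m)}})$ from Lemma~\ref{lem-rgprod} and then states that both claims ``immediately follow.'' Your cardinality bookkeeping for (\ref{pn-prodwh-1}) and the appeal to Proposition~\ref{pn-valT} (together with Proposition~\ref{pn-prod}(\ref{pn-prod-1}) and part (\ref{pn-prodwh-1})) for (\ref{pn-prodwh-2}) are precisely the omitted ``immediate'' steps, spelled out correctly.
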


Let us sum up what we have obtained so far in this section.

\begin{theorem}\label{thm-prodtd}
Let $(X_1, D_1), \ldots, (X_n, D_n)$ be topological digit spaces. Then $(\bigtimes_{i=1}^{n} X_{i}, D^{\times})$ is a topological digit space as well.
\end{theorem}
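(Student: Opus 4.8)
The plan is to treat this statement as the concluding corollary that assembles the inheritance results established earlier in the section, so the work reduces to unpacking Definition~\ref{dn-topifs} into its three constituent requirements and matching each against the relevant preceding proposition. First I would recall that a topological digit space is, by Definition~\ref{dn-topifs}, precisely a compact, covering and weakly hyperbolic extended IFS. Since each factor $(X_i, D_i)$ is a topological digit space, each is in particular an extended IFS, so Proposition~\ref{pn-prod} applies and yields at once that $(\bigtimes_{i=1}^n X_i, D^\times)$ is again an extended IFS.

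Next I would invoke the three inheritance facts in turn. Compactness of the product is Proposition~\ref{pn-prod}(\ref{pn-prod-1}), itself a consequence of Tychonov's theorem together with the fact that each $X_i$ is compact. The covering property is Proposition~\ref{pn-prod}(\ref{pn-prod-2}), which rests on the identity $\range(\pair{d_1,\ldots,d_n}) = \bigtimes_{i=1}^n \range(d_i)$ of~(\ref{eq-prod-2}). Weak hyperbolicity is Proposition~\ref{pn-prodwh}(\ref{pn-prodwh-1}). Combining the three, $(\bigtimes_{i=1}^n X_i, D^\times)$ satisfies all three clauses of Definition~\ref{dn-topifs} and is therefore a topological digit space, which is exactly the claim.

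There is no genuine obstacle remaining at the level of this theorem; the substantive work lives in the supporting lemmas, above all the range-decomposition Lemma~\ref{lem-rgprod}. Its proof by induction on the tree height $m$ establishes $\range(f_{T^{(m)}}) = \bigtimes_{i=1}^n \range(f_{\Pr^{(n)}_i(T)^{(m)}})$, which lets the countable intersection $\bigcap_{m\ge 1}\range(f_{T^{(m)}})$ defining $\val{\cdot}$ be computed componentwise as $\bigtimes_{i=1}^n \bigcap_{m\ge 1}\range(f_{\Pr^{(n)}_i(T)^{(m)}})$. Once that decomposition is in hand, weak hyperbolicity of the product follows because a product of at-most-singletons is again an at-most-singleton, which is precisely what Proposition~\ref{pn-prodwh} records. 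Accordingly I would present the proof of Theorem~\ref{thm-prodtd} simply as a one-line appeal to Propositions~\ref{pn-prod} and~\ref{pn-prodwh} in light of Definition~\ref{dn-topifs}, without reproving anything.
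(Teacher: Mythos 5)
Your proposal is correct and matches the paper exactly: the paper states Theorem~\ref{thm-prodtd} with the phrase ``Let us sum up what we have obtained so far in this section,'' i.e.\ as an immediate consequence of Proposition~\ref{pn-prod}(\ref{pn-prod-1}),(\ref{pn-prod-2}) and Proposition~\ref{pn-prodwh}(\ref{pn-prodwh-1}) combined with Definition~\ref{dn-topifs}, with no further argument. Your observation that the real content sits in Lemma~\ref{lem-rgprod} and its componentwise intersection consequence is also precisely how the paper organises the material.
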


For the remainder of this section we will restrict our investigations to the special case of digit spaces.
\begin{proposition}\label{pn-digprod}
For $1 \le i \le n$, let $(X_{i}, D_{i})$ be a digit space with metric $\rho_{i}$. Then the following statements hold:
\begin{enumerate}

\item\label{pn-digprod-1}
$(\bigtimes_{i=1}^{n} X_{i}, D^{\times})$ is a digit space with respect to the maximum metric $\rho$.

\item\label{pn-digprod-2}
If for each $1 \le i \le n$, $Q_{i}$ is a countable dense subset of $X_{i}$ so that $(X_{i}, \rho_{i}, Q_{i})$ is computable, then $\bigtimes_{i=1}^{n} Q_{i}$ is a countable dense subset of $\bigtimes_{i=1}^{n} X_{i}$ with $(\bigtimes_{i=1}^{n} X_{i}, \rho, \bigtimes_{i=1}^{n} Q_{i})$ being computable as well.

\item\label{pn-digprod-3}
If for every $1 \le i \le n$, $(X_{i}, D_{i}, Q_{i})$ is a computable digit space, then so is $(\bigtimes_{i=1}^{n} X_{i}, D^{\times},\linebreak \bigtimes_{i=1}^{n} Q_{i})$.

\end{enumerate}
\end{proposition}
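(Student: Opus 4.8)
The plan is to add, on top of what Proposition~\ref{pn-prod} already gives, the three ingredients that distinguish a (computable) digit space from a bare extended IFS: the contracting metric structure for part~(\ref{pn-digprod-1}), the computability of the product base for part~(\ref{pn-digprod-2}), and the computability of the product digits for part~(\ref{pn-digprod-3}). For part~(\ref{pn-digprod-1}), Proposition~\ref{pn-prod} already supplies that $(\bigtimes_{i=1}^n X_i, D^\times)$ is a compact, covering extended IFS, and the maximum metric $\rho(\vec x, \vec y) := \max_i \rho_i(x_i, y_i)$ turns $\bigtimes_i X_i$ into a metric space, so by Definition~\ref{dn-digit} the only thing left is to check that each $\pair{d_1, \ldots, d_n} \in D^\times$ contracts. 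I would first note that the arity-equalising passage from $d$ to $\hat d$ preserves contraction, since $\hat d$ only discards coordinates. Writing the maximum metric on $(\bigtimes_i X_i)^{s_{D}}$ as the maximum of $\rho_i$ over all components $i$ and all $s_D$ copies $k$, and letting $q < 1$ be the maximum of the contraction factors of the finitely many digits in $\bigcup_{i=1}^n D_i$, a direct estimate yields
\[
\rho(\pair{\vec d}(\vec X), \pair{\vec d}(\vec Y)) = \max_i \rho_i\bigl(d_i(\cdots), d_i(\cdots)\bigr) \le \max_i \bigl(q \cdot \max_k \rho_i(x^{(k)}_i, y^{(k)}_i)\bigr) = q \cdot \rho(\vec X, \vec Y).
\]
Weak hyperbolicity then follows automatically, as every digit space is weakly hyperbolic.

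For part~(\ref{pn-digprod-2}), $\bigtimes_i Q_i$ is a finite product of countable dense sets and is therefore countable and dense for the maximum metric. Computability in the sense of Definition~\ref{dn-compmet} reduces to that of the factors: since $\rho(\vec u, \vec v) = \max_i \rho_i(u_i, v_i)$, the relation $\rho(\vec u, \vec v) < r$ is the finite conjunction $\bigwedge_i \rho_i(u_i, v_i) < r$ and $\rho(\vec u, \vec v) > r$ the finite disjunction $\bigvee_i \rho_i(u_i, v_i) > r$; effective enumerability of the two defining sets thus follows from that of each $\rho_i$, finite conjunctions and disjunctions of semidecidable predicates being semidecidable.

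For part~(\ref{pn-digprod-3}), by Definition~\ref{dn-compdig} it remains only to show that every product digit $\pair{d_1, \ldots, d_n}$ is computable in the sense of Definition~\ref{dn-compmap}, the digit-space and computable-base requirements being supplied by parts~(\ref{pn-digprod-1}) and~(\ref{pn-digprod-2}). A computable modulus of continuity is immediate because $\pair{\vec d}$ is a contraction (one may take $\zeta(\varepsilon) = \varepsilon$). For the approximation procedure, given $\vec U \in (\bigtimes_i Q_i)^{s_D}$ and $n \in \NN$ I would run the procedures $G_{d_i}$ of the component digits coordinatewise — noting that $\hat d$ inherits computability from $d$ by ignoring its surplus inputs — to obtain $v_i \in Q_i$ with $\rho_i(d_i(\cdots), v_i) < 2^{-n}$, so that $\vec v := (v_1, \ldots, v_n) \in \bigtimes_i Q_i$ satisfies $\rho(\pair{\vec d}(\vec U), \vec v) = \max_i \rho_i(d_i(\cdots), v_i) < 2^{-n}$.

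I expect no deep obstacle: every relevant datum — contraction factor, modulus, and approximation procedure — is componentwise, and the product is finite, so the pieces combine by taking maxima or by running finitely many procedures in parallel. The single point demanding care is in part~(\ref{pn-digprod-1}), where finiteness of $\bigcup_i D_i$ must be used to keep the common factor $q$ strictly below $1$, together with the routine checks that the arity-equalising hat-construction preserves both contraction and computability.
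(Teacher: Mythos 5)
Your proposal is correct and takes essentially the same route as the paper: componentwise contraction estimates under the maximum metric for part~(1), the standard product of computable metric spaces for part~(2), and componentwise moduli of continuity together with parallel runs of the component approximation procedures $G_{d_i}$ for part~(3). The only cosmetic differences are that you take $\zeta(\varepsilon)=\varepsilon$ as the modulus by exploiting contractivity, where the paper instead combines the component moduli $k_i$ via a minimum, and that you spell out the semidecidability argument and the hat-construction checks that the paper leaves implicit; both variants are sound.
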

\begin{proof}
(\ref{pn-digprod-1})
As we have already seen, $(\bigtimes_{i=1}^{n} X_{i}, D^{\times})$ is a compact covering extended IFS. It remains to show that all maps in $D^{\times}$ are contractive.

For $1 \le i \le n$, let $q_{i}$ be the contraction factor of $d_{i}$, and be $q$ their maximum.
For $1 \le j \le s_{D}$, let $\vec{x}^{(j)}, \vec{y}^{(j)} \in \bigtimes_{i=1}^{n} X_{i}$ with $\vec{x}^{(j)} = (x^{(j)}_{1}, \ldots, x^{(j)}_{n})$ and similarly for $\vec{y}^{(j)}$. Then
\begin{align*}
&\rho(\pair{d_{1}, \ldots, d_{n}}(\vec{x}^{(1)}, \ldots, \vec{x}^{(s_{D})}), \pair{d_{1}, \ldots, d_{n}}(\vec{y}^{(1)}, \ldots, \vec{y}^{(s_{D})})) \\
&\quad= \rho((d_{1}(x^{(1)}_{1}, \ldots, x^{(s_{D})}_{1}), \ldots, d_{n}(x^{(1)}_{n}, \ldots, x^{(s_{D})}_{n})), \\
&\hspace{4cm} (d_{1}(y^{(1)}_{1}, \ldots, y^{(s_{D})}_{1}), \ldots, d_{n}(y^{(1)}_{n}, \ldots, y^{(s_{D})}_{n}))) \\
&\quad= \max_{i=1}^{n} \rho_{i}(d_{i}(x^{(1)}_{i}, \ldots, x^{(s_{D})}_{i}), d_{i}(y^{(1)}_{i}, \ldots, y^{(s_{D})}_{i})) \\
&\quad \le q_{i} \cdot \max_{i=1}^{n} \rho((x^{(1)}_{i}, \ldots, x^{(s_{D})}_{i}), (y^{(1)}_{i}, \ldots, y^{(s_{D})}_{i})) \\
&\quad \le q \cdot \max_{i=1}^{n} \max_{j=1}^{s_{D}} \rho_{j}(x^{(j)}_{i}, y^{(j)}_{i}) \\
&\quad= q \cdot \max_{j=1}^{s_{D}} \max_{i=1}^{n} \rho_{j}(x^{(j)}_{i}, y^{(j)}_{i}) \\
&\quad= q \cdot \max_{j=1}^{s_{D}} \rho(\vec{x}^{(j)}, \vec{y}^{(j)}) \\
&\quad= q \cdot \rho((\vec{x}^{(1)}, \ldots, \vec{x}^{(s_{D})}), (\vec{y}^{(1)}, \ldots, \vec{y}^{(s_{D})})).
\end{align*}

(\ref{pn-digprod-2})
As is well known, $\bigtimes_{i=1}^{n} Q_{i}$ is a countable dense subset of $\bigtimes_{i=1}^{n}$, and obviously 
$(\bigtimes_{i=1}^{n} X_{i},\linebreak \rho, \bigtimes_{i=1}^{n} Q_{i})$ is a computable metric space.

(\ref{pn-digprod-3})
It remains to verify that all digits in $D^{\times}$ are computable. For $1 \le i \le n$, let $k_{i}$ be a computable modulus of continuity for digit $d_{i}$ and set $k(\theta) := \min \set{k_{i}(\theta)}{1 \le i \le n}$, where $\theta \in \QQ_{+}$. Moreover, for $1 \le j \le s_{D}$, let $\vec{u}^{(j)}, \vec{v}^{(j)} \in \bigtimes_{i=1}^{n} Q_{i}$ with $\vec{u}^{(j)} = (u^{(j)}_{1}, \ldots, u^{(j)}_{n})$ and similarly for $\vec{v}^{(j)}$. If we assume that 
\[
\rho((\vec{u}^{(1)}, \ldots, \vec{u}^{(s_{D})}), (\vec{v}^{(1)}, \ldots, \vec{v}^{(s_{D})})) \le k(\theta)
\]
then we obtain for all $1 \le i \le n$ that 
\[
\rho((u^{(1)}_{i}, \ldots, u^{(s_{D})}_{i}), (v^{(1)}_{i}, \ldots, v^{(s_{D})}_{i})) \le k_{i}(\theta).
\]
Hence, 
\[
\rho_{i}(d_{i}(u^{(1)}_{i}, \ldots, u^{(s_{D})}_{i}), d_{i}(v^{(1)}_{i}, \ldots, v^{(s_{D})}_{i})) \le \theta,
\]
for all $1 \le i \le n$. Thus,
\[
\rho(\pair{d_{1}, \ldots, d_{n}}(\vec{u}^{(1)}, \ldots, \vec{u}^{(s_{D})}), \pair{d_{1}, \ldots, d_{n}}(\vec{v}^{(1)}, \ldots, \vec{v}^{(s_{D})})) \le \theta.
\]

By our assumption there is a procedure $G^{(i)}_{d}$, for each $1 \le i \le n$ and $d \in D_{i}$, such that for any given $\vec{u} \in Q_{i}^{s_{D}}$ and $m \in \NN$, $G^{(i)}_{d}$ computes a basic element $v \in Q_{i}$ with $\rho_{i}(d(\vec u), v) < 2^{-m}$. 

Now, let $\pair{d_{1}, \ldots, d_{n}} \in D^{\times}$ and let $G_{\pair{d_{1}, \ldots, d_{n}}}$ be the procedure that operates as follows:
\begin{quote}
Given $m \in \NN$ and $\vec{u}^{(1)}, \ldots, \vec{u}^{(s_{d})} \in \bigtimes_{j=1}^{n} Q_{j}$ with $\vec{u}^{(j)} = (u^{(j)}_{1}, \ldots, u^{(s_{D})}_{n})$: For each $1 \le i \le n$, apply $G^{(i)}_{d_{i}}$ to $(u^{(1)}_{i}, \ldots, u^{(s_{D})}_{i})$ and $m$, and let $v_{i}$ be the corresponding output.
Then output $(v_{1}, \ldots, v_{n})$.
\end{quote}
It follows that
\begin{align*}
&\rho(\pair{d_{1}, \ldots, d_{n}}(\vec{u}^{(1)}, \ldots, \vec{u}^{(s_{D})}), (v_{1}, \ldots. v_{n})) \\
&\quad= \rho((d_{1}(u^{(1)}_{1}, \ldots, u^{(s_{D})}_{1}), \ldots, d_{n}(u^{(1)}_{n}, \ldots, u^{(s_{D)}}_{n})), (v_{1}, \ldots, v_{n})) \\
&\quad= \max_{i=1}^{n} \rho_{i}(d_{i}(u^{(1)}_{i}, \ldots, u^{(s_{D})}_{i}), v_{i}) \\
&\quad < 2^{-m}. \qedhere
\end{align*}
\end{proof}

\begin{proposition}\label{pn-decprod}
Let $(X_{1}, D_{1}, Q_{1}), \ldots, (X_{n}, D_{n}, Q_{n})$ be decidable computable digit spaces. Then $(\bigtimes_{i=1}^{n} X_{i}, D^{\times}, \bigtimes_{i=1}^{n} Q_{i})$ is decidable as well. 
\end{proposition}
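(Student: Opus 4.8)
The plan is to reduce decidability for the product directly to decidability of the factors, using two facts already established: by Proposition~\ref{pn-digprod}(\ref{pn-digprod-1}) the product digit space $(\bigtimes_{i=1}^n X_i, D^\times, \bigtimes_{i=1}^n Q_i)$ is a computable digit space carrying the maximum metric $\rho$, and by~(\ref{eq-prod-2}) ranges multiply, i.e.\ $\range(\pair{d_1,\ldots,d_n}) = \bigtimes_{i=1}^n \range(d_i)$. So it only remains to decide, given an input $\vec u = (u_1,\ldots,u_n) \in \bigtimes_{i=1}^n Q_i$, a rational $\theta \in \QQ_+$, and a digit $\pair{d_1,\ldots,d_n} \in D^\times$, whether $\ball{\rho}{\vec u}{\theta} \subseteq \range(\pair{d_1,\ldots,d_n})$, in the sense of Definition~\ref{dn-deccd}(\ref{dn-deccd-1}).

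First I would observe that in the maximum metric the ball splits as a product of the factor balls,
\[
\ball{\rho}{\vec u}{\theta} = \bigtimes_{i=1}^n \ball{\rho_i}{u_i}{\theta},
\]
because $\rho(\vec x, \vec u) = \max_i \rho_i(x_i, u_i) < \theta$ is equivalent to $\rho_i(x_i, u_i) < \theta$ holding for every $i$. Combining this with~(\ref{eq-prod-2}), the inclusion to be decided becomes
\[
\bigtimes_{i=1}^n \ball{\rho_i}{u_i}{\theta} \subseteq \bigtimes_{i=1}^n \range(d_i).
\]
The key step is then the elementary fact that, when all left-hand factors are non-empty, inclusion of products is componentwise. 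Each $\ball{\rho_i}{u_i}{\theta}$ contains its centre $u_i$ and is therefore non-empty (here $\theta \in \QQ_+$ is used), so picking, for a fixed index $j$ and an arbitrary $x \in \ball{\rho_j}{u_j}{\theta}$, the tuple with $x$ in place $j$ and the respective centres elsewhere yields a point of the left product, hence of the right product, whence $x \in \range(d_j)$; the converse implication is immediate. Thus the displayed inclusion holds if, and only if, $\ball{\rho_i}{u_i}{\theta} \subseteq \range(d_i)$ for every $1 \le i \le n$.

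Each of these $n$ factor inclusions is decidable, since $u_i \in Q_i$, $d_i \in D_i$, and $(X_i, D_i, Q_i)$ is decidable by hypothesis. The decision procedure for the product therefore simply runs the $n$ factor procedures and returns the conjunction of their outputs, which settles the claim. The only point requiring care—and the closest thing to an obstacle—is the non-emptiness hypothesis in the componentwise reduction, since the equivalence of product inclusion with factorwise inclusion fails when some left factor is empty; I would make explicit that balls of positive rational radius always contain their centre. Everything else is routine bookkeeping.
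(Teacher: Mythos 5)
Your proposal is correct and follows essentially the same route as the paper: decompose the maximum-metric ball as a product of factor balls, combine with~(\ref{eq-prod-2}) to reduce the inclusion $\ball{\rho}{\vec u}{\theta} \subseteq \range(\pair{d_1,\ldots,d_n})$ to the componentwise inclusions $\ball{\rho_i}{u_i}{\theta} \subseteq \range(d_i)$, and decide these using decidability of the factors. The only difference is that you spell out the non-emptiness justification for reducing product inclusion to factorwise inclusion, a point the paper leaves implicit; this is a welcome clarification, not a divergence.
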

\begin{proof}
Note that for $\vec{u} \in \bigtimes_{i=1}^{n} Q_{i}$ with $\vec{u} = (u_{1}, \ldots, u_{n})$ and $\theta \in \QQ_{+}$, 
\[
\ball{\rho}{\vec{u}}{\theta} = \bigtimes_{i=1}^{n} \ball{\rho_{i}}{u_{i}}{\theta}.
\]
With (\ref{eq-prod-2}) we therefore obtain that for $\pair{d_{1}, \ldots, d_{n}} \in D^{\times}$,
\[
\ball{\rho}{\vec{u}}{\theta} \subseteq \range(\pair{d_{1}, \ldots, d_{n}}) 
\Leftrightarrow (\forall 1 \le i \le n)\, \ball{\rho_{i}}{u_{i}}{\theta} \subseteq \range(d_{i}),
\]
from which the statement follows.
\end{proof}

\begin{proposition}\label{pn-appchprod}
Let $(X_{1}, D_{1}, Q_{1}), \ldots, (X_{n}, D_{n}, Q_{n})$ be computable digit spaces having approximable choice. Then also $(\bigtimes_{i=1}^{n} X_{i}, D^{\times}, \bigtimes_{i=1}^{n} Q_{i})$ has approximable choice.
\end{proposition}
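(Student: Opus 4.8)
The plan is to reduce the claim to the construction of computable right inverses by means of Proposition~\ref{pn-epsrightinv}, which asserts that a computable digit space has approximable choice if, and only if, each of its digits has a computable right inverse. Since every factor $(X_i, D_i, Q_i)$ has approximable choice, that proposition provides for each $i$ and each $d_i \in D_i$ a computable right inverse $\fun{d_i'}{\range(d_i)}{X_i^{s_{D}}}$. It therefore suffices to equip every digit $\pair{d_1, \ldots, d_n} \in D^\times$ with a computable right inverse and then to invoke Proposition~\ref{pn-epsrightinv} once more, now in the product space, to conclude that $(\bigtimes_{i=1}^n X_i, D^\times, \bigtimes_{i=1}^n Q_i)$ has approximable choice. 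Note that by (\ref{eq-prod-2}) the domain of such a right inverse is $\range(\pair{d_1, \ldots, d_n}) = \bigtimes_{i=1}^n \range(d_i)$.

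Given $\vec x = (x_1, \ldots, x_n) \in \bigtimes_{i=1}^n \range(d_i)$, I would apply the factor right inverses componentwise and write $d_i'(x_i) = (y_i^{(1)}, \ldots, y_i^{(s_{D})}) \in X_i^{s_{D}}$. The only delicate point is that the target of the desired right inverse is $(\bigtimes_{i=1}^n X_i)^{s_{D}}$, whereas the tuples $d_i'(x_i)$ naturally live in $\bigtimes_{i=1}^n (X_i^{s_{D}})$; I would therefore \emph{transpose} the double index and set
\[
\pair{d_1, \ldots, d_n}'(\vec x) := ((y_1^{(1)}, \ldots, y_n^{(1)}), \ldots, (y_1^{(s_{D})}, \ldots, y_n^{(s_{D})})).
\]
That this is a right inverse is immediate from the definition of $\pair{d_1, \ldots, d_n}$: evaluating it on this tuple applies, in its $i$-th output coordinate, the map $d_i$ to $(y_i^{(1)}, \ldots, y_i^{(s_{D})}) = d_i'(x_i)$, and since $d_i \circ d_i'$ is the identity on $\range(d_i)$ the $i$-th coordinate equals $x_i$, so the whole expression returns $\vec x$.

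It remains to check computability, which transfers smoothly because all products carry the maximum metric. First, the reindexing above is an isometry, so that $\rho(\pair{d_1, \ldots, d_n}'(\vec x), \pair{d_1, \ldots, d_n}'(\vec x')) = \max_{i} \rho_i(d_i'(x_i), d_i'(x_i'))$; hence $\zeta := \min_i \zeta_i$, where $\zeta_i$ is a computable modulus of continuity of $d_i'$, is a computable modulus of continuity of $\pair{d_1, \ldots, d_n}'$. Second, the approximation procedure simply runs the procedure $G_{d_i'}$ for each factor: on input a basic element $\vec u = (u_1, \ldots, u_n) \in \bigtimes_{i=1}^n Q_i$ and $m \in \NN$ it computes, for each $i$, a basic element $(w_i^{(1)}, \ldots, w_i^{(s_{D})}) \in Q_i^{s_{D}}$ approximating $d_i'(u_i)$ to within $2^{-m}$, and reassembles these by the same transposition into an element of $(\bigtimes_{i=1}^n Q_i)^{s_{D}}$; by the isometry this approximates $\pair{d_1, \ldots, d_n}'(\vec u)$ to within $2^{-m}$. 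Thus $\pair{d_1, \ldots, d_n}'$ is a computable right inverse, and Proposition~\ref{pn-epsrightinv} yields approximable choice for the product. I expect the only genuine hurdle to be this transposition of the index structure together with the verification that it is an isometry for the maximum metric; everything else is bookkeeping.
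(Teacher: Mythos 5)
Your proof is correct, but it takes a genuinely different route from the paper's. The paper never leaves Definition~\ref{dn-appchoi}: it takes the approximable-choice procedures $\lambda(\theta,u).\,{}^{d_i}v^{\theta}_{u}$ of the factors, defines the product data $v^{\theta}_{(u_1,\ldots,u_n)}$ by exactly the transposition you use, and then verifies the three conditions of the definition one by one, with the attendant $\theta$, $\theta'$ bookkeeping. You instead invoke Proposition~\ref{pn-epsrightinv} in both directions: approximable choice for the factors gives computable right inverses $d_i'$, your transposition assembles these into a computable right inverse of $\pair{d_1,\ldots,d_n}$ (your verification of the right-inverse identity and of computability, via the observation that the transposition is an isometry for the maximum metrics, is sound), and the converse direction of Proposition~\ref{pn-epsrightinv} then yields approximable choice for the product --- this last step is legitimate because $(\bigtimes_{i=1}^{n} X_{i}, D^{\times}, \bigtimes_{i=1}^{n} Q_{i})$ is itself a computable digit space by Proposition~\ref{pn-digprod}, a fact you should cite explicitly. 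What your route buys is brevity and conceptual cleanliness: a right inverse is a single function, and the three clauses of Definition~\ref{dn-appchoi} never need to be handled. What the paper's route buys is a more transparent extracted program: in a development whose purpose is program extraction, the direct proof transforms the factors' choice data into product choice data by pure rearrangement, whereas your argument composes the two directions of Proposition~\ref{pn-epsrightinv}, whose forward direction builds the right inverse as a limit of regular Cauchy sequences, so the algorithm implicit in your proof is noticeably more involved even though the mathematical content is the same.
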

\begin{proof}
For $1 \le i \le n$ and $d \in D_{i}$, let $\fun{\lambda (\theta, u).\, ^{d}v^\theta_u}{\QQ_+ \times \int(\range(d)) \cap Q_{i}}{Q_{i}^{s_{D}}}$ be the effective procedure existing for $d$, as $(X_{i}, D_{i}, Q_{i})$ has approximable choice. Then for $\pair{d_{1}, \ldots, d_{n}} \in D^{\times}$, define the procedure 
\[
\fun{\lambda (\theta, (u_{1}, \ldots, u_{n})).\, v^{\theta}_{(u_{1}, \ldots, u_{n})}}{\QQ_{+} \times \int(\range(\pair{d_{1}, \ldots, d_{n}})) \cap \bigtimes_{j=1}^{n} Q_{j}}{(\bigtimes_{j=1}^{n} Q_{j})^{s_{D}}}
\]
 as follows:

For $1 \le i \le n$, let $^{d_{i}}v^{\theta}_{u_{i}} = (v^{(1)}_{i}, \ldots, v^{(s_{D})}_{i})$. Then set
\[
v^{\theta}_{(u_{1}, \ldots, u_{n})} := ((v^{(1)}_{1}, \ldots, v^{(1)}_{n}), \ldots, (v^{(s_{D})}_{1}, \ldots, v^{(s_{D})}_{n})).
\]
It remains to verify the conditions in Definition~\ref{dn-appchoi}.

(\ref{dn-appchoi-1}) By our assumption we have for $1 \le i \le n$ and $\tilde{\theta} \in \QQ_{+}$ that $\rho_{i}({}^{d_{i}}v^{\theta}_{u_{i}}, {}^{d_{i}}v^{\tilde{\theta}}_{u_{i}}) < \max \{ \theta, \tilde{\theta} \}$. Therefore,
\[
\rho(v^{\theta}_{(u_{1}, \ldots, u_{n})}, v^{\tilde{\theta}}_{(u_{1}, \ldots, u_{n})})
= \max_{j=1}^{s_{D}} \max_{i=1}^{n} \rho_{i}(v^{(j)}_{i}, \tilde{v}^{(j)}_{i}) < \max \{ \theta, \tilde{\theta} \},
\]
where $(\tilde{v}^{(1)}_{i}, \ldots, \tilde{v}^{(s_{D})}_{i}) :=  {}^{d_{i}}v^{\tilde{\theta}}_{u_{i}}$.

(\ref{dn-appchoi-2})\sloppy Let $(u_{1}, \ldots, u_{n}), (u'_{1}, \ldots, u'_{n}) \in \int(\range(\pair{d_{1}, \ldots, d_{n}})) \cap \bigtimes_{j=1}^{n} Q_{j}$. As follows from our supposition, some $\theta'_{i} \in \QQ_{+}$  can be computed, for each $1 \le i \le n$, so that, if $\rho_{i}(u_{i}, u'_{i}) < \theta'_{i}$ then $\rho(^{d_{i}}v^{\theta}_{u_{i}}, 
{}^{d_{i}}v^{\theta}_{u'_{i}}) < \theta$. Set $\theta' := \min\set{\theta'_{i}}{1 \le i \le n}$ and assume that $\rho((u_{1}, \ldots, u_{n}), (u'_{1}, \ldots, u'_{n})) < \theta'$. Then we have that $\rho_{i}(u_{i}, u'_{i}) < \theta'_{i}$, for all $1 \le i \le n$, and therefore that
\[
\rho(v^{\theta}_{(u_{1}, \ldots, u_{n})}, v^{\theta}_{(u'_{1}, \ldots, u'_{n})})
= \max_{j=1}^{s_{D}} \max_{i=1}^{n} \rho_{i}(v^{(j)}_{i}, v'^{(j)}_{i})
= \max_{i=1}^{n} \rho(^{d_{i}}v^{\theta}_{u_{i}}, {}^{d_{i}}v^{\theta}_{u'_{i}})
< \theta.
\]
Here $(v'^{(1)}_{i}, \ldots, v'^{(s_{D})}_{i}) := {}^{d_{i}}v^{\theta}_{u'_{i}}$.

(\ref{dn-appchoi-3}) By our assumption, for every $1 \le i \le n$, given $u_{i} \in \int(\range(d_{i})) \cap Q_{i}$, there is some $\vec{y}_{i} \in d_{i}^{-1}[\{ u_{i}\}]$ so that $\rho(\vec{y}_{i}, {}^{d_{i}}v^{\theta}_{u_{i}}) < \theta$.

Let $\vec{y}_{i} =: (y^{(1)}_{i}, \ldots, y^{(s_{D})}_{i})$ and set $\vec{y} := ((y^{(1)}_{1}, \ldots, y^{(1)}_{n}), \ldots, (y^{(s_{D})}_{1}, \ldots, y^{(s_{D})}_{n}))$. Then $\vec{y} \in \pair{d_{1}, \ldots, d_{n}}^{-1}[\{ (u_{1}, \ldots, u_{n}) \}]$ and 
\[
\rho(\vec{y}, v^{\theta}_{(u_{1}, \ldots, u_{n})}) = \max_{i=1}^{n} \rho(\vec{y}_{i}, {}^{d_{i}}v^{\theta}_{u_{i}}) < \theta. \qedhere
\]
\end{proof}

As a consequence of Theorems~\ref{thm-ctoa} and \ref{thm-cacoind} we now obtain that the equivalence between the Cauchy representation used in Type-Two Theory of Effectivity and the tree representation used in the present approach holds in a constructive fashion also in case of products of computable digit spaces.  Recall that for $\vec x \in \bigtimes_{i=1}^{n} X_{i}$,
\[
\bA_{\bigtimes_{i=1}^{n} X_{i}}(\vec x) \Leftrightarrow (\forall n \in \NN) (\exists \vec{u} \in \bigtimes_{i=1}^{n} Q_{i})\, \rho(\vec x, \vec u) < 2^{-n}
\]
and  $\CC_{\bigtimes_{i=1}^{n} X_{i}} \subseteq \bigtimes_{i=1}^{n} X_{i}$ is given by
\begin{equation*}
\begin{split}
\CC_{\bigtimes_{i=1}^{n} X_{i}}(\vec x) \overset{\nu}{=} (\exists \pair{d_1, \ldots, d_r} \in D^{\times}) (\exists \vec{y}_{1}, \ldots, \vec{y}_{s_{D}} \in \bigtimes_{i=1}^{n} X_{i}) \hspace{3cm}\mbox{}\\ 
 \vec{x} = \pair{d_1, \ldots, d_r}(\vec{y}_1, \ldots, \vec{y}_{s_{D}}) \wedge (\forall 1 \le \kappa \le r)\, \CC_{\bigtimes_{i=1}^{n} X_{i}}(\vec{y}_{\kappa}).
\end{split}
\end{equation*}
By the above mentioned two results we obtain the following consequence which permits to extract computable translations between the realisers resulting from the two definitions.

\begin{theorem}\label{thm-prodca}
Let $(X_{1}, D_{1}, Q_{1}), \ldots, (X_{n}, D_{n}, Q_{n})$ be a well-covering and decidable computable digit spaces with approximable choice. Then $\bA_{\bigtimes_{i=1}^{n} X_{i}} = \CC_{\bigtimes_{i=1}^{n} X_{i}}$.
\end{theorem}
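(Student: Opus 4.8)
The plan is to reduce the claim to the two equivalence results already established in Section~\ref{sec-cauchy}, namely Theorems~\ref{thm-ctoa} and \ref{thm-cacoind}, by observing that the product $(\bigtimes_{i=1}^{n} X_{i}, D^{\times}, \bigtimes_{i=1}^{n} Q_{i})$ is itself a computable digit space of exactly the kind to which those theorems apply. The predicates $\bA_{\bigtimes_{i=1}^{n} X_{i}}$ and $\CC_{\bigtimes_{i=1}^{n} X_{i}}$ recalled immediately above the theorem are precisely the instances of the abstract $\bA_X$ and $\CC_X$ obtained when the single space $X$ is taken to be the product $\bigtimes_{i=1}^{n} X_{i}$ equipped with its maximum metric $\rho$, its dense set $\bigtimes_{i=1}^{n} Q_{i}$, and its digit set $D^{\times}$. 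Once the hypotheses of the two theorems are verified for the product, the asserted equality follows by establishing inclusion in both directions.

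First I would collect the structural facts about the product proved in Section~\ref{sec-prod}. By Proposition~\ref{pn-digprod}(\ref{pn-digprod-3}) the product is a computable digit space; by Proposition~\ref{pn-prod}(\ref{pn-prod-3}) it is well-covering; by Proposition~\ref{pn-decprod} it is decidable; and by Proposition~\ref{pn-appchprod} it has approximable choice. Hence the product satisfies every hypothesis appearing in Theorems~\ref{thm-ctoa} and \ref{thm-cacoind}.

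Next I would apply these two theorems to the product. Theorem~\ref{thm-ctoa}, which requires only that the underlying structure be a computable digit space, yields $\CC_{\bigtimes_{i=1}^{n} X_{i}} \subseteq \bA_{\bigtimes_{i=1}^{n} X_{i}}$. Theorem~\ref{thm-cacoind}, whose hypotheses (well-covering, decidable, approximable choice) are exactly those just confirmed, yields the reverse inclusion $\bA_{\bigtimes_{i=1}^{n} X_{i}} \subseteq \CC_{\bigtimes_{i=1}^{n} X_{i}}$. Combining the two inclusions gives the desired equality.

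The argument is essentially one of assembly: the genuine work lies in the inheritance propositions, so there is no real obstacle beyond confirming that the product predicates $\bA$ and $\CC$ are the specialisations of the abstract ones to the product space, and not some other, componentwise-defined notions. Since the product is treated as a single extended IFS whose digits $\pair{d_1, \ldots, d_n}$ act on tuples regarded as structureless objects (cf.\ the remark preceding Proposition~\ref{pn-prodtree}), this identification is immediate. Via the Soundness Theorem one thereby obtains computable translations between the two kinds of realisers: regular Cauchy sequences of tuples from $\bigtimes_{i=1}^{n} Q_{i}$ on the one hand, and infinite $D^{\times}$-trees on the other.
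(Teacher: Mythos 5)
Your proposal is correct and is exactly the paper's own argument: the theorem is stated there as a direct consequence of Theorems~\ref{thm-ctoa} and \ref{thm-cacoind} applied to the product $(\bigtimes_{i=1}^{n} X_{i}, D^{\times}, \bigtimes_{i=1}^{n} Q_{i})$, whose required properties are supplied by Propositions~\ref{pn-digprod}, \ref{pn-prod}, \ref{pn-decprod} and \ref{pn-appchprod}. Your identification of the product predicates $\bA$ and $\CC$ as specialisations of the abstract ones, treating tuples as structureless points, matches the paper's intent precisely.
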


\section{The hyperspace of non-empty compact subsets}\label{sec-compact}

In many applications, e.g.\ in mathematical economics, one needs to deal with multi-valued functions that when considered as set-valued maps have compact sets as values. In this section we will study how spaces of compact sets fit into the framework developed so far. 

For a topological space $X$ with topology $\tau$, let $\KKK(X)$ be the set of all non-empty compact subsets of $X$, endowed with the Vietoris $\tau_V$ topology (also called the finite topology), that is the topology generated by the sets 
\[
\lgroup U; V_1, \ldots, V_n \rgroup := \set{K \in \KKK(X)}{K \subseteq U \wedge (\forall 1 \le \kappa \le n) K \cap V_\kappa \not= \emptyset}
\]
with $U, V_1, \ldots, V_n \in \tau$. The following facts are well known~\cite{kt, mi}.

\begin{theorem}\label{thm-inherit}
Let $(X, \tau)$ and $(Y, \eta)$ be topological spaces, and $\fun{f}{X}{Y}$. Then the following statements hold:
\begin{enumerate}
\item\label{thm-inherit-1}
If $Q$ is dense in $(X, \tau)$, then the set $\QQQ_{\KKK(X)}$ of all non-empty finite subsets of $Q$ is dense in $(\KKK(X), \tau_V)$. Moreover, if $Q$ is countable, so is $\QQQ_{\KKK(X)}$.

\item\label{thm-inherit-2}
$(\KKK(X), \tau_V)$ is Hausdorff if, and only if, $(X, \tau)$ is Hausdorff.

\item\label{thm-inherit-3}
$(\KKK(X), \tau_V)$ is a compact Hausdorff space if, and only if, $(X, \tau)$ is a compact Hausdorff space.

\item\label{thm-inherit-4}
If $f$ is continuous, so is the map $K \in \KKK(X) \mapsto f[K]$ with respect to the Vietoris topologies on $\KKK(X)$ and $\KKK(Y)$.

\item\label{thm-inherit-6}
The operation of taking the union of two compact sets is continuous  with respect to the Vietoris topology.

\end{enumerate}
\end{theorem}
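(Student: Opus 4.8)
The plan is to reduce all five assertions to the behaviour of the two halves of the Vietoris subbasis. Writing $U^+$ for \set{K \in \KKK(X)}{K \subseteq U} and $V^-$ for \set{K \in \KKK(X)}{K \cap V \neq \emptyset} (with $U, V \in \tau$), every basic open set factors as $\lgroup U; V_1, \ldots, V_n \rgroup = U^+ \cap V_1^- \cap \cdots \cap V_n^-$, so continuity questions become preimage computations on the $U^+$ and $V^-$, while density and separation can be read off directly. For (1) I would take a nonempty basic open set $\lgroup U; V_1, \ldots, V_n\rgroup$, witnessed by some $K$ with $K \subseteq U$ and $K \cap V_i \neq \emptyset$; then each $U \cap V_i$ is nonempty open, so density of $Q$ yields $q_i \in Q \cap U \cap V_i$, and $\{q_1, \ldots, q_n\}$ is a finite subset of $Q$ inside the basic open set. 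Countability of $\QQQ_{\KKK(X)}$ is then immediate, the finite subsets of a countable set being countable.

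For (2), to pass from $X$ Hausdorff to $\KKK(X)$ Hausdorff I would separate two distinct compacta $K \neq L$ by choosing $x \in K \setminus L$, separating the point $x$ from the compact set $L$ by disjoint opens $W \ni x$ and $U \supseteq L$ (possible in a Hausdorff space by a finite-subcover argument), and noting that $W^-$ and $U^+$ are then disjoint Vietoris-open neighbourhoods of $K$ and $L$, since any $M \subseteq U$ cannot meet $W$. For the converse I would check that $x \mapsto \{x\}$ embeds $X$ homeomorphically onto the subspace of singletons, its continuity, injectivity and openness onto the image all being one-line subbasis computations; Hausdorffness of $\KKK(X)$ then descends to $X$.

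The main obstacle is the compactness half of (3): that $X$ compact Hausdorff forces $\KKK(X)$ compact (Hausdorffness being (2)). Here I would invoke the Alexander Subbase Theorem for the subbasis of all $U^+$ and all $V^-$. Given a subbasic cover, let $\mathcal{U}$ be the opens with $U^+$ in the cover and $\mathcal{V}$ those with $V^-$ in the cover, and put $C := X \setminus \bigcup \mathcal{V}$. If $C = \emptyset$ then $\mathcal{V}$ covers $X$, a finite subfamily $V_1, \ldots, V_n$ suffices by compactness, and $\{V_1^-, \ldots, V_n^-\}$ already covers $\KKK(X)$ since every nonempty compact set meets some $V_i$. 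If $C \neq \emptyset$ then $C$ is a nonempty compactum meeting no $V \in \mathcal{V}$, so it lies in no $V^-$ and must lie in some $U^+$, i.e.\ $C \subseteq U$; consequently $X \setminus U \subseteq \bigcup \mathcal{V}$ is compact, finitely many $V_1, \ldots, V_m$ cover it, and $\{U^+, V_1^-, \ldots, V_m^-\}$ covers $\KKK(X)$, because any $K$ either sits inside $U$ or meets $X \setminus U$ and hence some $V_i$. Producing this finite subcover in both cases is the crux of the whole theorem. For the easy converse, the singleton embedding of (2) realises $X$ as a \emph{closed} subspace of $\KKK(X)$ — the non-singletons are Vietoris-open, since a compactum with two distinct points $x \neq y$ lies in $V_x^- \cap V_y^-$ for disjoint opens $V_x \ni x$, $V_y \ni y$ — so $X$ is compact as a closed subspace of a compact space.

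Finally, (4) and (5) are routine once the subbasis is in hand. For continuous $f$ the induced map $f_\ast \colon \KKK(X) \to \KKK(Y)$, $K \mapsto f[K]$, is well defined because continuous images of compacta are compact, and its continuity follows from the identities $f_\ast^{-1}(U^+) = (f^{-1}(U))^+$ and $f_\ast^{-1}(V^-) = (f^{-1}(V))^-$. For the union, which is again a compactum, I would use $\{(K, L) : K \cup L \subseteq U\} = U^+ \times U^+$ and $\{(K, L) : (K \cup L) \cap V \neq \emptyset\} = (V^- \times \KKK(X)) \cup (\KKK(X) \times V^-)$, both open in the product topology, to conclude continuity of $(K, L) \mapsto K \cup L$.
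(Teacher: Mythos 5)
Your proof is correct, but it is worth pointing out that the paper itself gives no proof of this theorem at all: it introduces the Vietoris topology and then declares these five facts to be ``well known'', citing Klein--Thompson \cite{kt} and Michael \cite{mi}. What you have reconstructed is essentially the classical argument from that literature: the reduction of everything to the subbasic sets $U^{+}$ and $V^{-}$, the separation of two distinct compacta by separating a point from a compact set, the singleton embedding $x \mapsto \{x\}$ (which the paper later revisits co-inductively as the map $\eta$ of Proposition~\ref{pn-eta}), and, as the crux, the Alexander Subbase Theorem argument for compactness of $\KKK(X)$, which is exactly Michael's original proof. Your route buys self-containedness at the price of invoking the Alexander Subbase Theorem (hence a choice principle); this is harmless here, since the paper treats Theorem~\ref{thm-inherit} as classical background rather than as material for program extraction. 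Two small points to tidy: in statement~(\ref{thm-inherit-1}) a basic open set may have $n = 0$, i.e.\ be of the form $U^{+}$, in which case you still need to pick a single point of $Q \cap U$ to produce a nonempty finite subset; and in the converse direction of statement~(\ref{thm-inherit-3}) your argument that the non-singletons are Vietoris-open uses Hausdorffness of $X$, so one should note explicitly that this is available because statement~(\ref{thm-inherit-2}) has already been applied to deduce $X$ Hausdorff from $\KKK(X)$ Hausdorff --- which your parenthetical ``Hausdorffness being (2)'' does implicitly.
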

In case of the first statement, the elements of $\QQQ_{\KKK(X)}$ will be called \emph{basic subsets}. For compact Hausdorff spaces $X$ and $Y$, and continuous $\fun{f}{X}{Y}$ define $\fun{\KKK(f)}{\KKK(X)}{\KKK(Y)}$ by $\KKK(f)(K) := f[K]$, for $K \in \KKK(X)$. Then $\KKK$ is an endo-functor on the category of compact Hausdorff spaces and continuous maps.

Next, assume that $(X, D)$ is a compact extended IFS. For $r>0$ and pairwise distinct $d_1, \ldots, d_r \in D$ define $\fun{[d_1, \ldots, d_r]}{\bigtimes_{\kappa=1}^r \KKK(X^{\ar{d_\kappa}})}{\KKK(X)}$ by
\[
[d_1, \ldots, d_r](K_1, \ldots, K_r) = \bigcup_{\kappa = 1}^r d_\kappa[K_\kappa].
\]

Since the $d_\kappa$ are continuous, the sets $d_\kappa[K_\kappa]$ are compact and so is their finite union. As follows from the definition, 
\begin{equation*}
\begin{split}
\range([d_1, \ldots, d_r]) = \{\,K \in \KKK(X) \mid (\exists K_1, \ldots, K_r \in \KKK(X))\, K = \bigcup_{\kappa = 1}^r K_\kappa  \wedge \hspace{2cm}\mbox{}\\
 (\forall 1 \le \kappa \le r)\, K_\kappa \in \KKK(\range(d_\kappa)) \,\}
\end{split}
\end{equation*}

Let $m$ be the cardinality of $D$. If $(d_1, \ldots, d_r) \in D^r$ such that the $d_\kappa$ are pairwise distinct, then $r \le m$. Hence, there are at most $\sum_{\kappa = 1}^m m^\kappa$ such maps. Set
\[
\KKK(D) := \set{[d_1, \ldots, d_r]}{\text{$d_1, \ldots, d_r \in D$ pairwise distinct with $r>0$}}.
\]

Note that according to Definition~\ref{dn-eifs}, in an extended IFS $(X, D)$ the maps $d \in D$ have to be of  type $X^{\ar{d}} \to X$. In case of the maps $[d_{1}, \ldots, d_{n}]$ this is only true if $d_{1}, \ldots, d_{n}$ are unary.

\begin{proposition}\label{pn-comifs}
Let $(X, D)$ be a compact IFS. Then $(\KKK(X), \KKK(D))$ is a compact extended IFS. Moreover, the following statements hold:
\begin{enumerate}

\item\label{pn-comifs-1}
If $(X, D)$ is covering, so is $(\KKK(X), \KKK(D))$.

\item\label{pn-comifs-2}
If $(X, D)$ is well-covering, so is $(\KKK(X), \KKK(D))$.

\end{enumerate}
\end{proposition}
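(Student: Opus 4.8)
The plan is to check, in turn, the defining clauses of a compact extended IFS (Definition~\ref{dn-eifs} together with compactness from Definition~\ref{dn-propifs}) and then the two inheritance claims. That $\KKK(X)$ is a non-empty, Hausdorff and compact space is immediate from Theorem~\ref{thm-inherit}(\ref{thm-inherit-2}) and (\ref{thm-inherit-3}), since $X$ is a non-empty compact Hausdorff space. Finiteness of $\KKK(D)$ and the bound $\sum_{\kappa=1}^{m} m^\kappa$ on its size (with $m$ the cardinality of $D$) were already recorded above, and each $[d_1, \ldots, d_r]$ has arity $r$ with $1 \le r \le m$, hence positive and finite; as $(X,D)$ is an IFS, all digits are unary, so $[d_1, \ldots, d_r]$ is genuinely of type $\KKK(X)^r \to \KKK(X)$. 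For continuity I would factor $[d_1, \ldots, d_r]$ as the composite of the product map $(K_1, \ldots, K_r) \mapsto (d_1[K_1], \ldots, d_r[K_r])$ with the $r$-fold union; the components are continuous by Theorem~\ref{thm-inherit}(\ref{thm-inherit-4}) and the union is continuous by Theorem~\ref{thm-inherit}(\ref{thm-inherit-6}) together with an easy induction on $r$.

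For covering (\ref{pn-comifs-1}), given $K \in \KKK(X)$ I would set $L_d := K \cap \range(d)$ for each $d \in D$. Each $\range(d) = d[X]$ is a continuous image of the compact space $X$, hence compact and (as $X$ is Hausdorff) closed, so $L_d$ is compact; covering of $(X,D)$ gives $K = \bigcup_{d \in D} L_d$. Keeping only the non-empty $L_d$ leaves pairwise distinct digits $d_1, \ldots, d_r$ (with $r \ge 1$, since $K \neq \emptyset$) such that $K = \bigcup_{\kappa=1}^{r} L_{d_\kappa}$. Putting $K_\kappa := d_\kappa^{-1}[L_{d_\kappa}]$ gives a non-empty compact set with $d_\kappa[K_\kappa] = L_{d_\kappa}$ (because $L_{d_\kappa} \subseteq \range(d_\kappa)$), whence $K = [d_1, \ldots, d_r](K_1, \ldots, K_r) \in \range([d_1, \ldots, d_r])$.

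The well-covering claim (\ref{pn-comifs-2}) is the step I expect to be the main obstacle, since it requires exhibiting a genuine Vietoris neighborhood of $K$ lying inside a single $\range([d_1, \ldots, d_r])$, and so uses the explicit description of that range. Writing $V_d := \int(\range(d))$, well-covering of $(X,D)$ says $\{V_d : d \in D\}$ covers $X$. Given $K$, I would take $d_1, \ldots, d_r$ to be exactly those digits with $K \cap V_{d_\kappa} \neq \emptyset$ and consider the basic Vietoris-open set
\[
W := \lgroup \bigcup\nolimits_{\kappa=1}^{r} V_{d_\kappa};\ V_{d_1}, \ldots, V_{d_r} \rgroup.
\]
Then $K \in W$: every point of $K$ lies in some $V_d$, which forces $d \in \{d_1, \ldots, d_r\}$, so $K \subseteq \bigcup_\kappa V_{d_\kappa}$, while $K \cap V_{d_\kappa} \neq \emptyset$ holds by the choice of the $d_\kappa$. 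Conversely, for any $K' \in W$ I would set $K'_\kappa := K' \cap \range(d_\kappa)$, which is compact, and non-empty because $\emptyset \neq K' \cap V_{d_\kappa} \subseteq K' \cap \range(d_\kappa)$; since $K' \subseteq \bigcup_\kappa V_{d_\kappa} \subseteq \bigcup_\kappa \range(d_\kappa)$, one gets $K' = \bigcup_\kappa K'_\kappa$ with each $K'_\kappa \in \KKK(\range(d_\kappa))$, i.e.\ $K' \in \range([d_1, \ldots, d_r])$ by the range description above. Hence $K \in W \subseteq \range([d_1, \ldots, d_r])$, so $K \in \int(\range([d_1, \ldots, d_r]))$ and $(\KKK(X), \KKK(D))$ is well-covering.
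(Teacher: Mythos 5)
Your proof is correct and follows essentially the same route as the paper's: for covering, decompose $K$ along the digits whose ranges meet $K$, and for well-covering, use the basic Vietoris-open set $\lgroup \bigcup_{\kappa} \int(\range(e_\kappa)); \int(\range(e_1)), \ldots, \int(\range(e_r)) \rgroup$ determined by the digits whose range-interiors meet $K$, then show it is contained in $\range([e_1, \ldots, e_r])$. The only cosmetic differences are that you spell out the continuity of the hyperspace maps (which the paper leaves implicit) and work with the pieces $K' \cap \range(d_\kappa)$ directly rather than their preimages $e_\kappa^{-1}[K' \cap \range(e_\kappa)]$, which amounts to the same argument.
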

\begin{proof}
(\ref{pn-comifs-1}) It suffices to show that
\begin{equation}\label{eq-comifs}
\KKK(X) \subseteq \bigcup \set{\range(\bar{d})}{\bar{d} \in \KKK(D)}.
\end{equation}

Let $K \in \KKK(X)$, $E := \set{d \in D}{K \cap \range(d) \not= \emptyset}$, say $E = \{ e_1, \ldots, e_r \}$, and for $d \in E$, $K_d = d^{-1}[K \cap \range(d)]$.  As compact sets, $K$ and  $d[X]$ are closed, whence $K_d$ is closed as well and thus compact.

Since $(X, D)$ is covering, it follows for $x \in K$ that there is some $d \in E$ with $x \in \range(d) \cap K$. Hence, $x \in \bigcup_{d \in E} d[K_d]$.  This shows that $K = [e_1, \ldots, e_r](K_{e_1}, \ldots, K_{e_r})$, which means that $K$ is contained in the right hand side of (\ref{eq-comifs}).

(\ref{pn-comifs-2}) We need to show that for every $K \in \KKK(X)$ there is some $\bar{d} \in \KKK(D)$ and some $O \in \tau_V$ so that $K \in O \subseteq \range(\bar{d})$.
 
 Let $K \in \KKK(X)$ and $E := \set{d \in D}{\int(\range(d)) \cap K \not= \emptyset}$, say $E = \{ e_1, \ldots, e_r \}$. Then, we have that 
 \[
 K \in [\bigcup_{\kappa = 1}^r \int(\range(e_\kappa)); \int(\range(e_1)), \ldots, \int(\range(e_r))].
 \]
It remains to show that 
\[
[\bigcup_{\kappa = 1}^r \int(\range(e_\kappa)); \int(\range(e_1)), \ldots, \int(\range(e_r))] \subseteq \range([e_1, \ldots, e_r]).
\]
 Let to this end $K' \in [\bigcup_{\kappa = 1}^r \int(\range(e_\kappa)); \int(\range(e_1)), \ldots, \int(
 \range(e_r))]$ and for $1 \le \kappa \le r$, set $K'_\kappa := e^{-1}_\kappa[K' \cap \range(e_\kappa)]$. Then $K'_\kappa \not= \emptyset$. Moreover, since $K' \subseteq \bigcup_{\kappa = 1}^r \int(\range(e_\kappa))$, we have that $K' = [e_1, \ldots, e_r](K'_1, \ldots, K'_r)$. Thus, $K' \in  \range([e_1, \ldots, e_r])$.
\end{proof}

Similar to the product case, when dealing with a compact set $K$ via the extended IFS $(\KKK(X), \KKK(D))$, $K$ is dealt with as an abstract object, not as a set of other objects. We will now study how from a tree representing $K$ we can access elements of $K$. Note that we are assuming that all digits $d \in D$ have arity~1. 

\begin{definition}\label{dn-devt}
The relation \emph{$S \in \TTT^\omega_D$ is a derived tree of $T \in \TTT^\omega_{\KKK(D)}$} (short: $S \in \partial T$) is co-inductively defined by
\begin{equation*}
[d; S_1] \in \partial [[d_1, \ldots, d_r]; T_1, \ldots, T_r] \overset{\nu}{=} 
 (\exists 1 \le \kappa \le r)\, d = d_\kappa \wedge S_1 \in \partial T_\kappa.
\end{equation*}
\end{definition}

\begin{lemma}\label{lem-setof}
Let $(X, D)$ be a compact IFS. Then for all $n \in \NN$ and $T \in \TTT^\omega_{\KKK(D)}$,
\[
f_{T^{(n)}}(X^{(\ar{T^{(n)}})}) = \bigcup\set{f_{S^{(n)}}[X^{\ar{S^{(n)}}}]}{S \in \partial T}
\]
\end{lemma}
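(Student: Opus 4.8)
The plan is to prove the identity by induction on $n$, uniformly in $T \in \TTT^\omega_{\KKK(D)}$. Throughout I read the expression $X^{(\ar{T^{(n)}})}$ on the left-hand side as the tuple $(X, \ldots, X)$ of length $\ar{T^{(n)}}$, each entry being the whole space $X$ regarded as an element of $\KKK(X)$; this is meaningful precisely because $X$ is compact, so that $X \in \KKK(X)$. Write $T = [[d_1, \ldots, d_r]; T_1, \ldots, T_r]$, where the $d_\kappa$ are pairwise distinct. Before starting the induction I would record the auxiliary fact that $\partial T \ne \emptyset$ for every $T \in \TTT^\omega_{\KKK(D)}$: a derived tree is produced co-recursively by selecting, say, the branch $d_1$ at the root and continuing with a derived tree of $T_1$. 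In particular each $\partial T_\kappa$ is non-empty, a fact that is needed to keep the right-hand side from collapsing.

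For the base case $n = 0$ I would compute the left-hand side directly. Since $f_{T^{(0)}} = [d_1, \ldots, d_r]$ has arity $r$, the definition of the hyperspace digit map gives $f_{T^{(0)}}(X^{(r)}) = [d_1, \ldots, d_r](X, \ldots, X) = \bigcup_{\kappa = 1}^r d_\kappa[X]$. On the right, any $S = [d; S_1] \in \partial T$ has $d = d_\kappa$ for a unique $\kappa$ (the $d_\kappa$ being distinct), whence $f_{S^{(0)}}[X^{\ar{S^{(0)}}}] = d_\kappa[X]$; and by the auxiliary fact every $d_\kappa$ does arise as the root of some $S \in \partial T$. Thus the right-hand side is again $\bigcup_{\kappa = 1}^r d_\kappa[X]$.

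For the inductive step I would unfold the tree map. From $T^{(n+1)} = [[d_1, \ldots, d_r]; T_1^{(n)}, \ldots, T_r^{(n)}]$ we get $f_{T^{(n+1)}} = [d_1, \ldots, d_r] \circ (f_{T_1^{(n)}} \times \cdots \times f_{T_r^{(n)}})$, so feeding in copies of $X$ and applying the definition of $[d_1, \ldots, d_r]$ as a union yields
\[
f_{T^{(n+1)}}(X^{(\ar{T^{(n+1)}})}) = \bigcup_{\kappa = 1}^r d_\kappa\big[f_{T_\kappa^{(n)}}(X^{(\ar{T_\kappa^{(n)}})})\big].
\]
I would then invoke the induction hypothesis on each subtree $T_\kappa$ at level $n$ and push $d_\kappa$ through the resulting (possibly infinite) union, which is legitimate because images commute with arbitrary unions:
\[
f_{T^{(n+1)}}(X^{(\ar{T^{(n+1)}})}) = \bigcup_{\kappa = 1}^r \bigcup_{R \in \partial T_\kappa} d_\kappa\big[f_{R^{(n)}}[X^{\ar{R^{(n)}}}]\big].
\]
Finally, the co-inductive definition of $\partial$ states exactly that $S \in \partial T$ iff $S = [d_\kappa; R]$ for some $\kappa$ and some $R \in \partial T_\kappa$; for such $S$ one has $S^{(n+1)} = [d_\kappa; R^{(n)}]$, and since $d_\kappa$ is unary, $\ar{S^{(n+1)}} = \ar{R^{(n)}}$ and $f_{S^{(n+1)}} = d_\kappa \circ f_{R^{(n)}}$, so that $d_\kappa[f_{R^{(n)}}[X^{\ar{R^{(n)}}}]] = f_{S^{(n+1)}}[X^{\ar{S^{(n+1)}}}]$. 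Re-indexing the double union over $S \in \partial T$ gives the identity at level $n+1$.

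The argument is essentially bookkeeping, so I do not expect a deep obstacle. The points demanding care are: (i) the correct reading of $X^{(\ar{T^{(n)}})}$ as a tuple of copies of the whole (compact) space, so that the left-hand side is a well-defined element of $\KKK(X)$; (ii) the non-emptiness of each $\partial T_\kappa$, which is what makes the base case cover all root digits and, via the induction hypothesis, prevents the right-hand side from being empty; and (iii) aligning the arities when a unary $d_\kappa$ is carried across the initial-segment operation, so that the reindexing dictated by the co-inductive clause for $\partial$ is an exact bijection between $\partial T$ and $\bigsqcup_{\kappa=1}^r \partial T_\kappa$ --- here the pairwise distinctness of the $d_\kappa$ ensures that $\kappa$ is recovered from the root of $S$.
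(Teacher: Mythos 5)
Your proof is correct and follows essentially the same route as the paper's: induction on $n$, a direct computation of both sides in the base case, and in the inductive step the unfolding $f_{T^{(n+1)}} = [d_1, \ldots, d_r] \circ (f_{T_1^{(n)}} \times \cdots \times f_{T_r^{(n)}})$, the induction hypothesis, commutation of images with unions, and reindexing of the double union via the co-inductive clause for $\partial$. Your explicit remark that each $\partial T_\kappa$ is non-empty (needed so that every root digit $d_\kappa$ actually occurs on the right-hand side) makes precise a point the paper leaves implicit in its base case, but it is a refinement of the same argument rather than a different approach.
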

\begin{proof}
We use induction on $n$. Assume that $T = [[d_1, \ldots, d_r]; T_1, \ldots, T_r]$. For $n = 0$ we have that $T^{(0)} = [d_1, \ldots, d_r]$ and $\set{S^{(0)}}{S \in \partial T} = \{ d_1, \ldots, d_r \}$. So the statement holds in this case. For $n > 0$ we obtain with the induction hypothesis that
\begin{align*}
f_{T^{(n)}}(X^{(\ar{T^{(n)}})})
&= \bigcup_{\kappa = 1}^r d_\kappa[f_{T^{(n-1)}_{\kappa}}(X^{(\ar{T^{(n-1)}_\kappa})})]\\
&=  \bigcup_{\kappa = 1}^r d_\kappa[\bigcup_{S_\kappa \in \partial T_\kappa} f_{S^{(n-1)}_\kappa}[X^{\ar{S^{(n-1)}_\kappa}}]] \\
&= \bigcup_{\kappa = 1}^r \bigcup_{S_\kappa \in \partial T_\kappa} d_\kappa[f_{S^{(n-1)}_\kappa}[X^{\ar{S^{(n-1)}_\kappa}}]] \\
&= \bigcup_{S \in \partial T} f_{S^{(n)}}[X^{\ar{S^{(n)}}}] \qedhere
\end{align*}
\end{proof}

\begin{theorem}\label{thm-set}
Let $(X, D)$ be a topological digit space with unary digits only, and $(\KKK(X), \KKK(D))$ be weakly hyperbolic. Then, for every $T \in \TTT^\omega_\KKK(D)$,
\[
\val{T} = \set{\val{S}}{S \in \partial T}
\]
\end{theorem}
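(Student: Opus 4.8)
The plan is to pin down the set $\set{\val S}{S\in\partial T}$ using the uniqueness built into Proposition~\ref{pn-valT}. By Proposition~\ref{pn-comifs} and Theorem~\ref{thm-inherit}(\ref{thm-inherit-3}), $(\KKK(X),\KKK(D))$ is a compact, covering extended IFS, and it is weakly hyperbolic by hypothesis; hence it is a topological digit space and $\val T$ is the unique element of $\bigcap_{n}\range(f_{T^{(n)}})$, the ranges now being taken in $\KKK(X)$. Writing $K_*(T):=\set{\val S}{S\in\partial T}$, I would prove the theorem by showing that $K_*(T)$ is a compact subset of $X$ satisfying $K_*(T)\in\range(f_{T^{(n)}})$ for every $n$; then $K_*(T)\in\bigcap_n\range(f_{T^{(n)}})=\{\val T\}$ forces $K_*(T)=\val T$.

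The first task is compactness of $K_*(T)$. The key point is that $\partial T$ is a closed, hence compact, subset of $\TTT^\omega_D$ (Theorem~\ref{thm-treecomp}). To establish this I would unfold Definition~\ref{dn-devt} to a finite-level characterisation: $S\in\partial T$ if and only if for every $n$ the truncation $S^{(n)}$ arises from $T^{(n)}$ by a valid choice of one branch $d_\kappa$ at each node. The forward direction is an induction on $n$; the converse is an application of the co-induction principle to the relation $\set{(S,T)}{(\forall n)\,S^{(n)}\text{ is so derivable from }T^{(n)}}$, which one checks to be a post-fixed point of the operator defining $\partial$. Each level-$n$ condition depends only on $S^{(n)}$ and thus cuts out a clopen set, so $\partial T$ is an intersection of clopen sets and closed. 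Since the coding map $\val{\cdot}$ is continuous (Theorem~\ref{thm-valcont}(\ref{thm-valcont-1})) and $\partial T$ is compact, the set $K_*(T)=\set{\val S}{S\in\partial T}$ is the continuous image of $\partial T$ and hence compact.

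The second task exploits the recursive structure. With $T=[[d_1,\ldots,d_r];T_1,\ldots,T_r]$, the co-closure for $\partial$ gives $\partial T=\bigcup_{\kappa=1}^r\set{[d_\kappa;S_1]}{S_1\in\partial T_\kappa}$, and since the $d_\kappa$ are unary, Corollary~\ref{cor-tree} (applied in $(X,D)$) yields $\val{[d_\kappa;S_1]}=d_\kappa(\val{S_1})$. Hence $K_*(T)=\bigcup_\kappa d_\kappa[K_*(T_\kappa)]=[d_1,\ldots,d_r](K_*(T_1),\ldots,K_*(T_r))$. Combining this with the compositional identity $f_{T^{(n+1)}}=[d_1,\ldots,d_r]\circ(f_{T_1^{(n)}}\times\cdots\times f_{T_r^{(n)}})$ and the base case $K_*(T)=f_{T^{(0)}}(K_*(T_1),\ldots,K_*(T_r))$, a routine induction on $n$ shows $K_*(T)\in\range(f_{T^{(n)}})$ for all $n$, completing the argument.

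The main obstacle is the compactness of $K_*(T)$, and specifically the reduction of the co-inductive membership $S\in\partial T$ to the conjunction of its finite-level approximations; matching the co-inductive relation $\partial$ against the clopen structure of $\TTT^\omega_D$ is the delicate bookkeeping step. An alternative that avoids compactness of $\partial T$ would prove the two inclusions directly: for ``$\subseteq$'', use $\val S\in\range(f_{S^{(n)}})\subseteq f_{T^{(n)}}(X^{(\ar{T^{(n)}})})$ (Lemma~\ref{lem-setof}), the convergence $f_{T^{(n)}}(X^{(\ar{T^{(n)}})})\to\val T$ given by Corollary~\ref{cor-arconv} applied in $\KKK(X)$ with the top point $X$, and the closedness of $\val T$; for ``$\supseteq$'', co-recursively construct, from a point $z\in\val T$ and the decomposition $\val T=\bigcup_\kappa d_\kappa[\val{T_\kappa}]$, a tree $S\in\partial T$ with $\val S=z$. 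The uniqueness route above is, however, more economical and stays closer to the co-inductive methodology of the paper.
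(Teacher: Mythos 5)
Your proposal is correct in outline, but it takes a genuinely different route from the paper. The paper proves the two inclusions directly: for $\subseteq$ it argues topologically, combining Lemma~\ref{lem-setof} with the convergence $f_{T^{(n)}}(X^{(\ar{T^{(n)}})}) \to \val{T}$ in the Vietoris topology (Corollary~\ref{cor-arconv} applied in the hyperspace) and the regularity of compact Hausdorff spaces to conclude $\val{S}\in\val{T}$; for $\supseteq$ it fixes $x\in\val{T}$, uses Lemma~\ref{lem-setof} together with K\"onig's Lemma to extract an infinite tree $\hat S$ with $\val{\hat S}=x$, and then verifies $\hat S\in\partial T$ by a co-induction on exactly the finite-level relation you introduce. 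Your route instead exhibits $K_*(T)$ as an element of $\bigcap_n\range(f_{T^{(n)}})$ in the hyperspace and lets weak hyperbolicity, via Proposition~\ref{pn-valT}, do the identification, trading the K\"onig and Vietoris-limit arguments for compactness of $\partial T$, continuity of the coding map, and the recursion identity $K_*(T)=[d_1,\ldots,d_r](K_*(T_1),\ldots,K_*(T_r))$. Note that the ``delicate bookkeeping'' you isolate --- the equivalence of $S\in\partial T$ with its finite-level approximability --- is precisely the co-induction the paper also performs (its relation $R$), and in both arguments the step goes through only because the digits $d_1,\ldots,d_r$ occurring in an element of $\KKK(D)$ are pairwise distinct, so that the branch index $\kappa$ with $d=d_\kappa$ is unique; you should make this explicit when checking that your relation is a post-fixed point. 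What your approach buys is economy: it stays inside the IFS/co-algebraic formalism (Theorems~\ref{thm-treecomp} and \ref{thm-valcont}) and avoids both K\"onig's Lemma and the separate regularity argument. What the paper's route buys is pointwise information: it produces, for each $x\in\val{T}$, a concrete derived tree denoting it, rather than identifying the two compact sets wholesale.

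One small patch is needed in your argument: for $K_*(T)$ and the $K_*(T_\kappa)$ to be elements of $\KKK(X)$ --- and hence legitimate witnesses for membership in $\range(f_{T^{(n)}})$, whose members are \emph{non-empty} compact sets --- you must check that $\partial T\neq\emptyset$. This is routine (co-recursively select, say, the first branch at every node and verify by co-induction that the resulting tree lies in $\partial T$), but without it the base case of your induction on $n$ is not justified.
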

\begin{proof}
Let $S \in \partial T$. Then it follows with the preceding lemma that for every $n \in \NN$, 
\[
\val{S} \in f_{S^{(n)}}[X^{\ar{S^{(n)}}}] \subseteq f_{T^{(n)}}(X^{(\ar{T^{(n)}})}).
\]
 Moreover, by Corollary~\ref{cor-arconv} we have that $\val{T} = \lim_{n \to \infty} f_{T^{(n)}}(X^{(\ar{T^{(n)}})})$. Now, let $U$ be an open neighbourhood of $\val{T}$ in $X$. Then $\lgroup U; X \rgroup$ is an open neighbourhood of  $\val{T}$ in $\KKK(X)$. Hence, there is some $n \in \NN$ so that $f_{T^{(m)}}(X^{(\ar{T^{(m)}})}) \subseteq U$, for all $m \ge n$, which implies that $\val{S} \in U$. Thus,
 \[
 \val{S} \in \bigcap \set{U \in \tau}{\val{T} \subseteq U}.
 \]
 
Obviously, $\val{T} \subseteq \bigcap \set{U \in \tau}{\val{T} \subseteq U}$. To show the converse inclusion, assume that there is some $x \in \bigcap \set{U \in \tau}{\val{T} \subseteq U} \setminus \val{T}$. Since compact Hausdorff spaces are regular~\cite{wil}, there exist  disjoint open sets $V, W \in \tau$ with $x \in V$ and $\val{T} \subseteq W$. So, $x \notin W$ and thus $x \notin  \bigcap \set{U \in \tau}{\val{T} \subseteq U}$, a contradiction. It follows that $\val{S} \in \val{T}$. 

Note that for each $n \in \NN$, $f_{T^{(n)}}(X^{(\ar{T^{(n)})}}) \in f_{T^{(n)}}[\KKK(X)^{\ar{T^{(n)}}}]$ and therefore
\begin{align*}
f_{T^{(n)}}(X^{(\ar{T^{(n)}})}) 
&\subseteq \bigcup f_{T^{(n)}}[\KKK(X)^{\ar{T^{(n)}}}] \\
&= \bigcup\set{f_{T^{(n)}}(\vec{K})}{\vec{K} \in \KKK(X)^{\ar{T^{(n)}}}} \subseteq f_{T^{(n)}}(X^{(\ar{T^{(n)}})}).
\end{align*}

 Now, conversely, let $x \in \val{T}$. According to the definition, $\{ \val{T} \} \in f_{T^{(n)}}[\KKK(X)^{\ar{T^{(n)}}}]$, for all $n \in \NN$, that is, $\val{T} \in f_{T^{(n)}}(X^{(\ar{T^{(n)}})})$, by what we have just seen. Thus, $x \in f_{S^{(n)}}[X^{\ar{S^{(n)}}}]$, for some $S \in \partial T$. Observe that for each such tree $S$ the sequence $(f_{S^{(\kappa)}}[X^{\ar{S^{(\kappa)}}}])_{\kappa \in \NN}$ is decreasing. 
 
It follows that for every $n \in \NN$ there is a finite $D$-tree $N$ of height $n$ with $x \in f_N[X^{\ar{N}}]$ and $N = S^{(n)}$, for some $S \in \partial T$. With respect to the prefix relation the set of these finite $D$-trees $N$ is a finitely branching infinite tree. By K\"onig's Lemma it must have an infinite path, which in our case is a sequence $(N_n)_{n \in \NN}$ of initial subtrees of derived trees of $T$ such that $N_n \prec N_{n+1}$. In other words, there is some $\hat{S} \in \TTT^{\omega}_{D}$ with $N_n = \hat{S}^{(n)}$, for all $n$. Hence, $x \in \bigcap_{n \in \NN} f_{\hat{S}^{(n)}}[X^{\ar{\hat{S}^{(n)}}}]$, that is $x = \val{\hat{S}}$.

It remains to verify that $\hat{S} \in \partial T$. Set
\[
R := \set{(N, M) \in \TTT^{\omega}_{D} \times T^{\omega}_{\KKK(D)}}{(\forall n \in \NN) (\exists L_{n} \in \partial M)\, N^{(n)} = L_{n}^{(n)}}.
\]
Then $(\hat{S}, T) \in R$. We use co-induction to show that, if $(N, M) \in R$ then $N \in \partial M$. 

Define $\fun{\Theta}{\PPP(T^{\omega}_{D} \times T^{\omega}_{\KKK(D)})}{\PPP(T^{\omega}_{D} \times T^{\omega}_{\KKK(D)})}$ by
\[
\Theta(W) := \set{([d; N'], [[d_{1}, \ldots, d_{r}]; M_{1}, \ldots, M_{r}])}{(\exists 1 \le \kappa \le r)\, d = d_{
\kappa} \wedge (N', T_{\kappa}) \in W}.
\]
Then $N \in \partial M$, exactly if $(N, M) \in \nu \Theta$.
We need to prove that $R \subseteq \Theta(R)$.

Let $(N, M) \in R$ with $N = [d; N']$ and $M = [[d_{1}, \ldots, d_{r}]; T_{1}, \ldots, T_{r}]$. Then there are $L_{n} \in \partial M$ with $N^{(n)} = L^{(n)}_{n}$, for all $n \in \NN$. It follows that for all $n$, $\rt{L_{n}} = \rt{N} = d$. Since $L_{n} \in \partial M$ and the $d_{1}, \ldots, d_{r}$ are pairwise distinct, there is exactly one $\kappa \in \{ 1, \ldots, r \}$ with $d = d_{\kappa}$ and $\sbt{L_{n}} \in \partial M_{\kappa}$, for all $n$. Let $L'_{n} = \sbt{L_{n+1}}$. Then $L'_{n} \in \partial M_{\kappa}$ and 
\[
L'^{(n)}_{n} = \sbt{L_{n+1}}^{(n)} = N'^{(n)},
\]
where the last equation holds as $L^{(n+1)}_{n+1} = N^{(n+1)}$. This shows that $(N', M_{\kappa}) \in R$. Hence, $(N, M) \in \Theta(R)$.
\end{proof}

Besides covering and well-covering, there are further central properties of which it will be important to know if they are also inherited from $(X, D)$ to $(\KKK(X), \KKK(D))$. Unfortunately, in case of weak hyperbolicity, this has to remain an open problem in the general case. 

If, however, $X$ is a metric space and all digit maps are contracting, then $\KKK(X)$ is also a metric space with respect to the \emph{Hausdorff metric}, $\hdm$, given by
\[
\hdm(K, K') := \inf \set{\varepsilon \ge 0}{K \subseteq \ball{\rho}{K'}{\varepsilon} \wedge K' \subseteq \ball{\rho}{K}{\varepsilon}}
\]
with $\ball{\rho}{K}{\varepsilon} := \set{x \in X}{(\exists y \in K)\,\rho(x, y) < \varepsilon}$, and the maps in $\KKK(D)$ are contracting. The contraction factor even remains the same. The topology induced by the Hausdorff metric is equivalent to the Vietoris topology. So, in this case, which is also the prevalent case in applications, both IFS are weakly hyperbolic. As we know from Theorem~\ref{thm-valcont}, the assumption that $X$ comes equipped with a metric from which the topology on $X$ is induced, is not a restriction.

Finally in this section, let us study how computability properties are inherited to the hyperspace. 

\begin{lemma}\label{lem-comphyp}
Let $(X,Q)$ be a computable metric space. Then $(\KKK(X), \QQQ_{\KKK(X)})$ is computable as well.
\end{lemma}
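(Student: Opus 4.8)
The plan is to realise $\KKK(X)$ as the metric space $(\KKK(X), \hdm)$ equipped with the Hausdorff metric induced by $\rho$, and to take as countable dense subspace the family $\QQQ_{\KKK(X)}$ of non-empty finite subsets of $Q$. Its density in the Vietoris topology and its countability are provided by Theorem~\ref{thm-inherit}(\ref{thm-inherit-1}); since on a metric space the Vietoris topology coincides with the topology induced by $\hdm$, the set $\QQQ_{\KKK(X)}$ is dense in $(\KKK(X), \hdm)$ as well. It then remains to verify the two effective-enumerability conditions of Definition~\ref{dn-compmet} for the map $\lambda(A, B).\,\hdm(A, B)$ restricted to $\QQQ_{\KKK(X)} \times \QQQ_{\KKK(X)}$.

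The key step is an explicit closed form for the Hausdorff distance between two basic subsets. For $A = \{ a_1, \ldots, a_p \}$ and $B = \{ b_1, \ldots, b_q \}$ with all $a_i, b_j \in Q$, I would unwind the infimum in the definition of $\hdm$: using the open-ball convention $\ball{\rho}{B}{\varepsilon} = \set{x \in X}{(\exists y \in B)\, \rho(x, y) < \varepsilon}$, the inclusion $A \subseteq \ball{\rho}{B}{\varepsilon}$ holds exactly when $\varepsilon > \max_i \min_j \rho(a_i, b_j)$, and symmetrically for $B \subseteq \ball{\rho}{A}{\varepsilon}$. Taking the infimum over the $\varepsilon$ for which both inclusions hold yields
\[
\hdm(A, B) = \max\Bigl\{ \max_i \min_j \rho(a_i, b_j),\ \max_j \min_i \rho(a_i, b_j) \Bigr\}.
\]
Thus $\hdm(A, B)$ is a finite maximum of minima of the finitely many reals $\rho(a_i, b_j)$, each of which is computable by the hypothesis on $(X, \rho, Q)$.

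Finally I would propagate effective enumerability through this finite expression. Since a finite maximum is $< r$ precisely when every entry is, and $> r$ precisely when some entry is (and dually for minima), the formula above gives
\[
\hdm(A, B) < r \ \Longleftrightarrow\ \Bigl(\bigwedge_i \bigvee_j \rho(a_i, b_j) < r\Bigr) \wedge \Bigl(\bigwedge_j \bigvee_i \rho(a_i, b_j) < r\Bigr)
\]
and
\[
\hdm(A, B) > r \ \Longleftrightarrow\ \Bigl(\bigvee_i \bigwedge_j \rho(a_i, b_j) > r\Bigr) \vee \Bigl(\bigvee_j \bigwedge_i \rho(a_i, b_j) > r\Bigr).
\]
The atomic predicates on the right-hand sides belong to the two sets that are effectively enumerable by the computability of $(X, \rho, Q)$, and finite conjunctions and disjunctions of effectively enumerable predicates are again effectively enumerable (reading off the elements of the finite objects $A, B$ is effective by the standing convention on finite data). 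Hence both sets $\set{(A, B, r)}{\hdm(A, B) < r}$ and $\set{(A, B, r)}{\hdm(A, B) > r}$ are effectively enumerable, so $(\KKK(X), \hdm, \QQQ_{\KKK(X)})$ is computable.

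I expect the only genuinely delicate point to be the derivation of the closed-form expression for $\hdm$, and specifically checking that the strict-inequality open-ball convention used in the paper's definition of $\hdm$ still yields the clean value $\max\{\cdots\}$ rather than perturbing it; the subsequent computability argument is a routine transfer of effective enumerability across the finitely many bounded quantifiers.
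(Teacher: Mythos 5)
Your proposal is correct and follows essentially the same route as the paper's proof: the Hausdorff distance between two basic subsets reduces to the finitely many comparisons of the numbers $\rho(a_i,b_j)$ with $r$, and the resulting finite conjunctions and disjunctions of semi-decidable conditions are semi-decided by running the given procedures for $(X,\rho,Q)$ in parallel. If anything, your write-up is more careful than the paper's: the procedure $P_K$ in the paper, as literally stated, only verifies the one-sided condition that every $u_i$ lies within $r$ of some $v_j$ and omits the symmetric inclusion $V \subseteq \ball{\rho}{U}{r}$ (which your max--min formula makes explicit), and the $>r$ case is dismissed there as ``analogous'' while you give the explicit dual characterisation.
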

\begin{proof}
Since $(X, Q)$ is computable, there is a program $P$ which on input $(u, v, r) \in Q \times Q \times \QQ$ terminates its computation after finitely many steps and outputs ACCEPT, if $\rho(u, v) < r$. Otherwise, the computation does not terminate. Let $P_K$ be the following program:

Given $(U, V, r) \in \QQQ_{\KKK(X)} \times \QQQ_{\KKK(X)} \times \QQ$, say with $U = \{ u_1, \ldots, u_m \}$ and $V = \{ v_1, \ldots v_n \}$, proceed as follows
\begin{enumerate}
\item[(0)] Set $i:= 1$ and go to (1).

\item[(1)] Simultaneously start the computation of $P$ on inputs $(u_i, v_1, r)$, \ldots, $(u_i, v_n, r)$. If one of these computations halts after finitely many steps with ACCEPT, halt. Then if $i < m$, increase $i$ by $1$ and go to (1); otherwise output ACCEPT.

\end{enumerate}

It follows that  if $\hdm(U, V) < r$, then $P_K(U, V, r)$ halts after finitely many steps with output ACCEPT. Otherwise the computation does not terminate. The second requirement in Definition~\ref{dn-compmet} can be dealt with analogously.
\end{proof}

\begin{proposition}\label{pn-comphypdig}
Let $(X, D, Q)$ be a computable digit space with only unary digits. Then $(\KKK(X), \KKK(D), \QQQ_{\KKK(X)})$ is a computable digit space as well. 
\end{proposition}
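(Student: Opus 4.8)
The plan is to verify the three clauses of Definition~\ref{dn-compdig} for $(\KKK(X), \KKK(D), \QQQ_{\KKK(X)})$: that $(\KKK(X), \KKK(D))$ is a digit space, that $(\KKK(X), \hdm, \QQQ_{\KKK(X)})$ is a computable metric space, and that every member of $\KKK(D)$ is a computable digit. The first two clauses I would obtain by assembling facts already established; the genuine work is concentrated in the third.

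For the digit-space structure, Proposition~\ref{pn-comifs} already gives that $(\KKK(X), \KKK(D))$ is a compact covering extended IFS, and Theorem~\ref{thm-inherit}(\ref{thm-inherit-3}) secures compactness of $\KKK(X)$ from that of $X$. Because $X$ is a metric space with contracting unary digits, the discussion following Theorem~\ref{thm-set} equips $\KKK(X)$ with the Hausdorff metric $\hdm$, shows that each map $[d_1, \ldots, d_r]$ is contracting with the same factor $q < 1$, and identifies the Hausdorff-metric topology with the Vietoris topology; hence $(\KKK(X), \KKK(D))$ is a digit space. For the computable metric space, Theorem~\ref{thm-inherit}(\ref{thm-inherit-1}) supplies the countable dense set $\QQQ_{\KKK(X)}$ of basic subsets, and Lemma~\ref{lem-comphyp} shows that $(\KKK(X), \QQQ_{\KKK(X)})$ is computable.

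The heart of the argument is the computability of an arbitrary digit $[d_1, \ldots, d_r] \in \KKK(D)$. Since all digits of $D$ are unary, its domain is $\KKK(X)^r$ carrying the maximum Hausdorff metric, and its arity is $r$. A computable modulus of continuity is immediate: the map contracts by $q < 1$, so $\zeta(\varepsilon) := \varepsilon$ witnesses uniform continuity, as $\hdm(\vec K, \vec K') < \varepsilon$ forces $\hdm([d_1, \ldots, d_r](\vec K), [d_1, \ldots, d_r](\vec K')) \le q \cdot \hdm(\vec K, \vec K') < \varepsilon$. For the approximation procedure, given basic subsets $U_1, \ldots, U_r \in \QQQ_{\KKK(X)}$, say $U_\kappa = \{ u_{\kappa, 1}, \ldots, u_{\kappa, m_\kappa} \}$, and a precision $n \in \NN$, I would apply the approximation procedure $G_{d_\kappa}$ of the computable digit $d_\kappa$ to each point $u_{\kappa, j}$ at precision $2^{-n}$, obtaining $v_{\kappa, j} \in Q$ with $\rho(d_\kappa(u_{\kappa, j}), v_{\kappa, j}) < 2^{-n}$, and output the basic subset $V := \set{v_{\kappa, j}}{1 \le \kappa \le r,\ 1 \le j \le m_\kappa}$. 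It then remains to check $\hdm([d_1, \ldots, d_r](U_1, \ldots, U_r), V) < 2^{-n}$: by definition $[d_1, \ldots, d_r](U_1, \ldots, U_r) = \bigcup_{\kappa = 1}^r d_\kappa[U_\kappa]$ is the finite set $\set{d_\kappa(u_{\kappa, j})}{\kappa, j}$, and $V$ arises from it by replacing each point $d_\kappa(u_{\kappa, j})$ with the nearby $v_{\kappa, j}$. Two finite sets admitting such a point-wise pairing with all pair distances below $2^{-n}$ have Hausdorff distance below $2^{-n}$, so the bound follows and all three clauses are verified.

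I expect the only real obstacle to be the bookkeeping in this last estimate, in particular keeping track that the two sets being compared are matched point for point so that the maximum pairwise distance controls $\hdm$, and checking that the computability of $\hdm$ on $\QQQ_{\KKK(X)}$ furnished by Lemma~\ref{lem-comphyp} interacts correctly with the maximum Hausdorff metric on $\KKK(X)^r$. Everything else is an assembly of the already-established inheritance results.
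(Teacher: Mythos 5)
Your proposal is correct and takes essentially the same route as the paper: the paper's proof also concentrates on the computability of each $[d_1, \ldots, d_r] \in \KKK(D)$ and uses exactly your approximation procedure (apply $G_{d_\kappa}$ to every point of $U_\kappa$ at precision $2^{-n}$ and collect the outputs), with the same point-wise pairing estimate for the Hausdorff distance. The only divergence is the modulus of continuity, where the paper derives one from the computable moduli $k_\kappa$ of the underlying digits via $k(\theta) := \min_\kappa k_\kappa(\theta)$, whereas you take the identity modulus from the preserved contraction factor $q < 1$; both are valid, yours being a slight shortcut.
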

\begin{proof}
Let $[d_1, \ldots, d_r] \in \KKK(D)$. As we have seen, the domain of this map is the space $\KKK(X)^r$, which has $r$-tuples 
$(U_1, \ldots, U_r)$ as basic elements with $U_\kappa$ a basic subset of $\KKK(X)$.

Every digit $d_\kappa$ ($1 \le \kappa \le r$) has a computable modulus of continuity $k_\kappa$. For $\theta \in \QQ_+$, let $k(\theta) := \min\set{k_\kappa(\theta)}{1 \le \kappa \le r}$. Moreover, let $\vec{U}, \vec{V}$ be basic elements of the domain, say $\vec{U} =  (U_1, \ldots, U_r)$ and similarly for $\vec V$, such that $\hdm(U_\kappa, V_\kappa) < k(\theta)$. Then we have for each $\kappa$ and every $u \in U_\kappa$ that there is some $v \in V_\kappa$ with $\rho(u, v) < k(\theta)$ and hence $\rho(d_\kappa(u), d_\kappa(v)) < \theta$. Conversely, there is some $u \in U_\kappa$ with $\rho(u, v) < k(\theta)$, for each $v \in V_\kappa$. Again it follows that $\rho(d_\kappa(u), d_\kappa(v)) < \theta$. This shows that $\hdm([d_1, \ldots, d_r](\vec{U}), [d_1, \ldots, d_r](\vec{V})) < \theta$.

Next, we have to show that there is a procedure $G_{\bar{d}}$ with $\bar{d} = [d_1, \ldots, d_r]$,  which given a basic element $ \vec U$ of the domain of $\bar{d}$ and $n \in \NN$ computes some $V \in \KKK(Q)$ with $\hdm([d_1, \ldots, d_r](\vec{U}), V) < 2^{-n}$. By assumption, for every $1 \le \kappa \le r$,  there are procedures $G_{d_\kappa}$ that given $u \in Q$ and $n \in \NN$ compute a basic element $v \in Q$ so that $\rho(d_\kappa(u), v) < 2^{-n}$. $G_{\bar{d}}$ proceeds as follows:

Given $n \in \NN$ and $\vec{U} = (U^{(1)}, \ldots, U^{(r)})$ with $U^{(\kappa)}$ a basic subset of $\KKK(X)$, say $U^{(\kappa)} = \{ u^{(\kappa)}_1, \ldots, u^{(\kappa)}_{m_\kappa} \}$, do:
For each $1 \le \kappa \le r$, and every $1 \le \iota \le m_\kappa$, use $G_{d_\kappa}$ to compute $v^{(\kappa)}_\iota \in Q$ such that $\rho(d_\kappa(u^{(\kappa)}_\iota), v^{(\kappa)}_\iota) < 2^{-n}$. Then output the set $V$ of all $v^{(\kappa)}_\iota$ with $1 \le \iota \le m_\kappa$ and $1 \le \kappa \le r$.

As is readily verified, $\hdm([d_1, \ldots, d_r](\vec{U}), V) < 2^{-n}$.
\end{proof}

Inheritability  of decidability is an immediate consequence of the subsequent technical lemma.

\begin{lemma}\label{lem-epsballk}
Let $(X, D)$ be a digit space. Then for any finite subset $U \subseteq X$, $[d_1, \ldots, d_r] \in \KKK(D)$, and $\theta \in \QQ_+$, $\ball{\hdm}{U}{\theta} \subseteq \range([d_1, \ldots, d_r])$ if, and only if, 
\begin{enumerate}
\item\label{lem-epsballk-1}
$(\forall u \in U) (\exists 1 \le \kappa \le r)\, \ball{\rho}{u}{\theta} \subseteq \range(d_\kappa)$ and

\item\label{lem-epsballk-2}
$(\forall 1 \le \kappa \le r) (\exists u \in U)\, \ball{\rho}{u}{\theta} \subseteq \range(d_\kappa)$.

\end{enumerate}
\end{lemma}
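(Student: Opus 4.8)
The plan is to reduce the entire statement to an elementary fact about the finitely many closed sets $R_\kappa := \range(d_\kappa)$, so that the digit maps themselves drop out of the argument. First I would record the key characterisation of the range of the hyperspace digit $[d_1,\ldots,d_r]$: since each $d_\kappa$ is unary and continuous and $X$ is compact Hausdorff, each $R_\kappa = d_\kappa[X]$ is compact, hence closed, and for every nonempty compact $L \subseteq R_\kappa$ the preimage $d_\kappa^{-1}[L]$ is nonempty and compact with $d_\kappa[d_\kappa^{-1}[L]] = L$. Combined with the description of $\range([d_1,\ldots,d_r])$ already given in this section, this yields
\[
K \in \range([d_1,\ldots,d_r]) \iff K \subseteq \bigcup\nolimits_{\kappa=1}^r R_\kappa \ \text{and}\ (\forall 1 \le \kappa \le r)\, K \cap R_\kappa \neq \emptyset .
\]
Indeed, given such a $K$ one takes $L_\kappa := K \cap R_\kappa$; conversely any $K = \bigcup_\kappa d_\kappa[K_\kappa]$ meets each $R_\kappa$ and lies in their union. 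Every subsequent step is phrased in terms of this criterion.

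For the \emph{if} direction I would fix $K$ with $\hdm(U,K) < \theta$ and choose $\varepsilon < \theta$ with $U \subseteq \ball{\rho}{K}{\varepsilon}$ and $K \subseteq \ball{\rho}{U}{\varepsilon}$. To see $K \subseteq \bigcup_\kappa R_\kappa$: any $x \in K$ satisfies $\rho(x,u) < \theta$ for some $u \in U$, so $x \in \ball{\rho}{u}{\theta}$, and (1) places this ball, hence $x$, inside a single $R_\kappa$. To see $K \cap R_\kappa \neq \emptyset$ for each $\kappa$: by (2) there is $u \in U$ with $\ball{\rho}{u}{\theta} \subseteq R_\kappa$, and $u$ lies within $\theta$ of some $x \in K$, whence $x \in \ball{\rho}{u}{\theta} \subseteq R_\kappa$. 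By the criterion, $K \in \range([d_1,\ldots,d_r])$.

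For the \emph{only if} direction I would argue contrapositively, producing from a failure of (1) or (2) a set $K \in \ball{\hdm}{U}{\theta}$ outside the range. Condition (2) is the routine half: if no $u \in U$ has $\ball{\rho}{u}{\theta} \subseteq R_{\kappa_0}$, pick for each $u \in U$ a point $y_u \in \ball{\rho}{u}{\theta} \setminus R_{\kappa_0}$ and set $K := \set{y_u}{u \in U}$; then $\hdm(U,K) < \theta$ while $K \cap R_{\kappa_0} = \emptyset$, so $K$ is not in the range, contradicting the hypothesis.

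The main obstacle is condition (1). The natural construction yields only the weaker \emph{union} containment: taking $K := U \cup \{y\}$ for an arbitrary $y \in \ball{\rho}{u}{\theta}$ gives $\hdm(U,K) < \theta$, hence $y \in K \subseteq \bigcup_\kappa R_\kappa$, so one obtains $\ball{\rho}{u}{\theta} \subseteq \bigcup_\kappa R_\kappa$. Upgrading this to containment in a \emph{single} $R_\kappa$ is the delicate point, and is where I expect the real work (or an extra hypothesis) to be needed: if (1) fails at $u$ one can choose, for each $\kappa$, a point $y_\kappa \in \ball{\rho}{u}{\theta} \setminus R_\kappa$, but the set $U \cup \{y_1,\ldots,y_r\}$ need not leave the range, since each $y_\kappa$ may lie in another $R_{\kappa'}$, so no bad set is forced. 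Overlapping ranges (e.g.\ $R_1 = [0,0.6]$, $R_2 = [0.4,1]$ with a point whose $\theta$-ball straddles the overlap) appear to satisfy $\ball{\hdm}{U}{\theta} \subseteq \range([d_1,d_2])$ while violating (1). I would therefore scrutinise this implication first: either it demands replacing (1) by its union form $\ball{\rho}{u}{\theta} \subseteq \bigcup_\kappa R_\kappa$, or it requires an additional assumption controlling the overlaps of the ranges (such as well-covering).
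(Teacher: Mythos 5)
Everything you actually prove is correct, and it is in substance the paper's own argument. The paper establishes the ``if'' direction by splitting $K$ into the pieces $K_\sigma = \set{y \in K}{\rho(u_\sigma,y) \le \theta'}$ (for a rational $\theta' < \theta$ with $\hdm(U,K) \le \theta'$), using (1) to place each piece inside a single $\range(d_{\kappa_\sigma})$ and (2) to see that every $\kappa$ occurs; that is your criterion-based argument in different packaging. The contrapositive argument for (2) is word for word your construction.

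Your refusal to complete the (1)-half of the ``only if'' direction is vindicated: that is exactly where the paper's proof fails, and it fails for precisely the reason you give. The paper chooses $z_\kappa \in \ball{\rho}{u_{\sigma_0}}{\theta} \setminus \range(d_\kappa)$, sets $K' := K \cup \set{z_\kappa}{1 \le \kappa \le r}$ for an arbitrary $K \in \ball{\hdm}{U}{\theta}$, and then simply asserts that $K' \notin \range([d_1,\ldots,d_r])$. By the membership criterion this would require $K'$ to leave $\bigcup_\kappa \range(d_\kappa)$ or to miss some $\range(d_\kappa)$ altogether, and neither is forced: each $z_\kappa$ may lie in some other $\range(d_{\kappa'})$, and $K' \supseteq K$ may meet every range. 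So the leap you decline to make is the very leap the paper makes.

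In fact the statement is false as it stands, and your overlapping-interval scenario becomes a genuine counterexample as soon as $U$ contains ``anchor'' points; their absence is the only flaw in your sketch (with $U = \{1/2\}$ alone, any singleton $\{y\}$ with $3/5 < y < 13/20$ lies in $\ball{\hdm}{U}{\theta}$ and misses $\range(d_1)$, so the containment hypothesis itself fails there). Concretely, let $X = [0,1]$ with $d_1(x) = 3x/5$ and $d_2(x) = 2/5 + 3x/5$, a digit space with $\range(d_1) = [0,3/5]$, $\range(d_2) = [2/5,1]$, and take $\theta = 3/20$, $U = \{3/10,\,1/2,\,7/10\}$. Condition (2) holds with witnesses $3/10$ and $7/10$, but (1) fails at $1/2$, since $(7/20,13/20)$ is contained in neither range. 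Yet every $K$ with $\hdm(U,K) < \theta$ satisfies $K \subseteq \ball{\rho}{U}{\theta} \subseteq [0,1] = \range(d_1) \cup \range(d_2)$ and contains points within $\theta$ of $3/10$ and of $7/10$, hence meets $(3/20,9/20) \subseteq \range(d_1)$ and $(11/20,17/20) \subseteq \range(d_2)$; by the criterion, $\ball{\hdm}{U}{\theta} \subseteq \range([d_1,d_2])$. This space is moreover well-covering, decidable and has approximable choice, so no standing hypothesis of the paper rescues the lemma; the correct repair is the one you name, namely replacing (1) by $\ball{\rho}{u}{\theta} \subseteq \bigcup_\kappa \range(d_\kappa)$, and with that change the arguments you give already constitute a complete proof of the corrected equivalence.
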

\begin{proof}Let $U = \{ u_1, \ldots, u_n \}$.
For the `if' part  assume that $K \in \ball{\hdm}{U}{\theta}$. Then there is some rational number $\theta' < \theta$ with $\hdm(U, K) \le \theta'$. For $1 \le \sigma \le n$ let $K_\sigma := \set{y \in K}{\rho(u_\sigma, y) \le \theta'}$. Then $K_\sigma$ is non-empty and compact. Moreover, $K = \bigcup_{\sigma = 1}^n K_\sigma$. By our assumption there is some $1 \le \kappa_\sigma \le r$ so that $\ball{\rho}{u_\sigma}{\theta} \subseteq \range(d_{\kappa_\sigma})$. Hence, there is some non-empty compact $K_\sigma' \subseteq X^{\ar{d_{\kappa_{\sigma}}}}$ with $K_\sigma = d_{\kappa_\sigma}[K'_\sigma]$. Because of our second assumption, every $1 \le \kappa \le r$ is among the $\kappa_\sigma$ with this property. Thus, $K \in \range([d_1, \ldots, d_r])$.

In case of the `only-if' part, we show the contrapositive. First, assume that there is some $1 \le \kappa_0 \le r$ so that for every $1 \le \sigma \le n$, $\ball{\rho}{u_\sigma}{\theta} \not\subseteq \range(d_{\kappa_0})$. Let $y_\sigma \in \ball{\rho}{u_\sigma}{\theta} \setminus \range(d_{\kappa_0})$ and set $K := \set{y_\sigma}{1 \le \sigma \le n}$. Then $K \in \ball{\hdm}{U}{\theta}$, but $K \not\in \range([d_1, \ldots, d_r])$.

Next, assume that there is some $1 \le \sigma_0 \le n$ such that for all $1 \le \kappa \le r$, $\ball{\rho}{u_{\sigma_0}}{\theta} \not\subseteq \range(d_\kappa)$. Let $z_\kappa \in \ball{\rho}{u_{\sigma_0}}{\theta} \setminus \range(d_\kappa)$, and for $K \in \ball{\hdm}{U}{\theta}$, set  $$K' := K \cup  \set{z_\kappa}{1 \le \kappa \le r}.$$ Then $K' \in \ball{\hdm}{U}{\theta}$, but $K' \not\in \range([d_1, \ldots, d_r])$.
\end{proof}

\begin{proposition}\label{pn-dechypdig}
Let $(X, D, Q)$ be a computable digit space with only unary digits. If $(X, D, Q)$ is decidable, so is $(\KKK(X), \KKK(D), \QQQ_{\KKK(X)})$.
\end{proposition}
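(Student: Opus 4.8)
The proof plan rests entirely on the reduction already achieved in Lemma~\ref{lem-epsballk}, which translates the inclusion of a Hausdorff-metric ball into $\range([d_1, \ldots, d_r])$ into a combination of inclusions of \emph{ordinary} balls into the ranges of the individual digits. My strategy is to observe that the decision problem for the hyperspace unfolds into finitely many instances of the decision problem we are already given for $(X, D, Q)$.

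First I would recall that, by Propositions~\ref{pn-comifs} and~\ref{pn-comphypdig}, $(\KKK(X), \KKK(D), \QQQ_{\KKK(X)})$ is already known to be a computable digit space; so the only thing left to establish is the decidability condition of Definition~\ref{dn-deccd}(\ref{dn-deccd-1}). Spelling this out, given a basic subset $U \in \QQQ_{\KKK(X)}$, a rational $\theta \in \QQ_+$, and a map $\bar{d} = [d_1, \ldots, d_r] \in \KKK(D)$, I must decide whether $\ball{\hdm}{U}{\theta} \subseteq \range([d_1, \ldots, d_r])$. Here $U = \{ u_1, \ldots, u_n \}$ is a non-empty finite subset of $Q$.

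Next I would invoke Lemma~\ref{lem-epsballk} to replace this single inclusion by the conjunction of its two characterising conditions: (1) for every $u_\sigma \in U$ there is some $1 \le \kappa \le r$ with $\ball{\rho}{u_\sigma}{\theta} \subseteq \range(d_\kappa)$, and (2) for every $1 \le \kappa \le r$ there is some $u_\sigma \in U$ with $\ball{\rho}{u_\sigma}{\theta} \subseteq \range(d_\kappa)$. Each atomic assertion $\ball{\rho}{u_\sigma}{\theta} \subseteq \range(d_\kappa)$ involves a basic element $u_\sigma \in Q$, a rational $\theta$, and a single digit $d_\kappa \in D$, and is therefore decidable directly by the hypothesis that $(X, D, Q)$ is decidable. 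Since $U$ is finite and $r$ is finite, conditions (1) and (2) are bounded conjunctions and disjunctions of these finitely many decidable atomic assertions; hence so is their conjunction, and the original inclusion is decidable.

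I do not expect a genuine obstacle here, since the combinatorial heart of the argument has been absorbed into Lemma~\ref{lem-epsballk}. The only point requiring a little care is to make sure the basic elements of the hyperspace $\QQQ_{\KKK(X)}$ are precisely the non-empty finite subsets of $Q$ (as fixed in Theorem~\ref{thm-inherit}(\ref{thm-inherit-1})), so that every atomic query produced by the reduction does in fact fall within the scope of the decidability assumed for the base space $(X, D, Q)$; granting this, the decidability of $(\KKK(X), \KKK(D), \QQQ_{\KKK(X)})$ follows.
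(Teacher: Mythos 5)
Your proposal is correct and follows exactly the paper's intended argument: the paper itself states that inheritability of decidability is an immediate consequence of Lemma~\ref{lem-epsballk}, which reduces the query $\ball{\hdm}{U}{\theta} \subseteq \range([d_1, \ldots, d_r])$ to a finite conjunction/disjunction of atomic queries $\ball{\rho}{u}{\theta} \subseteq \range(d_\kappa)$, each decidable by hypothesis. Your spelled-out version, including the remark that the basic subsets of $\KKK(X)$ are the non-empty finite subsets of $Q$, is precisely what the paper leaves implicit.
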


We have seen so far that $(\KKK(X), \KKK(D), \QQQ_{\KKK(X)})$ is a well-covering, decidable computable digit space, if also $(X, D, Q)$ is a space of this type and all $d \in D$ are unary.

\begin{proposition}\label{pn-wachyp}
Let $(X, D, Q)$ be a well-covering and decidable computable digit space with only unary digits. If $(X, D, Q)$ has approximable choice, then also $(\KKK(X), \KKK(D), \QQQ_{\KKK(X)})$ has approximable choice.
\end{proposition}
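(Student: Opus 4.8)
The plan is to verify the three conditions of Definition~\ref{dn-appchoi} directly for each digit $[d_1, \ldots, d_r] \in \KKK(D)$, in the spirit of the product case (Proposition~\ref{pn-appchprod}), assembling the required procedure from the approximable-choice procedures $\lambda(\theta, u).\, {}^{d_\kappa}v^\theta_u$ of the underlying digits $d_\kappa \in D$. Equivalently, by Proposition~\ref{pn-epsrightinv} it would suffice to equip every $[d_1, \ldots, d_r]$ with a computable right inverse built from the right inverses $d'_\kappa$ of the $d_\kappa$; I shall use the first, more concrete formulation and keep the right-inverse viewpoint in mind as a check.

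First I would fix a basic subset $U = \{ w_1, \ldots, w_m \} \subseteq Q$ lying in $\int(\range([d_1, \ldots, d_r]))$ and decompose it according to the generating digits. Using decidability together with Lemma~\ref{lem-epsballk}, I would search (over $\eta = 2^{-k}$, $k = 1, 2, \ldots$, each test being decidable) for a margin $\eta = \eta(U) \in \QQ_+$ with $\ball{\hdm}{U}{\eta} \subseteq \range([d_1, \ldots, d_r])$, and set $A_\kappa := \{ w_i \in U \mid \ball{\rho}{w_i}{\eta} \subseteq \range(d_\kappa) \}$. The search terminates because $U$ is interior; by Lemma~\ref{lem-epsballk} each $w_i$ lies in some $A_\kappa$ and each $A_\kappa$ is non-empty, and $w_i \in A_\kappa$ forces $w_i \in \int(\range(d_\kappa)) \cap Q$, so the component procedure ${}^{d_\kappa}v^\theta_{w_i}$ is applicable. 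I would then put $V_\kappa := \{ {}^{d_\kappa}v^\theta_{w_i} \mid w_i \in A_\kappa \}$ and $v^\theta_U := (V_1, \ldots, V_r)$, where crucially the decomposition $(A_\kappa)_\kappa$ depends on $U$ alone, not on $\theta$.

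Conditions (\ref{dn-appchoi-1}) and (\ref{dn-appchoi-3}) then follow routinely. For (\ref{dn-appchoi-1}) the sets $V_\kappa$ and $\tilde V_\kappa$ are indexed by the same $A_\kappa$, so componentwise (\ref{dn-appchoi-1}) for the $d_\kappa$ gives $\rho({}^{d_\kappa}v^\theta_{w_i}, {}^{d_\kappa}v^{\tilde\theta}_{w_i}) < \max\{ \theta, \tilde\theta \}$, whence $\hdm(V_\kappa, \tilde V_\kappa) < \max\{ \theta, \tilde\theta \}$ and, the metric on $\KKK(X)^r$ being the maximum metric, also $\hdm(v^\theta_U, v^{\tilde\theta}_U) < \max\{ \theta, \tilde\theta \}$. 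For (\ref{dn-appchoi-3}), componentwise (\ref{dn-appchoi-3}) yields genuine preimages $y^\kappa_i \in d_\kappa^{-1}[\{ w_i \}]$ with $\rho(y^\kappa_i, {}^{d_\kappa}v^\theta_{w_i}) < \theta$; taking $\mathcal K_\kappa := \{ y^\kappa_i \mid w_i \in A_\kappa \}$ (finite, hence compact, and non-empty) one gets $\bigcup_\kappa d_\kappa[\mathcal K_\kappa] = \bigcup_\kappa A_\kappa = U$, so $(\mathcal K_1, \ldots, \mathcal K_r) \in [d_1, \ldots, d_r]^{-1}[\{ U \}]$, while $\hdm(\mathcal K_\kappa, V_\kappa) < \theta$ by index-wise comparison.

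The hard part will be condition (\ref{dn-appchoi-2}), the uniform modulus: I must produce a single $\theta'$, computed from $\theta$, such that $\hdm(U, U') < \theta'$ forces $\hdm(v^\theta_U, v^\theta_{U'}) < \theta$ for all interior basic subsets $U, U'$. The obstacle is that the assignment of points to the ranges $\range(d_\kappa)$ is discontinuous: a point $w_i \in A_\kappa(U)$ has a nearby partner $w'_j \in U'$, but whether $w'_j \in A_\kappa(U')$ hinges on the independently computed margin $\eta(U')$, and the interior margin $\eta(U)$ can be arbitrarily small when $U$ sits close to the boundary of $\range([d_1, \ldots, d_r])$, so the margins do not align under any fixed $\theta'$. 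A point marginally shared between two ranges $\range(d_\kappa)$ and $\range(d_{\kappa'})$ may be routed into $V_\kappa$ for $U$ yet into $V'_{\kappa'}$ for $U'$, and the two preimages are unrelated, so $\hdm(V_\kappa, V'_\kappa)$ need not be small. Resolving this is exactly the analogue of proving that the right inverse furnished by Proposition~\ref{pn-epsrightinv} is \emph{uniformly} continuous, and is where I expect the real work to lie: one must stabilise the decomposition, exploiting the uniform well-covering number of Lemma~\ref{lem-wcnum} together with the componentwise moduli from (\ref{dn-appchoi-2}) for the $d_\kappa$ to force nearby points into matching components — the same delicate bookkeeping that made the corresponding argument in \cite{bs} involved.
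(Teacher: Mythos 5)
Your construction is exactly the paper's: using decidability of the hyperspace (Proposition~\ref{pn-dechypdig}) you search for an $m$ with $\ball{\hdm}{U}{2^{-m}} \subseteq \range([d_1, \ldots, d_r])$, split $U$ into the sets $U_\kappa := \set{u \in U}{\ball{\rho}{u}{2^{-m-1}} \subseteq \range(d_\kappa)}$ (your $A_\kappa$), and output the tuple of sets $V^\theta_{U_\kappa} := \set{v^\theta_u}{u \in U_\kappa}$; your verifications of conditions (\ref{dn-appchoi-1}) and (\ref{dn-appchoi-3}) also coincide with the paper's. But for condition (\ref{dn-appchoi-2}) you produce neither a modulus $\theta'$ nor a membership-transfer argument: you only describe the obstacle and announce that the real work lies there. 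Since (\ref{dn-appchoi-2}) is the only non-routine part of this proposition, this is a genuine gap --- the proposal is a correct setup plus an accurate diagnosis, not a proof.

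The missing idea, as the paper presents it, is a factor-of-two slack built into the decomposition: Lemma~\ref{lem-epsballk}(\ref{lem-epsballk-1}) certifies that every $u \in U$ has \emph{some} $\kappa$ with the full ball $\ball{\rho}{u}{2^{-m}} \subseteq \range(d_\kappa)$, while membership in $U_\kappa$ only requires the smaller ball of radius $2^{-m-1}$. Taking $\theta' := \min \{ \theta'_1, \ldots, \theta'_r, 2^{-m-3} \}$, where $\theta'_\kappa$ is the modulus furnished by condition (\ref{dn-appchoi-2}) for $d_\kappa$, this slack makes membership robust under perturbation: if $\rho(u, u') < \theta'$ then $\ball{\rho}{u'}{2^{-m-1}} \subseteq \ball{\rho}{u}{2^{-m}} \subseteq \range(d_\kappa)$, so $u'$ passes the $\kappa$-membership test too, whence $\rho(v^\theta_u, v^\theta_{u'}) < \theta$; by symmetry $\hdm(V^\theta_{U_\kappa}, V^\theta_{U'_\kappa}) < \theta$ for each $\kappa$, and the maximum metric on $\KKK(X)^r$ gives $\hdm(V^\theta_U, V^\theta_{U'}) < \theta$. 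Two points can be recorded in your favour: the paper's $\theta'$ depends on $U$ through $m$, although Definition~\ref{dn-appchoi}(\ref{dn-appchoi-2}) literally asks for a modulus computed from $\theta$ alone, and the step ``$u' \in U'_\kappa$'' tacitly measures $U'$'s components with $U$'s threshold $2^{-m-1}$ rather than with the threshold the procedure would compute from $U'$ --- precisely the misalignment you flagged. So you correctly located the delicate point, and even a looseness in the paper's own write-up; what your proposal lacks is the slack mechanism (or any substitute) that discharges it, and without that the proposition is not established.
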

\begin{proof}
Let $\bar{d} \in \KKK(D)$ with $\bar{d} = [d_1, \ldots, d_r]$ and $U \in \QQQ_{\KKK(X)} \cap \int(\range(\bar{d}))$ with $U = \{ u_1, \ldots, u_n \}$. Then there is some $m \in \NN$ so that $\ball{\hdm}{U}{2^{-m}} \subseteq \range(\bar{d})$. As we have just seen, $(\KKK(X), \KKK(D),\linebreak \QQQ_{\KKK(X)})$ is decidable. Therefore, a number $m$ with this property can effectively be found. For $1 \le \kappa \le r$ set
\[
U_\kappa := \set{u \in U}{\ball{\rho}{u}{2^{-m-1}} \subseteq \range(d_\kappa)}.
\]
 Then $U_\kappa$ is non-empty, by Lemma~\ref{lem-epsballk}(\ref{lem-epsballk-2}), and $U = \bigcup_{\kappa=1}^r U_\kappa$, by Lemma~\ref{lem-epsballk}(\ref{lem-epsballk-1}). By using the decidability of $(X, D, Q)$, $U_\kappa$ can be computed from $U$ and $d_\kappa$.

Now, for $1 \le \kappa \le r$, use approximable choice to pick the function $\lambda (\theta, u).\, v^\theta_u$, and for $\theta \in \QQ_+$ set
\[
V^\theta_{U_\kappa} := \set{v^\theta_u}{u \in U_\kappa},
\]
and $V^\theta_U := (V^\theta_{U_1}, \ldots, V^\theta_{U_r})$. We have to verify Conditions~\ref{dn-appchoi}(\ref{dn-appchoi-1})-(\ref{dn-appchoi-3}).

(\ref{dn-appchoi-1})
For $\theta' \in \QQ_+$, we have to show that $\hdm(V^\theta_U, V^{\theta'}_U) < \max \{ \theta, \theta' \}$, that is we need to verify that for $1 \le \kappa \le r$, $\hdm(V^\theta_\kappa, V^{\theta'}_\kappa) < \max \{ \theta, \theta' \}$, which is obvious, as for all $u \in U$, $\rho(v^\theta_u, v^{\theta'}_u) < \max \{ \theta, \theta' \}$.

(\ref{dn-appchoi-2})
For $1 \le \kappa \le r$, pick $\theta'_\kappa$ according to approximable choice, part~(\ref{dn-appchoi-2}) and  let $\theta': = \min \{ \theta'_1, \ldots, \theta'_r, 2^{-m-3} \}$. 

For $U' \in \QQQ_{\KKK(X)} \cap \int(\range(\bar{d}))$ assume that $\hdm(U, U') < \theta'$ and let $u \in U$. Then there is some $u' \in U'$ with $\rho(u, u') < \theta'$. Moreover, by Lemma~\ref{lem-epsballk}(\ref{lem-epsballk-1}), there is some $1 \le \kappa \le r$ so that $\ball{\rho}{u}{2^{-m}} \subseteq \range(d_\kappa)$ and hence also $\ball{\rho}{u}{2^{-m-1}} \subseteq \range(d_\kappa)$, that is $u \in U_\kappa$. Since $\theta' \le 2^{-m-3}$, it follows that $\ball{\rho}{u'}{2^{-m-1}} \subseteq \ball{\rho}{u}{2^{-m}} \subseteq \range(d_\kappa)$, that is $u' \in U'_\kappa$. Consequently, $\rho(v^\theta_u, v^\theta_{u'}) < \theta$. By symmetry we obtain that $\hdm(V^\theta_{U_\kappa}, V^\theta_{U'_\kappa}) < \theta$ and thus $\hdm(V^\theta_U, V^\theta_{U'}) < \theta$.

(\ref{dn-appchoi-3})
By approximable choice, part~(\ref{dn-appchoi-3}), there is some $z^\theta_u \in d^{-1}_\kappa[\{ u \}]$ with $\rho(z^\theta_u, v^\theta_u) < \theta$, for each $u \in U_\kappa$ and $1 \le \kappa \le r$. Set $Z^\theta_{U_\kappa} := \set{z^\theta_u}{u \in U_\kappa}$. Then we have that $\hdm(Z^\theta_{U_\kappa}, V^\theta_{U_\kappa}) < \theta$, and hence for $Z^\theta_U := (Z^\theta_{U_1}, \ldots, Z^\theta_{U_r})$ that $\hdm(Z^\theta_U, V^\theta_U) < \theta$.
\end{proof}

Just as in the product case, we now obtain that the equivalence between the Cauchy representation used in Type-Two Theory of Effectivity and the tree representation used in the present approach holds in a constructive fashion also in case of the space of all non-empty compact subsets of a computable digit space. Note again that we need to assume that all maps in $D$ are unary. Remember that for $K \in \KKK(X)$,
\[
\bA_{\KKK(X)}(K) \Leftrightarrow (\forall n \in \NN) (\exists U \in \QQQ_{\KKK(X)})\, \hdm(K, U) < 2^{-n}
\]
and  $\CC_{\KKK(X)} \subseteq \KKK(X)$ is given by
\begin{equation*}
\begin{split}
\CC_{\KKK(X)}(K) \overset{\nu}{=} (\exists [d_1, \ldots, d_r] \in \KKK(D)) (\exists K_1, \ldots, K_r \in \KKK(X)) \hspace{3cm}\mbox{}\\ 
 K = [d_1, \ldots, d_r](K_1, \ldots, K_r) \wedge (\forall 1 \le \kappa \le r)\, \CC_{\KKK(X)}(K_\kappa).
\end{split}
\end{equation*}
The following result is again a consequence of Theorems~\ref{thm-ctoa} and \ref{thm-cacoind}. It allows to extract computable translations between the realisers resulting from the two definitions.

\begin{theorem}\label{thm-compeqca}
Let $(X, D, Q)$ be a well-covering and decidable computable digit space with approximable choice and only unary digits. Then $\bA_{\KKK(X)} = \CC_{\KKK(X)}$.
\end{theorem}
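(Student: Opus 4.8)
The plan is to deduce the statement from the general equivalence established in Theorems~\ref{thm-ctoa} and~\ref{thm-cacoind}, applied not to $(X,D,Q)$ but to the hyperspace $(\KKK(X), \KKK(D), \QQQ_{\KKK(X)})$. Both the premise $\bA_{\KKK(X)}$ and the co-inductive predicate $\CC_{\KKK(X)}$ displayed just above are precisely the generic predicates $\bA$ and $\CC$ instantiated at the hyperspace, so once I know that the hyperspace is itself a well-covering and decidable computable digit space with approximable choice, Theorem~\ref{thm-ctoa} yields $\CC_{\KKK(X)} \subseteq \bA_{\KKK(X)}$ and Theorem~\ref{thm-cacoind} yields the reverse inclusion, giving equality.

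Accordingly, the first and main part of the work is to collect the inheritance results proved earlier in this section and verify that every hypothesis transfers from $(X,D,Q)$ to the hyperspace. First I would recall that a computable digit space is in particular a digit space, so $X$ is a compact metric space whose digits are contractions; by Proposition~\ref{pn-comifs} together with Theorem~\ref{thm-inherit}(\ref{thm-inherit-3}) the hyperspace is a compact, covering extended IFS, and by the discussion following Theorem~\ref{thm-set} the Hausdorff metric $\hdm$ turns $\KKK(X)$ into a metric space on which every map in $\KKK(D)$ is a contraction with the same factor. Hence $(\KKK(X), \KKK(D))$ is weakly hyperbolic and is a digit space. Computability of the metric structure is Lemma~\ref{lem-comphyp}, and Proposition~\ref{pn-comphypdig} shows that the digits $[d_1,\ldots,d_r]$ are computable, so $(\KKK(X), \KKK(D), \QQQ_{\KKK(X)})$ is a computable digit space. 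Well-covering is Proposition~\ref{pn-comifs}(\ref{pn-comifs-2}), decidability is Proposition~\ref{pn-dechypdig}, and approximable choice is Proposition~\ref{pn-wachyp}. Throughout, the standing assumption that all $d \in D$ are unary is what makes the maps $[d_1,\ldots,d_r]$ legitimate digits of an extended IFS and is required by Propositions~\ref{pn-comphypdig}, \ref{pn-dechypdig} and~\ref{pn-wachyp}.

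Having verified all hypotheses, the final step is the application itself. Instantiating Theorem~\ref{thm-ctoa} at the computable digit space $(\KKK(X), \KKK(D), \QQQ_{\KKK(X)})$ gives $\CC_{\KKK(X)} \subseteq \bA_{\KKK(X)}$, and instantiating Theorem~\ref{thm-cacoind}, whose stronger hypotheses I have just checked, gives $\bA_{\KKK(X)} \subseteq \CC_{\KKK(X)}$; together these prove $\bA_{\KKK(X)} = \CC_{\KKK(X)}$. Since both theorems were proved constructively, the extracted realisers furnish the desired computable translations between the Cauchy representation and the tree representation on the hyperspace.

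The step I expect to be the genuine obstacle, or at least the one demanding the most care, is weak hyperbolicity. This property is explicitly noted to be open for general hyperspaces, and it is recoverable here only because the underlying space carries a metric under which the digits contract; the equivalence therefore relies on passing through the Hausdorff metric and on the preservation of the contraction factor, rather than on any purely topological argument. Everything else is a bookkeeping assembly of the inheritance propositions, with the unary-digit hypothesis kept in force so that the hyperspace construction stays within the class of extended IFS.
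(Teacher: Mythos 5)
Your proposal is correct and follows essentially the same route as the paper: the paper obtains Theorem~\ref{thm-compeqca} precisely by instantiating Theorems~\ref{thm-ctoa} and~\ref{thm-cacoind} at the hyperspace $(\KKK(X), \KKK(D), \QQQ_{\KKK(X)})$, whose required hypotheses (computable digit space, well-covering, decidability, approximable choice, and weak hyperbolicity via the Hausdorff metric with preserved contraction factor) are exactly the inheritance results of Section~\ref{sec-compact} that you assemble, under the same unary-digit assumption. Your identification of weak hyperbolicity as the point where the metric (rather than purely topological) structure is indispensable matches the paper's own discussion.
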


\section{Uniformly continuous functions}\label{sec-ufun}

In \cite{be}, working in a constructive setting, Berger presented a co-inductive inductive characterisation of the uniformly continuous functions $\fun{f}{[-1, 1]^m}{[-1, 1]}$ and showed how it can be used to extract tree representations of such functions. Here, we will lift the characterisation to the general framework of computable digit spaces. 

Assume that $(X, D)$ is an extended IFS. For $\fun{f}{X^n}{X}$, $m \in \NN_{0}$ and $1 \le i \le m$, define $\fun{f^{(i, m)}}{X^{n+m-1}}{X^{m}}$ by
\begin{equation*}
f^{(i, m)}(x_1, \ldots, x_{i-1}, y_1, \ldots, y_n, x_{i+1}, \ldots, x_m) := 
(x_1, \ldots, x_{i-1}, f(y_1, \ldots, y_n), x_{i+1}, \ldots, x_m).
\end{equation*}
If $f$ has a right inverse $f'$, set
\[
 f'^{(i, m)}(x_1, \ldots, x_m) := (x_1, \ldots, x_{i-1}, f'(x_i), x_{i+1}, \ldots, x_m).
\]
Then $f'^{(i, m)}$ is a right inverse of $f^{(i, m)}$. 

Now, assume $(Y, E)$ to be a further extended IFS, define
$\FF(X, Y) := \set{\fun{f}{X^m}{Y}}{m \ge 0}$  to be the set of all maps from $X$ to $Y$ of some arity $m \ge 0$, and for  $m \in \NN$ and $F \subseteq \FF(X, Y)$, set
\[
F^{(m)} := \set{f \in \FF(X, Y)}{\ar{f} = m}.
\]
Moreover, let the operator $\Phi ^{X, Y}\colon \PPP(\FF(X, Y)) \to (\PPP(\FF(X, Y)) \to \PPP(\FF(X, Y)))$ be given by
\begin{equation}\label{eq-phiop}
\begin{split}
\Phi^{X,Y}(F)(G) := \{\, f \mid (\exists e \in E) (\exists f_1, \ldots, f_{\ar{e}} \in F)\, f = e \circ (f_1 \times \cdots \times f_{\ar{e}})  \vee  \hspace{1.5cm} \mbox{} \\
 (\exists 1 \le i \le \ar{f}) (\forall d \in D)\, f \circ d^{(i, \ar{f})} \in G \,\}.
\end{split}
\end{equation}
  
Obviously, $\Phi^{X,Y}(F)$ is monotone in $G$, for all $F \subseteq \FF(X, Y)$. Thus, $\JJJ^{X,Y}(F) := \mu \Phi^{X,Y}(F)$ exists. It follows that $\JJJ^{X,Y}(F)$ is the smallest subset $G$ of $\FF(X, Y)$ such that
\begin{itemize}
\item[(W)]\label{itm-w}
If $e \in E$ and $f_1, \ldots, f_{\ar{e}} \in F$, then $e \circ (f_1 \times \cdots \times f_{\ar{e}}) \in G$.

\item[(R)]\label{itm-r}
If $h \in \FF(X, Y)$ and $1 \le i \le \ar{h}$ so that $h \circ d^{(i, \ar{h})} \in G$, for all $d \in D$, then $h \in G$.

\end{itemize}
Since $\JJJ^{X,Y}$ is monotone as well, also $\CC_{\FF(X, Y)} := \nu \JJJ^{X,Y}$ exists. Observe that if $h \in \CC_{\FF(X, Y)}$, then $\range(h) \subseteq \range(e)$, for some $e \in E$, since 
\[
\CC_{\FF(X, Y)} = \mu G. \Phi^{X,Y}(\CC_{\FF(X, Y)})(G) = \Phi^{X,Y}(\CC_{\FF(X, Y)})(\CC_{\FF(X, Y)}).
\]
Note that we will suppress the superscripts in case the spaces $X$, $Y$ are clear from the context.

For $i \in \NN$ let  $R_i$ be a letter with $\ar{R_i} := \card{D}$. Then a realiser for a map $h \in \CC_{\FF(X, Y)}$ extracted from the definition of $\CC_{\FF(X, Y)}$ by following the exposition in Section~\ref{sec-progex} is an infinite $(E_{n})_{n \in \NN}$-tree, for some family $(E_{n})_{n \in \NN}$ of finite subsets of $E \cup \set{R_i}{i \in \NN}$,  such that
\begin{itemize}
\item each node is either a

\begin{description}
\item[\hspace{.5em} \it writing node] labelled with a digit $e \in E$ and $\ar{e}$ immediate subtrees, or a

\item[\hspace{.5em} \it reading node] labelled with $R_i$ and $\card{D}$ immediate subtrees;

\end{description}

\item each path has infinitely many writing nodes.

\end{itemize}

Writing nodes correspond to (inverted) Rule~(W), respectively the left part in the disjunction defining $\Phi^{X, Y}(F)(G)$. The label $e \in E$ realises the existential quantifier. Reading notes correspond to (inverted) Rule~(R). The index $i$ of $R_{i}$ realises the existential quantifier in the right part of the disjunction. Digit $d \in D$ is the root of the tree realising the $i$-th argument. The condition that every path has infinitely many writing notes reflects the  co-inductive nature of the definition of $\CC_{\FF(X, Y)}$, that is the fact that the greatest fixed point of $\Phi^{X, Y}$ is taken with respect to the first argument $F$. It particularly implies that between two writing nodes on a path there may only be finitely many reading notes, which reflects the inductive part in the definition of $\CC_{\FF(X, Y)}$, that is the fact that the least fixed point of $\Phi^{X, Y}$ is taken with respect to the second argument $G$.  

The interpretation of such a tree as a tree transformer is easy. Given $\ar{h}$ trees $T_1$, \ldots, $T_{\ar{h}} \in \TTT^\omega_D$ as inputs, run through the tree and output a tree in $\TTT^\omega_E$ as follows:
\begin{enumerate}
\item At a writing node $[e; S_1, \ldots, S_{\ar{e}}]$ output $e$ and continue with the subtrees $S_1, \ldots, S_{\ar{e}}$.

\item At a reading node $[R_i; (S'_d)_{d \in D}]$ continue with $S'_d$, where $d$ is the root of $T_i$, and replace $T_i$ by its $\ar{d}$ immediate subtrees.

\end{enumerate}

Next, we show that the predicates $\CC_{\FF(X, Y)}$ are closed under composition. We will need the following lemma.

\begin{lemma}\label{lem-domclos}
Let $(X, D)$ be an extended IFS. If $f \in \CC_{\FF(X, Y)}$, then $f \circ d^{(i, \ar{f})} \in \CC_{\FF(X, Y)}$, for all $d \in D$ and $1 \le i \le \ar{f}$.
\end{lemma}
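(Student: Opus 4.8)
The plan is to prove the closure property by a nested fixed-point argument: an outer co-induction on the greatest fixed point $\CCC := \CC_{\FF(X,Y)} = \nu\JJJ$, combined with an inner induction on the least fixed point $\JJJ(F) = \mu\Phi(F)$ (suppressing the superscripts $X,Y$ as the paper allows). Introduce the candidate post-fixed point
\[
S := \CCC \cup B_0, \qquad B_0 := \set{f\circ d^{(i,\ar{f})}}{f\in\CCC,\ d\in D,\ 1\le i\le\ar{f}},
\]
so that the lemma is exactly the inclusion $B_0\subseteq\CCC$. By the co-induction principle for $\nu\JJJ$ it suffices to verify $S\subseteq\JJJ(S)=\mu\Phi(S)$. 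The part $\CCC\subseteq\mu\Phi(S)$ is immediate from $\CCC=\mu\Phi(\CCC)$ and the monotonicity of $\JJJ$ in its argument (since $\CCC\subseteq S$). The whole content therefore lies in proving $B_0\subseteq\mu\Phi(S)$.

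To obtain $B_0\subseteq\mu\Phi(S)$ I would show, by induction on $\CCC=\mu\Phi(\CCC)$, that $\CCC\subseteq B$, where
\[
B := \set{f\in\FF(X,Y)}{(\forall d\in D)(\forall 1\le i\le\ar{f})\ f\circ d^{(i,\ar{f})}\in\mu\Phi(S)}.
\]
Indeed, once $\CCC\subseteq B$ is known, any $g=f\circ d^{(i,\ar{f})}\in B_0$ with $f\in\CCC\subseteq B$ lies in $\mu\Phi(S)$ by the very definition of $B$. By the induction principle, $\CCC\subseteq B$ reduces to $\Phi(\CCC)(B)\subseteq B$, i.e. to checking that $B$ is closed under rules (W) and (R) with the first argument of $\Phi$ instantiated by $\CCC$. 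One preliminary observation, used repeatedly: if $h\in B$ and $\ar{h}\ge 1$, then $h\in\mu\Phi(S)$, because rule (R) applied at any position $i\le\ar{h}$ (such exist) converts the defining property of $B$ into membership in $\Phi(S)(\mu\Phi(S))=\mu\Phi(S)$; the arity-$0$ case of $B$ is vacuous.

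For rule (W), a member of $\Phi(\CCC)(B)$ has the form $f=e\circ(f_1\times\cdots\times f_{\ar{e}})$ with $e\in E$ and all $f_\kappa\in\CCC$. Plugging a digit $d$ at a position $i$ with $1\le i\le\ar{f}=\sum_\kappa\ar{f_\kappa}$ affects only the unique block $\kappa_0$ into which $i$ falls, so $f\circ d^{(i,\ar{f})}=e\circ(f_1\times\cdots\times(f_{\kappa_0}\circ d^{(i',\ar{f_{\kappa_0}})})\times\cdots\times f_{\ar{e}})$ for the corresponding local index $i'$ with $1\le i'\le\ar{f_{\kappa_0}}$. Here $f_{\kappa_0}\circ d^{(i',\ar{f_{\kappa_0}})}\in B_0\subseteq S$, while the remaining $f_\kappa\in\CCC\subseteq S$, so rule (W) places $f\circ d^{(i,\ar{f})}$ in $\Phi(S)(\mu\Phi(S))=\mu\Phi(S)$; as $d,i$ are arbitrary, $f\in B$. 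This is the step where co-induction is indispensable: the sub-functions $f_\kappa$ are only known to lie in $\CCC$, which is exactly the property under proof, and folding the rewritten block into $S$ as a co-inductive hypothesis is legitimate because the new function is guarded by the writing node $e$.

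The main obstacle is rule (R), which requires tracking argument positions and the commutation of substitutions at distinct slots. Here there is an index $i_0$ with $f\circ d'^{(i_0,\ar{f})}\in B$ for every $d'\in D$, and I must show $f\circ d^{(i,\ar{f})}\in\mu\Phi(S)$ for all $d$ and all $i$ (note $\ar{f}\ge1$). If $i=i_0$, then $f\circ d^{(i_0,\ar{f})}\in B$ has arity $\ge 1$ and so lies in $\mu\Phi(S)$ by the preliminary observation. If $i\ne i_0$, set $g:=f\circ d^{(i,\ar{f})}$ and let $j$ be the position of $g$ matching $i_0$ (namely $i_0$ if $i_0<i$, and $i_0+\ar{d}-1$ if $i_0>i$). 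Since $i\ne i_0$, plugging $d$ at $i$ and $d'''$ at $i_0$ act on disjoint argument blocks and therefore commute, giving $g\circ d'''^{(j,\ar{g})}=h'\circ d^{(i',\ar{h'})}$ with $h':=f\circ d'''^{(i_0,\ar{f})}$ and a matching index $i'$; the right-hand side lies in $\mu\Phi(S)$ because $h'\in B$. As this holds for every $d'''$, rule (R) applied to $g$ at position $j$ yields $g\in\mu\Phi(S)$. This establishes $\Phi(\CCC)(B)\subseteq B$, hence $\CCC\subseteq B$, hence $B_0\subseteq\mu\Phi(S)$, hence $S\subseteq\JJJ(S)$, and the co-induction delivers $B_0\subseteq\CCC$, which is the claim. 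The only genuinely fiddly point is the bookkeeping of the index shifts in the commutation identity; everything else is a direct application of the induction and co-induction principles.
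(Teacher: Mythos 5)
Your proof is correct, and it shares the paper's overall skeleton: an outer co-inductive argument whose candidate adjoins the closure set to $\CC_{\FF(X,Y)}$, an inner induction over $\JJJ(\CC_{\FF(X,Y)})=\mu\Phi(\CC_{\FF(X,Y)})$, the one-block rewriting for Rule~(W), and the split $i=i_0$ versus $i\ne i_0$ with commutation of substitutions at distinct slots for Rule~(R). The genuine difference is in which fixed-point principles you invoke, and it stems from quantifying over all digits at once. The paper fixes a single $d\in D$, works with $F=\set{f\circ d^{(i,\ar{f})}}{f\in\CC_{\FF(X,Y)} \wedge 1\le i\le\ar{f}}$, and is then forced to use \emph{strong} co-induction together with the \emph{strong} induction principle: in its case $i=i'$ the fixed-$d$ set $G$ cannot feed Rule~(R), which demands closure under \emph{all} digits at a position, so the paper must fall back on the extra conjunct $h\circ d^{(i',\ar{h})}\in\JJJ(\CC_{\FF(X,Y)})$ supplied by strong induction, plus monotonicity of $\JJJ$. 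Because your $B_0$ and $B$ are closed under all $d\in D$, your ``preliminary observation'' recovers $\mu\Phi(S)$-membership from $B$-membership (at arity $\ge 1$) by a single application of Rule~(R), so plain induction suffices; likewise your outer step is ordinary co-induction on $S=\CC_{\FF(X,Y)}\cup B_0$, which is exactly strong co-induction unfolded, since $\CC_{\FF(X,Y)}\subseteq\JJJ(S)$ is automatic. What each approach buys: the paper manipulates a smaller, per-digit closure set but leans on the strengthened principles; yours is self-contained over plain induction and co-induction at the cost of the universal quantification and the index bookkeeping --- which, incidentally, you treat more scrupulously than the paper, whose commutation identity $d'^{(i',\ar{h})}\circ d^{(i,\ar{h})}=d^{(i,\ar{h})}\circ d'^{(i',\ar{h})}$ is written without the arity and index shifts and is not type-correct as stated.
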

\begin{proof}
Up to minor modifications, the result follows as the corrresponding \cite[Lemma 4.1]{be}. We include the proof for completeness reasons.
Let $d \in D$. Moreover, set
\[
F:= \set{f \circ d^{(i, \ar{f})}}{f \in \CC_{\FF(X, Y)} \wedge 1 \le i \le \ar{f}}.
\]
We use strong co-induction to prove that $F \subseteq \CC_{\FF(X, Y)}$. That is, we derive $F \subseteq \JJJ(F \cup \CC_{\FF(X, Y)})$. In other words, we show that $\CC_{\FF(X, Y)} \subseteq G$, where
\[
G := \set{f \in \FF(X, Y)}{(\forall 1 \le i \le \ar{f})\, f \circ d^{(i, \ar{f})} \in \JJJ(F \cup \CC_{\FF(X, Y)})}.
\]
Since $\CC_{\FF(X, Y)} = \JJJ(\CC_{\FF(X, Y)})$, it suffices to prove $\JJJ(\CC_{\FF(X, Y)}) \subseteq G$, which will be done by strong induction on $\JJJ(\CC_{\FF(X, Y)})$, i.e.\ we show $\Phi(\CC_{\FF(X, Y)})(G \cap \JJJ(\CC_{\FF(X, Y)})) \subseteq G$. We have to verify Rules~(W) and (R).

(W) Let $e \in E$ and $f_1, \ldots, f_{\ar{f}} \in \CC_{\FF(X, Y)}$. We must show that $e \circ (f_1 \times \cdots \times f_{\ar{e}}) \in G$, i.e.\ for $m := \sum_{\kappa=1}^{\ar{e}} \ar{f_\kappa}$ and all $1 \le i \le m$ we have to prove that $e \circ (f_1 \times \cdots \times f_{\ar{e}}) \circ d^{(i,m)} \in \JJJ(F \cup \CC_{\FF(X, Y)})$. Let $m_i := \max \set{1 \le n \le \ar{e}}{\sum_{\kappa=1}^n \ar{f_\kappa} \le i}$ and $\hat{\imath} := i - m_i$. Then 
\[
(f_1 \times \cdots \times f_{\ar{e}}) \circ d^{(i, m)} 
= f_1 \times \cdots \times f_{m_i-1} \times (f_{m_i} \circ d^{(\hat{\imath}, m_i)}) \times f_{m_i+1} \times \cdots \times f_{\ar{e}}.
\]
By definition of $F$, we have that $f_{m_i} \circ d^{(\hat{\imath}, m_i)} \in F$. Since $\JJJ(F \cup \CC_{\FF(X, Y)})$ is closed under Rule~(W), it therefore follows that $e \circ (f_1 \times \cdots \times f_{\ar{e}}) \in G$.

(R) Let $h \in \FF(X, Y)$ and $1 \le i' \le \ar{h}$. Assume that for all $d' \in D$, $h \circ d'^{(i', \ar{h})} \in G \cap \JJJ(\CC_{\FF(X, Y)})$. We must show that $h \in G$, i.e.\ for all $1 \le i \le \ar{h}$ we have to demonstrate that
$h \circ d^{(i, \ar{h})} \in \JJJ(F \cup \CC_{\FF(X, Y)})$. 

Hereto, let $1 \le i \le \ar{h}$. In case $i = i'$, we obtain from our assumption that $h \circ d^{(i, \ar{h})} \in \JJJ(\CC_{\FF(X, Y)})$ and hence, by the monotonicity of $\JJJ$, that $h \circ d^{(i, \ar{h})} \in \JJJ(F \cup \CC_{\FF(X, Y)})$. If $i \not= i'$, then we have that for all $d' \in D$, $d'^{(i', \ar{h})} \circ d^{(i, \ar{h})} = d^{(i, \ar{h})} \circ d'^{(i', \ar{h})}$. Therefore, since by our assumption, $h \circ d'^{(i', \ar{h})} \in G$, for all $d' \in D$, we have that for all $d' \in D$, $h \circ d^{(i, \ar{h})} \circ d'^{(i', \ar{h})} \in \JJJ(F \cup \CC_{\FF(X, Y)})$. As $\JJJ(F \cup \CC_{\FF(X, Y)})$ is closed under Rule~(R), it follows that $h \circ d^{(i, \ar{h})} \in \JJJ(F \cup \CC_{\FF(X, Y)})$\end{proof}

\begin{proposition}\label{pn-clcompo}
Let $(X, D)$, $(Y, E)$ and $(Z, C)$ be extended IFS. If $f \in \CC_{\FF(Y, Z)}$ and $g_\kappa \in \CC_{\FF(X, Y)}$, for $1 \le \kappa \le \ar{f}$, then $f \circ (g_1 \times \cdots \times g_{\ar{f}}) \in \CC_{\FF(X, Z)}$.
\end{proposition}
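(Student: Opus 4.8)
The plan is to exhibit the set
\[
F := \set{f \circ (g_1 \times \cdots \times g_{\ar f})}{f \in \CC_{\FF(Y, Z)} \text{ and } g_1, \ldots, g_{\ar f} \in \CC_{\FF(X, Y)}}
\]
as a post-fixed point of the operator $\JJJ^{X, Z}$ whose greatest fixed point is $\CC_{\FF(X, Z)}$. By the co-induction principle it then suffices to prove $F \subseteq \JJJ^{X, Z}(F)$, and since the assertion we want is precisely that the given composite lies in $F$, this establishes the proposition. The argument will use plain co-induction: every membership in $\JJJ^{X, Z}(F)$ that I produce is justified either by rule~(W) applied to arguments already lying in $F$, or by rule~(R) internal to $\JJJ^{X, Z}(F) = \mu\Phi^{X, Z}(F)$, so $\CC_{\FF(X, Z)}$ never has to reappear inside the operator.

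To show $F \subseteq \JJJ^{X, Z}(F)$ I would fix the $g_\kappa \in \CC_{\FF(X, Y)}$ and run an outer induction on the membership of $f$ in $\CC_{\FF(Y, Z)} = \JJJ^{Y, Z}(\CC_{\FF(Y, Z)}) = \mu\Phi^{Y, Z}(\CC_{\FF(Y, Z)})$, against the predicate
\[
A := \set{f \in \FF(Y, Z)}{(\forall g_1, \ldots, g_{\ar f} \in \CC_{\FF(X, Y)})\; f \circ (g_1 \times \cdots \times g_{\ar f}) \in \JJJ^{X, Z}(F)}.
\]
The induction reduces to showing that $A$ is closed under $\Phi^{Y, Z}(\CC_{\FF(Y, Z)})$, i.e.\ under the two generating rules. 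The writing case~(W) is routine: if $f = c \circ (f_1 \times \cdots \times f_{\ar c})$ with $c \in C$ and $f_j \in \CC_{\FF(Y, Z)}$, then, partitioning $g_1, \ldots, g_{\ar f}$ into the consecutive blocks feeding $f_1, \ldots, f_{\ar c}$, one reassociates $f \circ (g_1 \times \cdots \times g_{\ar f})$ as $c \circ (h_1 \times \cdots \times h_{\ar c})$, where each $h_j$ is $f_j$ post-composed with its block of $g$'s and hence $h_j \in F$; closure of $\JJJ^{X, Z}(F)$ under~(W) then yields the composite.

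The reading case~(R) is the crux, and is where the hypothesis on the $i$-th argument is consumed. Here $f$ comes with an index $i$ such that $f \circ e^{(i, \ar f)} \in A$ for every $e \in E$ (the outer inductive hypothesis), and I would prove $f \in A$ by an inner induction on the membership of $g_i$ in $\CC_{\FF(X, Y)} = \JJJ^{X, Y}(\CC_{\FF(X, Y)}) = \mu\Phi^{X, Y}(\CC_{\FF(X, Y)})$, with $f$, $i$ and the side arguments $g_1, \ldots, g_{i-1}, g_{i+1}, \ldots, g_{\ar f} \in \CC_{\FF(X, Y)}$ held fixed (and universally quantified), using the predicate
\[
B := \set{g \in \FF(X, Y)}{f \circ (g_1 \times \cdots \times g_{i-1} \times g \times g_{i+1} \times \cdots \times g_{\ar f}) \in \JJJ^{X, Z}(F)}.
\]
If $g_i$ arises by a writing step $g_i = e \circ (g_{i, 1} \times \cdots \times g_{i, \ar e})$ with $e \in E$, the digit $e$ is absorbed into $f$: the composite equals $(f \circ e^{(i, \ar f)})$ applied to the argument list in which $g_i$ is replaced by $g_{i, 1}, \ldots, g_{i, \ar e}$, and since $f \circ e^{(i, \ar f)} \in A$ by the outer hypothesis, the defining property of $A$ gives membership in $\JJJ^{X, Z}(F)$. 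If $g_i$ arises by a reading step at some position $j$, then the whole composite is itself a reading node at the global position $p$ obtained by offsetting $j$ by the arities of $g_1, \ldots, g_{i-1}$: for every $d \in D$, commuting $d^{(p, \cdot)}$ past the product shows that reading $d$ there replaces $g_i$ by $g_i \circ d^{(j, \ar{g_i})}$, which lies in $B$ by the inner induction hypothesis; closure of $\JJJ^{X, Z}(F)$ under~(R) then finishes this case.

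I expect the main obstacle to be organisational rather than conceptual: correctly orchestrating the three nested fixed-point arguments—an outer co-induction for $\CC_{\FF(X, Z)}$, a middle induction on the derivation of $f$, and an inner induction on the derivation of $g_i$—and keeping the arity bookkeeping straight when a writing digit is reassociated across the product and when a reading position $j$ is translated into its global index $p$. Structurally this is the multi-ary, three-space generalisation of the single-space domain-closure argument already carried out in Lemma~\ref{lem-domclos} (itself modelled on Berger~\cite{be}), and I would follow that template for the detailed form of the two inductions.
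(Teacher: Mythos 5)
Your proposal is correct, and its skeleton is the same as the paper's: the same auxiliary class $F$, plain co-induction reducing the claim to $F \subseteq \JJJ^{X,Z}(F)$, a main induction on the derivation of $f$ against exactly the paper's predicate (your $A$ is the paper's $G$), and a side induction on the derivation of the $i$-th argument, with identical writing cases and the same position bookkeeping $p = j + \sum_{\kappa<i} \ar{g_\kappa}$. The one substantive difference is the side-induction predicate, and it buys you something. The paper's predicate $H^{(i)} := \set{g \in \FF(X,Y)}{(\forall g_1, \ldots, g_n \in \CC_{\FF(X, Y)})\,[g_i = g \rightarrow f \circ (g_1 \times \cdots \times g_n) \in \JJJ^{X,Z}(F)]}$ carries an internal guard that all tuple entries lie in $\CC_{\FF(X,Y)}$; consequently, in the side reading case the paper must first show that the substituted argument $g \circ d^{(j,\ar{g})}$ is again in $\CC_{\FF(X,Y)}$ before the induction hypothesis can fire, and this is exactly where it invokes Lemma~\ref{lem-domclos}. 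Your predicate $B$ fixes the side arguments as parameters and imposes no membership condition on the distinguished slot, so the inner hypothesis $g \circ d^{(j,\ar{g})} \in B$ applies verbatim and Rule~(R) for $\JJJ^{X,Z}(F)$ closes the case; Lemma~\ref{lem-domclos} is never needed (you cite it only as a structural template, which is accurate, whereas the paper uses it as an actual ingredient). Both arguments are sound: your parametrised, unguarded predicate proves the same universally quantified statement once the parameters are discharged, and it has one fewer lemma dependency, while the paper's guarded formulation keeps every intermediate object inside the co-inductively defined classes and stays closer to Berger's original single-space proof.
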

\begin{proof} The proof is a modification of the proof of \cite[Proposition~4.2]{be}. It proceeds by co-induction. Set
\[
F := \set{f \circ (g_1 \times \cdots \times g_{\ar{f}})}{f \in \CC_{\FF(Y, Z)} \wedge g_1, \ldots, g_{\ar{f}} \in \CC_{\FF(X, Y)}}.
\] 
Then we show that $F \subseteq \JJJ^{X,Z}(F)$, that is, we show $\CC_{\FF(Y, Z)} \subseteq G$, where
\begin{equation*}
G := \set{f \in \FF(Y, Z)}{(\forall g_1, \ldots, g_{\ar{f}} \in \CC_{\FF(X, Y)})\, 
 f \circ (g_1 \times \cdots \times g_{\ar{f}}) \in \JJJ^{X,Z}(F)}.
\end{equation*}
Since $\CC_{\FF(Y, Z)} = \JJJ^{Y,Z}(\CC_{\FF(Y, Z)})$ it suffices to show $ \JJJ^{Y,Z}(\CC_{\FF(Y, Z)}) \subseteq G$. Hence, by the inductive definition of $\JJJ^{Y,Z}(\CC_{\FF(Y, Z)})$, it is sufficient to demonstrate that $\Phi^{Y,Z}(\CC_{\FF(Y, Z)})(G) \subseteq G$, that is we have to show that Rules~(W) and (R) hold.

(W) Assume that $c \in C$ and $f_1, \ldots, f_{\ar{c}} \in \CC_{\FF(Y, Z)}$. We need to show that $c \circ (f_1 \times \cdots \times f_{\ar{c}}) \in G$. For $1 \le \kappa \le \ar{c}$ assume that $g^{(\kappa)}_1, \ldots, g^{(\kappa)}_{\ar{f_\kappa}} \in \CC_{\FF(X, Y)}$. Then we must prove that $c \circ (f_1 \times \cdots \times f_{\ar{c}}) \circ (g^{(1)}_1 \times \cdots \times g^{(\ar{c})}_{\ar{f_{\ar{c}}}}) \in \JJJ^{X,Z}(F)$. Because $\JJJ^{X,Z}(F)$ is closed under Rule~(W), it suffices to demonstrate that for each $1 \le \kappa \le \ar{c}$, $f_\kappa \circ (g^{(\kappa)}_1 \times \cdots \times g^{(\kappa)}_{\ar{f_\kappa}}) \in F$, which holds by the definition of $F$.

(R) Let $f \in \FF^{(n)}(Y, Z)$ and $1 \le i \le n$. Suppose, as induction hypothesis, that for all $e \in E$, $f \circ e^{(i, n)} \in G$. We have to show that $f \in G$, i.e.\ $\CC_{\FF(X, Y)} \subseteq H^{(i)}$, where
\[
H^{(i)} := \set{g \in \FF(X, Y)}{(\forall g_1, \ldots, g_n \in \CC_{\FF(X, Y)}) [g_i = g 
\rightarrow f \circ (g_1 \times \cdots \times g_n) \in \JJJ^{X,Z}(F)]}.
\]
Since $\CC_{\FF(X, Y)} = \JJJ^{X,Y}(\CC_{\FF(X, Y)})$, it suffices to show that $\JJJ^{X,Y}(\CC_{\FF(X, Y)}) \subseteq H^{(i)}$, which we do by side induction, that is, we show that 
\[
\Phi^{X, Y}(\CC_{\FF(X, Y)})(H^{(i)}) \subseteq H^{(i)}.
\]

($\text{W}_{\text{side}}$) Let $e \in E$ and $g_1, \ldots, g_n, g'_1, \ldots, g'_{\ar{e}} \in \CC_{\FF(X, Y)}$ so that $g_i = e \circ (g'_1 \times \cdots \times g'_{\ar{e}})$. Then
\begin{align*}
f \circ (g_1 \times \cdots \times g_n)
&= f \circ (g_1 \times \cdots \times g_{i-1} \times (e \circ (g'_1 \times \cdots \times g'_{\ar{e}})) \times g_{i+1} \times \cdots \times g_n) \\
&= f \circ (e^{(i, n)} \circ (g_1 \times \cdots \times g_{i-1} \times g'_1 \times \cdots \times g'_{\ar{e}} \times g_{i+1} \times \cdots \times g_n)) \\
&= (f \circ e^{(i, n)}) \circ (g_1 \times \cdots \times g_{i-1} \times  g'_1 \times \cdots \times g'_{\ar{e}} \times g_{i+1} \times \cdots \times g_n).
\end{align*}
By the main induction hypothesis, $f \circ e^{(i, n)} \in G$ and hence 
\[
(f \circ e^{(i, n)}) \circ (g_1 \times \cdots \times g_{i-1} \times  g'_1 \times \cdots \times g'_{\ar{e}} \times g_{i+1} \times \cdots \times g_n) \in \JJJ^{X,Z}(F).
\] 
Therefore, $e \circ (g'_1 \times \cdots \times g'_{\ar{e}}) \in H^{(i)}$.

($\text{R}_{\text{side}}$) Let $g \in \CC_{\FF(X, Y)}$ and $1 \le j \le \ar{g}$. Suppose that for all $d \in D$, $g \circ d^{(j, \ar{g})} \in H^{(i)}$.  We have to show that $g \in H^{(i)}$. Assume to this end that $h_1, \ldots, h_n \in \CC_{\FF(X, Y)}$ with $h_i = g$. We must derive that $f \circ (h_1 \times \cdots \times h_n) \in \JJJ^{X,Z}(F)$. Let $m := \sum_{\kappa = 1}^n \ar{h_\kappa}$ and $\hat{\jmath} := j+\sum_{\kappa=1}^{i-1}  \ar{h_\kappa}$. Because $\JJJ^{X,Z}(F)$ is closed under Rule~(R), it suffices to show that for all $d \in D$, $(f \circ (h_1 \times \cdots \times h_n)) \circ d^{(\hat{\jmath}, m)} \in \JJJ^{X,Z}(F)$, which is the case as the $i$-th element of $(h_1 \times \cdots \times h_n) \circ d^{(\hat{\jmath}, m)}$ is $g \circ d^{(j,\ar{g})}$ and $g \circ d^{(j,\ar{g})} \in \CC_{\FF(X,Y)}$ by Lemma~\ref{lem-domclos}. 
\end{proof}

\begin{lemma}\label{lem-basics}
Let $(X, D)$ be an extended IFS. Then the following statements hold:
\begin{enumerate}

\item\label{lem-id}
Let $\fun{\id_X}{X}{X}$ be the identity on $X$. Then $\id_X \in \CC_{\FF(X, X)}$.

\item\label{lem-proj}
For any $n \in \NN_{0}$ and $1 \le i \le n$, let $\fun{\pr^{(n)}_{i}}{X^{n}}{X}$ be the projection onto the $i$-th component. Then $\pr^{(n)}_i \in \CC^{(n)}_{\FF(X, X)}$.

\item\label{lem-diag}
For $n>0$, let $\fun{\diag_X^{(n)}}{X}{X^n}$ be given by $\diag^{(n)}(x) := x^{(n)}$. Then $\diag_X^{(n)} \in \CC_{\FF(X, X^n)}$.

\end{enumerate}
\end{lemma}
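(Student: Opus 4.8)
The plan is to establish all three memberships by co-induction, exploiting that $\CC_{\FF(X,Y)} = \nu\JJJ^{X,Y}$: in each case it suffices to exhibit a set $F \subseteq \FF(X,Y)$ that contains the map in question and satisfies $F \subseteq \JJJ^{X,Y}(F)$. The uniform pattern for verifying $h \in \JJJ^{X,Y}(F)$ is to first apply the inductive Rule~(R) at a suitable argument position $i$, reducing the task to showing $h \circ d^{(i,\ar{h})} \in \JJJ^{X,Y}(F)$ for every $d \in D$, and then to recognise each such composite as an instance of Rule~(W), i.e.\ of the form $e \circ (f_1 \times \cdots \times f_{\ar{e}})$ with $e \in E$ and all $f_\kappa \in F$. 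Since every $f_\kappa$ already lies in the first argument $F$, no further unfolding is needed and the least-fixed-point derivation terminates. Observe that $\id_X = \pr^{(1)}_1$, so statement~(\ref{lem-id}) is the case $n = i = 1$ of statement~(\ref{lem-proj}); I treat the two together.

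For statements~(\ref{lem-id}) and~(\ref{lem-proj}) I take $Y = X$, $E = D$, and let $F$ be the set of all projections $\pr^{(m)}_j$ (for $m \in \NN_0$, $1 \le j \le m$), which in particular contains $\id_X$. Fix $\pr^{(n)}_i \in F$ and apply Rule~(R) at position $i$. A direct computation of the composite shows that
\[
\pr^{(n)}_i \circ d^{(i,n)} = d \circ (g_1 \times \cdots \times g_{\ar{d}}),
\]
where each $g_\kappa$ is again a projection: writing $m := \ar{d}+n-1$ for the arity of the composite, the map $g_1 \times \cdots \times g_{\ar{d}}$ must extract the coordinate block $(z_i, \ldots, z_{i+\ar{d}-1})$ from an input $(z_1, \ldots, z_m)$, which is achieved by taking $g_1 = \pr^{(i)}_i$, $g_\kappa = \id_X$ for $1 < \kappa < \ar{d}$, and $g_{\ar{d}} = \pr^{(n-i+1)}_1$ (the degenerate cases $\ar{d} = 1$ and the boundary values of $i$ are handled analogously). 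Since the arities sum to $m$ and every $g_\kappa \in F$, Rule~(W) gives $d \circ (g_1 \times \cdots \times g_{\ar{d}}) \in \JJJ^{X,X}(F)$, and as this holds for all $d \in D$, Rule~(R) yields $\pr^{(n)}_i \in \JJJ^{X,X}(F)$. Hence $F \subseteq \JJJ^{X,X}(F)$, and co-induction gives $F \subseteq \CC_{\FF(X,X)}$.

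For statement~(\ref{lem-diag}) I take $Y = X^n$ and $E = D^\times$, the product IFS of Section~\ref{sec-prod}; following the convention there I assume the digits of $D$ have been padded to the common arity $s_D$. Let $F := \{\diag^{(n)}_X\}$, a singleton of arity-$1$ maps. Applying Rule~(R) at the only position $i = 1$ (so $d^{(1,1)} = d$), I compute for each $d \in D$ that
\[
\diag^{(n)}_X \circ d = \pair{d, \ldots, d} \circ (\diag^{(n)}_X \times \cdots \times \diag^{(n)}_X),
\]
with $s_D$ copies of $\diag^{(n)}_X$ on the right: feeding $(y_1, \ldots, y_{s_D})$ into the right-hand side produces the tuples $(y_j, \ldots, y_j) \in X^n$, and the diagonal product digit $\pair{d,\ldots,d}$ then applies $d$ coordinatewise, returning $(d(y_1,\ldots,y_{s_D}), \ldots, d(y_1,\ldots,y_{s_D}))$, which is exactly $\diag^{(n)}_X(d(y_1,\ldots,y_{s_D}))$. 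Since $\pair{d,\ldots,d} \in D^\times = E$ and $\diag^{(n)}_X \in F$, Rule~(W) places this composite in $\JJJ^{X,X^n}(F)$; as $d \in D$ was arbitrary, Rule~(R) gives $\diag^{(n)}_X \in \JJJ^{X,X^n}(F)$, so $F \subseteq \JJJ^{X,X^n}(F)$ and co-induction finishes the proof.

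The routine but error-prone part is the arity bookkeeping: one must check in~(\ref{lem-proj}) that the chosen projections $g_\kappa$ have arities summing to $\ar{d}+n-1$ and jointly select the correct coordinate block, and in~(\ref{lem-diag}) that the product-IFS padding makes the arity of $\diag^{(n)}_X \circ d$ (namely $s_D$) match the arity $s_D$ of $\pair{d,\ldots,d} \circ (\diag^{(n)}_X \times \cdots \times \diag^{(n)}_X)$. The conceptual content is slight: each identity is just one application of~(W) beneath one application of~(R), reflecting that the tree transformer for a projection reads a digit from its $i$-th input and writes it out, while the one for the diagonal reads a digit and writes its diagonal copy, in both cases returning to the same state — which is precisely what $F \subseteq \JJJ(F)$ records.
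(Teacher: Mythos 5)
Your proof is correct and takes essentially the same route as the paper: co-induction applied to a set of maps closed under one application of Rule~(R) followed by Rule~(W), using the identities $d \circ (\id_X \times \cdots \times \id_X) = \id_X \circ d$ and $\pair{d, \ldots, d} \circ (\diag^{(n)}_X \times \cdots \times \diag^{(n)}_X) = \diag^{(n)}_X \circ d$. The only deviations are that you fold statement~(1) into statement~(2) via $\id_X = \pr^{(1)}_1$ (the paper treats it separately with $F = \{\id_X\}$), and that your block decomposition $g_1 = \pr^{(i)}_i$, $g_\kappa = \id_X$, $g_{\ar{d}} = \pr^{(n-i+1)}_1$ is the arity-correct form of the paper's displayed identity $d \circ (\pr^{(n+\ar{d})}_i \times \cdots \times \pr^{(n+\ar{d})}_{i+\ar{d}-1}) = \pr^{(n)}_{i} \circ d^{(i,n)}$, whose factors as written have arities summing to more than the required $n + \ar{d} - 1$.
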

\begin{proof} For all statements the proof proceeds by co-induction. 

(\ref{lem-id})
We show that $\{ \id_X \} \subseteq \JJJ(\{ \id _X \})$. Because of Rule~(R) it suffices to show that for all $d \in D$, $\id_X \circ\, d \in \JJJ(\{ \id_X \})$. By Rule~(W) we have that for $d \in D$, $d \circ (\id_X \times \cdots \times \id_X) \in \JJJ(\{ \id_X \})$. Since $d \circ (\id_X \times \cdots \times \id_X) = {\id_X} \circ d$, we are done.

(\ref{lem-proj})
Since for $d \in D$,
\[
d \circ (\pr^{(n+\ar{d})}_i \times \cdots \times \pr^{(n+\ar{d})}_{i+\ar{d}-1}) = {\pr^{(n)}_{i}} \circ  d^{(i, n)},
\]
it follows in the same way that 
\[
\set{\pr^{(m)}_j}{m>0 \wedge 1 \le j \le m} \subseteq \JJJ(\set{\pr^{(m)}_j}{m>0 \wedge 1 \le j \le m}).
\]

(\ref{lem-diag})
Let $d \in D$ and note that for $x_{1}, \ldots, x_{\ar{d}} \in X$,
\begin{align*}
(\pair{d, \ldots, d} \circ (\diag^{(n)}_{X} \times \cdots \times \diag^{(n)}_{X}))(x_{1}, \ldots, x_{\ar{d}})
&= \pair{d, \ldots, d}(x_{1}^{(n)}, \ldots, x_{\ar{d}}^{(n)}) \\
&= (d(x_{1}, \ldots, x_{\ar{d}}), \ldots, d(x_{1}, \ldots, x_{\ar{d}})) \\
&= ({\diag^{(n)}_{X}} \circ d)(x_{1}, \ldots, x_{\ar{d}}). 
\end{align*}
Thus, $\pair{d, \ldots, d} \circ (\diag^{(n)}_{X} \times \cdots \times \diag^{(n)}_{X}) = {\diag^{(n)}_{X}} \circ d$, from which the statement follows as in case~(\ref{lem-id}). 
\end{proof}

In the above case of the $n$-th power of a given space $X$, the projections were treated as $n$-ary maps. Let us now consider the general case of the product of spaces. Let to this end $(X_{1}, D_{1}), \ldots, (X_{n}, D_{n})$ be extended IFS, and let $\fun{\pr^{(n)}_{i}}{\bigtimes_{j=1}^{n} X_{j}}{X_{i}}$ again denote the projection onto the $i$-th component. Then we have for $\pair{d_{1}, \ldots, d_{n}} \in D^{\times}$ that
\[
\pr^{(n)}_{i} \circ\,\, \pair{d_{1}, \ldots, d_{n}} = d_{i} \circ (\pr^{(n)}_{i} \times \cdots \times \pr^{(n)}_{i}) \quad\text{($s_{D}$-times)},
\] 
which leads to the next result.

\begin{lemma}\label{lem-genprod}
Let $(X_{1}, D_{1}), \ldots, (X_{n}, D_{n})$ be extended IFS. Then $\pr^{(n)}_{i} \in \CC^{(1)}_{\FF(\bigtimes_{j=1}^{n} X_{j}, X_{i})}$, for $1 \le i \le n$.
\end{lemma}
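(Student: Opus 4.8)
The plan is to prove the membership by co-induction, in exact parallel to the argument for the identity and projection maps in Lemma~\ref{lem-basics}(\ref{lem-id},\ref{lem-proj}), treating $\pr^{(n)}_i$ as a \emph{unary} map on the single product space $X := \bigtimes_{j=1}^n X_j$. Concretely, I set $F := \{ \pr^{(n)}_i \}$ and aim to show $F \subseteq \JJJ(F)$, where $\JJJ = \JJJ^{X, X_i}$ (superscripts suppressed as in the text); by the co-induction principle for $\nu\JJJ$ this gives $\pr^{(n)}_i \in \CC_{\FF(X, X_i)}$, and since $\ar{\pr^{(n)}_i} = 1$ the map lands in $\CC^{(1)}_{\FF(X, X_i)}$, as required. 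Recall here that the source digit set is $D^\times$ while the target digit set is $D_i$, so that in the operator $\Phi^{X, X_i}$ Rule~(W) writes a digit of $D_i$ and Rule~(R) consumes a digit of $D^\times$.

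The engine of the proof is the algebraic identity recorded just before the statement,
\[
\pr^{(n)}_i \circ \pair{d_1, \ldots, d_n} = d_i \circ (\pr^{(n)}_i \times \cdots \times \pr^{(n)}_i) \quad (s_D\text{-times}),
\]
valid for every $\pair{d_1, \ldots, d_n} \in D^\times$. Its right-hand side is exactly an instance of Rule~(W): the outer map $d_i$ is a target digit in $D_i$ of arity $s_D$, and each of its $s_D$ arguments is filled by $\pr^{(n)}_i \in F$. Hence $\pr^{(n)}_i \circ \pair{d_1, \ldots, d_n} \in \JJJ(F)$ for \emph{every} source digit $\pair{d_1, \ldots, d_n} \in D^\times$. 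Now I invoke Rule~(R) with $h := \pr^{(n)}_i$: since $h$ is unary the only available argument position is $i = 1$, and for a source digit $d \in D^\times$ the localisation $d^{(1,1)}$ coincides with $d$ itself, so the family just produced is precisely $\{ h \circ d^{(1,1)} \mid d \in D^\times \}$. As all its members lie in $\JJJ(F)$, Rule~(R) yields $h = \pr^{(n)}_i \in \JJJ(F)$, completing the verification $F \subseteq \JJJ(F)$.

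I do not expect a genuine obstacle; the whole content is the bookkeeping already displayed, and the two points that must be handled with care are purely notational. First, the \emph{role reversal} of the digit sets: the map written on the right of the identity uses a target digit $d_i \in D_i$, while the map read on the left consumes a source digit $\pair{d_1,\ldots,d_n}\in D^\times$. Second, unarity of $\pr^{(n)}_i$ forces $i=1$ in Rule~(R) and makes $d^{(1,1)} = d$, so that ranging over all $\pair{d_1,\ldots,d_n}\in D^\times$ supplies exactly the premise of~(R). This is the product counterpart of Lemma~\ref{lem-basics}(\ref{lem-id},\ref{lem-proj}), and the same co-inductive template applies verbatim.
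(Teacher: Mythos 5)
Your proof is correct and follows exactly the paper's route: the paper states the identity $\pr^{(n)}_{i} \circ \pair{d_{1}, \ldots, d_{n}} = d_{i} \circ (\pr^{(n)}_{i} \times \cdots \times \pr^{(n)}_{i})$ and then appeals to the same co-inductive template as in Lemma~\ref{lem-basics}, which is precisely your Rule~(W)/Rule~(R) argument for $\{\pr^{(n)}_i\} \subseteq \JJJ^{X,X_i}(\{\pr^{(n)}_i\})$, including the observation that unarity makes $d^{(1,1)} = d$. Your write-up merely makes explicit the bookkeeping (role reversal of source digits $D^\times$ versus target digits $D_i$) that the paper leaves implicit.
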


\begin{proposition}\label{pn-eval}
Let $(X, D)$ and $(Y, E)$ be extended IFS. Moreover, for $m \in \NN$, let 
\[
\fun{\ev}{\FF^{(m)}(X, Y) \times X^m}{Y}
\]
with $\ev(f, x) := f(x)$ be the \emph{evaluation map}. Then
\[
\ev[\CC^{(m)}_{\FF(X, Y)} \times \CC_X^m] \subseteq \CC_Y.
\]
\end{proposition}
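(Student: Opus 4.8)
The plan is to prove the inclusion by co-induction on $\CC_Y$, with an inner induction mirroring the nested fixed-point structure of $\CC_{\FF(X,Y)}$, exactly in the spirit of the composition proof of Proposition~\ref{pn-clcompo}. Set
\[
Z := \set{f(x_1, \ldots, x_{\ar{f}})}{f \in \CC_{\FF(X,Y)} \text{ and } x_1, \ldots, x_{\ar{f}} \in \CC_X},
\]
the union of the sets $\ev[\CC^{(m)}_{\FF(X,Y)} \times \CC_X^m]$ over all arities $m$, so that establishing $Z \subseteq \CC_Y$ settles the statement for every fixed $m$ at once. Let $\Psi$ be the monotone operator on $\PPP(Y)$ with $\CC_Y = \nu\Psi$, namely $\Psi(W)(y) :\Leftrightarrow (\exists e \in E)(\exists w_1, \ldots, w_{\ar{e}} \in W)\, y = e(w_1, \ldots, w_{\ar{e}})$. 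By the co-induction principle it suffices to prove $Z \subseteq \Psi(Z)$.

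To establish $Z \subseteq \Psi(Z)$ I would fix $f \in \CC_{\FF(X,Y)}$ and prove the statement $P(f)$: for all $\vec x \in \CC_X^{\ar{f}}$ one has $f(\vec x) \in \Psi(Z)$. Since $\CC_{\FF(X,Y)} = \JJJ(\CC_{\FF(X,Y)}) = \mu G.\,\Phi(\CC_{\FF(X,Y)})(G)$, the property $P$ can be proved by induction on the inductive generation of $\CC_{\FF(X,Y)}$, that is, by verifying that $\set{f}{P(f)}$ is closed under Rules~(W) and (R), with the $F$-slot fixed to $\CC_{\FF(X,Y)}$. For Rule~(W), let $f = e \circ (f_1 \times \cdots \times f_{\ar{e}})$ with $e \in E$ and $f_1, \ldots, f_{\ar{e}} \in \CC_{\FF(X,Y)}$. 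Given $\vec x \in \CC_X^{\ar{f}}$, split it into blocks $\vec x^{(1)}, \ldots, \vec x^{(\ar{e})}$ according to the arities $\ar{f_\kappa}$, so that $f(\vec x) = e(f_1(\vec x^{(1)}), \ldots, f_{\ar{e}}(\vec x^{(\ar{e})}))$. Each $w_\kappa := f_\kappa(\vec x^{(\kappa)})$ lies in $Z$ by definition of $Z$, whence $f(\vec x) = e(w_1, \ldots, w_{\ar{e}}) \in \Psi(Z)$.

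For Rule~(R), suppose $f = h$ and there is $1 \le i \le \ar{h}$ with $h \circ d^{(i,\ar{h})} \in \CC_{\FF(X,Y)}$ and $P(h \circ d^{(i,\ar{h})})$ holding for every $d \in D$ (the inductive hypothesis). Given $\vec x = (x_1, \ldots, x_{\ar{h}}) \in \CC_X^{\ar{h}}$, I would unfold the $i$-th component using the co-closure of $\CC_X$: since $x_i \in \CC_X$, there are $d \in D$ and $y_1, \ldots, y_{\ar{d}} \in \CC_X$ with $x_i = d(y_1, \ldots, y_{\ar{d}})$. By the definition of $d^{(i,\ar{h})}$,
\[
h(\vec x) = (h \circ d^{(i,\ar{h})})(x_1, \ldots, x_{i-1}, y_1, \ldots, y_{\ar{d}}, x_{i+1}, \ldots, x_{\ar{h}}),
\]
and the displayed tuple lies in $\CC_X^{\ar{h}+\ar{d}-1}$, the $x_j$ by hypothesis and the $y_\kappa$ by the unfolding. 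Applying $P(h \circ d^{(i,\ar{h})})$ to this tuple gives $h(\vec x) \in \Psi(Z)$, as required.

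The main obstacle is conceptual rather than computational: one must correctly interleave the outer co-induction on $\CC_Y$ with the inner induction on the inductive layer of $\CC_{\FF(X,Y)}$, recognising that a writing node (Rule~(W)) is precisely what emits one digit $e \in E$ for $\Psi(Z)$ while returning the sub-values into $Z$, whereas a reading node (Rule~(R)) is consumed against the co-closure unfolding of one argument $x_i$. The well-foundedness of the inner induction — i.e.\ that only finitely many reading nodes occur before a writing node — comes from the least-fixed-point character of $\JJJ$ in its $G$-argument, and the progress needed for the outer co-induction is supplied by the writing node. The rest, namely the block-splitting of $\vec x$ in case~(W) and the argument re-indexing in case~(R), is routine bookkeeping.
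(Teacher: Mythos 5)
Your proposal is correct and takes essentially the same route as the paper's own proof: the same set $Z$ and operator $\Psi$, an outer co-induction reducing the claim to $Z \subseteq \Psi(Z)$, and an inner induction over the inductive generation of $\CC_{\FF(X,Y)}$ in which your predicate $P$ is exactly the paper's auxiliary set $G = \set{f \in \FF(X,Y)}{f[\CC_X^{\ar{f}}] \subseteq \Psi(Z)}$, with Rules~(W) and (R) discharged in the same way (block-splitting for (W), co-closure unfolding of $x_i$ for (R)). The only cosmetic deviation is that you carry the additional hypothesis $h \circ d^{(i,\ar{h})} \in \CC_{\FF(X,Y)}$ in Rule~(R), i.e.\ you invoke the strong induction principle where the paper uses plain induction; this is harmless.
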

\begin{proof}
The statement is derived by co-induction. Let the operator $\fun{\Psi}{\PPP(Y)}{\PPP(Y)}$ be defined by
\[
\Psi(L) := \{\, y \mid (\exists e \in E) (\exists y_1, \ldots, y_{\ar{e}} \in L)\, y = e(y_1, \ldots, y_{\ar{e}}) \,\}.
\]
Then $\CC_Y = \nu\Psi$. Set 
\[
Z:= \set{y \in Y}{(\exists f \in \CC_{\FF(X, Y)}) (\exists x_1, \ldots x_{\ar{f}} \in \CC_X)\, y = f(x_1, \ldots, x_{\ar{f}})}.
\]
We need to show that $Z \subseteq \Psi(Z)$. Let to this end
\[
G := \{\, f \in \FF(X, Y) \mid f[\CC^{\ar{f}}_X] \subseteq \Psi(Z) \,\}.
\]

Then we have to verify that $\CC_{\FF(X, Y)} \subseteq G$. Since $\CC_{\FF(X, Y)} = \JJJ(\CC_{\FF(X, Y)})$, it is equivalent to prove that $\JJJ(\CC_{\FF(X, Y)}) \subseteq G$. By the inductive definition of $\JJJ(\CC_{\FF(X, Y)})$ it suffices in addition to derive that $\Phi(\CC_{\FF(X, Y)})(G) \subseteq G$, that is we need to show that
\begin{itemize}

\item[(W)] If $e \in E$ and  $(f_1, \ldots, f_{\ar{e}}) \in \CC_{\FF(X, Y)}$, then $e \circ (f_1 \times \cdots \times  f_{\ar{e}}) \in G$.

\item[(R)]  If $f \in \FF^{(m)}(X, Y)$ and $1 \le i \le \ar{f}$ so that $f \circ d^{(i, m)} \in G$, for all $d \in D$, then $f \in G$.

\end{itemize}

(W) Let $e \in E$,  $f_1, \ldots, f_{\ar{e}} \in \CC_{\FF(X, Y)}$, and $\vec{x}_\kappa \in \CC^{\ar{f_\kappa}}_X$, for $1 \le \kappa \le \ar{e}$. Then $f_\kappa(\vec{x}_\kappa) \in Z$. Hence, $e(f_1(\vec{x}_1), \ldots, f_{\ar{e}}(\vec{x}_{\ar{e}})) \in \Psi(Z)$.

(R) Let $f \in \FF^{(m)}(X, Y)$ and $1 \le i \le m$ with $f \circ d^{(i, m)} \in G$, for all $d \in D$.  Let $x_1$, \ldots, $x_m \in \CC_X$. Then there is some $d_i \in D$ and some $\vec{x}'_i \in \CC^{\ar{d_i}}_X$ so that $x_i = d_i(\vec{x}'_i)$. It follows that
\begin{align*}
f(x_1, \ldots, x_m) 
&= f(x_1, \ldots, x_{i-1}, d_i(\vec{x}'_i), x_{i+1}, \ldots, x_m) \\
&= (f \circ d^{(i, m)})(x_1, \ldots, x_{i-1}, \vec{x}'_i, x_{i+1}, \ldots, x_m) \in \Psi(Z). \qedhere
\end{align*}
\end{proof}

For $n \in \NN_{0}$, let the set $\underline{n} := \{ 0, \ldots, n-1 \}$ be endowed with the discrete topology. Moreover, for $i \in \underline{n}$, let $\fun{g_i}{\underline{n}}{\underline{n}}$ map $\underline{n}$ constantly onto $i$. Then $(\underline{n}, \{ g_1, \ldots, g_{n-1} \})$ is a covering IFS with $\underline{n} = \CC_{\underline{n}}$.

\begin{corollary}\label{cor-power}
Let $(X, D)$ be an IFS. Then $\CC^{(1)}_{\FF(\underline{n}, X)} = \CC^n_X$.
\end{corollary}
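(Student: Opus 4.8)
The plan is to transport the statement across the obvious bijection $\Theta\colon \FF^{(1)}(\underline n, X) \to X^n$ that sends a unary map $f$ to the tuple $(f(0), \ldots, f(n-1))$, and to establish the two inclusions $\Theta[\CC^{(1)}_{\FF(\underline n, X)}] \subseteq \CC^n_X$ and $\CC^n_X \subseteq \Theta[\CC^{(1)}_{\FF(\underline n, X)}]$ separately. The first will follow from the evaluation result, the second from a co-induction, and I will read off $\CC^{(1)}_{\FF(\underline n, X)} = \CC^n_X$ through $\Theta$.

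For the inclusion into $\CC^n_X$ I would invoke Proposition~\ref{pn-eval} with the IFS $(\underline n, \{g_i\})$ in the role of the domain space and $(X, D)$ in the role of the codomain space, taking $m = 1$. It yields $\ev[\CC^{(1)}_{\FF(\underline n, X)} \times \CC_{\underline n}] \subseteq \CC_X$. Since $\CC_{\underline n} = \underline n$ by the remark preceding the corollary, every $j \in \underline n$ lies in $\CC_{\underline n}$, so for $f \in \CC^{(1)}_{\FF(\underline n, X)}$ we obtain $f(j) = \ev(f, j) \in \CC_X$ for each $j$; that is, $\Theta(f) \in \CC^n_X$.

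The reverse inclusion is the substantial half and I would derive it by co-induction on $\CC_{\FF(\underline n, X)} = \nu\JJJ$. As post-fixed point I take the set $F$ of all maps in $\FF(\underline n, X)$ of arity $0$ or $1$ all of whose values lie in $\CC_X$; its arity-$1$ members are exactly the functions $\Theta^{-1}(\CC^n_X)$, and its arity-$0$ members are the points of $\CC_X$. It then suffices to prove $F \subseteq \JJJ(F) = \mu\Phi(F)$, for then co-induction gives $F \subseteq \CC_{\FF(\underline n, X)}$ and restriction to arity $1$ completes the argument. To place each $h \in F$ in $\mu\Phi(F)$ I distinguish three cases. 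An arity-$0$ member is a point $v \in \CC_X$; unfolding $v = e(y_1, \ldots, y_{\ar e})$ with $e \in D$ and $y_\kappa \in \CC_X$ by the co-closure of $\CC_X$, Rule~(W) applied to the arity-$0$ factors $y_1, \ldots, y_{\ar e} \in F$ produces $v$. For the arity-$1$ constant map $c_v$ onto such a $v$, Rule~(W) is applied to the factors $c_{y_1}, y_2, \ldots, y_{\ar e}$, where $c_{y_1} \in F$ is the arity-$1$ constant onto $y_1$ and $y_2, \ldots, y_{\ar e} \in F$ are the remaining points; since $1 + 0 + \cdots + 0 = 1$ the product has arity $1$ and the composite equals $c_v$ (when $D$ is unary this degenerates to $c_v = e \circ c_{y_1}$). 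Finally a general arity-$1$ map $h \in F$ is produced by a single application of Rule~(R) with $i = 1$: because each domain digit $g_j$ is the constant with value $j$, one has $h \circ g_j^{(1,1)} = h \circ g_j = c_{h(j)}$, and every $c_{h(j)}$ already lies in $\mu\Phi(F)$ by the previous case, whence so does $h$.

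The main obstacle, and what makes the choice of $F$ delicate, is the arity bookkeeping in this reverse inclusion. Rule~(W) writes a codomain digit $e \in D$, so if one insisted on building an arity-$1$ map purely by writing from arity-$1$ factors the arities would add up to $\ar e$ rather than $1$; the remedy is to route the single input through exactly one factor and to fill the remaining $\ar e - 1$ slots with arity-$0$ maps, which are available precisely because the arity-$0$ fragment of $\CC_{\FF(\underline n, X)}$ reproduces $\CC_X$. Once this is arranged, the reading rule only ever has to split on the value of the one input — a finite, well-founded case distinction because $\underline n$ is finite and its digits are unary, so that reading preserves arity — and the would-be nested induction collapses to the elementary three-case argument above.
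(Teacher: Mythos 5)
Your proof is correct, and its skeleton is the same as the paper's: the inclusion $\CC^{(1)}_{\FF(\underline{n}, X)} \subseteq \CC^n_X$ comes from Proposition~\ref{pn-eval} together with $\CC_{\underline{n}} = \underline{n}$, exactly as in the paper, and the converse inclusion is a co-induction in which constant maps are produced by Rule~(W) after one unfolding of $\CC_X$ and the input is consumed by a single application of Rule~(R). The one genuine difference is the choice of post-fixed point. The paper takes $F = \FF^{(1)}(\underline{n}, \CC_X)$, containing arity-$1$ maps only, and exploits the corollary's hypothesis that $(X,D)$ is an IFS, so that every $d \in D$ is unary: writing $f(i) = d_i(x_i)$ it factors $f \circ g_i = d_i \circ h_i$ with $h_i$ the arity-$1$ constant onto $x_i$, and no arity mismatch can occur. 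You instead enlarge $F$ by the arity-$0$ maps with value in $\CC_X$ and pad the surplus argument places of a digit $e$ with arity-$0$ factors, so that $c_v = e \circ (c_{y_1} \times y_2 \times \cdots \times y_{\ar{e}})$ still has arity $1$; this is legitimate since $\FF(X,Y)$ admits arity $m \ge 0$ and Rule~(W) places no arity constraint on its factors. What your version buys is genuine extra generality: the identical argument proves the corollary for extended IFS with multi-ary digits, where the paper's two-line factorisation is simply unavailable, while in the unary case your cases collapse to precisely the paper's proof, as you observe yourself. What the paper's version buys is brevity, since under its stated hypothesis the padding and the three-way case distinction are unnecessary.
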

\begin{proof}
As a consequence of the preceding proposition we have that $\CC^{(1)}_{\FF(\underline{n}, X)} \subseteq \CC^n_X$. The converse inclusion follows by co-induction. We show that $\FF^{(1)}(\underline{n}, \CC_X) \subseteq \JJJ(\FF^{(1)}(\underline{n}, \CC_X))$.

Let $f \in \FF^{(1)}(\underline{n}, \CC_X)$ and $i \in \underline{n}$. Then there are $d_i \in D$ and $x_{i} \in \CC_X$ so that $f(i) = d_i(x_{i})$. Set $h_{i}(a) = x_{i}$, for $a \in \underline{n}$. Then $h_{i} \in \FF^{(1)}(\underline{n}, \CC_X)$. Moreover,  $f \circ g_i = d_i \circ h_{i}$. As $\JJJ(\FF^{(1)}(\underline{n}, \CC_X))$ is closed under Rule~(W), we obtain that $d_i \circ h_{i} \in \JJJ(\FF^{(1)}(\underline{n}, \CC_X))$. Hence, 
$f \circ g_i \in \JJJ(\FF^{(1)}(\underline{n}, \CC_X))$, for all $i \in \underline{n}$. With Rule~(R) we therefore obtain that $f \in \JJJ(\FF^{(1)}(\underline{n}, \CC_X))$.
\end{proof}

Observe the difference between $\CC_{X^{n}}$ and $\CC^{n}_{X}$. In the first case the digits operate on all components simultaneously, whereas in the other case they do this in an uncoordinated way and approximate the single components just as needed.  

\begin{proposition}\label{pn-fctcartprod}
Let $(X, D), (Y_1, E_1)$, and $(Y_2, E_2)$ be extended IFS. Then for $f \in \CC_{\FF(X, Y_1)}$ and $g \in \CC_{\FF(X, Y_2)}$, $f \times g \in \CC_{\FF(X, Y_1 \times Y_2)}$.
\end{proposition}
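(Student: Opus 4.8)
The plan is to adapt the nested co-induction that gave closure under composition (Proposition~\ref{pn-clcompo}). I would set
\[
F := \set{f \times g}{f \in \CC_{\FF(X, Y_1)} \wedge g \in \CC_{\FF(X, Y_2)}},
\]
where $f \times g$ sends $(\vec{x}, \vec{x}')$ to $(f(\vec{x}), g(\vec{x}'))$ and has arity $\ar{f} + \ar{g}$. By co-induction it suffices to show $F \subseteq \JJJ^{X, Y_1 \times Y_2}(F)$, since then $F \subseteq \nu\JJJ^{X, Y_1\times Y_2} = \CC_{\FF(X, Y_1\times Y_2)}$. To prove the inclusion I would run a main induction on $f$ along the inductive generation $\CC_{\FF(X, Y_1)} = \JJJ^{X, Y_1}(\CC_{\FF(X, Y_1)})$, i.e.\ establish $\CC_{\FF(X, Y_1)} \subseteq G_1$ with
\[
G_1 := \set{f \in \FF(X, Y_1)}{(\forall g \in \CC_{\FF(X, Y_2)})\, f \times g \in \JJJ^{X, Y_1\times Y_2}(F)}
\]
by checking $\Phi^{X, Y_1}(\CC_{\FF(X, Y_1)})(G_1) \subseteq G_1$, that is, Rules~(W) and (R).

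The reading rule for $f$ is the easy case. If $h \circ d^{(i, \ar{h})} \in G_1$ for all $d \in D$ with $1 \le i \le \ar{h}$, then, because argument $i$ lies in the $h$-block, $(h \times g) \circ d^{(i, \ar{h}+\ar{g})} = (h \circ d^{(i, \ar{h})}) \times g \in \JJJ^{X, Y_1\times Y_2}(F)$; closure of $\JJJ^{X, Y_1\times Y_2}(F)$ under Rule~(R) at the fixed position $i$ then yields $h \times g \in \JJJ^{X, Y_1\times Y_2}(F)$, so $h \in G_1$. For the writing rule for $f$, let $f = e_1 \circ (f_1 \times \cdots \times f_{\ar{e_1}})$ with $f_\kappa \in \CC_{\FF(X, Y_1)}$, fix this $f$, and run an inner induction on $g$, proving $\CC_{\FF(X, Y_2)} \subseteq G_2$ for $G_2 := \set{g \in \FF(X, Y_2)}{f \times g \in \JJJ^{X, Y_1\times Y_2}(F)}$. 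Its reading rule is symmetric to the one above, now reading in the $g$-block at position $\ar{f}+i$.

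The crux is the inner writing rule, where at the same time $g = e_2 \circ (g_1 \times \cdots \times g_{\ar{e_2}})$. After padding the digit sets of $Y_1$ and $Y_2$ to a common arity as in Section~\ref{sec-prod}, the two heads merge into the product digit $\pair{e_1, e_2}$ of $Y_1 \times Y_2$, and the natural subtrees are the maps $f_\kappa \times g_\kappa \in F$; an application of Rule~(W) should then place $f \times g$ in $\JJJ^{X, Y_1\times Y_2}(F)$. Here is the main obstacle: the genuine product $\pair{e_1, e_2} \circ ((f_1 \times g_1) \times \cdots \times (f_{\ar{e_1}} \times g_{\ar{e_1}}))$ feeds the components in the \emph{interleaved} coordinate order $\vec{a}_1, \vec{b}_1, \vec{a}_2, \vec{b}_2, \ldots$, whereas $f \times g$ lists all of $f$'s arguments before all of $g$'s. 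The two maps therefore differ by a fixed block-shuffle of the input coordinates, and Rule~(W) is too rigid to absorb a permutation sitting on top of a product.

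I expect reconciling this reordering to be the real work of the proof. The clean way is to strengthen the co-inductive invariant so that $F$ consists of all maps that agree with some $f \times g$ after the canonical reindexing matching the product factorisation — equivalently, to let the reading nodes of the extracted tree transformer consume the $\ar{f}+\ar{g}$ input trees in the interleaved order demanded by $\pair{e_1,e_2}$ — and then to re-verify that both rules of both induction steps respect this enlarged $F$. As in the composition proof, Lemma~\ref{lem-domclos} is likely to be needed to keep a component map inside $\CC_{\FF(X, Y_1)}$ or $\CC_{\FF(X, Y_2)}$ whenever a reading step is pushed into one factor during this bookkeeping.
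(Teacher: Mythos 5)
Structurally, your proposal \emph{is} the paper's proof: the same set $F$, the same nested main/side induction (your $G_1$ and $G_2$ are the paper's $A$ and $B(h)$), and the same literal identities in the two reading cases, which are indeed correct as stated. The interleaving obstacle you isolate at the inner writing rule is exactly where the paper does its work, and there the paper simply asserts the equation
\[
(e_1 \circ (f_1 \times \cdots \times f_m)) \times (e_2 \circ (g_1 \times \cdots \times g_m)) = \pair{e_1, e_2} \circ ((f_1 \times g_1) \times \cdots \times (f_m \times g_m))
\]
and applies Rule~(W). As you observe, this equation is false as written: the two sides differ by the block shuffle relating the concatenated to the interleaved argument order (compare the proof of Theorem~\ref{thm-compval}, where the same manoeuvre is admitted to hold only ``up to some reshuffling of arguments''). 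So your ``main obstacle'' is not a weakness of your attempt; it is a genuine gap in the published argument, which as written only certifies that the \emph{interleaved} product lies in $\CC_{\FF(X, Y_1 \times Y_2)}$.

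What keeps your proposal from being a proof is that the announced repair cannot be carried out. Closing $F$ under reindexings handles the reading rules (these commute with relabelling of positions, as you note), but it does not help in the inner writing case: Rule~(W) certifies only maps whose arguments fall into contiguous blocks, in order, one block per component, so $((f_1 \times g_1) \times \cdots \times (f_m \times g_m)) \circ \tau \circ \sigma$ with a non-block-respecting permutation on top is never of the required shape; and Rule~(R) offers no recursion there, since both $f$ and $g$ enter via the writing clause and peeling input digits off a map of the form $e_1 \circ (f_1 \times \cdots \times f_m)$ returns a map of the same form, so no inductive measure decreases. This is not a removable defect of the bookkeeping: for general extended IFS, $\CC_{\FF(X,Y)}$ is not closed under argument permutation at all. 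Take $X$ to be Cantor space with unary digits $\alpha \mapsto i\alpha$ and $Y$ Cantor space with the single covering digit $e(\beta,\gamma) :=$ the interleaving bijection; co-recursively one builds injective $f_1, f_2 \in \CC^{(1)}_{\FF(X,Y)}$ (for instance $f \circ d_i = e \circ (c_{u_i} \times f)$ with distinct nullary constants $c_{u_0}, c_{u_1} \in \CC_{\FF(X,Y)}$). Every member of $\CC_{\FF(X,Y)}$ must, after finitely many applications of Rule~(R), factor through Rule~(W); readings preserve the left-to-right order of argument ancestry, and injectivity of $e$ pins each component to ``its'' contiguous block, so the swap of $e \circ (f_1 \times f_2)$ is not in $\CC_{\FF(X,Y)}$, and the same argument shows that the concatenated $f \times g$ of the proposition escapes $\CC_{\FF(X, Y_1 \times Y_2)}$ in this example, even though the interleaved product is there. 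The statement therefore has to be either read modulo argument permutation (as the paper tacitly does), or established under the hypotheses of Theorem~\ref{thm-phiteqcon}, where $\CC_{\FF} = \UC$ and closure under exchange and products is inherited from uniform continuity rather than derived from Rules~(W) and (R). Your instinct that the reordering is ``the real work'' is exactly right; it just cannot be discharged by the invariant strengthening you sketch.
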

\begin{proof}
Without loss of generality we assume that all maps in $E_1 \cup E_2$ have the same arity, say $m$. 
The proof of the statement proceeds by co-induction. Let
\[
F := \set{f \times g}{f \in \CC_{\FF(X, Y_1)} \wedge g \in \CC_{\FF(X, Y_2)}}.
\]
Then we must show that $F \subseteq \JJJ^{X, Y_1 \times Y_2}(F)$. Set
\[
A:= \set{f \in \FF(X, Y_1)}{(\forall g \in \CC_{\FF(X, Y_2)})\, f \times g \in \JJJ^{X, Y_1 \times Y_2}(F)}.
\]
We show that $\CC_{\FF(X, Y_1)} \subseteq A$. Since $\CC_{\FF(X, Y_1)} = \JJJ^{X, Y_1}(\CC_{\FF(X, Y_1)})$, it suffices to prove that $\JJJ^{X, Y_1}(\CC_{\FF(X, Y_1)}) \subseteq A$. By induction we show that $\Phi^{X, Y_1}(\CC_{\FF(X, Y_1)})(A) \subseteq A$, which means that we have to verify Rules~(W) and (R).

($\text{W}_\text{main}$) Let $e_1 \in E_1$ and $f_1, \ldots, f_m \in \CC_{\FF(X, Y_1)}$. We need to demonstrate that $e_1 \circ (f_1 \times \cdots \times f_m) \in A$. To this end we have to show that for all $g \in \CC_{\FF(X, Y_2)}$, $(e_1 \circ (f_1 \times \cdots \times f_m)) \times g \in \JJJ^{X, Y_1 \times Y_2}(F)$. Let
\[
B(h) := \set{g \in \FF(X, Y_2)}{h \times g \in \JJJ^{X, Y_1 \times Y_2}(F)}.
\]
For $h := e_1 \circ (f_1 \times \cdots \times f_m)$ we prove that $\CC_{\FF(X, Y_2)} \subseteq B(h)$, for which it suffices to demonstrate that
 \begin{equation}\label{eq-fctcartprod}
 \Phi^{X, Y_2}(\CC_{\FF(X, Y_2)})(B(h)) \subseteq B(h),
 \end{equation}
 which is done by side induction.
 
 ($\text{W}_\text{side}$) Let $e_2 \in E_2$ and $g_1, \ldots, g_m \in \CC_{\FF(X, Y_2)}$. We need to show that
 \[
 (e_1 \circ (f_1 \times \cdots \times f_m)) \times (e_2 \circ (g_1 \times \cdots \times g_m)) \in \JJJ^{X_1 \times X_2, Y_1 \times Y_2}(F).
 \]
 We have that
 \[
(e_1 \circ (f_1 \times \cdots \times f_m)) \times (e_2 \circ (g_1 \times \cdots \times g_m)) 
= \pair{e_1, e_2} \circ ((f_1 \times g_1) \times \cdots \times (f_m \times g_m)),
\]
where $\pair{e_1, e_2} \in E^{\times}$ and $f_1 \times g_1, \ldots, f_m \times g_m \in F$. Hence, $(e_1 \circ (f_1 \times \cdots \times f_m)) \times (e_2 \circ (g_1 \times \cdots \times g_m)) \in \JJJ^{X, Y_1 \times Y_2}(F)$, by Rule~(W).

($\text{R}_\text{side}$) Let $g \in \FF(X, Y_2)$ and $1 \le i \le \ar{g}$ so that $g \circ d^{(i, \ar{g})}
\in B(h)$, for all $d \in D$. Then $h \times (g \circ d^{(i, \ar{g})}) \in \JJJ^{X, Y_1 \times Y_2}(F)$. Let $\tilde{h} := h \times g$. Then $\ar{\tilde{h}} = \ar{h} + \ar{g}$. For $j := \ar{h} + i$ we therefore have that $\tilde{h} \circ d^{(j, \ar{h}+\ar{g})} = h \times (g \circ d^{(i, \ar{g})})$. It follows that $\tilde{h} \circ d^{(j, \ar{h}+\ar{g})} \in \JJJ^{X, Y_1 \times Y_2}$, for all $d \in D$, from which we obtain by Rule~(R) that $\tilde{h} \in \JJJ^{X, Y_1 \times Y_2}$. Hence, $g \in B(h)$. This proves (\ref{eq-fctcartprod}).

($\text{R}_\text{main}$) It remains to verify Rule~(R) in the main induction. Let $f \in A$ and $1 \le i \le \ar{f}$ such that for all $d \in D$, $f \circ d^{(i, \ar{f})} \in A$. Thus, $(f \circ d^{(i, \ar{f})}) \times g \in \JJJ^{X, Y_1 \times Y_2}(F)$, for all $d \in D$  and $g \in \CC_{\FF(X, Y_{2})}$. Set $\bar{h}_g := f \times g$. Then $\ar{\bar{h}_g} = \ar{f} + \ar{g}$ and for all $d \in D$,
\[
\bar{h}_g \circ d^{(i, \ar{f}+\ar{g})} = (f \circ d^{(i, \ar{f})}) \times g.
\]
Since  $(f \circ d^{(i, \ar{f})}) \times g \in \JJJ^{X, Y_1 \times Y_2}(F)$ by our assumption, we obtain with Rule~(R) that $\bar{h}_g \in \JJJ^{X, Y_1 \times Y_2}(F)$. Thus, $f \in A$.
\end{proof}

Let $\mathbf{eIFS}$ be the category with extended IFS $(X, D)$ as objects, and for two objects $(X_{1}, D_{1})$ and $(X_{2}, D_{2})$, $\CC_{\FF(X_{1}, X_{2})}$ as set of morphisms from $(X_{1}, D_{1})$ to $(X_{2}, D_{2})$. Then it follows that $\prod_{i=1}^{2} (X_{i}, D_{i}) := (X_{1} \times X_{2}, D^{\times})$ with canonical projection $\pr^{(2)}_{i}$, for $i = 1, 2$, is the categorical product of $(X_{1}, D_{1})$ and $(X_{2}, D_{2})$.

Let $(Y_{1}, D_{1})$ and $(Y_{2}, D_{2})$ be further objects in $\mathbf{eIFS}$ and $f_{i} \in \CC_{\FF(X_{i}, Y_{i})}$, for $i = 1, 2$. Set $h_{i} := f_{i} \circ (\pr^{(2)}_{i} \times \cdots \times \pr^{(2)}_{i})\,\, \text{($\ar{f_{i}}$-times)}$. Then $h_{i} \in \CC_{\FF(X_{1} \times X_{2}, Y_{i})}$. Hence, $h_{1} \times h_{2} \in \CC_{\FF(X_{1} \times X_{2}, Y_{1} \times Y_{2})}$. Define the action of $\prod$ on  $f_{1}, f_{2}$ by $\prod(f_{1}, f_{2}) := h_{1} \times h_{2}$. Then $\fun{\prod}{\mathbf{eIFS} \times \mathbf{eIFS}}{\mathbf{eIFS}}$ is a functor, showing that the product construction presented in Section~\ref{sec-prod} is functorial. 

As we have already seen, in the framework of computable digit spaces computability is the appropriate requirement for a map between such spaces to be considered uniformly continuous, constructively. We will now show that the above co-inductive-inductively defined function class captures exactly these functions.
Let 
\begin{equation*}
\UC(X, Y) := \{\, f \in \FF(X, Y) \mid \text{$f$ computable} \,\}.
\end{equation*}

\begin{proposition}\label{pn-contophi}\sloppy
Let $(X, D, Q_X)$ and $(Y, E, Q_Y)$ be computable digit spaces so that $(Y, E, Q_Y)$ is decidable and well-covering with approximable choice. Then $\UC(X, Y) \subseteq \CC_{\FF(X, Y)}$.
\end{proposition}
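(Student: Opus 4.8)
The plan is to proceed by co-induction, exactly as in Berger's treatment of the unary interval case~\cite{be}, but replacing the signed-digit average maps by the digits of $Y$ and the two-sided modulus estimate by the contraction estimate available in a digit space. Since $\CC_{\FF(X,Y)} = \nu\JJJ^{X,Y}$, the co-induction principle reduces the claim to showing that $\UC(X,Y)$ is $\JJJ^{X,Y}$-co-closed, that is
\[
\UC(X,Y) \subseteq \JJJ^{X,Y}(\UC(X,Y)).
\]
Recall that $\JJJ^{X,Y}(\UC(X,Y)) = \mu\Phi^{X,Y}(\UC(X,Y))$ is the least set closed under Rules~(W) and (R), where in Rule~(W) the component maps are drawn from $\UC(X,Y)$. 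Thus, for a fixed computable $f \in \UC(X,Y)$ I must exhibit $f$ as the result of finitely many applications of Rule~(R) (reading digits off the arguments) terminating in an application of Rule~(W) (writing one output digit) whose residual maps are again computable.

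First I would fix the numerical data. Let $\varepsilon \in \QQ_+$ be a well-covering number for $(Y,E)$ (Lemma~\ref{lem-wcnum}), let $\zeta$ be a computable modulus of continuity of $f$, and let $q < 1$ and $M$ be the maximal contraction factor and a bound of $X$. Choose $k \in \NN$ with $q^k \cdot M < \zeta(\varepsilon/2)$. Reading one digit off the $i$-th argument is precisely the passage from $f$ to the family $(f \circ d^{(i,\ar{f})})_{d \in D}$, and Rule~(R) states that membership of all these residuals in $\JJJ^{X,Y}(\UC(X,Y))$ forces membership of $f$. I would therefore organise an auxiliary induction (the \emph{inductive}, least-fixed-point part of $\JJJ^{X,Y}$) in which every argument is read to depth $k$. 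Since each $f_{S}$ with $S$ of height $k$ is a composite of $k$ digit maps, it is a $q^k$-contraction, so the cell it cuts out has diameter at most $q^k \cdot M$; hence the residual map $g$ obtained after reading satisfies $\range(g) \subseteq \ball{\rho_Y}{v}{\varepsilon/2}$ for any $v \in \range(g)$, by the modulus of $f$. By the well-covering property, and using decidability of $(Y,E,Q_Y)$ to find it effectively, there is $e \in E$ with $\range(g) \subseteq \ball{\rho_Y}{v}{\varepsilon} \subseteq \range(e)$.

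It then remains to perform the write. By Proposition~\ref{pn-epsrightinv}, approximable choice supplies a computable right inverse $e'$ of $e$; setting $(g_1, \ldots, g_{\ar{e}}) := e' \circ g$ gives $g = e \circ (g_1 \times \cdots \times g_{\ar{e}})$ once the arguments are duplicated by the diagonal map, which lies in the class by Lemma~\ref{lem-basics}(\ref{lem-diag}) and is compatible with products by Proposition~\ref{pn-fctcartprod}. Each $g_j$ is a composite of computable maps and therefore computable with a computable modulus, so $g_j \in \UC(X,Y)$; Rule~(W) then yields $g \in \JJJ^{X,Y}(\UC(X,Y))$, and unwinding the finitely many reading steps through Rule~(R) gives $f \in \JJJ^{X,Y}(\UC(X,Y))$, closing the co-induction.

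The hard part will be the clean interleaving of the two fixed points: the reading phase must be a \emph{terminating} induction (the least fixed point in the second argument of $\Phi^{X,Y}$) whose base case is exactly the situation in which a write becomes possible, while the repetition of writes ad infinitum is carried by the outer co-induction (the greatest fixed point in the first argument). A second, genuinely new, technical point relative to Berger's unary setting is the bookkeeping forced by multi-ary output digits: a write factors the residual through $e$ by tupling rather than by a literal product, so one must route the correlated dependence of the $g_j$ on all arguments through the diagonal map and the product-closure of $\CC_{\FF(X,Y)}$, and verify throughout that every modulus of continuity produced along the way is in fact computable.
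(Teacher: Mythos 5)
Your proposal follows the paper's own proof in all essentials: an outer co-induction reducing the claim to $\UC(X,Y) \subseteq \JJJ^{X,Y}(\UC(X,Y))$; an inner, well-founded reading phase whose depth is governed by the computable modulus of $f$, the contraction factor $q$ and the bound $M$; and a write performed exactly as in the paper, via the well-covering number, decidability, and the computable right inverse $e'$ supplied by Proposition~\ref{pn-epsrightinv}, with residuals $\pr^{(\ar{e})}_\kappa \circ e' \circ g$ that are again computable. The only organisational difference is in the reading phase: you read every argument to a uniform depth $k$, while the paper carries a per-argument index vector $\vec{\imath}$ and inducts on $\max_\kappa i_\kappa$. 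These are interchangeable, and both need a carefully chosen well-founded measure --- you leave yours open, and the paper's induction on the maximum needs the same care, since reading one argument does not lower the maximum when several arguments attain it (a lexicographic measure, maximum first and then the number of arguments attaining it, repairs both versions). This part is not a substantive divergence.

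The genuine problem is your write step. You correctly spotted that with multi-ary output digits the decomposition delivered by $e'$ is a \emph{tupling}, $g(\vec{x}) = e(g_1(\vec{x}), \ldots, g_{\ar{e}}(\vec{x}))$ with all $g_\kappa$ depending on the \emph{same} arguments, whereas Rule~(W) composes $e$ with maps on disjoint argument lists. But the repair you propose does not close this gap. Lemma~\ref{lem-basics}(\ref{lem-diag}) places the diagonal in $\CC_{\FF(X, X^n)}$, that is, among maps into the product IFS $(X^n, D^\times)$, and Propositions~\ref{pn-clcompo} and~\ref{pn-fctcartprod} again combine maps only along disjoint argument lists; no combination of these results can manufacture the argument sharing that the tupling requires, and in any case they yield membership in $\CC$-classes, while inside the co-induction you need $g \in \JJJ^{X,Y}(\UC(X,Y))$, where such closure properties are not available. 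The obstruction is not cosmetic: under the literal disjoint-argument reading of Rule~(W), no (W)/(R)-derivation can ever reach a map whose output-digit coordinates are correlated through a common input, so the inclusion itself would fail --- for instance $x \mapsto [\av{-1},\av{1}](\{x\},\{x\}) = \av{-1}[\{x\}] \cup \av{1}[\{x\}]$ is a computable map from $\II$ to $\KKK(\II)$ (a space satisfying all hypotheses on $(Y,E,Q_Y)$ by Propositions~\ref{pn-comifs}, \ref{pn-comphypdig}, \ref{pn-dechypdig} and~\ref{pn-wachyp}), yet none of its residuals admits Rule~(W), since neither coordinate can be made constant and no single digit range contains all its values. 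What the paper actually does at this point is simply to assert $f = e \circ (g_1 \times \cdots \times g_{\ar{e}})$ with $g_\kappa = \pr^{(\ar{e})}_\kappa \circ e' \circ f$ all sharing $f$'s $m$ arguments, i.e.\ it tacitly reads Rule~(W) as tupling; under that reading your write step is immediate and the whole diagonal detour is unnecessary. So you should either adopt that reading of $\Phi^{X,Y}$ and delete the detour, or accept that the write cannot be performed at all; what you cannot do is obtain it from Lemma~\ref{lem-basics}(\ref{lem-diag}) and Proposition~\ref{pn-fctcartprod}.
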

\begin{proof}
We show the statement by co-induction, that is we prove that 
\[
\UC(X, Y) \subseteq \JJJ(\UC(X, Y)).
\]

Let $\varepsilon_Y \in \QQ_+$ be a well-covering number for $Y$ and set
\[
V := \set{(y, y') \in Y \times Y}{\rho_Y(y, y') < \varepsilon_Y / 2}.
\]
Moreover, let $M_X$ be a bound of $X$ and $q_X < 1$ the maximum of the contraction factors of the $d \in D$. Let $\vec{x} \in X^m$ denote the $m$-tuple $(x_1, \ldots, x_m)$, and similarly for $\vec{x}'$ and $\vec{\imath}$.
Define
\begin{equation*}
U_{\vec{\imath}}^{(m)} := \{\,(\vec{x}, \vec{x}') \in X^m \times X^m \mid 
 (\forall 1 \le \kappa \le m) \rho_X(x_\kappa, x'_\kappa) \le q_X^{i_\kappa} \cdot M_X \,\}, 
\end{equation*}
and set
\[
\UC^{(m)}_{\vec{\imath}} := \set{f \in \UC^{(m)}(X, Y)}{U^{(m)}_{\vec{\imath}} \subseteq (f \times f)^{-1}[V]}.
\]
Since every $f \in \UC^{(m)}(X, Y)$ comes equipped with a computable modulus of continuity, an $\vec{i} \in \NN^{m}$ can be computed for each such $f$ so that $f \in \UC^{(m)}_{\vec i}$. Therefore, it suffices to show that $\UC^{(m)}_{\vec{\imath}} \subseteq \JJJ(\UC(X, Y))$, for all $m$ and every $\vec{\imath} \in \NN^m$, which will be done by induction on $i := \max \set{i_\kappa}{1 \le \kappa \le m}$.

For $i = 0$ we have that $U^{(m)}_{\vec{\imath}} = X^m \times X^m$. Therefore, it follows for $f \in \UC^{(m)}_{\vec{\imath}}$ and $\vec{u} \in Q_X^m$ that $f[X^m] \subseteq \ball{\rho_Y}{f(\vec{u})}{\varepsilon_Y / 2}$. Since $f$ is computable, a basic element $v \in Q_Y$ can be found with $\rho_Y(f(\vec{u}), v) < \varepsilon_{Y} / 2$. Use decidability to pick some $e \in E$ with $\ball{\rho_Y}{v}{\varepsilon_Y} \subseteq \range(e)$. Then $f[X^m] \subseteq \ball{\rho_Y}{v}{\varepsilon_Y} \subseteq \range(e)$. 

Note that by Proposition~\ref{pn-epsrightinv}, digit $e$ has a computable right inverse $e'$. For $1 \le \kappa \le \ar{e}$, set $g_\kappa := {\pr^{(\ar{e})}_\kappa} \circ\,  e' \circ f$. Then $g_{\kappa}$ is computable, that is  $g_\kappa \in \UC^{(m)}(X, Y)$. Since $f = e \circ (g_1 \times \cdots \times g_{\ar{e}})$, we therefore obtain with Rule~(W) that $f \in \JJJ(\UC(X, Y))$.

Now, assume that $i > 0$ and let $f \in \UC^{(m)}_{\vec{\imath}}$. Then $(f \times f)^{-1}[V] \supseteq U^{(m)}_{\vec{\imath}}$. Choose $1 \le j \le m$ such that $i_j > 0$.
We want to apply Rule~(R). Therefore, we have to show that for all $d \in D$, $f \circ d^{(j, m)} \in \JJJ(\UC(X, Y))$. Since  $d \in D$ is a contraction, there is some $\ell < i_j$ so that 
\[
(d \times d)^{-1}[U^{(1)}_{i_j}]\supseteq \set{(\vec{z}, \vec{z}') \in X^{\ar{d}} \times X^{\ar{d}}}{(\forall 1 \le \sigma 
\le \ar{d})\, \rho_{X}(z_{\sigma}, z'_{\sigma}) \le q^{\ell}_{X} \cdot M_{X}}.
\]
For $1 \le \kappa \le \ar{d}+m-1$ set 
\[
k_\kappa := \begin{cases}
				 i_\kappa & \text{if $1 \le \kappa < j$ or $\ar{d}+j \le \kappa < \ar{d}+m$} \\
				 \ell & \text{if $j \le \kappa < \ar{d}+j$.} 
        		\end{cases}
\]		
Let $\bar{k} := \max \set{k_\kappa}{1 \le \kappa \le \ar{d}+m-1}$. Then $\bar{k} < i$. Moreover, we have that $(d^{(j, m)} \times d^{(j, m)})^{-1}[U^{(m)}_{\vec{\imath}}] \supseteq U^{(\ar{d}+m-1)}_{\vec{k}}$.
Hence,  $(f \circ d^{(j, m)} \times f \circ d^{(j, m)})^{-1}[V] \supseteq U^{(\ar{d}+m-1)}_{\vec{k}}$, from which it follows with the induction hypothesis that $ f \circ d^{(j, m)} \in \JJJ(\UC(X, Y))$.
\end{proof}

\begin{proposition}\label{pn-phitocon}
Let $(X, D, Q_X)$ and $(Y, E, Q_Y)$ be decidable and well-covering computable digit spaces with approximable choice. Then $\CC_{\FF(X, Y)} \subseteq \UC(X, Y)$.
\end{proposition}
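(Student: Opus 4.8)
The plan is to show that membership of $f$ in $\CC_{\FF(X,Y)}$ furnishes, for every prescribed output precision, a computable procedure turning sufficiently precise approximations of the arguments into an approximation of $f(\vec x)$, and to read off a modulus of continuity from the same data; the conclusion $f \in \UC(X,Y)$ then follows by Proposition~\ref{pn-charunicont}. The argument is organised exactly along the two layers of the fixed point $\CC_{\FF(X,Y)} = \nu\JJJ$ with $\JJJ = \mu\Phi$: the inner least fixed point (rule (R), the inductive ``reading'' part) will be handled by induction, the outer greatest fixed point (rule (W), the co-inductive ``writing'' part) by a recursion on the desired output depth.

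First I would isolate a \emph{one-step lemma}. Using $\CC_{\FF(X,Y)} = \Phi(\CC_{\FF(X,Y)})(\CC_{\FF(X,Y)}) = \mu G.\,\Phi(\CC_{\FF(X,Y)})(G)$, every $f \in \CC_{\FF(X,Y)}$ has a well-founded derivation whose leaves are instances of rule (W) and whose inner nodes are instances of rule (R). By induction on this derivation I would establish: there is a finite, computably bounded number of argument digits that must be inspected, after which one determines a digit $e \in E$ together with functions $f_1, \ldots, f_{\ar{e}} \in \CC_{\FF(X,Y)}$ such that $f$ coincides with $e \circ (f_1 \times \cdots \times f_{\ar{e}})$ on the arguments compatible with the inspected digits. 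The base case (W) inspects nothing and returns $e$ and the $f_\kappa$ directly; the step case (R) inspects the head digit $d$ of the $i$-th argument tree and recurses on $f \circ d^{(i,\ar{f})} \in \CC_{\FF(X,Y)}$, the induction hypothesis applying because this function sits strictly lower in the derivation. Well-foundedness of the inner $\mu$ is precisely what guarantees that the reading phase terminates; decidability and well-covering of $Y$ (to locate a suitable $e$) together with computability of the digits in $E$ and their right inverses, available by Proposition~\ref{pn-epsrightinv}, make each step effective.

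Second, I would build the full approximation procedure by recursion on an output depth $n$. Let $q_Y < 1$ bound the contraction factors of the digits in $E$ and let $M_Y$ be a bound of $Y$. To approximate $f(\vec x)$ within $q_Y^{\,n} \cdot M_Y$, apply the one-step lemma to obtain $e$ and $f_1,\ldots,f_{\ar{e}}$, recurse to depth $n-1$ on each $f_\kappa$ applied to the relevant residual arguments, and compose the results with $e$; contractivity of $e$ divides the error by at least $q_Y$ at each level, so depth $n$ yields the stated bound. Accumulating, over the $n$ levels, the computably bounded numbers of argument digits inspected by the successive one-step applications yields a computable modulus of continuity, i.e.\ a computable $\zeta$ with $f[\ball{\rho_X}{\vec u}{\zeta(\varepsilon)}] \subseteq \ball{\rho_Y}{v}{\varepsilon}$ for the output basic element $v$ produced. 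An appeal to Proposition~\ref{pn-charunicont} then gives $f \in \UC(X,Y)$.

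The main obstacle is the correct interleaving of the inductive reading phase with the co-inductive writing phase, and in particular the termination of reading. The delicate point is to formalise the one-step lemma as a genuine induction on the inner least fixed point $\JJJ(\CC_{\FF(X,Y)}) = \mu G.\,\Phi(\CC_{\FF(X,Y)})(G)$, threading through it both the computable bound on the number of inspected digits and the invariant that the residual functions $f_\kappa$ remain in $\CC_{\FF(X,Y)}$ (so that the outer recursion on depth may be resumed). A secondary, purely bookkeeping, difficulty is to combine the per-coordinate reading bounds accumulated across the $n$ levels into a single global modulus of continuity, and to keep the error estimate uniform in the arguments.
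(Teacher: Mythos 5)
Your overall skeleton does match the paper's proof: an outer induction on output precision (the paper's induction on $j$, with target relations $V_j$ of width $q_Y^j\cdot M_Y$) interleaved with an inner induction on the "reading" layer $\JJJ(\CC_{\FF(X,Y)})=\mu G.\,\Phi(\CC_{\FF(X,Y)})(G)$ (the paper's side induction verifying rules (W) and (R)), with contractivity of the $E$-digits making the precision index drop at each writing step. Up to that point the plan is sound, and your one-step lemma is a fair description of what the well-foundedness of the inner least fixed point provides.

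The genuine gap is in what you dismiss as "purely bookkeeping": the passage from "a computably bounded number of argument digits is inspected" to "a computable modulus of continuity $\zeta$". These are statements of different kinds, and the inference fails as stated: two metrically close points $\vec x,\vec x'$ need not admit tree representations sharing any digits at all (nearby points on either side of an overlap of digit ranges get decomposed by different digits), so a bound on reading depth says nothing, by itself, about the behaviour of $f$ on a metric ball. Classically $f$ is still continuous by a pasting argument, but the proposition asserts a \emph{computable} modulus, and constructing it is exactly the content of the paper's case (R), which uses the hypotheses on the \emph{source} space $(X,D,Q_X)$ that your proposal never invokes: well-covering and decidability of $X$ produce a single digit $\tilde d\in D$ with $\ball{\rho_X}{x_i}{q_X^{\hat{r}}\cdot M_X}\subseteq\range(\tilde d)$, so that \emph{every} point close to $x_i$ is decomposed by the \emph{same} digit; then approximable choice for $X$ (Proposition~\ref{pn-epsrightinv} applied to $D$, not to $E$) yields a computable right inverse $\tilde d'$ whose computable modulus of continuity transfers closeness of the arguments to closeness of their preimages, which is what lets the induction hypothesis for $f\circ\tilde d^{(i,\ar{f})}$ be applied. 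Your proposal instead attributes the effectivity burden to $Y$ (decidability, well-covering, right inverses of $E$-digits), none of which is needed at this point — the digit $e$ and the residual functions $f_\kappa$ are supplied by the derivation itself, not found by search — and it processes arguments given as trees, whereas computability in the sense of Proposition~\ref{pn-charunicont} requires a procedure acting on basic elements $\vec u\in Q_X^{\ar{f}}$; turning such inputs into digit decompositions is again precisely where $X$'s well-covering, decidability and approximable choice are indispensable. As written, the proof establishes only that $f$ acts computably on tree representations, not that $f\in\UC(X,Y)$.
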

\begin{proof}
Since computable digit spaces with approximable choice are constructively dense, Proposition~\ref{pn-beq} permits to derive the assertion for the dense subspaces $Q^{(z)}_{D}$ and  $Q^{(z')}_{E}$, respectively, instead of $Q_{X}$ and $Q_{Y}$. To simplify notation we write $Q_{D}$ instead of $Q^{(z)}_{D}$ and similarly for $Q^{(z')}_{E}$. Let $z \in Q_{X}$ and $z' \in Q_{Y}$ be fixed for the rest of the proof.

Now, let $M_X$ and $M_{Y}$, respectively, be bounds of $X$ and $Y$, and $q_{X}, q_Y < 1$ be the maximum of the contraction factors of the $d \in D$ and/or $e \in E$.  For $j \in \NN$, set
\[
V_j := \set{(y, y') \in Y \times Y}{\rho_Y(y, y') \le q_Y^j \cdot M_Y}.
\]
Moreover, let 
\[
U^{(m)}_{i} := \set{(\vec{x}, \vec{x}') \in X^{m} \times X^{m}}{(\forall 1 \le \kappa \le m)\, \rho_{X}(x_{\kappa}, x'_{\kappa}) \le q_{X}^{i} \cdot M_{X}}
\]
and define
\begin{equation*}
\begin{split}
G_j := \{\, f \in \FF(X, Y) \mid (\exists i \in \NN)\, U^{(\ar{f})}_{i} \subseteq (f \times f)^{-1}[V_j] \wedge \hspace{3cm} \\
(\forall \vec{u} \in Q^{\ar{f}}_{D}) (\exists v \in Q_{E})\, (f(\vec{u}), v) \in V_{j} \,\}.
\end{split}
\end{equation*}
Then we have to show that $\CC_{\FF(X, Y)} \subseteq G_j$, for all $j \in \NN$. We proceed by induction on $j$.

The case $j = 0$ is obvious, as $V_j = Y \times Y$ and hence for $i := 0$, $(f \times f)^{-1}[V_j] \supseteq U^{(\ar{f})}_{i}$, for every $f \in \FF(X, Y)$. In addition, $(f(\vec{u}), v) \in V_{j}$, for any $\vec{u} \in Q^{\ar{f}}_{D}$ and $v \in Q_{E}$. 

Now, assume that $j > 0$. We show by side induction that $\JJJ(\CC_{\FF(X, Y)}) \subseteq G_j$. Note here that $\CC_{\FF(X, Y)} = \JJJ(\CC_{\FF(X, Y)})$. By the inductive definition of $\JJJ(\CC_{\FF(X, Y)})$ it therefore suffices to show that $\Phi(\CC_{\FF(X, Y)})(G_j) \subseteq G_j$, that is we need to show that
\begin{itemize}
\item[(W)] If $e\in E$ and $f_1, \ldots, f_{\ar{e}} \in \CC_{\FF(X, Y)}$, then $e \circ (f_1 \times \cdots \times f_{\ar{e}}) \in G_j$.

\item[(R)] If $f \in \FF(X, Y)$ and $1 \le k \le \ar{f}$  so that $f \circ d^{(k, \ar{f})} \in G_j$, for all $d \in D$, then $f \in G_j$.

\end{itemize}

(W) Let $e \in E$ and $f_1, \ldots, f_{\ar{e}} \in \CC_{\FF(X, Y)}$. As a digit map, $e$ in particular has a computable modulus of continuity. Thus, we can effectively find some $\ell \in \NN$ so that $(e \times e)^{-1}[V_{j}] \supseteq V_{\ell}^{\ar{e}}$. Since $e$ is a contraction, it follows that $\ell < j$.  By the main induction hypothesis $f_\kappa \in G_{\ell}$, for $1 \le \kappa \le \ar{e}$. Thus, we are  given $i_{1},\ldots, i_{\ar{e}} \in \NN$ so that $U^{(\ar{f_\kappa})}_{i_{\kappa}} \subseteq (f_\kappa \times f_\kappa)^{-1}[V_{\ell}]$, for $1 \le \kappa \le \ar{e}$. Furthermore, for each such $\kappa$ and every $\vec{u}_{\kappa} \in Q^{\ar{f_{\kappa}}}_{D}$ we can effectively find some $v_{\kappa} \in Q_{E}$ with $(f_{\kappa}(\vec{u}_{\kappa}), v_{\kappa}) \in V_{\ell}$. Let $\bar{\imath} := \max \set{i_{\kappa}}{1 \le \kappa \le \ar{e}}$ and $n := \sum_{\kappa=1}^{\ar{e}} \ar{f_\kappa}$. Then $U^{(\ar{f_{\kappa}})}_{\bar{\imath}} \subseteq U^{(\ar{f_{\kappa}})}_{i_{\kappa}}$, for $1 \le \kappa \le \ar{e}$, and
\begin{align*}
&((e\circ (f_1 \times \cdots \times f_{\ar{e}})) \times (e\circ (f_1 \times \cdots \times f_{\ar{e}})))[U^{(n)}_{\bar{\imath}}] \\
&\hspace{2em} (e \times E) \circ ((f_{1} \times f_{1}) \times \cdots \times (f_{\ar{e}} \times f_{\ar{e}}))[U^{(\ar{f_{1}})}_{i_{1}} \times \cdots \times U^{(\ar{f_{\ar{e}}})}_{i_{\ar{e}}}] \\
&\hspace{2em} (e \times e)[(f_{1} \times f_{1})[U^{(\ar{f_{1}})}_{i_{1}}] \times \cdots \times (f_{\ar{e}} \times f_{\ar{e}})[U^{(\ar{f_{\ar{e}}})}_{i_{\ar{e}}}]] \\
& \hspace{2em} \subseteq (e \times e)[V_{\ell}^{\ar{e}}] \\
& \hspace{2em} \subseteq V_j.
\end{align*}
Moreover, we have that 
\begin{align*}
&((e \circ (f_{1} \times \cdots \times f_{\ar{e}}))(\vec{u}_{1}, \ldots, \vec{u}_{\ar{e}}), e(v_{1}, \ldots, v_{\ar{e}})) \\
&\hspace{2em} = (e(f_{1}(\vec{u}_{1}), \ldots, f_{\ar{e}}(\vec{u}_{\ar{e}})), e(v_{1}, \ldots, v_{\ar{e}})) \\
&\hspace{2em} \in (e \times e)[V_{\ell}^{\ar{e}}] \\
&\hspace{2em} \subseteq V_{j}.
\end{align*}
Note that by definition, $e(v_{1}, \ldots, v_{\ar{e}}) \in Q_{E}$. Thus, we have that $e \circ (f_1 \times \cdots \times f_{\ar{e}}) \in G_j$.

(R) Let $f \in \FF(X, Y)$ and $1 \le i \le \ar{f}$ so that $f \circ d^{(i, \ar{f})} \in G_j$, for all $d \in D$. Then, for every $d \in D$, we are given some $\ell_d \in \NN$ with $((f \circ d^{(i, \ar{f})}) \times (f \circ d^{(i, \ar{f})}))[U^{(\ar{d}+\ar{f}-1)}_{\ell_d}] \subseteq V_j$. By Proposition~\ref{pn-epsrightinv}  each $d \in D$ has a computable right inverse $d'$. Thus, some $r_d \in \NN$ can effectively be found with $(d' \times d')^{-1}[U^{(\ar{d})}_{\ell_d}] \supseteq U^{(1)}_{r_d}$. Define $\bar{r} := \max \set{r_d}{d \in D}$ and $\hat{r} : = \min \set{r \ge \bar{r}}{q_X^r \le \varepsilon_X / (2M_X)}$, where $\varepsilon_X$ is a well-covering number for $X$.

Now, let $\vec{x} = (x_1, \ldots, x_{\ar{f}}) \in X^{\ar{f}}$. Then $\ball{\rho_X}{x_i}{q_X^{\hat{r}} \cdot M_X} \subseteq \ball{\rho_X}{x_i}{\varepsilon_X / 2}$. By Theorem~\ref{thm-ctoa} we can therefore compute a basic element $u \in \ball{\rho_x}{x_i}{\varepsilon_X /2}$. Use decidability to pick some $\tilde{d} \in D$ with $\ball{\rho_X}{u}{\varepsilon_X} \subseteq \range(\tilde{d})$. Then $\ball{\rho_X}{x_i}{q_X^{\hat{r}} \cdot M_X} \subseteq \range(\tilde{d})$. Let $\tilde{d}'(x_i) = (\tilde{z}_1, \ldots, \tilde{z}_{\ar{\tilde{d}}})$. As  $\hat{r} \ge r_{\tilde{d}}$, we moreover have that $U^{(1)}_{\hat{r}} \subseteq U^{(1)}_{r_{\tilde{d}}}$ and hence that
\begin{equation}\label{eq-phitocon}
\tilde{d}'[\ball{\rho_x}{x_i}{q^{\hat{r}} \cdot M_X}]  
\subseteq \set{(z_1, \ldots, z_{\ar{\tilde{d}}})}{(\forall 1 \le \kappa \le \ar{\tilde{d}})\, \rho_X(z_\kappa, \tilde{z}_\kappa) \le q_X^{\ell_{\tilde{d}}} \cdot M_X}.
\end{equation}

Define $k := \max \{ \ell_{\tilde{d}}, \hat{r} \}$. We will show that $(f \times f)[U^{(\ar{f})}_{k}] \subseteq V_j$. 
Let to this end $(\vec{x}, \vec{z}) \in U^{(\ar{f})}_{k}$ with $\vec{x} = (x_1, \ldots, x_{\ar{f}})$ and similarly for $\vec{z}$. Then $z_i \in \ball{\rho_X}{x_i}{q_X^{\hat{r}}\cdot M_X} \subseteq \range(\tilde{d})$. With (\ref{eq-phitocon}) and the assumption it follows that 
\begin{equation*}
\begin{split}
(f \times f)(\vec{x}, \vec{z}) 
= ((f \times f) \circ (\tilde{d}^{(i, m)} \times \tilde{d}^{(i, m)}) \circ (\tilde{d}'^{(i, m)} \times \tilde{d}'^{(i, m)}))(\vec{x}, \vec{z}) \hspace{2cm} \mbox{} \\
 \in ((f \times f) \circ (\tilde{d}^{(i, m)} \times \tilde{d}^{(i, m)}))[U^{(\ar{\tilde{d}}+\ar{f}-1)}_{\ell_{\tilde{d}}}] \subseteq V_j.
\end{split}
\end{equation*}

It remains to verify the second requirement in the definition of $G_{j}$. Let to this end $\vec{u} \in Q_{D}^{\ar{f}}$ with $\vec{u} = (u_{1}, \ldots, u_{\ar{f}})$. Then there is some $d \in D$ and there are $w_{1}, \ldots, w_{\ar{d}} \in Q_{D}$ such that $u_{i} = d(w_{1}, \ldots, w_{\ar{d}})$. By our assumption, there is some $v \in Q_{E}$ with 
\[
((f \circ d^{(i, \ar{f})})(u_{1}, \ldots, u_{i-1}, w_{1}, \ldots, w_{\ar{d}}, u_{i+1}, \ldots, u_{\ar{f}}), v) \in V_{j}.
\]
Whence, $((f(\vec{u}), v) \in V_{j}$. This shows that $f \in G_{j}$.

Now, let $\CC_{\FF(X, Y)}$ and $p \in \QQ_{+}$. Then some $j \in \NN$ can be computed with $q_{Y}^{j} \cdot M_{Y} \le p$. The above proof shows how for given $j \in \NN$ some $i \in \NN$ can effectively be obtained so that, whenever $\vec{x}, \vec{x}' \in X^{\ar{f}}$ with $\rho_{X}(\vec{x}, \vec{x}') \le q_{X}^{i} \cdot M_{X}$ then $\rho_{Y}(f(\vec{x}), f(\vec{x}')) \le q_{Y}^{j} \cdot M_{Y} \le p$, which shows that $f$ has a computable modulus of continuity. Similarly, it follows that $f$ satisfies the second requirement for computability.
\end{proof}

Summing up we obtain the following result.

\begin{theorem}\label{thm-phiteqcon}
Let $(X, D, Q_X)$ and $(Y, E, Q_Y)$ be decidable and well-covering computable digit spaces with approximable choice. Then 
\[
\CC_{\FF(X, Y)} = \UC(X, Y).
\] 
\end{theorem}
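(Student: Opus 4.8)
The plan is to read this as a summary theorem: the asserted equality $\CC_{\FF(X,Y)} = \UC(X,Y)$ splits into the two inclusions already isolated above, so I would simply combine Proposition~\ref{pn-contophi} and Proposition~\ref{pn-phitocon}. First I would check that the hypotheses line up. The theorem assumes that both $(X, D, Q_X)$ and $(Y, E, Q_Y)$ are decidable and well-covering computable digit spaces with approximable choice. This is exactly the hypothesis of Proposition~\ref{pn-phitocon}, giving $\CC_{\FF(X,Y)} \subseteq \UC(X,Y)$; and it is stronger than what Proposition~\ref{pn-contophi} demands (which only needs $(Y, E, Q_Y)$ to be decidable and well-covering with approximable choice), giving $\UC(X,Y) \subseteq \CC_{\FF(X,Y)}$. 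Mutual inclusion then yields the equality, and since the two propositions are proved constructively the resulting equivalence comes with extractable translations between computable realisers and the tree/transducer realisers of $\CC_{\FF(X,Y)}$.

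Because the genuine content sits in the two propositions, I would regard the interesting work as already done there, and the only step here is the hypothesis bookkeeping above. For completeness I would recall the two strategies. The inclusion $\UC(X,Y) \subseteq \CC_{\FF(X,Y)}$ proceeds by co-induction against $\JJJ$: one stratifies a computable $f$ by its modulus of continuity, recording in a tuple $\vec{\imath}$ how close arguments must be for $f$ to land inside a fixed $V$, and then inducts on $\max_\kappa i_\kappa$. At the base one uses decidability and well-coveringness of $Y$ to find a digit $e \in E$ with $\range(f) \subseteq \range(e)$, and then Proposition~\ref{pn-epsrightinv} to split $f = e \circ (g_1 \times \cdots \times g_{\ar{e}})$ via a computable right inverse of $e$, which is an application of Rule~(W). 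In the step case one applies Rule~(R), replacing an argument by $d^{(j, m)}$ and using that $d \in D$ is contracting to lower the index.

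The reverse inclusion $\CC_{\FF(X,Y)} \subseteq \UC(X,Y)$ is where I expect the main difficulty, and it reflects the mixed inductive/co-inductive shape of $\CC_{\FF(X,Y)} = \nu \JJJ^{X,Y}$ with $\JJJ^{X,Y} = \mu\Phi^{X,Y}(\cdot)$. Here one fixes an approximation level $j$ (measuring accuracy as $q_Y^j \cdot M_Y$) and shows $\CC_{\FF(X,Y)} \subseteq G_j$ by an outer induction on $j$ combined with a side induction on the inductive generation of $\JJJ(\CC_{\FF(X,Y)})$, verifying Rules~(W) and (R) at each stage; Rule~(R) again leans on computable right inverses from Proposition~\ref{pn-epsrightinv} and on $X$ being well-covering to locate, for the chosen argument, a digit $\tilde d$ whose range covers a ball around it. The delicate point is keeping the two recursions coherent — that a writing step strictly improves the accuracy index while only finitely many reading steps intervene — so that a finite descent in $j$ suffices to extract a computable modulus of continuity and a basic-element approximation procedure. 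Once both propositions are in hand, the theorem itself is immediate.
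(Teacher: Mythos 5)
Your proposal is correct and is exactly the paper's proof: the theorem appears there as a summary (``Summing up we obtain the following result'') of Propositions~\ref{pn-contophi} and \ref{pn-phitocon}, and your hypothesis bookkeeping --- the theorem's assumptions coincide with those of Proposition~\ref{pn-phitocon} and strictly imply those of Proposition~\ref{pn-contophi} --- is accurate. Your recollection of the two underlying proof strategies (co-induction with stratification by modulus of continuity for $\UC(X,Y) \subseteq \CC_{\FF(X,Y)}$, and the outer induction on the accuracy index $j$ with side induction on $\JJJ$ for the converse) also matches the paper's arguments for those propositions.
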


\section{Compact-valued functions}\label{sec-prop}

After having discussed how to obtain co-inductive characterisations of product and hyperspaces, as well as of uniformly continuous functions, which are the usual morphisms in the category of compact Hausdorff spaces, we will now investigate compact-valued maps in this framework. 

A central goal is to derive the functoriality of the compact hyperspace construction within the scope of co-inductive characterisations of spaces and morphisms. That is, we show that for maps $f \in \CC_{\FF(X, Y)}$, $\KKK(f) \in \CC_{\FF(\KKK(X), \KKK(Y))}$, where for compact sets $K$, $\KKK(f)(K) := f[K]$.  Classically, this is a well known fact. However, the proofs given here are new, and, as has already been pointed out, algorithms computing the action of the functor on morphisms can be extracted in case that the underlying IFS are covering, compact and weakly hyperbolic; similarly in the other cases considered in this section.

In a first step we will show that the continuous image of a compact set is compact again. As follows from the definition, the algorithm evaluating a function in $\CC_{\FF(X, Y)}$ on a given tuple of inputs uses the information coming with the components of the input tuple only in as much as it is needed, and also by dealing with each input component separately, not in parallel. This property seems to be too weak to be able to lift the function to the hyperspace of non-empty compact sets. The problem is that in the hyperspace, the compact sets are approximated by iterating digit maps that operate on the hyperspace. By doing so, the elements of the hyperspace are considered as abstract objects, not as sets of points or tuples of such. The internal structure is invisible to the IFS on the hyperspace.

In the subsequent Proposition we get around the problem by dealing with compact sets that are cubes, but without such information being available it is not clear how to proceed. In the general case we therefore restrict ourselves to unary maps. Multi-ary maps $\fun{f}{X^{n}}{Y}$ are then subsumed by using the IFS $(X^{n}, D^{\times})$ instead of $(X, D)$.

\begin{proposition}\label{pn-cmorph}
Let $(X, D), (Y, E)$ be compact IFS, $f \in \CC_{\FF(X, Y)}$ and $K_1, \ldots, K_{\ar{f}} \in \CC_{\KKK(X)}$. Then $f[K_1 \times \cdots \times K_{\ar{f}}] \in \CC_{\KKK(Y)}$. 
\end{proposition}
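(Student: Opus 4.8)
This statement is the hyperspace analogue of the evaluation result, Proposition~\ref{pn-eval}, and the plan is to prove it by exactly the same combination of \emph{co-induction on $\KKK(Y)$} with a \emph{nested (strong) induction on the inductive--coinductive structure of $\CC_{\FF(X,Y)}$}. Since $(X,D)$ and $(Y,E)$ are IFS, all digits in $D$ and $E$ are unary, so that $\KKK(D),\KKK(E)$ are genuine extended IFS and, for $d\in D$, the map $d^{(i,m)}$ is again $m$-ary. Throughout, all the sets manufactured below are non-empty compact subsets (continuous images of finite products of compacta), so they genuinely lie in the relevant $\KKK(\cdot)$; I would treat this as routine.

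First I would fix the co-induction target. Write $\Psi_\KKK$ for the monotone operator whose greatest fixed point is $\CC_{\KKK(Y)}$, namely $\Psi_\KKK(L) := \{\, \bigcup_{\kappa=1}^{s} e_\kappa[L_\kappa] : [e_1,\ldots,e_s]\in\KKK(E),\ L_1,\ldots,L_s\in L \,\}$, and introduce
\[
Z := \Bigl\{\, \bigcup_{j=1}^{p} f_j[K^{(j)}_1 \times \cdots \times K^{(j)}_{\ar{f_j}}] \Bigm| p\ge 1,\ f_j\in\CC_{\FF(X,Y)},\ K^{(j)}_\kappa\in\CC_{\KKK(X)} \,\Bigr\}.
\]
The set $f[K_1\times\cdots\times K_{\ar{f}}]$ of the statement lies in $Z$ (take $p=1$), so it suffices to prove $Z\subseteq\CC_{\KKK(Y)}$, which by co-induction reduces to $Z\subseteq\Psi_\KKK(Z)$. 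The deliberate closure of $Z$ under finite unions is the technical device that makes the co-closure go through.

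The heart of the argument is a claim proved by strong induction on $\JJJ(\CC_{\FF(X,Y)})=\CC_{\FF(X,Y)}$: every $f\in\CC_{\FF(X,Y)}$ lies in $G := \{\, f\in\FF(X,Y) : f[K_1\times\cdots\times K_{\ar{f}}]\in\Psi_\KKK(Z)\ \text{for all}\ K_1,\ldots,K_{\ar{f}}\in\CC_{\KKK(X)}\,\}$. Exactly as in Proposition~\ref{pn-eval}, this reduces to $\Phi(\CC_{\FF(X,Y)})(G)\subseteq G$, i.e.\ to Rules~(W) and~(R). Rule~(W) is immediate: for $e\in E$ and $f_1\in\CC_{\FF(X,Y)}$ one has $(e\circ f_1)[\vec{K}]=e[f_1[\vec{K}]]=[e](f_1[\vec{K}])$ with $f_1[\vec{K}]\in Z$, so this set is in $\Psi_\KKK(Z)$. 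For Rule~(R), given $f$ with $f\circ d^{(i,\ar{f})}\in G$ for all $d\in D$, I would unfold the $i$-th argument using the co-closure of $\CC_{\KKK(X)}$, writing $K_i=\bigcup_{\kappa=1}^{r} d_\kappa[K^{(\kappa)}_i]$ with the $d_\kappa\in D$ distinct and $K^{(\kappa)}_i\in\CC_{\KKK(X)}$. Since direct image commutes with union in each coordinate,
\[
f[K_1\times\cdots\times K_{\ar{f}}] = \bigcup_{\kappa=1}^{r} \bigl(f\circ d_\kappa^{(i,\ar{f})}\bigr)[K_1\times\cdots\times K^{(\kappa)}_i\times\cdots\times K_{\ar{f}}],
\]
and each summand lies in $\Psi_\KKK(Z)$ by the induction hypothesis $f\circ d_\kappa^{(i,\ar{f})}\in G$, whence so does the union.

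The one auxiliary fact to isolate, and the place needing the most care, is that $\Psi_\KKK(Z)$ is closed under finite unions: given $\bigcup_\kappa e_\kappa[L_\kappa]$ and $\bigcup_\lambda e'_\lambda[M_\lambda]$ in $\Psi_\KKK(Z)$ I merge the two lists, and whenever a digit $e$ recurs I combine the associated sets via $e[L]\cup e[L']=e[L\cup L']$, using that $Z$ is closed under finite unions (so $L\cup L'\in Z$) to keep the digits pairwise distinct as $\KKK(E)$ demands. This closure is exactly what lets the union produced by unfolding $K_i$ in Rule~(R) be recognised as a \emph{single} post-fixed-point witness, and it is the reason $Z$ is defined with unions built in rather than as a set of single images. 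Granting it, a general element $\bigcup_{j} f_j[\vec{K}^{(j)}]\in Z$ is a finite union of sets each in $\Psi_\KKK(Z)$ by the claim, hence itself in $\Psi_\KKK(Z)$; this establishes $Z\subseteq\Psi_\KKK(Z)$, and the proposition follows by co-induction.
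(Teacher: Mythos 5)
Your proof is correct and follows essentially the same route as the paper's: a co-induction whose target is deliberately closed under finite unions, a nested induction over $\JJJ(\CC_{\FF(X,Y)})$ via Rules~(W) and (R), unfolding of the $i$-th argument by co-closure of $\CC_{\KKK(X)}$, and a digit-merging step based on $e[L]\cup e[L']=e[L\cup L']$. The only difference is presentational: where you build finite unions directly into $Z$ and isolate the closure of $\Psi_{\KKK}(Z)$ under unions as a lemma, the paper obtains exactly the same union-closed target by passing to images of multi-valued maps $g\in\MM(X,Y)$ (finite unions of maps in $\CC_{\FF(X,Y)}$), so your $Z$ coincides with the paper's $H$.
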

\begin{proof} 
For the co-inductive proof we need to derive a stronger statement.
Let to this end $f_{1}, \ldots. f_{k} \in \CC_{\FF(X, Y)}$ and define
\[
(\bigcup_{i=1}^{k} f_{i})(\vec x) := \bigcup_{i=1}^{k} \{ f_{i}(\pr^{(k)}_{i}(\vec x)) \},
\]
for $\vec x \in \bigtimes_{i=1}^{k} X^{\ar{f_{i}}} $. Set
\[
\MM(X, Y) := \set{\bigcup_{i=1}^k f_i}{k \ge 1 \wedge (\forall 1 \le j \le k)\, f_j \in \CC_{\FF(X, Y)}}.
\]
In general, the maps in $\MM(X, Y)$ will be multi-valued. For such maps $\mfun{h}{X}{Y}$ and $K \in \KKK(X)$, $h[K] := \bigcup \set{h(x)}{x \in K}$.

Let
\[
H := \set{g[K_1\times \cdots \times K_{\ar{g}}]}{g \in \MM(X, Y) \wedge K_1, \ldots, K_{\ar{g}} \in \CC_{\KKK(X)}}. 
\]
We will show that $H \subseteq \CC_{\KKK(Y)}$. For $\ZZ \subseteq \KKK(Y)$ define 
\[
\Omega_{\KKK(Y)}(\ZZ) := \set{M \subseteq Y}{(\exists \bar{d} \in \KKK(D)) (\exists M_1, \ldots, M_{\ar{\bar{d}}} \in \ZZ)\, M = \bar{d}(M_1, \ldots, M_{\ar{\bar{d}}})}.
\]
Then $\CC_{\KKK(Y)} = \nu \Omega_{\KKK(Y)}$. Hence, we must prove that $H \subseteq \Omega_{\KKK(Y)}(H)$.  For $g \in \MM(X, Y)$ let
\[
H(g) := \set{g[K_1 \times \cdots \times K_{\ar{g}}]}{K_1, \ldots, K_{\ar{g}} \in \CC_{\KKK(X)}}.
\]
We first show that for all $f \in \CC_{\FF(X, Y)}$, $H(f) \subseteq \Omega_{\KKK(Y)}(H)$.

Since $\CC_{\FF(X, Y)} = \JJJ^{X,Y}(\CC_{\FF(X, Y)}) = \mu \Phi^{X,Y}(\CC_{\FF(X, Y)})$, we can use induction to this aim. Set
\[
G := \set{f \in \FF(X, Y)}{H(f) \subseteq \Omega_{\KKK(Y)}(H)},
\]
then we need derive that
\[
\Phi_{X,Y}(\CC_{\FF(X,Y)})(G) \subseteq G.
\]
We have to consider the two cases:
\begin{itemize}
\item[(W)] If $e \in E$ and $h \in \CC_{\FF(X, Y)}$, then $H(e \circ h) \subseteq \Omega_{\KKK(Y)}(H)$.

\item[(R)] If $f \in \FF(X, Y)$ and $1 \le i \le \ar{f}$ so that for all $d \in D$, $H(f \circ d^{(i, \ar{f})}) \subseteq \Omega_{\KKK(Y)}(H)$, then $H(f) \subseteq \Omega_{\KKK(Y)}(H)$.

\end{itemize}

(W) Assume that $e \in E$  and $h \in \CC_{\FF(X, Y)}$, and let $K_{1}, \ldots, K_{\ar{h}} \in \CC_{\KKK(X)}$. Then we need to prove that $(e \circ h)[K_{1} \times \cdots \times K_{\ar{h}}] \in \Omega_{\KKK(Y)}(H)$.

We have that
\[
(e \circ h)[K_{1} \times \cdots \times K_{\ar{h}}] 
= e[h[K_{1} \times \cdots \times K_{\ar{h}}]]
= [e](h[K_{1} \times \cdots \times K_{\ar{h}}]).
\]
Since, $h \in \CC_{\FF(X, Y)}$ and $K_{1}, \ldots, K_{\ar{h}} \in \CC_{\KKK(X)}$, it follows that $h[K_{1} \times \cdots \times K_{\ar{h}}] \in H$. Therefore, $(e \circ h)[K_{1} \times \cdots \times K_{\ar{h}}] \in \Omega_{\KKK(Y)}(H)$.

(R) Let $f \in \FF(X, Y)$ and $1 \le i \le \ar{f}$ so that for all $d \in D$, $H(f \circ d^{(i, \ar{f})}) \subseteq \Omega_{\KKK(Y)}(H)$. Moreover, for $1 \le \kappa \le \ar{f}$, let $K_\kappa \in \CC_{\KKK(X)}$. Then there exist $d_1, \ldots, d_{r} \in D$ and $N_1, \ldots, N_r\in \CC_{\KKK(X)}$ so that $K_i = [d_1, \ldots, d_{r}](N_1, \ldots, N_r)$. It follows that
\begin{align*}
&f[K_1 \times \cdots \times K_{\ar{f}}]\\
&\quad= f[K_1\times \cdots \times K_{i-1} \times
[d_1, \ldots, d_r](N_1, \ldots, N_r) \times K_{i+1} \times \cdots \times K_{\ar{f}}] \\
&\quad= f[K_1 \times \ldots \times K_{i-1} \times
(\bigcup_{\kappa=1}^{r} d_\kappa[N_\kappa]) \times K_{i+1} \times \cdots K_{\ar{f}}] \\
&\quad= \bigcup_{\kappa=1}^{r} f[K_1 \times \cdots \times K_{i-1} \times d_\kappa[N_\kappa] \times K_{i+1} \times \cdots K_{\ar{f}}] \\
&\quad= \bigcup_{\kappa=1}^{r} (f \circ d_\kappa^{(i, \ar{f})})[K_1 \times \cdots \times K_{i-1} \times N_\kappa \times K_{i+1} \times \cdots \times K_{\ar{f}}].
\end{align*}

Since $d_\kappa \in D$, for $1 \le \kappa \le r$, it follows with our assumption that there are $e^{(\kappa)}_1, \ldots, e^{(\kappa)}_{s_\kappa} \in E$ and $M^{(\kappa)}_1, \ldots, M^{(\kappa)}_{s_\kappa} \in H$ such that
\begin{equation*}
\begin{split}
(f \circ d_\kappa^{(i, \ar{f})})[K_1 \times \cdots \times K_{i-1} \times N_\kappa \times K_{i+1} \times \cdots \times K_{\ar{f}}] \hspace{2cm} \mbox{} \\ = [e^{(\kappa)}, \ldots, e^{(\kappa)}_{s_\kappa}](M^{(\kappa)}_1, \ldots, M^{(\kappa)}_{s_\kappa}).
\end{split}
\end{equation*}
Thus,
\[
\bigcup_{\kappa=1}^{r} (f \circ d_\kappa^{(i, \ar{f})})[K_1 \times \cdots \times K_{i-1} \times N_\kappa \times K_{i+1} \times \cdots \times K_{\ar{f}}] = \bigcup_{\kappa=1}^r  \bigcup_{\sigma_{\kappa}=1}^{n_\kappa} e^{(\kappa)}_{\sigma_\kappa}[M^{(\kappa)}_{\sigma_\kappa}].
\]
Note that the maps $e^{(\kappa)}_{\sigma_\kappa}$ with $1 \le \kappa \le r$ and $1 \le \sigma_\kappa \le n_\kappa$ need not be pairwise distinct. Let $e_1, \ldots, e_\ell \in E$ be pairwise distinct so that $\{ e_1, \ldots, e_\ell \} = \bigcup_{\kappa=1}^r \{ e^{(\kappa)}_1, \ldots, e^{(\kappa)}_{n_\kappa} \}$. Then
\[ 
\bigcup_{\kappa=1}^r  \bigcup_{\sigma_{\kappa}=1}^{n_\kappa} e^{(\kappa)}_{\sigma_\kappa}[M^{(\kappa)}_{\sigma_\kappa}]
= [e_1, \ldots, e_\ell](L_1, \ldots, L_r),
\]
where for $1 \le \iota \le \ell$, $L_\iota := \bigcup \set{M^{(\kappa)}_{\sigma_\kappa}}{1 \le \kappa \le r \wedge 1 \le \sigma_\kappa \le n_\kappa \wedge e^{(\kappa)}_{\sigma_\kappa} = e_\iota}$. Since $M^{(\kappa)}_{\sigma_\kappa} = h^{(\kappa)}_{\sigma_\kappa}[\bigtimes_{\iota_{\kappa, \sigma_{\kappa}}=1}^{\ar{h^{(\kappa)}_{\sigma_{\kappa}}}} \widetilde{M}^{(\kappa, \sigma_{\kappa})}_{\iota_{\kappa, \sigma_{\kappa}}}]$, for some $h^{(\kappa)}_{\sigma_\kappa} \in \MM(X,Y)$ and $\widetilde{M}^{(\kappa, \sigma_{\kappa})}_{\iota_{\kappa, \sigma_{\kappa}}} \in \CC_{\KKK(X)}$, we have
\begin{equation}\label{eq-multi}
\begin{split}
L_\iota = \hspace{12.78cm} \\
 (\bigcup\nolimits_{1 \le \kappa \le r, 1 \le \sigma_\kappa \le n_\kappa: \KKK(e^{(\kappa)}_{\sigma_\kappa}) = \KKK(e_\iota)} h^{(\kappa)}_{\sigma_\kappa})[\bigtimes\nolimits_{1 \le \kappa \le r, 1 \le \sigma_\kappa \le n_\kappa: \KKK(e^{(\kappa)}_{\sigma_\kappa}) = \KKK(e_\iota)} \bigtimes\nolimits_{\iota_{\kappa, \sigma_{\kappa}}=1}^{\ar{h^{(\kappa)}_{\sigma_{\kappa}}}} \widetilde{M}^{(\kappa, \sigma_{\kappa})}_{\iota_{\kappa, \sigma_\kappa}}].
\end{split}
\end{equation}
It follows that $L_\iota \in H$ and thus $f[K_1 \times \cdots \times K_{\ar{f}}] \in \Omega_{\KKK(Y)}(H)$.

Observe that because of the possibility that the $e^{(\kappa)}_{\sigma_{\kappa}}$ are not pairwise distinct and in particular Line~(\ref{eq-multi}) we have to consider multi-valued maps in this proof.

Finally, we let $g \in \MM(X, Y)$. It remains to show that $H(g) \subseteq \Omega_{\KKK(Y)}(H)$.  Let to this end  $f_1, \ldots, f_k \in \CC_{\FF(X, Y)}$ with $g = f_1 \cup \cdots \cup f_k$, and $K^{(\kappa)}_{\sigma_\kappa} \in \CC_{\KKK(X)}$, for $1 \le \kappa \le k$ and $1 \le \sigma_\kappa \le \ar{f_\kappa}$.

As we have just seen, $f_\kappa[K^{(\kappa)}_1 \times \cdots \times K^{(\kappa)}_{\ar{f_\kappa}}] \in \Omega_{\KKK(Y)}(H)$. Hence, there are $e^{(\kappa)}_1, \ldots, e^{(\kappa)}_{r_\kappa} \in E$ and $M^{(\kappa)}_1, \ldots, M^{(\kappa)}_{r_\kappa} \in H$, for $1 \le \kappa \le k$, with
\[
f_\kappa[K^{(\kappa)}_1 \times \cdots \times K^{(\kappa)}_{\ar{f_\kappa}}] = [e^{(\kappa)}_1, \ldots, e^{(\kappa)}_{r_\kappa}](M^{(\kappa)}_1, \ldots, M^{(\kappa)}_{r_\kappa}).
\]
Again the $e^{(\kappa)}_{\iota_\kappa}$ with $1 \le \kappa \le k$ and $1 \le \iota_\kappa \le r_\kappa$ need not be pairwise distinct. By proceeding as in the previous proof step we can find pairwise distinct  $e'_1, \ldots, e'_\ell \in E$  and $L_1, \ldots, L_\ell \in H$ so that  $\{ e'_1, \ldots, e'_\ell \} = \{\, e^{(\kappa)}_{\iota_\kappa} \mid 1 \le \kappa \le k \wedge 1 \le \iota_\kappa \le r_\kappa \,\}$ and
\begin{align*}
g[K^{(1)}_1 \times \cdots \times K^{(k)}_{\ar{f_k}}] 
&= \bigcup_{\kappa=1}^k f_\kappa[K^{(\kappa)}_1 \times \cdots \times K^{(\kappa)}_{\ar{f_\kappa}}] \\
&= \bigcup_{\kappa=1}^k [e^{(\kappa)}_1, \ldots, e^{(\kappa)}_{r_\kappa}](M^{(\kappa)}_1, \ldots, M^{(\kappa)}_{r_\kappa}) \\
&= [e'_1, \ldots, e'_\ell](L_1, \ldots, L_\ell),
\end{align*}
showing that  $g[K^{(1)}_1 \times \cdots \times K^{(k)}_{\ar{f_k}}] \in \Omega_{\KKK(Y)}(H)$.
\end{proof}

For the subsequent technical lemma we extend the union taking operation to compact-valued maps. For  Hausdorff spaces $X, Y, Z$ such that $Z$ is compact, and maps $\fun{f}{X}{\KKK(Z)}$ and $\fun{g}{Y}{\KKK(Z)}$ define $f \cup g \colon X \times Y \to \KKK(Z)$ by
\[
(f \cup g)(x, y) := f(x) \cup g(y).
\]

\begin{lemma}\label{lem-cmorphtech}
Let $(X, D)$, $(Y, E)$ be compact IFS and 
\[
F := \set{\bigcup_{i=1}^{k} \KKK(f_{i})}{k \ge 1 \wedge (\forall 1 \le i \le k)\, f_{i} \in \CC^{(1)}_{\FF(X,Y)}}.
\]
Then $\JJJ^{\KKK(X), \KKK(Y)}(F)$ is closed under union.
\end{lemma}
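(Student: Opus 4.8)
The plan is to show, by a nested induction on the inductive generation of $\JJJ := \JJJ^{\KKK(X),\KKK(Y)}(F) = \mu\,\Phi^{\KKK(X),\KKK(Y)}(F)$, that $g_1 \cup g_2 \in \JJJ$ whenever $g_1, g_2 \in \JJJ$. Writing $\Phi := \Phi^{\KKK(X),\KKK(Y)}$, recall that here the writing Rule~(W) builds $\bar e \circ (H_1 \times \cdots \times H_{\ar{\bar e}})$ from a hyperspace digit $\bar e \in \KKK(E)$ and maps $H_\kappa \in F$, while the reading Rule~(R) admits a map $h$ as soon as $h \circ \bar d^{(i,\ar h)} \in \JJJ$ for some $1 \le i \le \ar h$ and all $\bar d \in \KKK(D)$. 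First I would put
\[
G := \set{g_1}{(\forall g_2 \in \JJJ)\, g_1 \cup g_2 \in \JJJ},
\]
and, by the induction principle for $\mu\,\Phi(F)$, reduce the claim to $\Phi(F)(G) \subseteq G$, i.e.\ to checking that both rules are compatible with forming $g_1 \cup \cdot$.

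The reading cases rest on the identities
\[
(g_1 \cup g_2) \circ \bar d^{(i,\ar{g_1}+\ar{g_2})} = (g_1 \circ \bar d^{(i,\ar{g_1})}) \cup g_2 \quad (1 \le i \le \ar{g_1})
\]
and $(g_1 \cup g_2) \circ \bar d^{(\ar{g_1}+i,\ar{g_1}+\ar{g_2})} = g_1 \cup (g_2 \circ \bar d^{(i,\ar{g_2})})$, which express that a reading step commutes with the union by acting inside the appropriate block of arguments. If $g_1$ arises by Rule~(R) at position $i$, then $g_1 \circ \bar d^{(i,\cdot)} \in G$ for every $\bar d \in \KKK(D)$, so the first identity together with $g_2 \in \JJJ$ gives $(g_1 \cup g_2)\circ \bar d^{(i,\cdot)} \in \JJJ$ for all $\bar d$, whence $g_1 \cup g_2 \in \JJJ$ by Rule~(R). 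When $g_1$ arises by Rule~(W) I would run a side induction on $g_2$: fixing such a $g_1$ and setting $G_2 := \set{g_2}{g_1 \cup g_2 \in \JJJ}$, the reading case for $g_2$ is dispatched by the second identity exactly as before, so that only the case in which $g_1$ and $g_2$ are \emph{both} writing steps remains.

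This writing--writing case is the heart of the matter. Here $g_1 = \bar e \circ (H_1 \times \cdots \times H_s)$ and $g_2 = \bar e' \circ (H'_1 \times \cdots \times H'_t)$ with $\bar e = [e_1,\ldots,e_s]$, $\bar e' = [e'_1,\ldots,e'_t] \in \KKK(E)$ and all $H_\kappa, H'_\lambda \in F$. Since the image maps $c[\,\cdot\,]$ distribute over unions, $g_1 \cup g_2$ sends its argument tuple to $\bigcup_\kappa e_\kappa[H_\kappa(\cdots)] \cup \bigcup_\lambda e'_\lambda[H'_\lambda(\cdots)]$. I would list the \emph{distinct} letters among $e_1,\ldots,e_s,e'_1,\ldots,e'_t$ as $c_1,\ldots,c_p$, put $\bar e'' := [c_1,\ldots,c_p] \in \KKK(E)$, and for each $q$ collect the branches feeding $c_q$ into a single map $\widehat H_q$, namely the union of those $H_\kappa$ with $e_\kappa = c_q$ and those $H'_\lambda$ with $e'_\lambda = c_q$. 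This exhibits $g_1 \cup g_2$ as $\bar e'' \circ (\widehat H_1 \times \cdots \times \widehat H_p)$; and since $F$ is closed under forming unions of its members — the union of $\bigcup_i \KKK(f_i)$ and $\bigcup_j \KKK(f'_j)$ is again of the form $\bigcup \KKK(\cdot)$ — each $\widehat H_q$ lies in $F$, so Rule~(W) applies.

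The main obstacle, and the place where I expect to spend the most care, is precisely the merging of coincident letters in this last step. When a letter $c_q$ occurs in both $\bar e$ and $\bar e'$, the inputs feeding it must be gathered simultaneously from the $g_1$-block and the $g_2$-block, so that $\bar e''\circ(\widehat H_1 \times \cdots \times \widehat H_p)$ consumes its arguments in a regrouped order rather than in the order $(\text{$g_1$'s arguments}, \text{$g_2$'s arguments})$ prescribed by $g_1 \cup g_2$. Two ingredients make this admissible and must be handled explicitly: first, that $F$ is stable under both the union of member maps and the reordering of the list $\KKK(f_1),\ldots,\KKK(f_k)$, so that each $\widehat H_q$ is a bona fide element of $F$ of the correct arity; and second, that the letters of a $\KKK(E)$-digit may be listed in any order and the argument positions of these maps are compared up to this regrouping, exactly as the set-level merge in the proof of Proposition~\ref{pn-cmorph} is carried out without regard to order. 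Making this identification precise — verifying that the block reorganisation produced by the merge keeps us inside the inductively generated class $\JJJ$ — is the delicate point, whereas the reading cases and the combinatorics of collecting the $\widehat H_q$ are routine once it is in place.
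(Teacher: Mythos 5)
Your proposal is correct and takes essentially the same route as the paper's proof: a main induction on $\mu\,\Phi^{\KKK(X),\KKK(Y)}(F)$ with a nested side induction for the second argument, the block-wise identities $(g_1\cup g_2)\circ\bar d^{(i,\ar{g_1}+\ar{g_2})}=(g_1\circ\bar d^{(i,\ar{g_1})})\cup g_2$ (and its mirror image) for the two reading cases, and, in the writing--writing case, the merging of coincident letters into union maps lying in $F$ followed by Rule~(W). The argument-regrouping subtlety you flag is genuinely present but is passed over silently in the paper's proof of this lemma --- it is acknowledged only as ``up to some reshuffling of arguments'' in the parallel step of Theorem~\ref{thm-compval} --- so your explicit attention to it is, if anything, slightly more careful than the original.
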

\begin{proof}
Let 
\[
A := \set{f \in \FF(\KKK(X), \KKK(Y))}{(\forall g \in \JJJ^{\KKK(X), \KKK(Y)}(F))\, f \cup g \in \JJJ^{\KKK(X), \KKK(Y)}(F)}.
\]
We will prove by induction that $\JJJ^{\KKK(X), \KKK(Y)}(F) \subseteq A$, that is we have to show that 
\[
\Phi^{\KKK(X), \KKK(Y)}(F)(A) \subseteq A,
\]
which in turn means that we have to verify Rules~(W) and (R).

($\text{W}_{\text{main}}$) Let $\bar{e} \in \KKK(E)$ with $\bar{e} = [e_{1}, \ldots, e_{r}]$, for $e_{1}, \ldots, e_{r} \in E$. Moreover, let $f_{1}, \ldots, f_{r} \in F$. We must demonstrate that $\bar{e} \circ (f_{1} \times \cdots \times f_{r}) \in A$. To this end we have to show that for all $g \in \JJJ^{\KKK(X), \KKK(Y)}(F)$,  $\bar{e} \circ (f_{1} \times \cdots \times f_{r}) \cup g \in \JJJ^{\KKK(X), \KKK(Y)}(F)$. For $h \in \FF(\KKK(X), \KKK(Y))$ set
\[
B(h) := \set{g \in \FF(\KKK(X), \KKK(Y))}{h \cup g \in \JJJ^{\KKK(X), \KKK(Y)}(F)}.
\]
Let $h := \bar{e} \circ (f_{1} \times \cdots \times f_{r})$. We prove that $\JJJ^{\KKK(X), \KKK(Y)}(F) \subseteq B(h)$, that is we have to demonstrate that 
\begin{equation}\label{eq-tech}
\Phi^{\KKK(X), \KKK(Y)}(F)(B(h)) \subseteq B(h),
\end{equation}
which is done by side induction.

($\text{W}_{\text{side}}$) Let $\hat{e} \in \KKK(E)$, say $\hat{e} = [e'_{1}, \ldots, e'_{s}]$ with $e'_{1}, \ldots, e'_{s} \in E$. Furthermore, let $g_{1}, \ldots, g_{s} \in F$. We need to show that
\[
(\bar{e} \circ (f_{1} \times \cdots \times f_{r})) \cup (\hat{e} \circ (g_{1} \times \cdots \times g_{s})) \in \JJJ^{\KKK(X), \KKK(Y)}(F).
\]
Note that $\bar{e} = \KKK(e_{1}) \cup \cdots \cup \KKK(e_{r})$ and analogously for $\hat{e}$. Then 
\begin{equation*}
\begin{split}
(\bar{e} \circ (f_{1} \times \cdots \times f_{r})) \cup (\hat{e} \circ (g_{1} \times \cdots \times g_{s})) = \hspace{4.5cm} \\
\KKK(e_{1}) \circ f_{1} \cup \cdots \cup \KKK(e_{r}) \circ f_{r} \cup \KKK(e'_{1}) \circ g_{1} \cup \cdots \cup \KKK(e'_{s}) \circ g_{s}.
\end{split}
\end{equation*}
Let $R := \set{(j_{1}, j_{2})}{1 \le j_{1} \le r \wedge 1 \le j_{2} \le s \wedge e_{j_{1}} = e'_{j_{2}}}$. Then we have that
\begin{align*}
&\KKK(e_{1}) \circ f_{1} \cup \cdots \cup \KKK(e_{r}) \circ f_{r} \cup \KKK(e'_{1}) \circ g_{1} \cup \cdots \cup \KKK(e'_{s}) \circ g_{s} \\
&\quad= 
[e_{i_{1}}, \ldots, e_{i_{k}}, e_{i_{k+1}}, \ldots, e_{i_{k'}}, e'_{i_{k'+1}}, \ldots, e'_{i_{\ell}}] \circ ((f_{i_{1}} \cup g_{i_{1}}) \times \cdots \times (f_{i_{k}} \cup g_{i_{k}}) \times \\
&\hspace{8cm} f_{i_{k+1}} \times \cdots \times f_{i_{k'}} \times g_{i_{k'+1}} \times \cdots \times g_{i_{\ell}}),
\end{align*}
where $1 \le k \le k' \le \ell$ and $i_{1}, \ldots, i_{\ell}, i'_{1}, \ldots, i'_{k} \in \NN$ so that
\begin{gather*}
R = \{ (i_{1}, i'_{1}), \ldots, (i_{k}, i'_{k}) \},  \\
\{ 1, \ldots, r \} \setminus \pr^{(2)}_{1}[R] = \{ i_{k+1}, \ldots, i_{k'} \}, \\
\{ 1, \ldots, s \} \setminus \pr^{(2)}_{2}[R] =\{ i_{k'+1}, \ldots, i_{\ell} \}.
\end{gather*}

\sloppy It follows that $e_{i_{1}}, \ldots, e_{i_{k}}, e_{i_{k+1}}, \ldots, e_{i_{k'}}, e'_{i_{k'+1}}, \ldots, e'_{i_{\ell}}$ are pairwise distinct and $f_{i_{1}} \cup g_{i_{1}}, \ldots, f_{i_{k}} \cup g_{i_{k}}, f_{i_{k+1}}, \ldots, f_{i_{k'}}, g_{i_{k'+1}}, \ldots, g_{i_{\ell}} \in F$. Thus, $(\bar{e} \circ (f_{1} \times \cdots \times f_{r})) \cup (\hat{e} \circ (g_{1} \times \cdots \times g_{s})) \in \JJJ^{\KKK(X), \KKK(Y)}(F)$, by Rule~(W).

($\text{R}_{\text{side}}$) Let $g \in \FF(\KKK(X), \KKK(Y))$ and $1 \le i \le \ar{g}$ so that $g \circ \bar{e}^{(i, \ar{g})} \in B(h)$, for all $\bar{e} \in \KKK(E)$. Then $h \cup (g \circ \bar{e}^{(i, \ar{g})}) \in \JJJ^{\KKK(X), \KKK(Y)}(F)$. Set $\hat{h} := h \cup g$. Obviously, $\ar{\hat{h}} = \ar{h} + \ar{g}$. For $j := \ar{h} + i$ we therefore have that $\hat{h} \circ \bar{e}^{(j, \ar{\hat{h}})} = h \cup (g \circ \bar{e}^{(i, \ar{g})})$. It follows that $\hat{h} \circ \bar{e}^{(j, \ar{\hat{h}})} \in \JJJ^{\KKK(X), \KKK(Y)}(F)$, for all $\bar{e} \in \KKK(E)$, from which we obtain with Rule~(R) that $\hat{h} \in \JJJ^{\KKK(X), \KKK(Y)}(F)$. Hence, $g \in B(h)$. This proves (\ref{eq-tech}).

($\text{R}_{\text{main}}$) It remains to verify Rule~(R) in the main induction. Let $f \in \FF(\KKK(X), \KKK(Y))$ and $1 \le i \le \ar{f}$ such that for all $\bar{d} \in \KKK(D)$, $f \circ \bar{d}^{(i, \ar{f})} \in A$. Thus, $(f \circ \bar{d}^{(i, \ar{f})}) \cup g \in \JJJ^{\KKK(X), \KKK(Y)}(F)$, for all $g \in \JJJ^{\KKK(X), \KKK(Y)}(F)$. Set $\tilde{h}_{g} := f \cup g$. Then $\ar{\tilde{h}_{g}} = \ar{f} + \ar{g}$ and for all $\bar{d} \in \KKK(D)$,
\[
\tilde{h}_{g} \circ \bar{d}^{(i, \ar{\tilde{h}_{g}})} = (f \circ \bar{d}^{(i, \ar{f})}) \cup g.
\]
Since $(f \circ \bar{d}^{(i, \ar{f})})\cup g \in \JJJ^{\KKK(X), \KKK(Y)}(F)$, by our assumption, we obtain with Rule~(R) that $\tilde{h}_{g} \in \JJJ^{\KKK(X), \KKK(Y)}(F)$. Thus $f \in A$.
\end{proof}
 
\begin{theorem}\label{thm-cfct}
Let $(X, D), (Y, E)$ be compact IFS. Then for all $f \in \CC^{(1)}_{\FF(X, Y)}$,  
\[
\KKK(f) \in \CC_{\FF(\KKK(X), \KKK(Y))}.
\]
\end{theorem}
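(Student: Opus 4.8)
The plan is to prove the statement by co-induction, taking as the candidate post-fixed point the whole family $F$ introduced in Lemma~\ref{lem-cmorphtech}, namely
\[
F = \set{\bigcup\nolimits_{i=1}^{k} \KKK(f_{i})}{k \ge 1 \wedge (\forall 1 \le i \le k)\, f_{i} \in \CC^{(1)}_{\FF(X,Y)}}.
\]
Since $\CC_{\FF(\KKK(X), \KKK(Y))} = \nu\JJJ^{\KKK(X), \KKK(Y)}$ and $\KKK(f) \in F$ (take $k = 1$), it is enough, by the co-induction principle, to verify
\[
F \subseteq \JJJ^{\KKK(X), \KKK(Y)}(F).
\]
Here I would use that all digits are assumed unary (as required for $(\KKK(X), \KKK(D))$ to be an extended IFS), so every map occurring is of arity~$1$ on the base level.

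For the co-inductive step, I would first invoke Lemma~\ref{lem-cmorphtech}: because $\JJJ^{\KKK(X), \KKK(Y)}(F)$ is closed under union and every element of $F$ is a finite union of maps $\KKK(f)$ with $f \in \CC^{(1)}_{\FF(X,Y)}$, it suffices to establish that $\KKK(f) \in \JJJ^{\KKK(X), \KKK(Y)}(F)$ for each individual $f \in \CC^{(1)}_{\FF(X,Y)}$. This I would prove by induction following the inductive (least fixed point) generation $\CC_{\FF(X,Y)} = \JJJ^{X,Y}(\CC_{\FF(X,Y)}) = \mu\Phi^{X,Y}(\CC_{\FF(X,Y)})$; since all digits are unary, the generation of a unary map involves only unary maps, so it is legitimate to carry the induction within the unary fragment. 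Setting $G := \set{f \in \CC^{(1)}_{\FF(X,Y)}}{\KKK(f) \in \JJJ^{\KKK(X), \KKK(Y)}(F)}$, I would check the two closure rules.

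For Rule~(W), with $e \in E$ and $f_{1} \in \CC^{(1)}_{\FF(X,Y)}$ (unarity forces $\ar{e}=1$), the identity $\KKK(e \circ f_{1}) = \KKK(e) \circ \KKK(f_{1}) = [e] \circ \KKK(f_{1})$ together with $\KKK(f_{1}) \in F$ and $[e] \in \KKK(E)$ yields $\KKK(e \circ f_{1}) \in \JJJ^{\KKK(X), \KKK(Y)}(F)$ directly by Rule~(W) for the hyperspace operator. For Rule~(R), the inductive hypothesis provides, for every $d \in D$, that $\KKK(f \circ d) = \KKK(f) \circ [d] \in \JJJ^{\KKK(X), \KKK(Y)}(F)$, and I must upgrade this to $\KKK(f) \circ \bar d \in \JJJ^{\KKK(X), \KKK(Y)}(F)$ for \emph{every} $\bar d = [d_{1}, \ldots, d_{r}] \in \KKK(D)$ before applying Rule~(R). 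The decisive computation is
\[
\KKK(f) \circ [d_{1}, \ldots, d_{r}] = \bigcup\nolimits_{\kappa=1}^{r} \KKK(f) \circ [d_{\kappa}] = \bigcup\nolimits_{\kappa=1}^{r} \KKK(f \circ d_{\kappa}),
\]
which exhibits $\KKK(f) \circ \bar d$ as a finite union of members of $\JJJ^{\KKK(X), \KKK(Y)}(F)$; by the union-closure of Lemma~\ref{lem-cmorphtech} it lies again in $\JJJ^{\KKK(X), \KKK(Y)}(F)$. Rule~(R) for $\JJJ^{\KKK(X), \KKK(Y)}$ (applied with $h = \KKK(f)$, $i = 1$) then gives $\KKK(f) \in \JJJ^{\KKK(X), \KKK(Y)}(F)$, i.e.\ $f \in G$, completing the induction and hence the co-induction.

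The main obstacle is precisely the mismatch in Rule~(R): the inductive hypothesis only controls the \emph{singleton} hyperspace digits $[d]$, while the hyperspace closure rule quantifies over all of $\KKK(D)$, whose members $[d_{1}, \ldots, d_{r}]$ split a compact argument into $r$ summands that are recombined by union. Reconciling these is exactly what the union-closure of $\JJJ^{\KKK(X), \KKK(Y)}(F)$ is designed to achieve, and verifying the displayed identity relating $\KKK(f) \circ [d_{1}, \ldots, d_{r}]$ to $\bigcup_{\kappa} \KKK(f \circ d_{\kappa})$ is the computational heart of the argument; the bookkeeping needed to keep all maps unary throughout is the only other delicate point.
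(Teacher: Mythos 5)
Your proposal is correct and follows essentially the same route as the paper's own proof: the same candidate family $F$ of finite unions $\bigcup_{i}\KKK(f_i)$, the same reduction via the union-closure of $\JJJ^{\KKK(X),\KKK(Y)}(F)$ (Lemma~\ref{lem-cmorphtech}), the same induction over the generation of $\CC_{\FF(X,Y)}$ with rules (W) and (R), and the same decisive identity $\KKK(f)\circ[d_1,\ldots,d_r]=\bigcup_{\kappa}\KKK(f\circ d_\kappa)$. The only (cosmetic) difference is the arity bookkeeping: the paper runs the induction over all of $\FF(X,Y)$ using the guarded predicate $B=\set{f}{\ar{f}=1\rightarrow\KKK(f)\in\JJJ^{\KKK(X),\KKK(Y)}(F)}$, whereas you restrict to the unary fragment, which is legitimate for exactly the arity-preservation reason you state.
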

\begin{proof}
Again we need to derive a stronger statement. Let 
\[
F := \set{\bigcup_{i=1}^k \KKK(f_i)}{k \ge 1 \wedge (\forall 1 \le i \le k)\, f_i \in \CC^{(1)}_{\FF(X, Y)}}.
\]
We will show that $F \subseteq \CC_{\FF(\KKK(X), \KKK(Y))}$. By using co-induction it is sufficient to prove that $F \subseteq \JJJ^{\KKK(X), \KKK(Y)}(F)$. 

Set 
\[
B := \set{f \in \FF(X, Y)}{\ar{f} = 1 \rightarrow \KKK(f) \in  \JJJ^{\KKK(X), \KKK(Y)}(F)}.
\]
Then we show first that $\CC_{\FF(X, Y)} \subseteq B$.  Because $\CC_{\FF(X, Y)} = \JJJ^{X, Y}(\CC_{\FF(X, Y)})$, we only need to derive that $\JJJ^{X, Y}(\CC_{\FF(X, Y)}) \subseteq B$. To achieve this we use induction, that is we show $\Phi^{X, Y}(\CC_{\FF(X, Y)})(B) \subseteq B$. Hereto we have to verify the following two rules:
\begin{itemize}

\item[(W)] If $e \in E$ and $h \in \CC_{\FF(X, Y)}$, then $e \circ h \in B$.

\item[(R)] If $f \in \FF(X, Y)$ and $1 \le i \le \ar{f}$ such that for all $d \in D$, $f \circ d^{(i, \ar{f})} \in B$, then $f \in B$.

\end{itemize}

(W) We have that $\KKK(e \circ h) = \KKK(e) \circ \KKK(h) = [e] \circ \KKK(h)$. Now, assume that $\ar{e \circ h} = 1$, then also $\ar{h} = 1$.
Since $\KKK(h) \in F$ and $[e] \in \KKK(E)$, it follows with Rule~(W) for $\JJJ^{\KKK(X), \KKK(Y)}(F)$ that $[e] \circ \KKK(h) \in \JJJ^{\KKK(X), \KKK(Y)}(F)$. Thus, $e \circ h \in B$ 

(R) Let $f \in \FF(X, Y)$ and $1 \le i \le \ar{f}$ so that for all $d \in D$, $f \circ d^{(i, \ar{f})} \in B$. Suppose that  $\ar{f \circ d^{(i, \ar{f})}} = 1$. Then $\ar{f} = 1$ and hence $i = 1$.
Moreover,  $\KKK(f \circ d) \in \JJJ^{\KKK(X), \KKK(Y)}(F)$. Now, let $\bar{d} \in \KKK(D)$ with $\bar{d} = [d_1, \ldots, d_s]$. Note that $[d_1, \ldots, d_s] = \KKK(d_1) \cup \cdots \cup \KKK(d_s)$. It follows that 
\[
\KKK(f) \circ [d_1, \ldots, d_s] 
= \KKK(f) \circ (\bigcup_{\kappa=1}^s \KKK(d_\kappa)) 
= \bigcup_{\kappa=1}^s \KKK(f) \circ \KKK(d_\kappa) 
= \bigcup_{\kappa=1}^s \KKK(f \circ d_\kappa).
\]

By our assumption, $f \circ {d_\kappa} \in B$. Hence, $\KKK(f \circ d_\kappa) \in \JJJ^{\KKK(X), \KKK(Y)}(F)$, for all $1 \le \kappa \le s$, as a consequence of which we obtain with Lemma~\ref{lem-cmorphtech} that $\bigcup_{\kappa=1}^s \KKK(f \circ d_\kappa) \in \JJJ^{\KKK(X), \KKK(Y)}(F)$. This shows that $\KKK(f) \circ \bar{d} \in \JJJ^{\KKK(X), \KKK(Y)}(F)$, for all $\bar{d} \in \KKK(D)$. By Rule~(R) for $\JJJ^{\KKK(X), \KKK(Y)}(F)$ we thus have that $\KKK(f) \in \JJJ^{\KKK(X), \KKK(Y)}(F)$, which means  that $f \in B$.

Finally, let $f_{1}, \ldots, f_{k} \in \CC^{(1)}_{\FF(X, Y)}$. Then $f_{1}, \ldots, f_{k} \in B$ and hence $\KKK(f_{j}) \in \JJJ^{\KKK(X), \KKK(Y)}(F)$, for $1 \le j \le k$. With Lemma~\ref{lem-cmorphtech} we therefore have that also $\bigcup_{j=1}^{k} \KKK(f_{j}) \in \JJJ^{\KKK(X), \KKK(Y)}(F)$, which shows that $F \subseteq \JJJ^{\KKK(X), \KKK(Y)}(F)$.
\end{proof}

The simplest compact sets are the singleton sets of points of the underlying space.
\begin{lemma}\label{lem-singl}
Let $(X, D)$ be a compact IFS. Then $\set{ \{ x \}}{x \in \CC_X} \subseteq  \CC_{\KKK(X)}$.
\end{lemma}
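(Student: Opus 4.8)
The plan is to argue by co-induction, exhibiting the singletons of points of $\CC_X$ as a post-fixed point of the operator defining $\CC_{\KKK(X)}$. Concretely, I would set
\[
S := \set{\{x\}}{x \in \CC_X}
\]
(each $\{x\}$ is a non-empty compact, hence legitimate, element of $\KKK(X)$), and prove $S \subseteq \CC_{\KKK(X)}$ by the co-induction proof principle. Recalling the co-inductive definition of $\CC_{\KKK(X)}$, this reduces to showing that every element of $S$ lies in the one-step unfolding of $S$: for each $\{x\} \in S$ there must exist $\bar d = [d_1, \ldots, d_r] \in \KKK(D)$ and $K_1, \ldots, K_r \in S$ with $\{x\} = [d_1, \ldots, d_r](K_1, \ldots, K_r)$.

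First I would fix $\{x\} \in S$, so that $x \in \CC_X$. Since $(X, D)$ is an IFS all digits are unary; because $\CC_X$ is a fixed point of its defining operator, unfolding $\CC_X(x)$ therefore yields a digit $d \in D$ and a point $y \in \CC_X$ with $x = d(y)$. The natural witness is then the single-letter map $[d] \in \KKK(D)$ (admissible since $r = 1$ and the one-element list $(d)$ is trivially pairwise distinct) together with $K_1 := \{y\}$. As $y \in \CC_X$ we have $K_1 \in S$, and
\[
[d](\{y\}) = d[\{y\}] = \{d(y)\} = \{x\},
\]
so $\{x\}$ indeed lies in the unfolding of $S$. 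This completes the co-inductive step and hence establishes $S \subseteq \CC_{\KKK(X)}$, which is the assertion.

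I do not expect any genuine obstacle here: the whole content of the argument is the single observation that a singleton is preserved under a single-digit map, $[d](\{y\}) = \{d(y)\}$. The only points to double-check are that $[d]$ really belongs to $\KKK(D)$ and that $\{x\} \in \KKK(X)$, both immediate. Intuitively, the transformation extracted from this proof simply relabels every digit $d$ occurring in a $D$-tree coding $x$ by the corresponding one-element map $[d]$, turning it into a $\KKK(D)$-tree coding $\{x\}$; the compactness hypothesis on $(X, D)$ enters only through Proposition~\ref{pn-comifs}, which guarantees that $(\KKK(X), \KKK(D))$ is a compact extended IFS so that $\CC_{\KKK(X)}$ is defined in the first place.
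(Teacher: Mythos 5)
Your proof is correct and follows essentially the same route as the paper: co-induction showing that $\set{\{x\}}{x \in \CC_X}$ is a post-fixed point of $\Omega_{\KKK(X)}$, unfolding $\CC_X(x)$ to get $x = d(y)$ with $y \in \CC_X$, and using the witness $[d]$ with $[d](\{y\}) = \{d(y)\} = \{x\}$. The additional checks you flag (that $[d] \in \KKK(D)$ and $\{x\} \in \KKK(X)$) are indeed immediate and left implicit in the paper.
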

\begin{proof}
Let $\ZZ := \set{ \{ x \} }{x \in \CC_X}$. The statement will be derived by co-induction. To this end we have to show that $\ZZ \subseteq \Omega_{\KKK(X)}(\ZZ)$.

Let $x \in X$. Since $x \in \CC_X$,  there are $d \in D$ and $y \in \CC_X$ with $x = d(y)$. Then $\{ y \} \in \ZZ$ and 
\[
[d](\{ y \}) = d[\{ y \}] = \{ d(y) \} = \{ x \}. \qedhere
\]
\end{proof}

Define $\fun{\eta}{\CC_X}{\CC_{\KKK(X)}}$ by letting
\[
\eta(x) := \{ x \}.
\]

\begin{proposition}\label{pn-eta}
Let $(X, D)$ be a compact IFS. Then $\eta \in \CC_{\FF(X, \KKK(X))}$.
\end{proposition}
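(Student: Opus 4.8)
The plan is to establish $\eta \in \CC_{\FF(X, \KKK(X))} = \nu\JJJ^{X,\KKK(X)}$ by co-induction, exactly in the style of the proof of Lemma~\ref{lem-basics}(\ref{lem-id}) for the identity map. By the co-induction proof principle it suffices to exhibit a post-fixed point containing $\eta$, and the natural candidate is the singleton $\{\eta\}$. So the whole task reduces to verifying
\[
\{\eta\} \subseteq \JJJ^{X,\KKK(X)}(\{\eta\}),
\]
that is, $\eta \in \mu\Phi^{X,\KKK(X)}(\{\eta\})$, the smallest set $G$ of maps closed under Rules~(W) and (R) when the source digits range over $D$ and the target digits over $\KKK(D)$.

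Recall that in a (non-extended) compact IFS every $d \in D$ is unary, so $d^{(1,1)} = d$, and in the hyperspace the singleton list $[d]$ is a legitimate member of $\KKK(D)$ acting by $[d](K) = d[K]$; in other words $[d] = \KKK(d)$. The key observation is the identity
\[
[d] \circ \eta = \eta \circ d,
\]
since both sides send $x$ to $d[\{x\}] = \{d(x)\}$. First I would apply Rule~(W) with $\bar e := [d] \in \KKK(D)$ and the single factor $\eta \in \{\eta\}$: this places $[d] \circ \eta$, hence $\eta \circ d$, into $G$, for every $d \in D$. Then, since $\eta$ is unary, I would apply Rule~(R) at $i = 1$: having shown $\eta \circ d^{(1, \ar{\eta})} = \eta \circ d \in G$ for all $d \in D$, Rule~(R) yields $\eta \in G$. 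This closes the co-induction and gives $\eta \in \CC_{\FF(X, \KKK(X))}$.

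There is essentially no obstacle here: the argument is a direct transcription of the identity-map case of Lemma~\ref{lem-basics}, with $[d] = \KKK(d)$ playing the role of the target digit and the defining equation $[d] \circ \eta = \eta \circ d$ (which is just $\KKK(d)(\{x\}) = \{d(x)\}$) providing the match required by Rule~(W). The only points worth checking are that $\eta$ is genuinely unary and that the singleton lists $[d]$ belong to $\KKK(D)$, both of which hold because $D$ consists of unary maps. This proposition is the functional counterpart of Lemma~\ref{lem-singl}, and the extracted realiser is precisely the tree transformer that reads the root digit $d$ of the input $D$-tree (a reading node) and writes the corresponding singleton list $[d]$ (a writing node) before recursing, thereby turning a $D$-tree for $x$ into a $\KKK(D)$-tree for $\{x\}$.
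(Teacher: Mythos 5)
Your proof is correct and follows essentially the same route as the paper: co-induction on the post-fixed point $\{\eta\}$, the key identity $[d] \circ \eta = \eta \circ d$, and then Rule~(W) followed by Rule~(R) to conclude $\eta \in \JJJ^{X,\KKK(X)}(\{\eta\})$. Your additional remarks on unarity, on $[d] \in \KKK(D)$, and on the extracted tree transformer are accurate supplements to what the paper states tersely.
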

\begin{proof}
Again we use co-induction to derive the statement: We show that $\{ \eta \} \subseteq \JJJ^{X,\KKK(X)}(\{ \eta \})$. Note that for any $d \in D$, $[d] \in \KKK(D)$. Moreover, $[d] \circ \eta = \eta \circ d$. By Rule~(W) we have that $[d] \circ \eta \in \JJJ^{X,\KKK(X)}(\{ \eta \})$, i.e., $\eta \circ d \in \JJJ^{X,\KKK(X)}(\{ \eta \})$. Therefore $\eta \in \JJJ^{X,\KKK(X)}(\{ \eta \})$, by Rule~(R).
\end{proof}

We have already introduced the union taking operation for compact-valued maps.  
\begin{theorem}\label{thm-compval}
 Let $(X, D)$, $(Y,E)$ be extended IFS and $(Z, C)$ be a compact IFS. Then, for $f \in \CC_{\FF(X, \KKK(Z))}$ and $g \in  \CC_{\FF(Y, \KKK(Z))}$, $f \cup g \in \CC_{\FF(X \times Y, \KKK(Z))}$.
 \end{theorem}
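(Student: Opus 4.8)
The plan is to reduce the statement to two facts about the class $\CC_{\FF(X \times Y, \KKK(Z))}$ that are easier to treat separately: that it contains the ``one-sided'' maps coming from $X$ and from $Y$, and that it is closed under finite union. First I would observe that, for $f \cup g$ to be an element of $\FF(X \times Y, \KKK(Z))$ at all — that is, a map out of a power of $X \times Y$ — we may assume $\ar{f} = \ar{g} =: m$, padding the map of smaller arity with dummy arguments by means of the projections of Lemma~\ref{lem-basics}(\ref{lem-proj}) and the composition closure of Proposition~\ref{pn-clcompo}. Writing $\pr^{(2)}_1 \in \CC^{(1)}_{\FF(X \times Y, X)}$ and $\pr^{(2)}_2 \in \CC^{(1)}_{\FF(X \times Y, Y)}$ for the two projections (Lemma~\ref{lem-genprod}), set $\tilde f := f \circ (\pr^{(2)}_1 \times \cdots \times \pr^{(2)}_1)$ and $\tilde g := g \circ (\pr^{(2)}_2 \times \cdots \times \pr^{(2)}_2)$, the embeddings of $f$ and $g$ into maps on $X \times Y$ that ignore the other factor; by Proposition~\ref{pn-clcompo} both lie in $\CC_{\FF(X \times Y, \KKK(Z))}$. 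Under the identification $(X \times Y)^m \cong X^m \times Y^m$ one then checks directly that
\[
f \cup g = (\tilde f \cup \tilde g) \circ (\pr^{(m)}_1 \times \cdots \times \pr^{(m)}_m \times \pr^{(m)}_1 \times \cdots \times \pr^{(m)}_m),
\]
where on the right $\tilde f \cup \tilde g$ denotes the product-domain union of arity $2m$ (as in Proposition~\ref{pn-cmorph}) and the trailing composition feeds the same $m$ input pairs into both blocks.

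The heart of the argument is therefore the claim that $\CC_{\FF(X \times Y, \KKK(Z))}$ is closed under union; more precisely, that the finite-union class $\MM(X \times Y, \KKK(Z))$ of Proposition~\ref{pn-cmorph} is contained in $\CC_{\FF(X \times Y, \KKK(Z))}$. I would prove this by co-induction, establishing $\MM(X \times Y, \KKK(Z)) \subseteq \JJJ^{X \times Y, \KKK(Z)}(\MM(X \times Y, \KKK(Z)))$ by a nested (main plus side) induction on the inductive build-up of the finitely many $\CC_{\FF(X \times Y, \KKK(Z))}$-summands, exactly along the lines of Lemma~\ref{lem-cmorphtech}. The reading step is routine: applying a source digit $\pair{d, e} \in D^\times$ to the $i$-th argument of $\bigcup_\kappa \phi_\kappa$ distributes over the union and replaces each $\phi_\kappa$ by $\phi_\kappa \circ \pair{d, e}^{(i, \ar{\phi_\kappa})}$, which stays in $\CC_{\FF(X \times Y, \KKK(Z))}$ by Lemma~\ref{lem-domclos}; hence the whole map stays in $\MM(X \times Y, \KKK(Z))$ and Rule~(R) applies.

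The writing step is the point I expect to be the main obstacle. Here each summand is a writing node $\bar c_\kappa \circ (\cdots)$ with $\bar c_\kappa \in \KKK(C)$, and the union of these nodes must be re-expressed as a single writing node over the list of \emph{distinct} $C$-digits occurring in any of them, the slot attached to a digit $c$ receiving the union of all arguments sitting over the copies of $c$. Because in a product-domain union the arguments of the different summands read \emph{disjoint} blocks of coordinates, each merged slot is again a finite union of $\CC_{\FF(X \times Y, \KKK(Z))}$-maps, i.e.\ an element of $\MM(X \times Y, \KKK(Z))$; this is precisely why the invariant has to be strengthened from $\CC_{\FF(X \times Y, \KKK(Z))}$ to $\MM(X \times Y, \KKK(Z))$, and the bookkeeping of which digits coincide (those shared between summands being genuinely merged, those unique to one summand being kept) is the one non-formal ingredient, to be carried out as in the corresponding step of Lemma~\ref{lem-cmorphtech}. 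Once this union-closure is in hand, the theorem follows by assembling the pieces of the first paragraph: $\tilde f \cup \tilde g \in \CC_{\FF(X \times Y, \KKK(Z))}$ by union-closure, the $2m$ projections $\pr^{(m)}_\iota$ lie in $\CC_{\FF(X \times Y, X \times Y)}$ by Lemma~\ref{lem-basics}(\ref{lem-proj}), and a final appeal to the composition closure of Proposition~\ref{pn-clcompo} yields $f \cup g \in \CC_{\FF(X \times Y, \KKK(Z))}$.
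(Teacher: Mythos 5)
The engine of your argument is sound, and it is essentially the paper's own proof: you embed $f$ and $g$ into $\CC_{\FF(X \times Y, \KKK(Z))}$ by composing with the unary projections of Lemma~\ref{lem-genprod} (legitimate by Proposition~\ref{pn-clcompo}), and you prove that finite unions of members of $\CC_{\FF(X \times Y, \KKK(Z))}$ over disjoint argument blocks again lie in $\CC_{\FF(X \times Y, \KKK(Z))}$ by a main/side induction on the inductive build-up of the summands inside a co-induction, with the digit-merging bookkeeping of Lemma~\ref{lem-cmorphtech}. The paper argues in exactly this way; the only difference is that it takes as invariant $A \cup \CC_{\FF(X \times Y, \KKK(Z))}$, with $A$ the class of binary unions, and hence needs strong co-induction, whereas your class $\MM(X \times Y, \KKK(Z))$ contains the single maps from the outset, so plain co-induction suffices --- an immaterial variation.

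The genuine defect is your final assembly step. In this calculus composition \emph{concatenates} argument lists: by Proposition~\ref{pn-clcompo}, the composite of an arity-$2m$ map with $2m$ inner maps of arity $m$ each has arity $2m^{2}$, and the inner maps read pairwise \emph{disjoint} blocks of the new input. Nothing ``feeds the same $m$ input pairs into both blocks'': the framework has no contraction of arguments. This is precisely the distinction the paper stresses after Corollary~\ref{cor-power} between $\CC_{X^{n}}$ and $\CC^{n}_{X}$; the diagonal of Lemma~\ref{lem-basics}(\ref{lem-diag}) lives in $\CC_{\FF(X, X^{n})}$, a map of arity $1$ into the product IFS, and cannot be pre-composed into an arity-$n$ map over $X$, which is a different predicate. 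So your displayed identity is not available, and the arity-$m$ ``diagonal'' map it is meant to produce is not even the map the theorem speaks about: $f \cup g$ is the union over \emph{disjoint} argument blocks, of arity $\ar{f} + \ar{g}$, as the proofs of Lemma~\ref{lem-cmorphtech} and of this very theorem make explicit by computing $\ar{h \cup g} = \ar{h} + \ar{g}$. (It is moreover doubtful that the diagonal map lies in $\CC_{\FF(X \times Y, \KKK(Z))}$ at all, since the slots of a Rule~(W) node must read disjoint blocks, which a genuinely shared-argument union of maps with distinct outermost target digits can never satisfy.) Fortunately the defective step is superfluous: drop the padding to a common arity and the trailing composition, and your $\tilde{f} \cup \tilde{g}$, read as the disjoint-block union, \emph{is} $f \cup g$ under the identification the paper uses, so your union-closure lemma alone finishes the proof. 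For the same reason your description of the reading step should be corrected: for a disjoint-block union, a source digit applied at position $i$ modifies only the unique summand whose block contains $i$ (with the position shifted by the arities of the preceding summands), not every summand.
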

 \begin{proof}
 The proof of this result essentially follows the lines of the proof of Lemma~\ref{lem-cmorphtech}. However, now we use strong co-induction. Let
 \[
 A := \set{f \cup g}{f \in \CC_{\FF(X, \KKK(Z))} \wedge g \in  \CC_{\FF(Y, \KKK(Z))}}. 
 \]
 Then we need to show that $A \subseteq \JJJ^{X\times Y, \KKK(Z)}(A \cup \CC_{\FF(X \times Y, \KKK(Z))})$. Set
 \[
 B:= \set{f \in  \FF(X, \KKK(Z))}{(\forall g \in  \CC_{\FF(Y, \KKK(Z))})\, f \cup g \in \JJJ^{X\times Y, \KKK(Z)}(A \cup \CC_{\FF(X \times Y, \KKK(Z))})}.
 \]
We show that $\CC_{\FF(X, \KKK(Z))} \subseteq B$. Since $ \CC_{\FF(X, \KKK(Z))} = \JJJ^{X, \KKK(Z)}(\CC_{\FF(X, \KKK(Z))})$, it suffices to prove that $\JJJ^{X, \KKK(Z)}(\CC_{\FF(X, \KKK(Z))}) \subseteq B$. By induction we show that $\Phi^{X, \KKK(Z)}(\CC_{\FF(X, \KKK(Z))})(B) \subseteq B$, which means that we have to verify Rules~(W) and (R).

($\text{W}_\text{main}$) Let $\bar{c} \in \KKK(C)$ with $\bar{c} = [c_1, \ldots, c_r]$, for $c_1, \ldots, c_r \in C$. Moreover, let $f_1, \ldots, f_r \in \CC_{\FF(X, \KKK(Z))}$. We must demonstrate that $\bar{c} \circ (f_1 \times \cdots \times f_r) \in B$. To this end we have to show that for all $g \in \CC_{\FF(Y, \KKK(Z))}$, $(\bar{c} \circ (f_1 \times \cdots f_r)) \cup g \in  \JJJ^{X\times Y, \KKK(Z)}(A \cup \CC_{\FF(X \times Y, \KKK(Z))})$. For $h \in \FF(X, \KKK(Z))$ set 
\[
H(h) := \set{g \in \FF(Y, \KKK(Z))}{h \cup g \in \JJJ^{X\times Y, \KKK(Z)}(A \cup \CC_{\FF(X \times Y, \KKK(Z))})}.
\]
Let $h := \bar{c} \circ (f_1 \times \cdots \times f_r)$. We prove that $\CC_{\FF(Y, \KKK(Z))} \subseteq H(h)$, for which it suffices to demonstrate that
\begin{equation}\label{eq-compval}
\Phi^{Y, \KKK(Z)}(\CC_{\FF(Y, \KKK(Z))})(H(h)) \subseteq H(h),
\end{equation}
which is done by side induction. 
 
($\text{W}_\text{side}$) Let $\hat{c} \in \KKK(C)$, say $\hat{c} = [c'_1, \ldots, c'_s]$ with $c'_1, \ldots, c'_s \in C$. Furthermore, let $g_1, \ldots, g_s \in \CC_{\FF(Y, \KKK(Z))}$. We need to show that
\[
(\bar{c} \circ (f_1 \times \cdots \times f_r)) \cup (\hat{c} \circ (g_1 \times \cdots \times g_s)) \in \JJJ^{X\times Y, \KKK(Z)}(A \cup \CC_{\FF(X \times Y, \KKK(Z))}).
\]
Note again that $\bar{c} = \KKK(c_{1}) \cup \cdots \cup \KKK(c_{r})$, and analogously for $\hat{c}$. Then 
\begin{equation*}
\begin{split}
(\bar{c} \circ (f_1 \times \cdots \times f_r)) \cup (\hat{c} \circ (g_1 \times \cdots \times g_s)) = \hspace{4.5cm} \\
\KKK(c_{1}) \circ f_{1} \cup \cdots \cup \KKK(c_{r}) \circ f_{r} \cup \KKK(c'_{1}) \circ g_{1} \cup \cdots \cup \KKK(c'_{s}) \circ g_{s}.
\end{split}
\end{equation*}
 
As in the proof of Lemma~\ref{lem-cmorphtech} it follows that there are pairwise distinct $c''_{i_1}$, \ldots, $c''_{i_k}$, $c''_{i_{k+1}}$, \ldots, $c''_{i_{k'}}$, \ldots, $c''_{i_{k'+1}}$, $c''_\ell \in \{ c_1, \ldots, c_r, c'_1, \ldots, c'_s \}$ so that (up to some reshuffling of arguments)
\begin{equation*}\begin{split}
\KKK(c_{1}) \circ f_{1} \cup \cdots \cup \KKK(c_{r}) \circ f_{r} \cup \KKK(c'_{1}) \circ g_{1} \cup \cdots \cup \KKK(c'_{s}) \circ g_{s} \hspace{4.6cm}\mbox{}\\
= [c''_{i_1}, \ldots, c''{i_\ell}] \circ  ((f_{i_1} \cup g_{i_1}) \times \cdots \times (f_{i_k} \cup g'_{i_k}) \times f_{i_{k+1}} \times \cdots \times f_{i_{k'}} \times g_{i_{k'+1}} \times \cdots \times g_{i_\ell}).
\end{split}\end{equation*}
Then $[c''_{i_1}, \ldots, c''{i_\ell}] \in \KKK(C)$, $f_{i_1} \cup g_{i_1}, \ldots, f_{i_k} \cup g_{i_k} \in A$, and $f_{i_{k+1}}, \ldots, f_{i_{k'}}, g_{i_{k'+1}}, \ldots, g_{i_\ell} \in \CC_{\FF(X \times Y, \KKK(Z))}$, where we identify maps $\fun{f}{X^{n}}{Z}$ with $f \circ (\pr^{(n+1)}_{1} \times \cdots \times \pr^{(n+1)}_{n}) \colon X^{n} \times Y \to Z$, and similarly for maps $\fun{g}{Y^{m}}{Z}$. Thus, $(\bar{c} \circ (f_1 \times \cdots \times f_r)) \cup (\hat{c} \circ (g_1 \times \cdots \times g_s)) \in \JJJ^{X\times Y, \KKK(Z)}(A \cup \CC_{\FF(X \times Y, \KKK(Z))})$ by Rule~(W).

Note that the remaining proof steps are essentially the same as in the proof of Lemma~\ref{lem-cmorphtech}. We include them only for completeness reasons.

($\text{R}_\text{side}$) Let $g \in \FF(Y, \KKK(Z))$ and $1 \le i \le \ar{g}$ so that $g \circ e^{(i, \ar{g})} \in B(h)$, for all $e \in E$. Then $h \cup (g \circ e^{(i, \ar{g})}) \in \JJJ^{X\times Y, \KKK(Z)}(F \cup \CC_{\FF(X \times Y, \KKK(Z))})$. Let $\hat{h} := h \cup g$. Then $\ar{\hat{h}} = \ar{h} + \ar{g}$. For $j := \ar{h} + i$ we therefore have that $\hat{h} \circ e^{(j, \ar{\hat{h}})} = h \cup (g \circ e^{(i, \ar{g})})$. It follows that $\hat{h} \circ e^{(j, \ar{\hat{h}})} \in \JJJ^{X\times Y, \KKK(Z)}(F \cup \CC_{\FF(X \times Y, \KKK(Z))})$, for all $e \in E$, from which we obtain by Rule~(R) that $\hat{h} \in \JJJ^{X\times Y, \KKK(Z)}(F \cup \CC_{\FF(X \times Y, \KKK(Z))})$. Hence, $g \in B(h)$. This proves (\ref{eq-compval}).

($\text{R}_\text{main}$) It remains to verify Rule~(R) in the main induction. Let $f \in A$ and $1 \le i \le \ar{f}$ such that for all $d \in D$, $f \circ d^{(i, \ar{f})} \in A$. Thus, $(f \circ d^{(i, \ar{f})}) \cup g \in \JJJ^{X\times Y, \KKK(Z)}(F \cup \CC_{\FF(X \times Y, \KKK(Z))})$, for all $g \in \CC_{\FF(Y, \KKK(Z))}$. Set  $\bar{h}_g := f \cup g$. Then $\ar{\bar{h}_g} = \ar{f} + \ar{g}$ and for all $d \in D$, 
\[
\bar{h}_g \circ d^{(i, \ar{\bar{h}_g})} = (f \circ d^{(i, \ar{f})}) \cup g.
\]
Since $(f \circ d^{(i, \ar{f})}) \cup g \in  \JJJ^{X\times Y, \KKK(Z)}(F \cup \CC_{\FF(X \times Y, \KKK(Z))})$ by our assumption, we obtain with Rule~(R) that $\bar{h}_g \in \JJJ^{X\times Y, \KKK(Z)}(F \cup \CC_{\FF(X \times Y, \KKK(Z))})$. Thus $ f \in A$.
\end{proof}

Since the identity on any extended IFS is in $\CC_{\FF(X,X)}$, by Lemma~\ref{lem-basics}(\ref{lem-id}), it follows in particular that the binary operation of taking unions is in $\CC_{\FF(\KKK(X) \times \KKK(X), \KKK(X))}$.

\begin{corollary}\label{cor-un}
Let $(X, D)$ be a compact IFS. Then the following two statements hold:
\begin{enumerate}

\item\label{cor-un-1}
$\CC_{\KKK(X)}$ is closed under taking finite unions, that is, for all $K, M \subseteq X$,
\[
K, M \in \CC_{\KKK(X)} \Rightarrow K \cup M \in \CC_{\KKK(X)}.
\]

\item\label{cor-un-2}
$\cup \in \CC_{\FF(\KKK(X) \times \KKK(X), \KKK(X))}$.

\end{enumerate}
\end{corollary}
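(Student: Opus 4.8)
The plan is to handle the two parts in order, reading off part~(\ref{cor-un-2}) from Theorem~\ref{thm-compval} and then deducing part~(\ref{cor-un-1}) from it by an evaluation argument. Throughout, $(\KKK(X),\KKK(D))$ is an extended IFS by Proposition~\ref{pn-comifs} and $(X,D)$ is a compact IFS by hypothesis, so all the ambient structures needed below are available.

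For part~(\ref{cor-un-2}) I would apply Theorem~\ref{thm-compval} with both factor IFS equal to $(\KKK(X),\KKK(D))$ and with the compact IFS $(Z,C)$ taken to be $(X,D)$, so that $\KKK(Z)=\KKK(X)$. By Lemma~\ref{lem-basics}(\ref{lem-id}) the identity $\id_{\KKK(X)}$ lies in $\CC_{\FF(\KKK(X),\KKK(X))}=\CC_{\FF(\KKK(X),\KKK(Z))}$. Taking $f=g=\id_{\KKK(X)}$, the theorem gives $\id_{\KKK(X)}\cup\id_{\KKK(X)}\in\CC_{\FF(\KKK(X)\times\KKK(X),\KKK(X))}$. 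Since $(\id_{\KKK(X)}\cup\id_{\KKK(X)})(K,M)=\id_{\KKK(X)}(K)\cup\id_{\KKK(X)}(M)=K\cup M={\cup}(K,M)$, the map $\id_{\KKK(X)}\cup\id_{\KKK(X)}$ is exactly $\cup$, and (\ref{cor-un-2}) follows. This is precisely the observation recorded in the remark preceding the corollary.

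For part~(\ref{cor-un-1}) I would view $\cup$ as a \emph{unary} map on the product space and invoke the evaluation Proposition~\ref{pn-eval}. Given $K,M\in\CC_{\KKK(X)}$, the first step is to pass to the pair $(K,M)\in\CC_{\KKK(X)\times\KKK(X)}$, the product carrying the IFS $\KKK(D)^{\times}$ of Section~\ref{sec-prod}. This is the inclusion $\CC_{\KKK(X)}\times\CC_{\KKK(X)}\subseteq\CC_{\KKK(X)\times\KKK(X)}$, which I would prove by a one-step co-induction: $R:=\CC_{\KKK(X)}\times\CC_{\KKK(X)}$ is a post-fixed point of the operator defining $\CC_{\KKK(X)\times\KKK(X)}$, because any $(P,Q)\in R$ admits coordinatewise decompositions $P=\bar d(P_1,\dots,P_s)$ and $Q=\bar e(Q_1,\dots,Q_s)$ obtained from the co-inductive definition of $\CC_{\KKK(X)}$ (after padding to a common arity $s$), whence $(P,Q)=\pair{\bar d,\bar e}((P_1,Q_1),\dots,(P_s,Q_s))$ with $\pair{\bar d,\bar e}\in\KKK(D)^{\times}$ and each $(P_i,Q_i)\in R$. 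With $(K,M)\in\CC_{\KKK(X)\times\KKK(X)}$ in hand and $\cup\in\CC^{(1)}_{\FF(\KKK(X)\times\KKK(X),\KKK(X))}$ from part~(\ref{cor-un-2}), Proposition~\ref{pn-eval} with $m=1$ yields ${\cup}(K,M)=K\cup M\in\CC_{\KKK(X)}$, which is~(\ref{cor-un-1}).

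The routine parts are the identification $\id\cup\id={\cup}$ and the bookkeeping of which IFS plays which role in Theorem~\ref{thm-compval}. The step that is not directly citable, and hence the main point to get right, is the product embedding $\CC_{\KKK(X)}\times\CC_{\KKK(X)}\subseteq\CC_{\KKK(X)\times\KKK(X)}$: although it reduces to a single post-fixed-point check, one must be careful with the arity padding built into the definition of $\KKK(D)^{\times}$ and with the fact that the padding elements are drawn from $\CC_{\KKK(X)}$, which is non-empty precisely because the given $K$ witnesses it. As an alternative that sidesteps the product space entirely, (\ref{cor-un-1}) can be proved self-containedly by a direct co-induction on $W:=\set{K\cup M}{K,M\in\CC_{\KKK(X)}}$, decomposing $K$ and $M$ into their digit components and unioning the sets attached to coinciding digits; this repeats the digit-merging combinatorics already carried out in Lemma~\ref{lem-cmorphtech}, and noting $K=K\cup K\in W$ shows $W$ is a post-fixed point of the hyperspace operator, so $W\subseteq\CC_{\KKK(X)}$ by co-induction.
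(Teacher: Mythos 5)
Your proof is correct and follows the paper's own route: part~(\ref{cor-un-2}) is exactly the remark preceding the corollary (Lemma~\ref{lem-basics}(\ref{lem-id}) together with Theorem~\ref{thm-compval} applied to $f=g=\id_{\KKK(X)}$), and part~(\ref{cor-un-1}) is deduced from part~(\ref{cor-un-2}) via Proposition~\ref{pn-eval}, just as the paper does. The only difference is that you make explicit a step the paper's one-line proof leaves tacit, namely the embedding $\CC_{\KKK(X)}\times\CC_{\KKK(X)}\subseteq\CC_{\KKK(X)\times\KKK(X)}$ that must hold before Proposition~\ref{pn-eval} can be invoked on the product IFS, and your co-inductive verification of it (including the care taken with arity padding and with drawing the padding entries from $\CC_{\KKK(X)}$) is sound.
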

The first statement follows from the second one by Proposition~\ref{pn-eval}.

\section{Michael's Theorem}\label{sec-mich}

In his seminal 1951 paper on spaces of subsets~\cite{mi}, E.~Michael showed that the union of all sets in a compact set of compact sets is compact again. We will reprove this result in a non-topological way by using only the co-inductive characterisations of the spaces involved.

The difficulty we have to overcome herewith is that, as we have seen in Section~\ref{sec-compact}, in general the canonical maps in $\KKK^2(D)$ are no longer of type $\KKK^2(X)^r \to \KKK^2(X)$, even if all maps in $D$ are unary. Thus, $(\KKK^2(X), \KKK^2(D))$ will not be an extended IFS any more. 

Let $[[d^{(1)}_1, \ldots, d^{(1)}_{r_1}], \ldots, [d^{(n)}_1, \ldots, d^{(n)}_{r_n}]] \in \KKK^2(D)$ and $\KK_1, \ldots, \KK_n \in \KKK^2(X)$. Then we have
\begin{align*}
&\bigcup [[d^{(1)}_1, \ldots, d^{(1)}_{r_1}], \ldots, [d^{(n)}_1, \ldots, d^{(n)}_{r_n}]](\KK_1, \ldots, \KK_n) \\
&\quad= \bigcup \bigcup_{\kappa=1}^n [d^{(\kappa)}_1, \ldots, d^{(\kappa)}_{r_\kappa}][\KK_\kappa] \\
&\quad= \bigcup \bigcup_{\kappa=1}^n \set{\bigcup_{{\sigma_\kappa}=1}^{r_\kappa} d^{(\kappa)}_{\sigma_\kappa}[K^{(\kappa)}_{\sigma_\kappa}]}{(K^{(\kappa)}_1, \ldots, K^{(\kappa)}_{r_\kappa}) \in \KK_\kappa}  \\
&\quad= \bigcup_{\kappa=1}^n \bigcup \set{\bigcup_{{\sigma_\kappa}=1}^{r_\kappa} d^{(\kappa)}_{\sigma_\kappa}[K^{(\kappa)}_{\sigma_\kappa}]}{(K^{(\kappa)}_1, \ldots, K^{(\kappa)}_{r_\kappa}) \in \KK_\kappa} \\
&\quad=  \bigcup_{\kappa=1}^n \bigcup_{{\sigma_\kappa}=1}^{r_\kappa} \bigcup \set{d^{(\kappa)}_{\sigma_\kappa}[K^{(\kappa)}_{\sigma_\kappa}]}{K^{(\kappa)}_{\sigma_\kappa} \in \pr^{({r_\kappa})}_{\sigma_\kappa}[\KK_\kappa]} \\
&\quad=  \bigcup_{\kappa=1}^n \bigcup_{{\sigma_\kappa}=1}^{r_\kappa} \bigcup \KKK(d^{(\kappa)}_{\sigma_\kappa})[\pr^{({r_\kappa})}_{\sigma_\kappa}[\KK_\kappa]].
\end{align*}
Now, let $e_1, \ldots, e_m \in D$ be pairwise distinct so that 
\[
\{\KKK( e_1), \ldots, \KKK(e_m )\} = \set{\KKK(d^{(\kappa)}_{\sigma_\kappa})}{1 \le \sigma_\kappa \le r_\kappa \wedge 1 \le \kappa \le n}.
\]
Moreover, set
\[
\MM_\iota := \bigcup \set{\pr^{(r_\kappa)}_{\sigma_\kappa}[\KK_\kappa]}{\KKK(d^{(\kappa)}_{\sigma_\kappa}) = \KKK(e_\iota) \wedge 1 \le \sigma_\kappa \le r_\kappa \wedge 1 \le \kappa \le n},
\]
for $1 \le \iota \le m$. Then
\begin{equation*}
\bigcup_{\kappa=1}^n \bigcup_{{\sigma_\kappa}=1}^{r_\kappa} \KKK(d^{(\kappa)}_{\sigma_\kappa})[\pr^{({r_\kappa})}_{\sigma_\kappa}[\KK_\kappa]] 
= \bigcup_{\iota=1}^m \KKK(e_\iota)[\MM_\iota] 
= [\KKK(e_1), \ldots, \KKK(e_m)](\MM_1, \ldots, \MM_m).
\end{equation*}
Let us summarise what we have just seen.
\begin{lemma}\label{lem-unioneq}
Let $(X, D)$ be an IFS. Then, given $[[d^{(1)}_1, \ldots, d^{(1)}_{r_1}], \ldots, [d^{(n)}_1, \ldots, d^{(n)}_{r_n}]] \in \KKK^2(D)$, we can compute $e_1, \ldots, e_m \in D$ so that when given in addition $\KK_1, \ldots, \KK_n \in \KKK^2(X)$, we can furthermore define $\MM_1, \ldots, \MM_m \in \KKK^2(X)$ such that
\begin{equation*}
\bigcup [[d^{(1)}_1, \ldots, d^{(1)}_{r_1}], \ldots, [d^{(n)}_1, \ldots, d^{(n)}_{r_n}]](\KK_1, \ldots, \KK_n) 
= \bigcup [\KKK(e_1), \ldots, \KKK(e_m)](\MM_1, \ldots, \MM_m).
\end{equation*}
\end{lemma}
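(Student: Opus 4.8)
The plan is to prove the identity by a single chain of equalities, obtained by unfolding how a map of $\KKK^{2}(D)$ acts on a tuple of elements of $\KKK^{2}(X) = \KKK(\KKK(X))$ and then repeatedly distributing the flattening operator $\bigcup$ over the set-unions that appear. Throughout I read the bracket construction one level up as $[\bar d_{1}, \ldots, \bar d_{m}](\MM_{1}, \ldots, \MM_{m}) = \bigcup_{\iota} \bar d_{\iota}[\MM_{\iota}]$, so that the left-hand side expands as $\bigcup_{\kappa=1}^{n} [d^{(\kappa)}_{1}, \ldots, d^{(\kappa)}_{r_\kappa}][\KK_\kappa]$, the inner brackets now denoting the image of the set $\KK_\kappa \subseteq \KKK(X)^{r_\kappa}$ under the map $[d^{(\kappa)}_{1}, \ldots, d^{(\kappa)}_{r_\kappa}]\colon \KKK(X)^{r_\kappa} \to \KKK(X)$. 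This is exactly the computation displayed immediately above the statement, so the task is to record it cleanly.

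First I would apply the flattening $\bigcup$ and push it inward. Since $\bigcup$ commutes with finite unions and with unions indexed over a set, and since for a fixed $\kappa$ the $\sigma_\kappa$-th coordinate of the tuples in $\KK_\kappa$ ranges exactly over $\pr^{(r_\kappa)}_{\sigma_\kappa}[\KK_\kappa]$, the whole expression collapses to
\[
\bigcup_{\kappa=1}^{n} \bigcup_{\sigma_\kappa=1}^{r_\kappa} \bigcup \KKK(d^{(\kappa)}_{\sigma_\kappa})[\pr^{(r_\kappa)}_{\sigma_\kappa}[\KK_\kappa]],
\]
where I have rewritten $d[K]$ as $\KKK(d)(K)$. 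Each step here is an elementary identity about finite unions together with the definition $\KKK(d)(K) = d[K]$ of the hyperspace functor $\KKK$.

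The substantial, though purely bookkeeping, step is the regrouping. I would choose pairwise distinct $e_{1}, \ldots, e_{m} \in D$ so that the $\KKK(e_\iota)$ enumerate the finite set $\set{\KKK(d^{(\kappa)}_{\sigma_\kappa})}{1 \le \kappa \le n,\ 1 \le \sigma_\kappa \le r_\kappa}$, and for each $\iota$ set
\[
\MM_\iota := \bigcup \set{\pr^{(r_\kappa)}_{\sigma_\kappa}[\KK_\kappa]}{\KKK(d^{(\kappa)}_{\sigma_\kappa}) = \KKK(e_\iota)}.
\]
Because each projection $\pr^{(r_\kappa)}_{\sigma_\kappa}$ is continuous and $\KK_\kappa$ is compact in $\KKK(X)$, every projected image is compact, so $\MM_\iota$ is a finite union of compact subsets of $\KKK(X)$ and hence lies in $\KKK^{2}(X)$; verifying this is what makes the right-hand side a legitimate value of a map in $\KKK^{2}(D)$. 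Grouping the double union $\bigcup_\kappa \bigcup_{\sigma_\kappa}$ according to which $\KKK(e_\iota)$ the factor $\KKK(d^{(\kappa)}_{\sigma_\kappa})$ equals, and using that $\KKK(e_\iota)$ commutes with the union defining $\MM_\iota$, transforms the display above into $\bigcup_{\iota=1}^{m} \KKK(e_\iota)[\MM_\iota]$, which is by definition $[\KKK(e_{1}), \ldots, \KKK(e_{m})](\MM_{1}, \ldots, \MM_{m})$; flattening both sides yields the asserted identity.

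I expect the main obstacle to be notational rather than conceptual: one must check that the regrouping by the distinct values $\KKK(e_\iota)$ neither omits nor double-counts any summand, and that the identity $\bigcup \KKK(d)[\bigcup_{j} A_{j}] = \bigcup_{j} \bigcup \KKK(d)[A_{j}]$ is used correctly when forming the $\MM_\iota$. As all unions involved are finite and all maps are continuous on compact spaces, no closure or convergence issues intervene. The computability assertions are then immediate: the $e_\iota$ are found by comparing finitely many digits, and the $\MM_\iota$ are finite unions of computable projections, so the construction is constructive and carries the expected computational content.
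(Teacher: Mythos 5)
Your proof is correct and takes essentially the same route as the paper's own argument: the identical expansion of the flattened union via the projections $\pr^{(r_\kappa)}_{\sigma_\kappa}[\KK_\kappa]$, the same choice of pairwise distinct $e_1, \ldots, e_m$ enumerating $\set{\KKK(d^{(\kappa)}_{\sigma_\kappa})}{1 \le \kappa \le n,\ 1 \le \sigma_\kappa \le r_\kappa}$, and the same definition of the $\MM_\iota$ as unions of projected images, followed by regrouping. Your only addition is the explicit check that each $\MM_\iota$ is a (non-empty) finite union of compact subsets of $\KKK(X)$ and hence a legitimate argument, a point the paper leaves implicit.
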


This gives us a hint on how to deal with unions over compact collections of compact sets. 

For $\KK, \MM \in \KKK^2(X)$, we write $\KK \equ \MM$, , if 
\[
\bigcup \KK = \bigcup \MM.
\]
Moreover, we set $\KKK_{D}:= \set{\KKK(d)}{d \in D}$. Note that $(\KKK^{2}(X), \KKK(\KKK_{D}))$ is a compact IFS, and let
\begin{equation*}\begin{split}
\Omega^{\pair{2}}_X(\ZZZ) := \{\, \KK \in \KKK^{2}(X) \mid (\exists d_{1}, \ldots, d_{s} \in D) \hspace{5cm} \\
 (\exists \MM_1, \ldots, M_m \in \ZZZ)\, \KK \equ [\KKK(d_{1}), \ldots, \KKK(d_{s})](\MM_1, \ldots, \MM_s)\,\},
\end{split}\end{equation*}
for $\ZZZ \subseteq \KKK^2(X)$. We define $\CC'_{\KKK^{2}(X)}:= \nu \Omega^{\pair{2}}_X$.

\begin{lemma}\label{lem-mich}
Let $(X, D)$ be a compact covering IFS. Then $\KKK^{2}(X) = \CC'_{\KKK^{2}(X)}$.
\end{lemma}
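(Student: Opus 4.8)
The plan is to prove the nontrivial inclusion by co-induction, reducing it to a one-step covering statement that is then settled by combining the covering property of the hyperspace with Lemma~\ref{lem-unioneq}. Since $\Omega^{\pair{2}}_X$ is an operator on $\PPP(\KKK^2(X))$, its greatest fixed point satisfies $\CC'_{\KKK^2(X)} = \nu\Omega^{\pair{2}}_X \subseteq \KKK^2(X)$ by definition, so one inclusion is immediate. For the converse I would invoke the co-induction principle: it suffices to exhibit $\KKK^2(X)$ as a post-fixed point, that is, to prove
\[
\KKK^2(X) \subseteq \Omega^{\pair{2}}_X(\KKK^2(X)).
\]
Unfolding the definition, this means that for every $\KK \in \KKK^2(X)$ there are $d_1, \ldots, d_m \in D$ and $\MM_1, \ldots, \MM_m \in \KKK^2(X)$ with $\bigcup\KK = \bigcup [\KKK(d_1), \ldots, \KKK(d_m)](\MM_1, \ldots, \MM_m)$.

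To produce such a witness I would first decompose $\KK$ using the structure of the extended IFS $(\KKK(X), \KKK(D))$, which is compact and covering by Proposition~\ref{pn-comifs}(\ref{pn-comifs-1}). Adapting the argument of that proposition one level up, set $E := \set{\bar d \in \KKK(D)}{\KK \cap \range(\bar d) \neq \emptyset}$, say $E = \{\bar d_1, \ldots, \bar d_n\}$ with $\bar d_\kappa = [d^{(\kappa)}_1, \ldots, d^{(\kappa)}_{r_\kappa}]$, and put $\KK_\kappa := \bar d_\kappa^{-1}[\KK \cap \range(\bar d_\kappa)]$. Each $\range(\bar d_\kappa)$ is a continuous image of the compact space $\KKK(X)^{r_\kappa}$, hence closed in $\KKK(X)$; consequently $\KK \cap \range(\bar d_\kappa)$ is compact and $\KK_\kappa$, being a closed subset of the compact space $\KKK(X)^{r_\kappa}$, is a nonempty compact collection of $r_\kappa$-tuples of compact sets. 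Because $(\KKK(X), \KKK(D))$ is covering, every member of $\KK$ lies in $\range(\bar d)$ for some $\bar d \in E$, which gives $\KK = [\bar d_1, \ldots, \bar d_n](\KK_1, \ldots, \KK_n)$; this is precisely a map of the (ill-typed) form appearing in Lemma~\ref{lem-unioneq}.

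Finally I would apply Lemma~\ref{lem-unioneq} to this decomposition: it yields $e_1, \ldots, e_m \in D$ and genuine $\MM_1, \ldots, \MM_m \in \KKK^2(X)$ (the appropriate unions of projections $\pr^{(r_\kappa)}_{\sigma_\kappa}[\KK_\kappa]$, compact as continuous images) with
\[
\bigcup\KK = \bigcup [\KKK(e_1), \ldots, \KKK(e_m)](\MM_1, \ldots, \MM_m),
\]
i.e.\ $\KK \equ [\KKK(e_1), \ldots, \KKK(e_m)](\MM_1, \ldots, \MM_m)$, so $\KK \in \Omega^{\pair{2}}_X(\KKK^2(X))$. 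Co-induction then gives $\KKK^2(X) \subseteq \CC'_{\KKK^2(X)}$, and equality follows.

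The main obstacle is the type mismatch flagged before the lemma: the digit $\bar d_\kappa$ is multi-ary, so the subtrees $\KK_\kappa$ do not live in $\KKK^2(X)$ but in $\KKK(\KKK(X)^{r_\kappa})$, and $[\bar d_1, \ldots, \bar d_n]$ is not a legitimate extended-IFS map. Working with the relation $\equ$ rather than genuine equality, together with Lemma~\ref{lem-unioneq}, is exactly what converts such an ill-typed decomposition into one built from the unary-image maps $\KKK(e_\iota)$ with bona fide $\KKK^2(X)$-arguments while preserving the union; verifying that this conversion is faithful and that the $\MM_\iota$ are compact is the only real work, and it has already been discharged in the computation preceding Lemma~\ref{lem-unioneq}. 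Notably, this route never uses that $\bigcup\KK$ is compact, so it does not presuppose Michael's theorem.
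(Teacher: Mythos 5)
Your proof is correct and takes essentially the same route as the paper's: both reduce the nontrivial inclusion to $\KKK^2(X) \subseteq \Omega^{\pair{2}}_X(\KKK^2(X))$ by lifting the covering decomposition of Proposition~\ref{pn-comifs}(\ref{pn-comifs-1}) one level up and then invoking Lemma~\ref{lem-unioneq} to repair the ill-typed decomposition modulo $\equ$, concluding by co-induction. The only difference is presentational: you spell out the covering argument, the compactness checks, and the type mismatch explicitly, whereas the paper dispatches these with the phrase ``as in the proof of Proposition~\ref{pn-comifs}(\ref{pn-comifs-1})''.
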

\begin{proof}
We have that $\CC'_{\KKK^{2}(X)} \subseteq \KKK^{2}(X)$ by definition. For the converse inclusion note that it follows as in the proof of Proposition~\ref{pn-comifs}(\ref{pn-comifs-1}) that
\[
\KKK^{2}(X) = \bigcup \set{\range(\bar{d})}{\bar{d} \in  \KKK^{2}(D)}.
\]
Therefore, if $\KK \in \KKK^{2}(X)$, then there is some $[[d^{(1)}_1, \ldots, d^{(1)}_{r_1}], \ldots, [d^{(n)}_1, \ldots, d^{(n)}_{r_n}]] \in \KKK^2(D)$ and there are $\KK_{1}, \ldots, \KK_{n} \in \KKK^{2}(X)$ so that $\KK = [[d^{(1)}_1, \ldots, d^{(1)}_{r_1}], \ldots, [d^{(n)}_1, \ldots, d^{(n)}_{r_n}]](\KK_{1}, \ldots, \KK_{n})$. Because of Lemma~\ref{lem-unioneq} we can now compute $e_{1}, \ldots, e_{m} \in D$ and define $\MM_{1}, \ldots, \MM_{m} \in \KKK^{2}(X)$ so that $\KK \equ [\KKK(e_{1}), \ldots, \KKK(e_{m})](\MM_{1}, \ldots, \MM_{m})$. Thus, $\KK \in \Omega_{X}^{\pair{2}}(\KKK^{2}(X))$. With co-induction it follows that $\KKK^{2}(X) \subseteq \CC'_{\KKK^{2}(X)}$.
\end{proof}

Let $d_{1}, \ldots, s_{s} \in D$ and note that
\begin{equation}\label{eq-mapstar}
\begin{split}
\bigcup [\KKK(d_{1}), \ldots, \KKK(d_{s})](\MM_{1}, \ldots, \MM_{s})
&= \bigcup \bigcup_{\kappa=1}^{s} \KKK(d_{\kappa})[\MM_{\kappa}] \\
&= \bigcup_{\kappa=1}^{s} \bigcup_{M \in \MM_{\kappa}} \KKK(d_{\kappa})(M) \\
&= \bigcup_{\kappa=1}^{s} \bigcup_{M \in \MM_{\kappa}} d_{\kappa}[M]  \\
&= \bigcup_{\kappa=1}^{s} d_{\kappa} [\bigcup \MM_{\kappa}]  \\
&= [d_{1}, \ldots, d_{s}](\bigcup \MM_{1}, \ldots, \bigcup \MM_{s}).
\end{split}
\end{equation}
It follows that $[\KKK(d_{1}), \ldots, \KKK(d_{s})]$ respects $\equ$ and can hence be lifted to the quotient $\KKK^{2}(X)/\equ$. We denote the lifted map by $[\KKK(d_{1}), \ldots, \KKK(d_{s})]^{*}$ and set
\[
\KKK^{*}(\KKK_{D}):= \{\, [\KKK(d_{1}), \ldots, \KKK(d_{s})]^{*} \mid s >0 \wedge d_{1}, \ldots, d_{s} \in D \,\}.
\]

\begin{lemma}\label{lem-quot}
Let $(X, D)$ be a compact covering IFS. Then also $(\KKK^{2}(X)/\equ, \KKK^{*}(\KKK_{D}))$ is a compact covering IFS. Moreover, $\CC_{\KKK^{2}(X)/\equ} = \CC'_{\KKK^{2}(X)}/\equ$.
\end{lemma}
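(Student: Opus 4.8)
The plan is to realise $(\KKK^2(X)/\equ,\KKK^*(\KKK_D))$ as an isomorphic copy of the hyperspace IFS $(\KKK(X),\KKK(D))$ through the union map $u\colon\KK\mapsto\bigcup\KK$, and to obtain the co-inductive identity from a general principle that the greatest fixed point of an $\equ$-saturated operator descends to the quotient. The backbone throughout is equation~(\ref{eq-mapstar}), which says that $u$ intertwines the generators: $u\bigl([\KKK(d_1),\ldots,\KKK(d_s)](\MM_1,\ldots,\MM_s)\bigr)=[d_1,\ldots,d_s](\bigcup\MM_1,\ldots,\bigcup\MM_s)$.

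First I would dispose of the easy parts of the IFS structure. Non-emptiness of $\KKK^2(X)/\equ$ is clear, and it is compact as the image under the quotient map $\pi$ of $\KKK^2(X)$, which is compact Hausdorff by two applications of Theorem~\ref{thm-inherit}(\ref{thm-inherit-3}). Each generator $[\KKK(d_1),\ldots,\KKK(d_s)]$ is continuous on $\KKK^2(X)$ — the maps $\KKK(d_\kappa)$ are continuous by Theorem~\ref{thm-inherit}(\ref{thm-inherit-4}) and the bracket operation is a finite union of continuous-image maps — and it respects $\equ$ componentwise by equation~(\ref{eq-mapstar}); hence it descends to a continuous map on the quotient, as already recorded before the lemma. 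Amalgamating repeated digits (permissible since $d[\bigcup\MM]\cup d[\bigcup\MM']=d[\bigcup(\MM\cup\MM')]$) leaves a finite generating set indexed by the non-empty subsets of $D$, so $\KKK^*(\KKK_D)$ may be taken finite. Covering is then immediate from Lemma~\ref{lem-mich}: given $[\KK]$, the equality $\KKK^2(X)=\CC'_{\KKK^2(X)}=\nu\Omega^{\pair{2}}_X$ provides digits and classes with $[\KK]=[\KKK(d_1),\ldots,\KKK(d_s)]^*([\MM_1],\ldots,[\MM_s])$.

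The one genuinely topological point, and the main obstacle, is Hausdorffness of the quotient. Since $\equ$ is by definition the kernel of $u$, the map $u$ factors through an injection $\bar u\colon\KKK^2(X)/\equ\to\KKK(X)$, $[\KK]\mapsto\bigcup\KK$, which is onto because $\bigcup\{K\}=K$ for every $K$. For $\bar u$ to take values in $\KKK(X)$ and be continuous is exactly Michael's theorem — that $\bigcup\KK$ is compact and that the union map $\KKK^2(X)\to\KKK(X)$ is continuous. Granting this, $\bar u$ is a continuous bijection from the compact space $\KKK^2(X)/\equ$ onto the Hausdorff space $\KKK(X)$, hence a homeomorphism, so the quotient is Hausdorff; and by equation~(\ref{eq-mapstar}) the map $\bar u$ carries $[\KKK(d_1),\ldots,\KKK(d_s)]^*$ to $[d_1,\ldots,d_s]$, making $\bar u$ an isomorphism onto $(\KKK(X),\KKK(D))$ from which compactness and covering also transport. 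Where Michael's theorem is to be the \emph{output} of this development rather than an input, the compactness of $\bigcup\KK$ must instead be extracted from the co-inductive presentation $\nu\Omega^{\pair{2}}_X$ — the non-topological route the section advertises — and this is where the substantive work lies.

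For the identity $\CC_{\KKK^2(X)/\equ}=\CC'_{\KKK^2(X)}/\equ$ I would argue abstractly. Let $\Psi$ be the operator on subsets of $\KKK^2(X)/\equ$ whose greatest fixed point is $\CC_{\KKK^2(X)/\equ}$, sending $\mathcal{Y}$ to the classes $[\KK]$ admitting a decomposition $[\KK]=[\KKK(d_1),\ldots,\KKK(d_s)]^*([\MM_1],\ldots,[\MM_s])$ with $[\MM_\kappa]\in\mathcal{Y}$. Unfolding the lifted map shows $\Omega^{\pair{2}}_X(\ZZZ)=\pi^{-1}[\Psi(\pi[\ZZZ])]$ for every $\equ$-saturated $\ZZZ$, and since the defining condition of $\Omega^{\pair{2}}_X$ depends only on the $\equ$-class of its argument, $\Omega^{\pair{2}}_X$ preserves saturation. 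A short fixed-point argument then gives that $\CC'_{\KKK^2(X)}=\nu\Omega^{\pair{2}}_X$ is $\equ$-saturated and that $\pi[\CC'_{\KKK^2(X)}]=\nu\Psi$: one verifies that $\pi[\nu\Omega^{\pair{2}}_X]$ is a post-fixed point of $\Psi$ and that $\pi^{-1}[\nu\Psi]$ is a post-fixed point of $\Omega^{\pair{2}}_X$, and concludes by co-induction on either side. As $\nu\Psi=\CC_{\KKK^2(X)/\equ}$, this yields the desired equality.
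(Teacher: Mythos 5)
Your proof is correct and essentially the paper's: the paper too factors the union map through the quotient (citing the classical well-definedness and continuity of $\fun{\bigcup}{\KKK^{2}(X)}{\KKK(X)}$ from Klein--Thompson), concludes compactness and Hausdorffness by the compact-to-Hausdorff closed-map argument, obtains covering from Lemma~\ref{lem-unioneq} (your detour through Lemma~\ref{lem-mich} is equivalent, since that lemma is itself proved from Lemma~\ref{lem-unioneq}), and establishes $\CC_{\KKK^{2}(X)/\equ}=\CC'_{\KKK^{2}(X)}/\equ$ by one unfolding step plus co-induction, which your saturation/fixed-point argument merely spells out in more detail. As for the circularity you flag: the paper resolves it precisely by taking the classical topological fact as an input (the citation covers both that $\bigcup\KK$ is compact and that $\bigcup$ is continuous), because the ``non-topological'' content this section promises is only the representation-level version of Michael's theorem (Proposition~\ref{pn-constmi}, Theorem~\ref{thm-unifun}); so no extraction of compactness of $\bigcup\KK$ from $\nu\Omega^{\pair{2}}_{X}$ is required for the present lemma, and your ``granting this'' branch is exactly the intended reading.
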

\begin{proof}
$\KKK^{2}(X)/\equ$ is a topological space with the quotient topology. Since $\bigcup \{ K \} = K$, for every $K \in \KKK(X)$, $\fun{\bigcup}{\KKK^{2}(X)}{\KKK(X)}$ is an onto map, which is well known, to be continuous~\cite[Corollary 7.2.4]{kt}. Since $X$ is compact, also $\KKK(X)$, $\KKK^{2}(X)$ and $\KKK^{2}(X)/\equ$ are compact. Moreover, $\KKK(X)$ and $\KKK^{2}(X)$ are Hausdorff spaces. Thus, $\bigcup$ is a closed mapping and, as a consequence, the Vietoris topology on $\KKK(X)$ is equivalent to the quotient topology with respect to $\bigcup$~\cite[Theorem 9.2]{wil}. It follows that the quotient topology on $\KKK^{2}(X)/\equ$ is Hausdorff.

By Lemma~\ref{lem-unioneq}, $(\KKK^{2}(X)/\equ, \KKK^{*}(\KKK_{D}))$ is covering. The remaining statement follows with co-induction.

Let $\KK \in \CC_{\KKK^{2}(X)/\equ}$. Then there are $d_{1}, \ldots, d_{s} \in D$ and $\MM_{1}, \ldots, \MM_{s} \in \KKK^{2}(X)$, so that 
\[
[\KK]_{\equ} 
= [\KKK(d_{1}), \ldots, \KKK(d_{s})]^{*}([\MM_{1}]_{\equ}, \ldots, [\MM_{s}]_{\equ})
= [[\KKK(d_{1}), \ldots, \KKK(d_{s})](\MM_{1}, \ldots, \MM_{s})]_{\equ}.
\]
Hence, $\KK \equ [\KKK(d_{1}), \ldots, \KKK(d_{s})](\MM_{1}, \ldots, \MM_{s})$, 
which shows that $\CC_{\KKK^{2}(X)/\equ} \subseteq \CC'_{\KKK^{2}(X)}/\equ$. The converse inclusion follows similarly.
\end{proof}
Assume that $(\KKK^{2}(X)/\equ, \KKK^{*}(\KKK_{D}))$ is weakly hyperbolic. Then it follows that from each $\KKK^{*}(\KKK_{D})$-tree representing an equivalence class in $\KKK^{2}(X)/\equ$ one obtains a $\KKK(\KKK_{D})$-tree representing the same class by stripping off the `$*$' decoration. This also shows that  $\KKK(\KKK_{D})$-trees witnessing that $\KK \in \CC'_{\KKK^{2}(X)}$ do not only represent $\KK$ but all sets in the $\equ$-equivalence class of $\KK$. Such multi-representations are also common in Weihrauch's Type-Two Theory of Effectivity.

\begin{proposition}\label{pn-constmi}
Let $(X, D)$ be a compact covering IFS. Then $\set{\bigcup \KK}{\KK \in \CC'_{\KKK^{2}(X)}} \subseteq \CC_{\KKK(X)}$.
\end{proposition}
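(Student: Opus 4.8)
The plan is to obtain the inclusion by a single application of the co-induction principle for $\CC_{\KKK(X)} = \nu\Omega_{\KKK(X)}$, where $\Omega_{\KKK(X)}$ is the operator introduced in the proof of Proposition~\ref{pn-cmorph}. Concretely, I would put
\[
\ZZZ := \set{\bigcup \KK}{\KK \in \CC'_{\KKK^{2}(X)}}
\]
and aim to verify $\ZZZ \subseteq \Omega_{\KKK(X)}(\ZZZ)$; co-induction then gives $\ZZZ \subseteq \nu\Omega_{\KKK(X)} = \CC_{\KKK(X)}$, which is exactly the claim. Note that $\ZZZ \subseteq \KKK(X)$, so that the principle is applicable: this is the well-definedness of $\bigcup$ as a map $\KKK^{2}(X) \to \KKK(X)$ recorded in Lemma~\ref{lem-quot}.

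For the inclusion to be checked, I would take $K \in \ZZZ$, say $K = \bigcup\KK$ with $\KK \in \CC'_{\KKK^{2}(X)}$. Since $\CC'_{\KKK^{2}(X)} = \nu\Omega^{\pair{2}}_X$ is a fixed point of $\Omega^{\pair{2}}_X$, there are digits $d_1,\ldots,d_s \in D$ and $\MM_1,\ldots,\MM_s \in \CC'_{\KKK^{2}(X)}$ with $\KK \equ [\KKK(d_1),\ldots,\KKK(d_s)](\MM_1,\ldots,\MM_s)$. The key computation is equation~(\ref{eq-mapstar}): applying $\bigcup$ and using that $\KK \equ \cdots$ means equality of the two unions yields
\[
K = \bigcup\KK = \bigcup [\KKK(d_1),\ldots,\KKK(d_s)](\MM_1,\ldots,\MM_s) = [d_1,\ldots,d_s]\bigl(\textstyle\bigcup\MM_1,\ldots,\bigcup\MM_s\bigr).
\]
Each $\bigcup\MM_\kappa$ lies in $\ZZZ$, as $\MM_\kappa \in \CC'_{\KKK^{2}(X)}$, and $[d_1,\ldots,d_s] \in \KKK(D)$, so $K \in \Omega_{\KKK(X)}(\ZZZ)$, which closes the co-inductive step.

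The step requiring care is ensuring that $[d_1,\ldots,d_s]$ is a legitimate hyperspace digit, i.e.\ an element of $\KKK(D)$, which by the definition of $\KKK(D)$ demands the $d_\kappa$ to be pairwise distinct. This comes for free: the witness $[\KKK(d_1),\ldots,\KKK(d_s)]$ furnished by $\Omega^{\pair{2}}_X$ lies in $\KKK(\KKK_D)$, so its labels $\KKK(d_1),\ldots,\KKK(d_s)$ are pairwise distinct, and since $d \mapsto \KKK(d)$ is injective (evaluate on singletons via Lemma~\ref{lem-singl}), the $d_\kappa$ are pairwise distinct as well. A secondary, minor point is that each $\bigcup\MM_\kappa$ must itself be compact for the expression $[d_1,\ldots,d_s](\bigcup\MM_1,\ldots,\bigcup\MM_s)$ to typecheck, which is again covered by the well-definedness of $\bigcup \colon \KKK^{2}(X) \to \KKK(X)$ from Lemma~\ref{lem-quot}. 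On the level of realisers the whole argument is uniform: a $\KKK(\KKK_D)$-tree witnessing $\KK \in \CC'_{\KKK^{2}(X)}$ is turned into a $\KKK(D)$-tree for $\bigcup\KK$ merely by replacing each node label $[\KKK(d_1),\ldots,\KKK(d_s)]$ with $[d_1,\ldots,d_s]$, and this label-stripping is precisely the algorithm produced by program extraction from the above co-induction.
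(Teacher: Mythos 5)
Your proposal is correct and follows essentially the same route as the paper's proof: unfold $\CC'_{\KKK^{2}(X)}$ as a fixed point of $\Omega^{\pair{2}}_X$ to obtain $d_{1}, \ldots, d_{s} \in D$ and $\MM_{1}, \ldots, \MM_{s} \in \CC'_{\KKK^{2}(X)}$, rewrite $\bigcup \KK = [d_{1}, \ldots, d_{s}](\bigcup \MM_{1}, \ldots, \bigcup \MM_{s})$ via (\ref{eq-mapstar}), and close the co-induction for $\Omega_{\KKK(X)}$. Your additional checks (pairwise distinctness of the $d_{\kappa}$ via injectivity of $d \mapsto \KKK(d)$, well-definedness of $\bigcup$, and the label-stripping realiser) are sound refinements of points the paper leaves implicit, not a different argument.
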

\begin{proof}
Let $\KK \in \CC'_{\KKK^{2}(X)}$. Then there are (constructively) $d_{1}, \ldots, d_{s} \in D$ and $\MM_{1}, \ldots, \MM_{s} \in \CC'_{\KK^{2}(X)}$ so that 
\[
\bigcup \KK 
= \bigcup [\KKK(d_{1}), \ldots, \KKK(d_{s})](\MM_{1}, \ldots, \MM_{s}) 
=[d_{1}, \ldots, d_{s}](\bigcup \MM_{1}, \ldots, \bigcup \MM_{s}),
\]
where the last equality holds by (\ref{eq-mapstar}).
It follows that 
\[
\set{\bigcup \KK}{\KK \in \CC'_{\KKK^{2}(X)}} \subseteq  \Omega_{\KKK(X)}(\set{\bigcup \KK}{\KK \in \CC'_{\KKK^{2}(X)}}).
\] 
With co-induction we therefore obtain that  $\set{\bigcup \KK}{\KK \in \CC'_{\KKK^{2}(X)}} \subseteq \CC_{\KKK(X)}$.
\end{proof}

This is a constructive version of Michael's Theorem: If all IFS involved are weakly hyperbolic, given a $\KKK_{D}$-tree representing $\KK$ we can compute a $\KKK(D)$-tree representing $\bigcup \KK$. 

The next results even allows the extraction of an algorithm for the computation of union as an operation from $\KKK^{2}(X)$ to $\KKK(X)$.

\begin{theorem}\label{thm-unifun}
Let $(X, D)$ be a compact covering IFS. Then $\bigcup \in \CC_{\FF(\KKK^{2}(X), \KKK(X))}$.
\end{theorem}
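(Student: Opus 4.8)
The plan is to prove the statement by a single co-induction on the function-space operator $\JJJ^{\KKK^2(X), \KKK(X)}$, taking as source IFS the well-typed hyperspace system $(\KKK^2(X), \KKK(\KKK_D))$ — whose digits $[\KKK(d_1), \ldots, \KKK(d_s)]$ genuinely have type $\KKK^2(X)^s \to \KKK^2(X)$, since $D$ is unary — and as target IFS $(\KKK(X), \KKK(D))$. Because $\CC_{\FF(\KKK^2(X), \KKK(X))} = \nu \JJJ^{\KKK^2(X), \KKK(X)}$, the co-induction proof principle reduces the goal to exhibiting a set $F$ with $\bigcup \in F$ and $F \subseteq \JJJ^{\KKK^2(X), \KKK(X)}(F)$. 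I would simply take $F := \{ \bigcup \}$ and show $\bigcup \in \JJJ^{\KKK^2(X), \KKK(X)}(\{\bigcup\})$, exactly mirroring the pattern used in Proposition~\ref{pn-eta}.

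To establish this membership I would use the inductive closure of $\JJJ(F)$ under Rules~(W) and (R). Since $\bigcup$ is unary, the only reading index is $i = 1$, so Rule~(R) reduces the goal to showing $\bigcup \circ \bar d^{(1,1)} \in \JJJ(F)$ for every source digit $\bar d \in \KKK(\KKK_D)$. Writing $\bar d = [\KKK(d_1), \ldots, \KKK(d_s)]$ with pairwise distinct $d_1, \ldots, d_s \in D$, the key step is the already-established identity (\ref{eq-mapstar}), which gives
\[
\bigcup \circ \bar d^{(1,1)} = \bigcup \circ [\KKK(d_1), \ldots, \KKK(d_s)] = [d_1, \ldots, d_s] \circ (\bigcup \times \cdots \times \bigcup),
\]
with $s$ copies of $\bigcup$ on the right. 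Here $[d_1, \ldots, d_s] \in \KKK(D)$ is a target digit of arity $s$ and each factor $\bigcup$ lies in $F$; hence Rule~(W) yields $[d_1, \ldots, d_s] \circ (\bigcup \times \cdots \times \bigcup) \in \JJJ(F)$, i.e.\ $\bigcup \circ \bar d^{(1,1)} \in \JJJ(F)$. As $\bar d$ ranges over all of $\KKK(\KKK_D)$, Rule~(R) then gives $\bigcup \in \JJJ(F)$, closing the co-induction.

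The argument is essentially mechanical once the framework is chosen correctly, so the real work is conceptual rather than computational. The point to get right is the choice of the source system: the naive $(\KKK^2(X), \KKK^2(D))$ is ruled out by the type mismatch flagged at the start of the section, and the difficulty dissolves by co-inducting over $(\KKK^2(X), \KKK(\KKK_D))$, for which $\bigcup$ intertwines source and target digits exactly via (\ref{eq-mapstar}): one reading step at the root unfolds a single application of a source digit, and the resulting $s$ subgoals are discharged by one writing step with the matching target digit. I expect the only genuine obstacle to be the bookkeeping verifying that $\bigcup \circ \bar d^{(1,1)}$ really coincides with the composite $[d_1, \ldots, d_s] \circ (\bigcup \times \cdots \times \bigcup)$ as maps, not merely pointwise, together with checking that distinctness of the $d_\kappa$ transfers so that $[d_1, \ldots, d_s]$ is a legitimate element of $\KKK(D)$. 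Note that compactness and covering of $(X,D)$ enter only through the background facts guaranteeing that the two hyperspace systems are genuine (compact) extended IFS; the membership proof itself is a pure co-induction and uses neither property directly.
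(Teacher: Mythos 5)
Your proposal is correct and follows essentially the same route as the paper's proof: a co-induction with $F = \{\bigcup\}$ over the source system $(\KKK^{2}(X), \KKK(\KKK_{D}))$ and target $(\KKK(X),\KKK(D))$, using the identity (\ref{eq-mapstar}) to rewrite $\bigcup \circ [\KKK(d_{1}), \ldots, \KKK(d_{s})]$ as $[d_{1}, \ldots, d_{s}] \circ (\bigcup \times \cdots \times \bigcup)$, then closing with Rule~(W) followed by Rule~(R). Your additional remarks (explicitly naming the source IFS to avoid the type mismatch with $\KKK^{2}(D)$, and noting where compactness and covering enter) only make explicit what the paper leaves implicit.
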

\begin{proof}
We again use co-induction and show that $\{ \bigcup \} \subseteq \JJJ^{\KKK^{2}(X), \KKK(X)}(\{ \bigcup \})$. Let to this end $d_{1}, \ldots, d_{s} \in D$. As we have seen in the preceding proof,
\[
[d_{1}, \ldots, d_{s}] \circ (\bigcup \times \cdots \times \bigcup) 
= \bigcup \circ [\KKK(d_{1}), \ldots, \KKK(d_{s})].
\]

By Rule~(W) for $\JJJ^{\KKK^{2}(X), \KKK(X)}$ we have for all $d_{1}, \ldots, d_{s} \in D$ that $[d_{1}, \ldots, d_{s}] \circ (\bigcup \times \cdots \times \bigcup)  \in \JJJ^{\KKK^{2}(X), \KKK(X)}(\{ \bigcup \})$. Thus, $\bigcup \circ [\KKK(d_{1}), \ldots, \KKK(d_{s})] \in \JJJ^{\KKK^{2}(X), \KKK(X)}(\{ \bigcup \})$, for all $d_{1}, \ldots, d_{s} \in D$, from which it follows with Rule~(R) that $ \bigcup \in \JJJ^{\KKK^{2}(X), \KKK(X)}(\{ \bigcup \})$.
\end{proof}

\section{Conclusion}\label{sec-conc}

In this paper, a uniform framework for computing with infinite objects like real numbers, compact sets, tuples of such, and uniformly continuous maps is presented. It combines and extends the approaches developed in a series of papers by Berger and co-authors~\cite{ub,be,bh,bse,bs}.
In particular, it allows to deal with compact-valued maps and their selection functions. Maps of this kind abundantly occur in applied mathematics. They are studied in set-valued analysis~\cite{ac,af}, and have applications in areas such as optimal control and mathematical economics, to mention a few. In addition, they are used to model non-determinism.

The framework is based on covering extended iterated function systems, where the underlying spaces are compact metric spaces and the contraction maps in the function systems are allowed to be multi-ary, or, more generally, weakly hyperbolic compact covering extended iterated function systems operating on Hausdorff spaces with, not necessarily unary, maps in the function system. Because of the covering condition co-inductive characterisations of the functions systems can be given. Results with computational content are then derived by constructively reasoning on the basis of these characterisations. Realisability facilitates the extraction of algorithms from the corresponding proofs. In so doing, points of the spaces are represented by finitely branching infinite trees. The computational power of the approach is that of Type-Two Theory of Effectivity.

\section*{Acknowledgement}\label{sec-ack}

The author is grateful to the anonymous referees for their careful reading of the manuscript and valuable comments helping to improve the paper.

\end{document}